\providecommand{\tabularnewline}{\\}
\theoremstyle{definition}
\newtheorem{defn}{\protect\definitionname}
\theoremstyle{plain}
\newtheorem{lem}{\protect\lemmaname}
\theoremstyle{plain}
\newtheorem{thm}{\protect\theoremname}
\theoremstyle{plain}
\newtheorem{prop}{\protect\propositionname}
\providecommand{\U}[1]{\protect\rule{.1in}{.1in}}
\newtheorem{assumption}{Assumption}
\newtheorem{example}{Example}
\newtheorem{lemma}{Lemma}
\newtheorem{claim}{Claim}
\providecommand{\definitionname}{Definition}
\providecommand{\lemmaname}{Lemma}
\providecommand{\propositionname}{Proposition}
\providecommand{\theoremname}{Theorem}
\begin{document}
\title{Asymptotic Behavior of Bayesian Learners with Misspecified Models
\thanks{We thank Drew Fudenberg, Ryota Iijima, Yuhta Ishii, Dale Stahl, Philipp
Strack, and several seminar participants for helpful comments. Esponda:
Department of Economics, UC Santa Barbara, 2127 North Hall, Santa
Barbara, CA 93106, iesponda@ucsb.edu; Pouzo: Department of Economics,
UC Berkeley, 530-1 Evans Hall \#3880, Berkeley, CA 94720, dpouzo@econ.berkeley.edu.
Yamamoto: 2-1 Naka, Kunitachi, Tokyo, 186-8603 Japan, yyamamoto@ier.hit-u.ac.jp.} \bigskip{}
}
\author{%
\begin{tabular}{ccc}
Ignacio Esponda~~ & ~Demian Pouzo & ~~Yuichi Yamamoto\tabularnewline
(UC Santa Barbara)~  & ~(UC Berkeley) & (Hitotsubashi Univ.)\tabularnewline
\end{tabular}}
\maketitle
\begin{abstract}
We consider an agent who represents uncertainty about the environment
via a possibly misspecified model. Each period, the agent takes an
action, observes a consequence, and uses Bayes' rule to update her
belief about the environment. This framework has become increasingly
popular in economics to study behavior driven by incorrect or biased
beliefs. Current literature has characterized asymptotic behavior
under fairly specific assumptions. By first showing that the key element
to predict the agent's behavior is the frequency of her past actions,
we are able to characterize asymptotic behavior in general settings
in terms of the solutions of a generalization of a differential equation
that describes the evolution of the frequency of actions. We then
present a series of implications that can be readily applied to economic
applications, thus providing off-the-shelf tools that can be used
to characterize behavior under misspecified learning.
\end{abstract}
\bigskip{}

\thispagestyle{empty}

\newpage{}

\tableofcontents\thispagestyle{empty}\addtocontents{toc}{\protect\setcounter{tocdepth}{1}}\newpage{}

\setcounter{page}{1}

\section{Introduction}

Over the last few decades, evidence of systematic mistakes and biases
in beliefs has been collected in a large range of economic environments.
Moreover, the evidence indicates that many of these mistakes persist
with experience.\footnote{For discussions of the evidence, see, for example, \citet{camerer1997process}
and Section 3.D in \citet{rabin1998psychology}.} One approach to incorporating these findings in our theories is to
simply postulate that economic agents have fixed, wrong beliefs about
aspects of their environment, and never learn about these aspects.
A different approach that has gained popularity over the last few
years is to postulate that agents do learn about their environment,
but they do so in the context of a misspecified model that misses
some important aspects of reality. The idea is that the world is complex
and it is natural for economic agents to represent uncertainty about
the world with parsimonious models that are likely to be misspecified.
The researcher who follows this approach is forced to specify the
agent's misspecification, and the direction of biases is often not
ex-ante obvious without further analysis.

Examples of misspecified learning in economics date back to the 1970s
and include the following: A firm estimates a demand model but wrongly
excludes competitors' prices (\citet{Arrow-Green}, \citet{kirman75learning});
a teacher assesses how praise and criticism affect student performance,
but does not understand regression to the mean (\citet{KahnemanTversky1973},
\citet{esponda2016berk}); a person faces an increasing marginal income
tax rate but behaves as if facing a constant marginal tax (\citet{sobel1984non},
\citet{liebman2004schmeduling}, \citet{esponda2016berk}); when learning
the value of assets, policies, or investment projects, traders, voters,
and investors fail to account for sample selection (\citet{esponda2008behavioral},
Esponda and Pouzo (2017, 2019a)\nocite{esponda2017conditional}\nocite{esponda2019retrospective},
\citet{jehiel2018investment}); a seller estimates a constant-elasticity
demand function, but elasticity is not constant (\citet{nyarko1991learning},
\citet*{fudenberg2017active}); a person inverts causal relationships
and incorrectly believes that diet affects a chemical in the blood
which in turn affects health (\citet{spiegler2016bayesian}); overconfidence
biases an agent's learning of a fundamental (\citet*{heidhues2018unrealistic}).

In all of these examples, the agent processes information through
the lens of a simple model that misses some aspect of reality. The
main question in the literature is what happens to the agent\textquoteright s
behavior as time goes by and she uses feedback to update her belief
about the model's primitives. The direction of the bias is often not
obvious because the agent\textquoteright s behavior affects the feedback
she observes, this feedback is in turn processed via the agent\textquoteright s
misspecified model, and this processing leads to updated beliefs and
subsequent changes in behavior, which in turn lead to changes in beliefs,
and so on.

Despite these examples, we have not yet fully understood how model
misspecification affects the long-run learning outcome. Indeed, most
of the existing papers consider somewhat specialized setups, and we
do not know whether the learning process converges beyond these particular
cases. This paper develops a unified theory on Bayesian learning with
model misspecification, which hopefully shapes our understanding of
why different models in the literature lead to different conclusions
and allows us to characterize behavior in a much wider range of settings.

We consider the following environment, which includes many situations
of interest, including the examples described above. Time is discrete
and there is a single, infinitely-lived agent who discounts the future
and must take an action in each period. The agent's action potentially
affects the distribution of an observable variable, which we call
a consequence. Her per-period payoff depends on the agent's action
and the realized consequence. The true distribution over consequences
as a function of an action $x\in X$ is given by $Q(\cdot\mid x)\in\Delta(Y)$,
where $Y$ is the set of consequences. The agent, however, does not
know $Q$. She has a parametric model of it, given by $(Q_{\theta}(\cdot\mid x))_{x\in X}$,
where parameter values, such as $\theta$, belong to a parameter space
$\Theta$. The agent is Bayesian, so she has a prior over $\Theta$
and updates her prior in each period after observing the realized
consequence. The agent's model is misspecified if the support of her
prior does not include the true distribution $Q$, and it is correctly
specified otherwise.\footnote{The correctly-specified version of this environment was originally
studied by \citet{easley1988controlling} and \citet*{aghion1991optimal}.}

Our key point of departure from previous literature is that we begin
by focusing on the evolution of the \emph{frequency} of actions rather
than on actions alone or on the agent's belief. The frequency of actions
at time $t+1$ can be written recursively as a function of the frequency
at time $t$ plus some innovation term that depends on the agent's
action at time $t+1$. The action at time $t+1$, however, depends
on the agent's belief at time $t$, and one challenge is to be able
to write this belief as a function of frequencies of actions so as
to make this recursion depend exclusively on frequencies, not beliefs.

Extending results by \citet{berk1966limiting} and \citet{esponda2016berk},
we show that eventually the posterior at time $t$ roughly concentrates
on the set of parameter values that minimize Kullback-Leibler divergence
given the frequency of actions up to time $t$. This result allows
us to write the evolution of frequencies of actions recursively as
a function of the past frequency alone, excluding the belief. We then
apply techniques from stochastic approximation developed by \citet*{benaim2005stochastic}
to show that the continuous-time approximation of the frequency of
actions can be essentially characterized as a solution to a generalization
of a differential equation.\footnote{The type of differential equation is called a differential inclusion
in the literature. It differs from a differential equation in that
there may be multiple derivatives at certain points and therefore
multiple trajectories that solve the equation. Multiplicity arises
in our environment because there are certain beliefs at which the
agent is indifferent between different actions, and we need to keep
track of what would happen to beliefs and subsequent actions if the
agent were to follow any one of these actions.} Finally, we present a series of implications that can be readily
applied to economic applications, thus providing off-the-shelf tools
that can be valuable to people working in this area. In fact, for
the special case of one-dimensional models\textendash a case that
includes most of the applications in the literature\textendash our
results imply that one can essentially characterize convergence and
stability by looking at a simple two-dimensional figure.

Our results pertain to the agent's long-run behavior, and there are
at least three reasons why the focus on long-run behavior is important.
First, there are many instances where it is not surprising that people
initially make incorrect decisions and the more interesting question
is what types of biases persist with experience. Second, systematic
patterns tend to arise as time goes by, while initial behavior tends
to be more dependent on random draws. Finally, there is a long tradition
in statistic and economics focusing on asymptotic or equilibrium behavior,
and so we can use existing tools as well as compare our results to
existing results in these literatures.

Our work benefits from previous work on misspecified learning. Our
environment is the single-agent version of the environment studied
by \citet{esponda2016berk}. They introduce the notion of a Berk-Nash
equilibrium and show that, under some conditions, if behavior converges
then it must converge to a Berk-Nash equilibrium.\footnote{There are many examples of boundedly-rational equilibrium concepts
that abstract away from the question of dynamics and convergence,
including \citep{jehiel2005analogy,jehiel1995limited}, \citet{osborne1998games},
\citet{eyster2005cursed}, \citet{esponda2008behavioral}, \citet{jehiel2008revisiting},
and \citep{spiegler2016bayesian,spiegler2017data}.} But they do not study convergence in general.\footnote{\citet{esponda2016berk} tackle the issue of convergence in Theorem
3, where they use an idea from \citet{fudenberg1993learning} to show
that, if agents are allowed to make possibly large but vanishing mistakes,
then behavior can converge to any equilibrium. Here, as in the rest
of the literature on misspecified learning, we consider the case where
agents don't make these types of mistakes.}

Papers that study convergence in misspecified settings are few and
ingeniously establish results, though for somewhat specialized setups.
\citet{nyarko1991learning} presents an example where the agent's
action does not converge. \citet*{fudenberg2017active} consider a
more general model where the agent has a finite number of actions
but still updates between two possible models (i.e., $\Theta$ has
two elements). They provide a full characterization of asymptotic
actions and beliefs, including cases where the action converges and
cases where it does not. Their model is in continuous time and they
exploit the fact that the belief over $\Theta$ follows a one-dimensional
stochastic differential equation. \citet*{heidhues2018unrealistic}
study a model of an agent whose overconfidence biases his learning
of a fundamental that is relevant for determining the optimal action.
They are able to establish convergence by exploiting the monotone
structure of their environment. \citet*{heidhues2018convergence}
consider an environment where action spaces are continuous, the state
has a unidirectional effect on output, and the prior and noise are
normal. These assumptions imply that the posterior admits a one-dimensional
summary statistic, to which they apply tools from stochastic approximation
theory to establish convergence.\footnote{For another example using normality assumptions and stochastic approximation,
see the online appendix of \citet{esponda2016berk}.} As mentioned above, we are able to make significantly more progress
by focusing on the frequency of actions, as opposed to the action
itself or the belief.\footnote{Incidentally, we show that in some of the examples of \citet{nyarko1991learning}
and \citet*{fudenberg2017active} where the action diverges, the action
\emph{frequency} converges. This is the first result of its kind and
it provides a new interpretation of a mixed-action steady state. We
also present examples where not even the action frequency converges.}

Tools from stochastic approximation have been previously applied in
economics, including the literature on learning in games (e.g., \citet{fudenberg1993learning},
\citet{benaim1999mixed}, and \citet{hofbauer2002global}) and learning
in macroeconomics (e.g., \citet{sargent1993bounded}). Our approach
is inspired by \citet{fudenberg1993learning}'s model of stochastic
fictitious play. In that environment, the frequency of past actions
exactly represents the agents' beliefs about other agents' strategies.
In our environment, we characterize beliefs to be a function of the
frequency of actions.

Misspecified learning has also been studied in other environments.
\citet{rabin2010gambler} study a case where shocks are i.i.d. but
agents believe them to be autoregressive. \citet{esponda2017equilibrium}
extend Berk-Nash equilibrium to Markov decision problems, where a
state variable, other than a belief, affects continuation values.
\citet{he2018mislearning} considers agents suffering from the gambler's
fallacy who mislearn from endogenously censored data. \citet{molavi2018macroeconomics}
studies a general-equilibrium framework that nests a class of macroeconomic
models where agents learn with misspecified models. \citet{bohren2018social}
and \citet*{frick2019misinterpreting} characterize asymptotic behavior
in social learning environments with model misspecification.\footnote{See also \citet{eyster2010naive}, \citet{bohren2016informational},
and \citet{gagnon2017naive}.} Finally, \citet*{frick2019stability} focus on convergence and robustness
of the stability of equilibrium in both single-agent and social learning
environments. We do not study robustness but study asymptotic properties
of the learning process in a more general way. In particular, we develop
general tools to study whether behavior converges or not, and what
happens if it does not converge. We believe our results can be extended
to these other environments.

Finally, we take the misspecification as given and establish results
for all types of misspecifications. For work that could help understand
which types of misspecifications are more likely to arise, see, e.g.,
\citet{aragones2005fact}, \citet{al-najjar2009decision}, \citet{al-najjar2013coarse},
\citet{schwartzstein2014selective}, and \citet{olea2019competing}.

We present the model in Section \ref{sec:environment}, characterize
asymptotic beliefs in Section \ref{sec:asymptotic-beliefs} and asymptotic
behavior in Section \ref{sec:Characterization-of-asymptotic}, and
then present implications relevant to economic applications in Section
\ref{sec:Convergence-to-equilibrium}. We relate our findings to the
notion of a Berk-Nash equilibrium in Section \ref{sec:berk-nash}.

\section{\label{sec:environment}The environment}

\emph{Objective environment. }There is a single agent facing the following
infinitely repeated problem. Each period $t=1,2,...$, the agent must
choose an action from a finite set $X$. She then receives a consequence
according to the consequence function $Q:X\rightarrow\Delta Y$, where
$Y$ is the set of consequences and $\Delta Y$ is the set of all
(Borel) probability measures over it. Finally, the payoff function
$\pi:X\times Y\rightarrow\mathbb{R}$ determines the agent's current
payoff. In particular, if $x_{t}\in X$ is the agent's choice at time
$t$, then $y_{t}\in Y$ is drawn according to the probability measure
$Q(\cdot\mid x_{t})\in\Delta Y$, and the agent's payoff at time $t$
is $\pi(x_{t},y_{t})$.\medskip{}

\begin{assumption}\label{ass:1} (i) $Y$ is a compact subset of
Euclidean space; (ii) There exists a Borel probability measure $\nu\in\Delta Y$
such that, for all $x\in X$, $Q(\cdot|x)\ll\nu$, i.e., $Q(\cdot|x)$
is absolutely continuous with respect to $\nu$ (an implication is
the existence of densities $q(\cdot\mid x)\in L^{1}(Y,\mathbb{R},\nu)$
such that $\int_{A}q(y\mid x)\nu(dy)=Q(A|x)$ for any $A\subseteq Y$
Borel); (iii) For all $x\in X$, $\pi(x,\cdot)\in L^{1}(Y,\mathbb{R},Q(\cdot\mid x))$.\footnote{As usual, $L^{p}(Y,\mathbb{R},\nu)$ denotes the space of all functions
$f:Y\rightarrow\mathbb{R}$ such that $\int\left|f(y)\right|^{p}\nu(dy)<\infty$.}\end{assumption}\medskip{}

Assumption \ref{ass:1} collects some standard technical conditions.
It includes both the case where the consequence is a continuous variable
( $\nu$ is the Lebesgue measure and $q(\cdot\mid x)$ is the density
function) and the case where it is discrete ( $\nu$ is the counting
measure and $q(\cdot\mid x)$ is the probability mass function).

In the special case in which the agent knows the primitives and wishes
to maximize discounted expected utility, she chooses an action in
each period from the set of actions that maximizes 
\[
\int_{Y}\pi(x,y)Q(dy\mid x)=\int_{Y}\pi(x,y)q(y|x)\nu(dy).
\]
 We will study the case where the agent does not know the consequence
function $Q$.\medskip{}

\emph{Subjective family of models.} The agent is endowed with a parametric
family of consequence functions, $\mathcal{Q}_{\Theta}=\{Q_{\theta}:\theta\in\Theta\}$,
where each $Q_{\theta}:X\rightarrow\Delta Y$ is indexed by a \emph{model}
$\theta\in\Theta$. We refer to $\mathcal{Q}_{\Theta}$ as the family
of models and say that it is \emph{correctly specified} if $Q\in\mathcal{Q}_{\Theta}$
and \emph{misspecified} otherwise.

\medskip{}

\begin{assumption}\label{ass:2}\textbf{ }(i) For all $\theta\in\Theta$
and $x\in X$, $Q_{\theta}(\cdot|x)\ll\nu$, where $\nu$ is defined
in A1 (an implication is the existence of densities $q_{\theta}(\cdot\mid x)\in L^{1}(Y,\mathbb{R},\nu)$
such that $\int_{A}q_{\theta}(y\mid x)\nu(dy)=Q_{\theta}(A|x)$ for
any $A\subseteq Y$ Borel); (ii) $\Theta$ is a compact subset of
an Euclidean space and, for all $x\in X,$ $\theta\mapsto q_{\theta}(\cdot\mid x)$
is continuous $Q(\cdot\mid x)$-a.s.; ; (iii) For all $x\in X$, there
exists $g_{x}\in L^{2}(Y,\mathbb{R},Q(\cdot\mid x))$ such that, for
all $\theta\in\Theta$, $\left|\ln(q(\cdot\mid x)/q_{\theta}(\cdot\mid x))\right|\leq g_{x}(\cdot)$
a.s.-$Q(\cdot\mid x)$.\end{assumption}\medskip{}

Assumption \ref{ass:2}(i) guarantees the existence of a density function,
and \ref{ass:2}(ii) is a standard parametric assumption on the subjective
model. Assumption \ref{ass:2}(iii) will be used to establish a uniform
law of large numbers. This condition also implies that, for all $\theta$
and $x$, the support of $Q_{\theta}(\cdot\mid x)$ contains the support
of $Q(\cdot\mid x)$; in particular, every observation can be generated
by the agent's model.

\medskip{}

\emph{Bayesian learning}. The agent is Bayesian and starts with a
prior $\mu_{0}$ over the space of models $\Theta$. She observes
past actions and consequences and uses this information to update
her belief about $\Theta$ in every period. The timing is as follows:
At each time $t$, the agent holds some belief $\mu_{t}$. Given $\mu_{t}$,
she chooses an action $x_{t}$. Then the consequence $y_{t}$ is drawn
according to $Q(\cdot\mid x_{t})$. The agent observes $y_{t}$, receives
an immediate payoff of $\pi(x_{t},y_{t})$, and updates her belief
to $\mu_{t+1}=B(x_{t},y_{t},\mu_{t})$, where $B$ is the Bayesian
operator.\footnote{The Bayesian operator $B:X\times Y\times\Delta\Theta\rightarrow\Delta\Theta$
satisfies, for all $A\subseteq\Theta$ Borel, for any $x\in X$, and
a.s.-$Q(\cdot\mid x)$, $B(x,y,\mu)(A)=\int_{A}q_{\theta}(y\mid x)\mu(d\theta)/\int_{\Theta}q_{\theta}(y\mid x)\mu(d\theta)$. } The next assumption guarantees that the prior has full support.\medskip{}

\begin{assumption}\label{ass:3}$\mu_{0}(A)>0$ for any $A$ open
and non-empty.\end{assumption}

\medskip{}

\emph{Policy and probability distribution over histories}. A \emph{policy}
$f$ is a function $f:\Delta\Theta\rightarrow X$ specifying the action
$f(\mu)\in X$ that the agent takes at any moment in time in which
her belief is $\mu$.\footnote{We do not allow the agent to mix to simplify the exposition and to
highlight the fact that a mixed distribution over actions may describe
limiting behavior despite the fact that the agent never actually mixes.
In the more general case where $f$ maps into $\Delta X$, our main
result (Theorem \ref{Theo:APT}) holds exactly as stated but some
of the statements in Section \ref{sec:Convergence-to-equilibrium}
need to be modified accordingly.} A \emph{history} is a sequence $h=(x_{0},y_{0},...,x_{t},y_{t},...)\in\mathbb{H}\equiv(X\times Y)^{\infty}$.
Together with the primitives of the problem, a policy $f$ induces
a probability distribution over the set of histories, which we will
denote by $P^{f}$.

\medskip{}

\emph{Policy correspondence}. It will be convenient to characterize
behavior for a family of policies, and not just for a single policy
function. For this purpose, we define a \emph{policy correspondence
}to be a mapping $F:\Delta\Theta\rightrightarrows X$, where $F(\mu)\subseteq X$
denotes the set of actions that the agent might choose any time her
belief is $\mu\in\Delta\Theta$. We sometimes abuse notation and,
for a set of probability measures $A\subseteq\Delta\Theta$, we let
$F(A)$ represent the set of actions $x$ such that $x\in F(\mu)$
for some $\mu\in A$. Let $Sel(F)$ denote the set of all policies
$f$ that constitute a selection from the correspondence $F$, i.e.,
with the property that $f(\mu)\in F(\mu)$ for all $\mu$.\medskip{}

\begin{assumption}\label{ass:4} The policy correspondence $F$ is
upper hemi-continuous (uhc).\end{assumption}

\medskip{}

An important special case is one where the agent maximizes discounted
expected utility with discount factor $\beta\in[0,1)$. This problem
can be cast recursively as
\begin{equation}
W(\mu)=\max_{x\in X}\int_{Y}\left\{ \pi(x,y)+\beta W(\mu')\right\} \bar{Q}_{\mu}(dy|x)\label{eq:BellmanWopt-1-1}
\end{equation}
where $W:\Delta\Theta\rightarrow\mathbb{R}$ is the (unique) solution
to the Bellman equation (\ref{eq:BellmanWopt-1-1}), $\mu'=B(x,y,\mu)$
is the Bayesian posterior, and $\bar{Q}_{\mu}\equiv\int_{\Theta}Q_{\theta}\mu(d\theta)$.
In this case, it is well known that the correspondence mapping beliefs
to optimal actions is uhc.\medskip{}

\emph{Action frequency}. Our main objective is to study regularities
in asymptotic behavior. Previous work has focused on characterizing
the limit of the sequence of actions, whenever it exists. But there
are cases where actions do not converge (e.g., \citet{nyarko1991learning}),
and in those cases previous work has not much else to say about asymptotic
behavior. We make progress by studying the action \emph{frequency}.
We do so for two reasons. First, from a practical perspective, even
if actions do not converge, it is possible for the frequency of actions
to converge. Thus, studying frequencies can help uncover additional
regularities in behavior, with important implications regarding, for
example, limiting average payoffs. Second, as we will show, asymptotic
beliefs depend crucially on the action frequency. Because actions
in turn depend on beliefs, future actions depend crucially on the
frequency of past actions.

For every $t$, we define the \emph{action frequency at time $t$}
to be a function $\sigma_{t}:\mathbb{H}\rightarrow\Delta X$ defined
such that, for all $h\in\mathbb{H}$ and $x\in X$, 
\[
\sigma_{t}(h)(x)=\frac{1}{t}\sum_{\tau=1}^{t}\mathbf{1}_{(x)}(x_{\tau}(h))
\]
is the fraction of times that action $x$ occurs in history $h$ by
time period $t$.

\section{\label{sec:asymptotic-beliefs}Asymptotic characterization of beliefs}

In this section, we take as given the sequence of action frequencies,
$(\sigma_{t})_{t}$, and we characterize the agent's asymptotic beliefs.
In subsequent sections, we will use the characterization of beliefs
to characterize the sequence $(\sigma_{t})_{t}$, which is ultimately
an endogenous object. The key object in our characterization is the
notion of Kullback-Leibler divergence.\footnote{Formally, what we call KLD is the Kullback-Leibler divergence between
the distributions $q\cdot\sigma$ and $q_{\theta}\cdot\sigma$ defined
over the space $X\times Y$.}\medskip{}

\begin{defn}
The \textbf{Kullback-Leibler divergence} (KLD) is a function $K\colon\Theta\times\Delta X\rightarrow\mathbb{R}$
such that, for any $\theta\in\Theta$ and $\sigma\in\Delta X$,
\begin{align*}
K(\theta,\sigma) & =\sum_{x\in X}E_{Q(\cdot\mid x)}\left[\ln\frac{q(Y\mid x)}{q_{\theta}(Y\mid x)}\right]\sigma(x)\\
 & =\sum_{x\in X}\int_{Y}\ln\frac{q(y\mid x)}{q_{\theta}(y\mid x)}q(y\mid x)\nu(dy)\sigma(x).
\end{align*}

The \textbf{set of closest models given }$\sigma$ is the set $\Theta(\sigma)\equiv\arg\min_{\theta\in\Theta}K(\theta,\sigma)$
and the \textbf{minimized KLD given $\sigma$} is $K^{*}(\sigma)\equiv\min_{\theta\in\Theta}K(\theta,\sigma)$.
\end{defn}
\medskip{}

\begin{lem}
\label{Lemma:Theta(sigma)}(i) $(\theta,\sigma)\mapsto K(\theta,\sigma)-K^{*}(\sigma)$
is continuous; (ii) $\Theta(\cdot)$ is uhc, nonempty-, and compact-valued.
\end{lem}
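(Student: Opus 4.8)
The plan is to reduce both parts to joint continuity of $K$ on $\Theta\times\Delta X$ together with compactness of $\Theta$, and then apply Berge's maximum theorem. Writing $K(\theta,\sigma)=\sum_{x\in X}a(x,\theta)\,\sigma(x)$ with $a(x,\theta)\equiv E_{Q(\cdot\mid x)}\!\left[\ln\frac{q(Y\mid x)}{q_{\theta}(Y\mid x)}\right]$, the first thing I would check is that $a(x,\theta)$ is finite and bounded uniformly in $\theta$: by Assumption \ref{ass:2}(iii) the integrand is dominated by $g_{x}$, and since $Q(\cdot\mid x)$ is a probability measure, $g_{x}\in L^{2}(Y,\mathbb{R},Q(\cdot\mid x))\subseteq L^{1}(Y,\mathbb{R},Q(\cdot\mid x))$, whence $|a(x,\theta)|\le\int g_{x}\,dQ(\cdot\mid x)<\infty$. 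In particular $K$ is real-valued and bounded on all of $\Theta\times\Delta X$.

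Next I would establish that $a(x,\cdot)$ is continuous on $\Theta$. Since $\Theta$ sits in Euclidean space it suffices to argue sequentially: take $\theta_{n}\to\theta$. First, $Q(\cdot\mid x)$-a.s.\ one has $q(\cdot\mid x)>0$ (the set $\{y:q(y\mid x)=0\}$ is $Q(\cdot\mid x)$-null) and, as already noted after Assumption \ref{ass:2}, $q_{\theta}(\cdot\mid x)>0$ on that full-measure set as well (otherwise $|\ln(q/q_{\theta})|=\infty>g_{x}$ on a set of positive $Q(\cdot\mid x)$-measure, contradicting Assumption \ref{ass:2}(iii)); combining this with the a.s.\ continuity in Assumption \ref{ass:2}(ii), the map $\theta\mapsto\ln\frac{q(y\mid x)}{q_{\theta}(y\mid x)}$ is continuous for $Q(\cdot\mid x)$-a.e.\ $y$, so $\ln\frac{q(y\mid x)}{q_{\theta_{n}}(y\mid x)}\to\ln\frac{q(y\mid x)}{q_{\theta}(y\mid x)}$ for $Q(\cdot\mid x)$-a.e.\ $y$. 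Discarding the countable union of the ($Q(\cdot\mid x)$-null) exceptional sets attached to each $\theta_{n}$ and to $\theta$ in Assumption \ref{ass:2}(iii), the single function $g_{x}$ dominates $|\ln(q/q_{\theta_{n}})|$ for all $n$ simultaneously, so the dominated convergence theorem gives $a(x,\theta_{n})\to a(x,\theta)$. Because $X$ is finite and each coordinate map $\sigma\mapsto\sigma(x)$ is continuous, $K(\theta,\sigma)=\sum_{x\in X}a(x,\theta)\sigma(x)$ is then jointly continuous on $\Theta\times\Delta X$.

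Given this, part (ii) follows from Berge's maximum theorem applied with the constant (hence continuous and compact-valued, using compactness of $\Theta$) constraint correspondence $\sigma\mapsto\Theta$ and the continuous objective $K$: the value $K^{*}(\sigma)=\min_{\theta}K(\theta,\sigma)$ is continuous in $\sigma$, and the argmin correspondence $\Theta(\sigma)$ is uhc with nonempty, compact values. Part (i) is then immediate: $(\theta,\sigma)\mapsto K(\theta,\sigma)$ is continuous by the previous paragraph and $(\theta,\sigma)\mapsto K^{*}(\sigma)$ is continuous because $K^{*}$ is continuous in $\sigma$, so their difference is continuous.

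I expect the only real subtlety to be the continuity of $a(x,\cdot)$, specifically reconciling the $Q(\cdot\mid x)$-a.s.\ continuity in Assumption \ref{ass:2}(ii) with the possibly $\theta$-dependent exceptional sets in Assumption \ref{ass:2}(iii) when invoking dominated convergence; passing to sequences is what makes this work, since then only countably many null sets are in play. The remaining ingredients — finiteness of $K$, its linearity in $\sigma$, and the appeal to Berge — are routine.
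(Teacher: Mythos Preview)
Your argument is correct and follows essentially the same route as the paper: establish joint continuity of $K$ on $\Theta\times\Delta X$ by combining the dominated convergence theorem (with the envelope $g_{x}$ from Assumption~\ref{ass:2}(iii)) for the $\theta$-direction and linearity in $\sigma$ for the other direction, then invoke Berge's maximum theorem to obtain continuity of $K^{*}$ and the uhc, nonempty, compact-valued properties of $\Theta(\cdot)$. The paper's proof differs only cosmetically, splitting $|K(\theta_{n},\sigma_{n})-K(\theta,\sigma)|$ by a triangle inequality rather than via your representation $K(\theta,\sigma)=\sum_{x}a(x,\theta)\sigma(x)$, and it does not pause over the possibly $\theta$-dependent null sets in Assumption~\ref{ass:2}(iii) as you do; your sequential handling of that point is a nice clarification but not a different idea.
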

\begin{proof}
See Appendix \ref{pf:Lemma:Theta(sigma)}.
\end{proof}
\medskip{}

If the actions were drawn from an i.i.d. distribution $\sigma\in\Delta X$,
we could directly apply Berk's (1966)\nocite{berk1966limiting} result
to conclude that the posterior eventually concentrates on the set
of closest models given $\sigma$ (i.e., for all open sets $U\supseteq\Theta(\sigma)$,
$\lim_{t\rightarrow\infty}\mu_{t}(U)=1$ ~~$P^{f}$-a.s.).\footnote{See also \citet{bunke1998asymptotic}. Relatedly, \citet{white1982maximum}
shows that the Kullback-Leibler divergence characterizes the limiting
behavior of the maximum quasi-likelihood estimator.} EP2016 showed that Berk's conclusion extends even if actions are
not i.i.d., provided that the distribution over actions at time $t$
converges to a distribution $\sigma$. This type of result is useful
to characterize behavior under the assumption that it stabilizes,
but it is insufficient to determine whether or not behavior stabilizes.

In the current section, we provide a characterization of beliefs that
does not rely on the assumption that behavior stabilizes. Roughly
speaking, we will show that the distance between the agent's belief
at time $t$, $\mu_{t}$, and the set of probability measures with
support in $\Theta(\sigma_{t})$ goes to zero as time goes to infinity,
irrespective of whether or not $(\sigma_{t})_{t}$ converges. We will
establish this result in several steps, which we now discuss informally
and then address formally in the proofs. First, we note that for any
Borel set $A\subseteq\Theta$, the posterior belief over $A$ can
be written as
\begin{align}
\mu_{t+1}(A) & =\frac{\int_{A}\prod_{\tau=1}^{t}q_{\theta}(y_{\tau}\mid x_{\tau})\mu_{0}(d\theta)}{\int_{\Theta}\prod_{\tau=1}^{t}q_{\theta}(y_{\tau}\mid x_{\tau})\mu_{0}(d\theta)}\nonumber \\
 & =\frac{\int_{A}e^{-tL_{t}(\theta)}\mu_{0}(d\theta)}{\int_{\Theta}e^{-tL_{t}(\theta)}\mu_{0}(d\theta)},\label{eq:posterior}
\end{align}
where $L_{t}(\theta)\equiv t^{-1}\sum_{\tau=1}^{t}\ln\frac{q(y_{\tau}\mid x_{\tau})}{q_{\theta}(y_{\tau}\mid x_{\tau})}$
is the sample average of the log-likelihood ratios, and where we have
omitted the history for simplicity. Naturally, we might expect the
sample average to converge to its expectation for each $\theta$.
The next result strengthens this intuition and establishes that the
difference between $L_{t}(\cdot)$ and $K(\cdot,\sigma_{t})$ converges
\emph{uniformly} to zero as $t\rightarrow\infty$.

\medskip{}

\begin{lem}
\label{Lemma:uniformconvergence}Under Assumptions \ref{ass:1}-\ref{ass:2},
for any policy $f$, $\lim_{t\rightarrow\infty}\sup_{\theta\in\Theta}\left|L_{t}(\theta)-K(\theta,\sigma_{t})\right|=0$
$P^{f}$\emph{-a.s.}
\end{lem}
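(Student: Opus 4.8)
The plan is to prove the uniform convergence $\sup_{\theta}|L_t(\theta) - K(\theta,\sigma_t)| \to 0$ by decomposing $L_t$ along actions and reducing to a uniform law of large numbers (ULLN) over $\Theta$ for each fixed $x\in X$. Concretely, write
\[
L_t(\theta) = \frac{1}{t}\sum_{\tau=1}^t \ell_\theta(x_\tau,y_\tau),\qquad \ell_\theta(x,y)\equiv \ln\frac{q(y\mid x)}{q_\theta(y\mid x)},
\]
and group the sum by action: $L_t(\theta) = \sum_{x\in X} \sigma_t(x)\, \bar\ell_{t,x}(\theta)$, where $\bar\ell_{t,x}(\theta)$ is the average of $\ell_\theta(x,y_\tau)$ over those periods $\tau\le t$ with $x_\tau=x$ (and $\bar\ell_{t,x}\equiv 0$ if action $x$ has not yet been played). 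Since $K(\theta,\sigma_t) = \sum_{x\in X}\sigma_t(x)\,E_{Q(\cdot\mid x)}[\ell_\theta(x,Y)]$ and $X$ is finite with $\sum_x \sigma_t(x)=1$, it suffices to show that for each $x\in X$,
\[
\mathbf{1}\{\text{$x$ played infinitely often}\}\cdot \sup_{\theta\in\Theta}\left|\bar\ell_{t,x}(\theta) - E_{Q(\cdot\mid x)}[\ell_\theta(x,Y)]\right| \longrightarrow 0 \quad P^f\text{-a.s.},
\]
because on the event that $x$ is played only finitely often, $\sigma_t(x)\to 0$ and the corresponding term in the difference vanishes (here one uses Assumption \ref{ass:2}(iii) to bound $|\bar\ell_{t,x}(\theta)|$ uniformly in $\theta$ by the average of $g_x$ over the relevant periods, which stays bounded since the number of such periods is fixed while $t\to\infty$, or more simply one notes $\sigma_t(x)\bar\ell_{t,x}(\theta) = t^{-1}\sum_{\tau:x_\tau=x}\ell_\theta(x,y_\tau)\to 0$).

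For the main event — action $x$ played infinitely often — the key point is that, conditional on the sequence of times at which $x$ is chosen, the consequences $\{y_\tau : x_\tau = x\}$ are i.i.d. draws from $Q(\cdot\mid x)$. This is because $y_\tau$ is drawn from $Q(\cdot\mid x_\tau)$ independently of the past given $x_\tau$, and the event $\{x_\tau = x\}$ is measurable with respect to information available before $y_\tau$ is realized (it depends only on $\mu_\tau$, hence on $(x_1,y_1,\dots,x_{\tau-1},y_{\tau-1})$). So the subsequence of $y$'s observed after playing $x$ is an i.i.d. sequence, and $\bar\ell_{t,x}(\theta)$ is a sample average over the first $N_t(x)\equiv t\,\sigma_t(x)$ terms of that i.i.d. sequence, with $N_t(x)\to\infty$. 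I would then invoke a standard ULLN for i.i.d. data — e.g., the uniform strong law of large numbers for a family of functions $\{\ell_\theta(x,\cdot):\theta\in\Theta\}$ that is (a) pointwise measurable, (b) $Q(\cdot\mid x)$-a.s.\ continuous in $\theta$ (Assumption \ref{ass:2}(ii)), and (c) dominated by an integrable envelope $g_x$ (Assumption \ref{ass:2}(iii), which even gives an $L^2$ envelope, more than enough). Under these hypotheses the class is Glivenko–Cantelli, so $\sup_{\theta\in\Theta}|\,n^{-1}\sum_{i=1}^n \ell_\theta(x,\tilde y_i) - E_{Q(\cdot\mid x)}[\ell_\theta(x,Y)]\,| \to 0$ a.s.; specializing $n = N_t(x)$ gives the claim. (One must also note $E_{Q(\cdot\mid x)}[\ell_\theta(x,Y)] = K(\theta,\delta_x)$ is finite and continuous in $\theta$ by Lemma \ref{Lemma:Theta(sigma)}(i), so the centering term is well-behaved.)

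The main obstacle, and the step requiring the most care, is handling the fact that the number of summands in $\bar\ell_{t,x}$ is itself a random, data-dependent quantity $N_t(x)$, and that the "number of periods playing $x$" is determined endogenously by the policy. The clean way around this is the conditioning argument above: fix a realization of the history, extract the deterministic (given that history) subsequence of times $\tau$ with $x_\tau = x$, and observe that along this subsequence the $y$'s are i.i.d.\ $Q(\cdot\mid x)$ — this is exactly the "optional skipping" / stopping-time structure that lets one apply the i.i.d.\ ULLN pathwise. One should also be a little careful that the ULLN must hold \emph{simultaneously} for all $x\in X$ and for both the "infinitely often" and "finitely often" cases; since $X$ is finite, this is just a finite intersection of probability-one events, so it is routine. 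A secondary technical point is verifying the measurability/separability hypothesis needed for the sup over $\theta$ to be measurable — this follows from $\Theta$ being a compact metric space together with the a.s.\ continuity of $\theta\mapsto \ell_\theta(x,y)$, which lets one replace the sup over $\Theta$ with a sup over a countable dense subset. I would relegate these measurability details and the precise citation for the i.i.d.\ ULLN (e.g., a Glivenko–Cantelli theorem for parametric classes with an integrable envelope) to the appendix.
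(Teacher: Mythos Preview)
Your proposal is correct and takes a genuinely different route from the paper. You decompose $L_t$ action-by-action and, via an optional-skipping argument, reduce each piece to an i.i.d.\ uniform law of large numbers over the compact class $\{\ell_\theta(x,\cdot):\theta\in\Theta\}$ (continuous in $\theta$, dominated by $g_x\in L^2$). The paper instead works directly with the non-i.i.d.\ process: for each small ball $O(\theta,\epsilon)\subset\Theta$ it forms the martingale difference $\zeta_\tau=\sup_{\theta'\in O(\theta,\epsilon)}g(\theta',z_\tau)-E_{Q(\cdot\mid x_\tau)}[\sup_{\theta'\in O(\theta,\epsilon)}g(\theta',Y,x_\tau)]$, applies the $L^2$ martingale convergence theorem plus Kronecker's lemma to get averages to zero, and then upgrades to uniformity by a finite-subcover bracketing argument. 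Your approach is more elementary and transparent---it exposes the i.i.d.\ structure hidden in the problem and lets you cite a textbook Glivenko--Cantelli result---at the cost of the (routine) case split into actions played finitely versus infinitely often and the optional-skipping lemma. The paper's martingale route avoids that case split entirely and would generalize more easily if $X$ were not finite (since it never groups by action), but it requires the slightly heavier machinery of martingale convergence plus Kronecker. Both rely on exactly the same primitive assumptions (compact $\Theta$, a.s.\ continuity in $\theta$, $L^2$ envelope $g_x$), so neither gains in generality here.
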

\begin{proof}
See Appendix \ref{pf:Lemma:uniformconvergence}.
\end{proof}
\medskip{}

The next step is to replace $L_{t}(\cdot)$ in (\ref{eq:posterior})
with $K(\cdot,\sigma_{t})$. By Lemma \ref{Lemma:uniformconvergence},
for sufficiently large $t$, we obtain 
\begin{equation}
\mu_{t+1}(A)\approx\frac{\int_{A}e^{-tK(\theta,\sigma_{t})}\mu_{0}(d\theta)}{\int_{\Theta}e^{-tK(\theta,\sigma_{t})}\mu_{0}(d\theta)}.\label{eq:posterior_approx}
\end{equation}
As $t\rightarrow\infty$, the posterior concentrates on models where
$K(\theta,\sigma_{t})$ is close to its minimized value, $K^{*}(\sigma_{t})$.
This statement is seen most easily for the case where $\Theta$ has
only two elements, $\theta_{1}$ and $\theta_{2}$. In this case,
(\ref{eq:posterior_approx}) becomes 
\begin{equation}
\mu_{t+1}(\theta_{1})\approx1/(1+\frac{\mu_{0}(\theta_{2})e^{-tK(\theta_{2},\sigma_{t})}}{\mu_{0}(\theta_{1})e^{-tK(\theta_{1},\sigma_{t})}}).\label{eq:posterior_approx2}
\end{equation}
Suppose, for example, that $(\sigma_{t})_{t}$ converges to $\sigma$
and that KLD is minimized at $\theta_{1}$ given $\sigma$. Then there
exists $\varepsilon>0$ such that, for all sufficiently large $t$,
$K(\theta_{2},\sigma_{t})-K(\theta_{1},\sigma_{t})>\varepsilon$.
It follows from (\ref{eq:posterior_approx2}) that $\mu_{t+1}(\theta_{1})$
converges to 1, so the posterior concentrates on the model that minimizes
KLD given $\sigma$. When $(\sigma_{t})_{t}$ does not converge, however,
we have to account for the possibility that $K(\theta_{2},\sigma_{t})-K(\theta_{1},\sigma_{t})>0$
for all $t$ but $K(\theta_{2},\sigma_{t})-K(\theta_{1},\sigma_{t})\rightarrow0$
as $t\rightarrow0$. In this case, we cannot say that the posterior
eventually puts probability 1 on $\theta_{1}$, even though $\theta_{1}$
always minimizes KLD. This is why the next result says that the posterior
concentrates on models where $K(\theta,\sigma_{t})$ is close to its
minimized value, $K^{*}(\sigma_{t})$, as opposed to saying that the
posterior asymptotically concentrates on the minimizers of KLD given
$\sigma_{t}$.\footnote{Formally, what we are saying is that it is not generally true that
$\lim_{t\rightarrow\infty}\int_{\Theta}\inf_{\theta'\in\Theta(\sigma_{t})}\left\Vert \theta-\theta'\right\Vert \mu_{t+1}(d\theta)=0$.
This type of statement is true in Berk's iid setup and, as the previous
discussion suggests, it is also true in our environment under the
additional assumption that $(\sigma_{t})_{t}$ converges.} We now state the result formally and provide a proof.\medskip{}

\begin{thm}
\label{Theo:Berk}Under Assumptions \ref{ass:1}-\ref{ass:3}, for
any policy $f$,
\begin{equation}
\lim_{t\rightarrow\infty}\int_{\Theta}(K(\theta,\sigma_{t})-K^{*}(\sigma_{t}))\mu_{t+1}(d\theta)=0\,\,\,\,\,\,\,\,\,\text{ \ensuremath{P^{f}}-a.s.}\label{eq:eqTheoBerk}
\end{equation}
\end{thm}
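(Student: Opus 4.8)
The plan is to bound the integral in \eqref{eq:eqTheoBerk} from above by combining the uniform convergence in Lemma \ref{Lemma:uniformconvergence} with a Laplace-type lower bound on the normalizing constant $\int_\Theta e^{-tL_t(\theta)}\mu_0(d\theta)$. Fix a history in the full-measure event on which Lemma \ref{Lemma:uniformconvergence} holds, and write $\varepsilon_t \equiv \sup_{\theta\in\Theta}|L_t(\theta)-K(\theta,\sigma_t)|$, so $\varepsilon_t\to 0$. Using \eqref{eq:posterior}, the quantity of interest is
\[
\int_\Theta (K(\theta,\sigma_t)-K^*(\sigma_t))\,\mu_{t+1}(d\theta) = \frac{\int_\Theta (K(\theta,\sigma_t)-K^*(\sigma_t))\, e^{-tL_t(\theta)}\mu_0(d\theta)}{\int_\Theta e^{-tL_t(\theta)}\mu_0(d\theta)}.
\]
First I would handle the denominator. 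Let $\theta_t^* \in \Theta(\sigma_t)$ (nonempty by Lemma \ref{Lemma:Theta(sigma)}). On a small ball $B(\theta_t^*,\delta)$, continuity of $K(\cdot,\sigma_t)$ — which is uniform enough via Lemma \ref{Lemma:Theta(sigma)}(i) — gives $K(\theta,\sigma_t) \le K^*(\sigma_t) + \eta$ for $\theta$ in that ball, for any prescribed $\eta>0$ once $\delta$ is small; combined with $L_t(\theta)\le K(\theta,\sigma_t)+\varepsilon_t$ and Assumption \ref{ass:3} (full support, so $\mu_0(B(\theta_t^*,\delta))\ge c_\delta>0$ — here one must be a bit careful since $\theta_t^*$ moves with $t$, so I would either take a finite subcover of $\Theta$ by such balls and a uniform lower bound on their prior mass, or pass to convergent subsequences of $\theta_t^*$), this yields a lower bound of the form $\int_\Theta e^{-tL_t(\theta)}\mu_0(d\theta) \ge c_\delta\, e^{-t(K^*(\sigma_t)+\eta+\varepsilon_t)}$.

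Next I would bound the numerator from above. Split $\Theta$ according to whether $K(\theta,\sigma_t)-K^*(\sigma_t) \le \rho_t$ or $> \rho_t$ for a threshold $\rho_t \downarrow 0$ to be chosen (e.g.\ $\rho_t = \sqrt{\varepsilon_t + \eta}$ or similar, anything tending to $0$ slower than $\varepsilon_t$). On the first region the integrand is at most $\rho_t$ times the normalizing constant, contributing at most $\rho_t$ to the ratio. On the second region, using $L_t(\theta) \ge K(\theta,\sigma_t) - \varepsilon_t > K^*(\sigma_t) + \rho_t - \varepsilon_t$, the exponential weight is at most $e^{-t(K^*(\sigma_t)+\rho_t-\varepsilon_t)}$; since $K(\theta,\sigma_t)-K^*(\sigma_t)$ is bounded on the compact set $\Theta$ (it is continuous by Lemma \ref{Lemma:Theta(sigma)}(i) and, being a KLD difference, nonnegative; boundedness also follows from Assumption \ref{ass:2}(iii) via $g_x\in L^2\subseteq L^1$), this region contributes at most $\mathrm{const}\cdot e^{-t(K^*(\sigma_t)+\rho_t-\varepsilon_t)}$ to the numerator. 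Dividing by the denominator lower bound gives a bound of order $c_\delta^{-1} e^{-t(\rho_t - \eta - 2\varepsilon_t)}$ for the second-region contribution to the ratio, which $\to 0$ provided $\eta$ is taken small and then $\rho_t$ dominates $\eta + 2\varepsilon_t$ eventually. Letting $\eta\to 0$ along the argument and $\rho_t\to 0$ then forces the whole ratio to $0$.

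The main obstacle I anticipate is the uniformity in $t$ of the denominator bound: because the minimizer $\theta_t^*$ and the minimized value $K^*(\sigma_t)$ both depend on $t$ through the moving target $\sigma_t$ (which need not converge), I cannot just fix one ball around one point. The clean fix is to use compactness of $\Delta X \times \Theta$ together with Lemma \ref{Lemma:Theta(sigma)}(i): the function $(\theta,\sigma)\mapsto K(\theta,\sigma)-K^*(\sigma)$ is continuous on a compact set, hence uniformly continuous, so for every $\eta>0$ there is a single $\delta>0$ such that $\|\theta-\theta'\|<\delta$ implies $|(K(\theta,\sigma)-K^*(\sigma)) - (K(\theta',\sigma)-K^*(\sigma))|<\eta$ for all $\sigma$; then for each $t$, picking $\theta_t'\in\Theta(\sigma_t)$ gives $K(\theta,\sigma_t)-K^*(\sigma_t)<\eta$ on $B(\theta_t',\delta)$, and a compactness/finite-subcover argument on $\Theta$ produces a uniform lower bound $c_\delta = \min_i \mu_0(B(\theta_i,\delta/2)) > 0$ on the prior mass of any such ball (using Assumption \ref{ass:3}). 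Once that uniform-in-$t$ lower bound on the partition function is secured, the rest is the routine Laplace-method bookkeeping sketched above.
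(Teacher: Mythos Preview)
Your approach is essentially the paper's: write the posterior integral as a ratio via \eqref{eq:posterior}, split the numerator at a threshold on $\bar K\equiv K-K^{*}$, use Lemma~\ref{Lemma:uniformconvergence} to replace $L_t$ by $K(\cdot,\sigma_t)\pm\eta$, and secure a uniform-in-$\sigma$ lower bound on the denominator via uniform continuity of $\bar K$ on the compact $\Theta\times\Delta X$ together with Assumption~\ref{ass:3}; this is exactly the paper's $\kappa_{\varepsilon}>0$ in \eqref{eq:kappa_eps}, proved by precisely your finite-subcover argument. One exposition point to tighten: a time-dependent threshold $\rho_t\downarrow 0$ cannot eventually dominate a \emph{fixed} $\eta>0$, so as literally written the second-region bound $c_\delta^{-1}e^{-t(\rho_t-\eta-2\varepsilon_t)}$ need not vanish; the clean order of limits (which the paper uses) is to fix a target $\varepsilon>0$, take the threshold $\rho=\varepsilon$ \emph{constant}, choose $\eta<\rho$ fixed (the paper sets $\eta=\varepsilon/8$), conclude $\limsup\le\varepsilon$, and only then let $\varepsilon\downarrow 0$.
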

\begin{proof}
Fix a history $h$ such that the condition of uniform convergence
in Lemma \ref{Lemma:uniformconvergence} holds, and note that the
set of histories with this property has probability one (henceforth,
we omit the history from the notation). In particular, for all $\eta>0$,
there exists $t_{\eta}$ such that, for all $t\geq t_{\eta}$,
\begin{equation}
\left|L_{t}(\theta)-K(\theta,\sigma_{t})\right|<\eta\label{eq:L_cerca_K}
\end{equation}
 for all $\theta\in\Theta$.

Let $\bar{K}(\theta,\sigma)\equiv K(\theta,\sigma)-K^{*}(\sigma)$.
Fix any $\varepsilon>0$. Using (\ref{eq:posterior}) and the facts
that $0\leq K^{*}(\sigma)$ (the proof is standard) and $K^{*}(\sigma)<\infty$
(follows from Assumption \ref{ass:2}(iii)) for all $\sigma$, we
obtain
\begin{align*}
\int\bar{K}(\theta,\sigma_{t})\mu_{t+1}(d\theta) & =\frac{\int_{\Theta}\bar{K}(\theta,\sigma_{t})e^{-tL_{t}(\theta)}\mu_{0}(d\theta)}{\int_{\Theta}e^{-tL_{t}(\theta)}\mu_{0}(d\theta)}\\
 & =\frac{\int_{\Theta}\bar{K}(\theta,\sigma_{t})e^{-t(L_{t}(\theta)-K^{*}(\sigma_{t}))}\mu_{0}(d\theta)}{\int_{\Theta}e^{-t(L_{t}(\theta)-K^{*}(\sigma_{t}))}\mu_{0}(d\theta)}\\
 & \leq\varepsilon+\frac{\int_{\{\theta:\bar{K}(\theta,\sigma_{t})\geq\varepsilon\}}\bar{K}(\theta,\sigma_{t})e^{-t(L_{t}(\theta)-K^{*}(\sigma_{t}))}\mu_{0}(d\theta)}{\int_{\{\theta:\bar{K}(\theta,\sigma_{t})\leq\varepsilon/2\}}e^{-t(L_{t}(\theta)-K^{*}(\sigma_{t}))}\mu_{0}(d\theta)}\\
 & =:\varepsilon+\frac{A_{t}^{\varepsilon}}{B_{t}^{\varepsilon}}.
\end{align*}
The proof concludes by showing that $\lim_{t\rightarrow\infty}A_{t}^{\varepsilon}/B_{t}^{\varepsilon}=0$.

By (\ref{eq:L_cerca_K}), there exists $t_{\eta}$ such that, for
all $t\geq t_{\eta}$, 
\begin{align*}
\frac{A_{t}^{\varepsilon}}{B_{t}^{\varepsilon}} & \leq\frac{\int_{\{\theta:\bar{K}(\theta,\sigma_{t})\geq\varepsilon\}}\bar{K}(\theta,\sigma_{t})e^{-t(\bar{K}(\theta,\sigma_{t})-\eta)}\mu_{0}(d\theta)}{\int_{\{\theta:\bar{K}(\theta,\sigma_{t})\leq\varepsilon/2\}}e^{-t(\bar{K}(\theta,\sigma_{t})+\eta)}\mu_{0}(d\theta)}\\
 & =e^{2t\eta}\frac{\int_{\{\theta:\bar{K}(\theta,\sigma_{t})\geq\varepsilon\}}\bar{K}(\theta,\sigma_{t})e^{-t\bar{K}(\theta,\sigma_{t})}\mu_{0}(d\theta)}{\int_{\{\theta:\bar{K}(\theta,\sigma_{t})\leq\varepsilon/2\}}e^{-t\bar{K}(\theta,\sigma_{t})}\mu_{0}(d\theta)}.
\end{align*}
Observe that the function $x\mapsto x\exp\{-tx\}$ is decreasing for
all $x>1/t$. Thus, for any $t\geq\max\{t_{\eta},1/\varepsilon\}$
it follows that $\bar{K}(\theta,\sigma_{t})e^{-t\bar{K}(\theta,\sigma_{t})}\leq\varepsilon e^{-t\varepsilon}$
over $\left\{ \theta\colon\bar{K}(\theta,\sigma_{t})\geq\varepsilon\right\} $.
Thus for all $t\geq\max\{t_{\eta},1/\varepsilon\}$, 
\begin{equation}
\frac{A_{t}^{\varepsilon}}{B_{t}^{\varepsilon}}\leq e^{t2\eta}\frac{e^{-t\varepsilon/2}}{\mu_{0}\left(\left\{ \theta\colon\bar{K}(\theta,\sigma_{t})\leq\varepsilon/2\right\} \right)}.\label{eq:ratioA/B}
\end{equation}

In Appendix \ref{pf:eq:kappa_eps}, we show that continuity of $\bar{K}$
and compactness of $\Delta X$ imply that
\begin{equation}
\kappa_{\varepsilon}\equiv\inf_{\sigma\in\Delta X}\mu_{0}\left(\left\{ \theta\colon\bar{K}(\theta,\sigma)\leq\varepsilon/2\right\} \right)>0\label{eq:kappa_eps}
\end{equation}
for all $\varepsilon>0$. Thus, setting $\eta=\varepsilon/8>0$, (\ref{eq:ratioA/B})
implies that, for all $t\geq\max\{t_{\eta},1/\varepsilon\}$, 
\[
\frac{A_{t}^{\varepsilon}}{B_{t}^{\varepsilon}}\leq\frac{e^{-t\varepsilon/4}}{\kappa_{\varepsilon}},
\]
which goes to zero as $t\rightarrow\infty$.
\end{proof}
\medskip{}

In Section \ref{sec:Characterization-of-asymptotic}, we use Theorem
\ref{Theo:Berk} to approximate the agent's belief, $\mu_{t}$, with
the set of probability measures with support in $\{\theta\in\Theta:K(\theta,\sigma_{t})-K^{*}(\sigma_{t})\leq\delta_{t}\}$,
where $\delta_{t}\rightarrow0$. Therefore, we will be able to study
the asymptotic behavior of $(\sigma_{t})_{t}$ via a stochastic difference
equation that only depends on $\sigma_{t}$ and a vanishing approximation
error, and not on $\mu_{t}$.

\section{\label{sec:Characterization-of-asymptotic}Asymptotic characterization
of action frequencies}

In this section, we propose a method to study the asymptotic behavior
of the frequencies of actions. Among other benefits, one can use the
method to determine if behavior converges or not. The key departure
from previous approaches in the literature is to focus on the evolution
of frequencies of actions. Using the characterization of beliefs in
Theorem \ref{Theo:Berk}, we write this evolution as a stochastic
difference equation expressed exclusively in terms of the frequencies
of actions. We then use tools from stochastic approximation developed
by \citet*{benaim2005stochastic} (henceforth, BHS2015) to characterize
the solutions of this difference equation in terms of the solution
to a generalization of a differential equation.

We first provide a heuristic description of our approach. The sequence
of frequencies of actions, $(\sigma_{t})_{t}$, can be written recursively
as follows:
\begin{equation}
\sigma_{t+1}=\sigma_{t}+\frac{1}{t+1}\left(\mathbf{1}(x_{t+1})-\sigma_{t}\right),\label{eq:system0-1}
\end{equation}
where $\mathbf{1}(x_{t+1})=(\mathbf{1}_{x}(x_{t+1}))_{x\in X}$ and
$\mathbf{1}_{x}(x_{t+1})$ is the indicator function that takes the
value $1$ if $x_{t+1}=x$ and $0$ otherwise. 

By adding and subtracting the conditional expectation of $\mathbf{1}(x_{t+1})$
(i.e., the probability that each action is played at time $t+1$ given
the belief at time $t+1$), we obtain 
\begin{equation}
\sigma_{t+1}=\sigma_{t}+\frac{1}{t+1}\left(E\left[\mathbf{1}(x_{t+1})\mid\mu_{t+1}\right]-\sigma_{t}\right)+\frac{1}{t+1}\underset{=0}{\left(\underbrace{\mathbf{1}(x_{t+1})-E\left[\mathbf{1}(x_{t+1})\mid\mu_{t+1}\right]}\right)}.\label{eq:system0-1-1}
\end{equation}

The last term in equation (\ref{eq:system0-1-1}) is exactly equal
to zero because the agent chooses pure actions.\footnote{More generally, if the agent were allowed to mix, this last term is
a Martingale difference sequence and essentially adds a noise term
to the equation that can be controlled asymptotically in a manner
that is standard in the theory of stochastic approximation. Theorem
\ref{Theo:APT} continues to hold as stated, where now $\Delta F(\mu)$
is a set of \emph{compound} lotteries, i.e., it is the set of all
distributions over actions $\hat{\sigma}$ that are induced by some
compound lottery $z$ chosen from $\Delta F(\mu)$, that is, $\hat{\sigma}(x)=\int_{\sigma\in F(\mu)}z(\sigma)\sigma(x)d\sigma$
for each $x$.} The reason it is hard to characterize $(\sigma_{t})_{t}$ using (\ref{eq:system0-1-1})
is that its evolution depends on the agent's belief. If we could somehow
write the belief $\mu_{t+1}$ as a function of $\sigma_{t}$, then
we would have a recursion where $\sigma_{t+1}$ depends only on $\sigma_{t}$.
This is where Theorem \ref{Theo:Berk} from Section \ref{sec:asymptotic-beliefs}
is useful. This theorem will allow us to approximate $\mu_{t+1}$
with a set of probability measures that depends on $\sigma_{t}$.

The ultimate objective is not really to approximate $\mu_{t+1}$ but
rather the conditional expectation $E\left[\mathbf{1}(x_{t+1})\mid\mu_{t+1}\right]$
in equation (\ref{eq:system0-1-1}). The conditional expectation,
however, is typically discontinuous in the belief (this is particularly
so for a belief under which the agent is indifferent between two actions).
Thus, replacing $\mu_{t+1}$ with a good approximation does not necessarily
yield a good approximation for the conditional expectation. We tackle
this discontinuity issue by replacing the function $\mu\mapsto E\left[\boldsymbol{1}(x_{t+1})\mid\mu\right]$
with a correspondence that contains this function and is well behaved.

To see how this approach works, note that $E[\boldsymbol{1}(x_{t+1})\mid\mu]\in\Delta F(\mu)$
for all $\mu$. Therefore, we can view equation (\ref{eq:system0-1-1})
as a particular case of the following stochastic difference \emph{inclusion}:
\begin{equation}
\sigma_{t+1}=\sigma_{t}+\frac{1}{t+1}\left(r_{t+1}-\sigma_{t}\right),\label{eq:system0-1-1-1}
\end{equation}
where $r_{t+1}\in\Delta F(\mu_{t+1})$. It is called a difference
inclusion because $r_{t+1}$ can take multiple values. Importantly,
we use Theorem \ref{Theo:Berk} to approximate $\mu_{t+1}$ with the
set of probability measures $\mu$ satisfying $\int_{\Theta}(K(\theta,\sigma_{t})-K^{*}(\sigma_{t}))\mu(d\theta)\leq\delta_{t}$,
where $\delta_{t}\rightarrow0$ is a vanishing approximation error.
In particular, if the error were exactly zero, the set would be equal
to $\Delta\Theta(\sigma_{t})$. More generally, the difference equation
(\ref{eq:system0-1-1-1}) can be written entirely in terms of $(\sigma_{t})_{t}$
and approximation errors.

A key insight from the theory of stochastic approximation is that,
in order to characterize a discrete-time process such as $(\sigma_{t})_{t}$,
it is convenient to work with its continuous-time interpolation. Because
of the multiplicity inherent in equation (\ref{eq:system0-1-1-1}),
we apply the specific methods developed by BHS2015, who extend \citet{benaim1996dynamical}'s
ordinary-differential equation method to the case of differential
inclusions.\footnote{See \citet{borkar2009stochastic} for a textbook treatment of the
ordinary-differential equation method in stochastic approximation.}

Set $\tau_{0}=0$ and $\tau_{t}=\sum_{i=1}^{t}1/i$ for $t\geq1$.
The continuous-time interpolation of $(\sigma_{t})_{t}$ is the function
$\mathbf{w}:\mathbb{R}_{+}\rightarrow\Delta X$ defined as
\begin{equation}
\mathbf{w}(\tau_{t}+s)=\sigma_{t}+s\frac{\sigma_{t+1}-\sigma_{t}}{\tau_{t+1}-\tau_{t}},\,\,\,\,\,\,\,\,\,\,s\in[0,\frac{1}{t+1}).\label{eq:interpolation}
\end{equation}
Figure \ref{fig:interpol} illustrates this simple interpolation for
a specific value of $x\in X$. A convenient property of the interpolation
is that it preserves the accumulation points of the discrete process.

\begin{figure}
\qquad{}\qquad{}\qquad{}\qquad{}\qquad{}\includegraphics{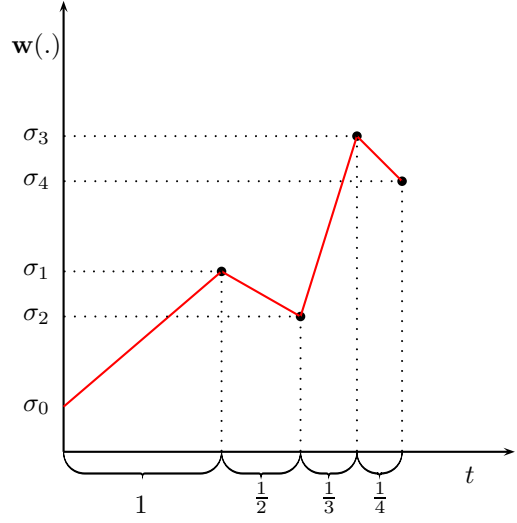}
\caption{\label{fig:interpol} Example of a continuous-time interpolation.}
\end{figure}
Equations (\ref{eq:system0-1-1-1}) and (\ref{eq:interpolation})
can be combined to show that the derivate of $\mathbf{w}$ with respect
to (a re-indexing of) time, which we denote by $\dot{\mathbf{w}}$,
is approximately given by $r_{t+1}-\sigma_{t}$. As argued earlier,
$r_{t+1}$ belongs to a set that depends on $\sigma_{t}$ and an approximation
error, and this set is equal to $\Delta F(\Delta\Theta(\sigma_{t}))$.
Thus, the derivate approximately takes values in $\Delta F(\Delta\Theta(\sigma_{t}))-\sigma_{t}$.
The next step is to replace $\sigma_{t}$ in this last expression
by its interpolation $\mathbf{w}(t)$. This replacement adds yet another
vanishing approximation error, and we therefore obtain, ignoring the
approximation error, that $\dot{\mathbf{w}}(t)\in\Delta F(\Delta\Theta(\mathbf{w}(t)))-\mathbf{w}(t)$.
Thus, we can show that the continuous-time interpolation of $(\sigma_{t})_{t}$
is well approximated by solutions to the following differential inclusion:

\begin{equation}
\boldsymbol{\dot{\sigma}}(t)\in\Delta F(\Delta\Theta(\boldsymbol{\sigma}(t)))-\boldsymbol{\sigma}(t).\label{eq:DI-1-1}
\end{equation}

To state the result formally, we first define what we mean by a solution
to the differential inclusion. A \textbf{solution to the differential
inclusion}\emph{ (\ref{eq:DI-1-1}) }\textbf{with initial point} $\sigma\in\Delta X$
is a mapping $\boldsymbol{\boldsymbol{\sigma}}:\mathbb{R}\rightarrow\Delta X$
that is absolutely continuous over compact intervals with the properties
that $\boldsymbol{\boldsymbol{\sigma}}(0)=\sigma$ and that (\ref{eq:DI-1-1})
is satisfied for almost every $t$. Let $S_{\sigma}^{T}$ denote the
set of solutions to (\ref{eq:DI-1-1}) over $[0,T]$ with initial
point $\sigma$. The assumption that $F$ is uhc implies that, for
every initial point, there exists a (possibly nonunique) solution
to (\ref{eq:DI-1-1}); see, e.g., \citet{aubin2012differential}.

We now state the main characterization result.

\medskip{}

\begin{thm}
\label{Theo:APT} Suppose that Assumptions \ref{ass:1}-\ref{ass:3}
hold and let $F$ be an uhc policy correspondence. For any policy
$f\in Sel(F)$, the following holds $P^{f}$-a.s.: For all $T>0$,
\begin{equation}
\lim_{t\rightarrow\infty}\inf_{\boldsymbol{\sigma}\in S_{\boldsymbol{\mathbf{w}}(t)}^{T}}\sup_{0\leq s\leq T}\left\Vert \mathbf{w}(t+s)-\boldsymbol{\sigma}(s)\right\Vert =0.\label{eq:APTeq}
\end{equation}
\end{thm}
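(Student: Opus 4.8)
The plan is to read (\ref{eq:system0-1-1-1}) as a Robbins--Monro recursion with step sizes $\gamma_{t+1}=1/(t+1)$ and drift $r_{t+1}=\mathbf 1(x_{t+1})$, where $x_{t+1}=f(\mu_{t+1})\in F(\mu_{t+1})$, and to apply the stochastic-approximation theory of BHS2015 for differential inclusions. Concretely, I would show that the interpolation $\mathbf w$ of (\ref{eq:interpolation}) is a \emph{perturbed solution} of the differential inclusion $\dot{\boldsymbol\sigma}\in G(\boldsymbol\sigma)$, i.e.\ of (\ref{eq:DI-1-1}), where $G(\sigma)\equiv\Delta F(\Delta\Theta(\sigma))-\sigma$, and then invoke BHS2015's theorem that perturbed solutions of a well-behaved inclusion are asymptotic pseudotrajectories of it --- which is exactly (\ref{eq:APTeq}). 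The needed regularity of $G$ is cheap: since $X$ is finite, Assumption \ref{ass:4} and Lemma \ref{Lemma:Theta(sigma)}(ii) make $\sigma\mapsto F(\Delta\Theta(\sigma))\subseteq X$ upper hemicontinuous, so $G$ is upper hemicontinuous with nonempty compact convex values (each $\Delta F(\Delta\Theta(\sigma))$ is a face-union sub-simplex of $\Delta X$), the inclusion has solutions from every initial point as already noted, and all linear-growth requirements hold trivially because $\Delta X$ is compact. Moreover, since the agent plays pure actions the martingale-difference term in (\ref{eq:system0-1-1-1}) is identically zero, so there is no genuine noise; the whole task reduces to matching the drift $r_{t+1}-\sigma_t$ to $G$ up to a vanishing error.

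That matching is the heart of the proof, and I would organize it in three steps, all holding $P^{f}$-a.s. \emph{Step 1.} Theorem \ref{Theo:Berk} plus Markov's inequality gives, for every fixed $\delta>0$, $\mu_{t+1}(\{\theta:K(\theta,\sigma_t)-K^{*}(\sigma_t)>\delta\})\to0$; a slow diagonal choice $\delta_t\downarrow0$ preserves this, and a uniform-in-$\sigma$ form of ``$K(\theta,\sigma)-K^{*}(\sigma)$ small $\Rightarrow$ $\theta$ near $\Theta(\sigma)$'' (proved by contradiction from continuity of $\bar K$, compactness of $\Delta X$, and uhc of $\Theta(\cdot)$, in the spirit of (\ref{eq:kappa_eps})) then yields $d_{P}(\mu_{t+1},\Delta\Theta(\sigma_t))\to0$, where $d_{P}$ is the Prokhorov distance and $\Delta\Theta(\sigma_t)$ is viewed as a compact subset of $\Delta\Theta$ (concentrate the mass onto $\Theta(\sigma_t)$ via nearest points and dump the vanishing leftover there). \emph{Step 2.} Because $F$ is merely upper hemicontinuous it can ``jump up'' in the limit, so one cannot hope for $F(\mu_{t+1})\subseteq F(\Delta\Theta(\sigma_t))$; instead I would introduce the set $A_{x}\equiv\{\sigma\in\Delta X:x\in F(\Delta\Theta(\sigma))\}$, note it is closed (if $\sigma_n\to\sigma$ and $x\in F(\Delta\Theta(\sigma_n))$, pick $\mu_n\in\Delta\Theta(\sigma_n)$ with $x\in F(\mu_n)$, pass to a weak limit $\mu$, which lies in $\Delta\Theta(\sigma)$ by uhc of $\Theta(\cdot)$ and has $x\in F(\mu)$ by uhc of $F$), and prove $\operatorname{dist}(\sigma_t,A_{x_{t+1}})\to0$. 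If this failed along $(t_k)$, pass to a further subsequence with $x_{t_k+1}\equiv x^{*}$, $\sigma_{t_k}\to\sigma^{*}$, $\mu_{t_k+1}\to\mu^{*}$; Step 1 forces $\mu^{*}\in\Delta\Theta(\sigma^{*})$, uhc of $F$ forces $x^{*}\in F(\mu^{*})\subseteq F(\Delta\Theta(\sigma^{*}))$, i.e.\ $\sigma^{*}\in A_{x^{*}}$, contradicting $\operatorname{dist}(\sigma_{t_k},A_{x^{*}})\ge\varepsilon$. \emph{Step 3.} Pick $\sigma_t'\in A_{x_{t+1}}$ with $\Vert\sigma_t'-\sigma_t\Vert\le\operatorname{dist}(\sigma_t,A_{x_{t+1}})+1/t\equiv\delta_t'\to0$; then $\mathbf 1(x_{t+1})\in\Delta F(\Delta\Theta(\sigma_t'))$, so $r_{t+1}-\sigma_t$ lies in the $\delta_t'$-inflation $G^{\delta_t'}(\sigma_t)$ in the sense of BHS2015.

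The remaining bookkeeping is routine. From (\ref{eq:interpolation}), $\Vert\mathbf w(\tau_t+s)-\sigma_t\Vert\le\Vert\sigma_{t+1}-\sigma_t\Vert\le2/(t+1)$, and on each interval $[\tau_t,\tau_{t+1})$ the derivative $\dot{\mathbf w}$ is the constant $r_{t+1}-\sigma_t$, which by Step 3 lies in $G^{\delta_t'+2/(t+1)}(\mathbf w(\tau))$; since $\gamma_t\to0$, $\sum_t\gamma_t=\infty$, the inflation parameter tends to $0$, and the perturbation term is zero, $\mathbf w$ is a perturbed solution of $\dot{\boldsymbol\sigma}\in G(\boldsymbol\sigma)$, and BHS2015's characterization of perturbed solutions as asymptotic pseudotrajectories gives (\ref{eq:APTeq}). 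The main obstacle is Step 2: reconciling $F(\mu_{t+1})$ with $\Delta F(\Delta\Theta(\sigma_t))$ in the face of the discontinuity of $\mu\mapsto E[\mathbf 1(x_{t+1})\mid\mu]$ and the one-sided continuity of $F$. The key realization is that one should not try to place the drift inside $G(\sigma_t)$ itself but inside its vanishing inflation, anchored at a nearby frequency $\sigma_t'$ that rationalizes the realized action, and that compactness of $\Delta X$ and $\Delta\Theta$ is what upgrades the pointwise statements of Theorem \ref{Theo:Berk} and Assumption \ref{ass:4} to the uniform estimates the stochastic-approximation argument consumes.
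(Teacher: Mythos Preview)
Your proposal is correct and follows the paper's strategy: show the interpolation $\mathbf w$ is a perturbed solution of (\ref{eq:DI-1-1}) and invoke BHS2005 to get the asymptotic pseudotrajectory property. The organization of the bridging step differs. The paper defines a problem-specific enlargement $\Xi(\delta,\sigma)=\{y:\exists\sigma',\mu'\text{ with }\|\sigma'-\sigma\|\le\delta,\ \int\bar K(\theta,\sigma')\mu'(d\theta)\le\delta,\ y\in\Delta F(\mu')-\sigma'\}$, so that Claim~2 ($r_{t+1}-\sigma_t\in\Xi(\delta_t,\sigma_t)$ with $\delta_t\to0$) is immediate from Theorem~\ref{Theo:Berk}, and the work goes into proving $\Xi$ is uhc with $\Xi(0,\sigma)\subseteq G(\sigma)$; the paper then reproduces the BHS APT argument (Banach--Alaoglu plus Mazur) rather than citing it. You instead use the standard BHS inflation $G^{\delta}$ and shift the work into Steps~1--2, anchoring the realized action at a nearby $\sigma_t'\in A_{x_{t+1}}$. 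Both routes work; the paper's is a bit more direct because it never needs Prokhorov distance (indeed your Step~2 contradiction only needs that subsequential weak limits $\mu^*$ of $\mu_{t_k+1}$ lie in $\Delta\Theta(\sigma^*)$, which follows straight from Theorem~\ref{Theo:Berk} and continuity of $\bar K$ without the full Prokhorov statement), while your packaging via the closed sets $A_x$ is arguably cleaner than verifying uhc of $\Xi$ from scratch.
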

\begin{proof}
See Appendix \ref{pf:Theo:APT}.
\end{proof}
\medskip{}

Theorem \ref{Theo:APT} says that, for any $T>0$, the curve $\mathbf{w}(t+\cdot):[0,T]\rightarrow\Delta X$
defined by the continuous-time interpolation of $(\sigma_{t})_{t}$
approximates some solution to the differential inclusion (\ref{eq:DI-1-1})
with initial condition $\mathbf{w}(t)$ over the interval $[0,T]$
with arbitrary accuracy for sufficiently large $t$. As we will show,
this result is convenient because it allows us to characterize asymptotic
properties of $(\sigma_{t})_{t}$ by solving the differential inclusion
in (\ref{eq:DI-1-1}).

BHS2005 refer to a function $\mathbf{w}$ satisfying (\ref{eq:APTeq})
as an asymptotic pseudotrajectory of the differential inclusion. They
show that the limit set of a (bounded) asymptotic pseudotrajectory
is internally chain transitive.\footnote{For a definition of an internally chain transitive set, see BHS2005,
Section 3.3, Definition VI.} Thus, one corollary of Theorem \ref{Theo:APT} is that the frequency
of actions converges almost surely to an internally chain transitive
set of the differential inclusion. Because the notion of internally
chain transitive is fairly complex, in the next section we provide
a series of results that help characterize behavior in most common
economic applications.

\section{\label{sec:Convergence-to-equilibrium}Convergence results}

We now present a series of implications of Theorem \ref{Theo:APT}
that can be readily applied to economic applications. Throughout this
section we assume that the agent chooses a policy $f$ that is a selection
from $F$ and that Assumptions 1-4 are satisfied. All probabilistic
statements are with respect to the corresponding probability measure
$P^{f}$.

\subsection{Equilibrium}

We begin by defining the notion of equilibrium as a stationary point
of the differential inclusion.
\begin{defn}
\label{def:equilibrium}$\sigma\in\Delta X$ is an \textbf{equilibrium}
given a policy correspondence $F$ if $\sigma\in\Delta F(\Delta\Theta(\sigma))$.
\end{defn}
\bigskip{}
If $\sigma$ is an equilibrium, then there is a solution of the differential
inclusion that starts at $\sigma$ and forever remains at $\sigma$.
The next result shows that, if the action frequency converges, then
it must converge to an equilibrium. In Section \ref{sec:berk-nash},
we relate this result to previous results in the literature and show
that the notion of equilibrium that arises naturally from our approach
is more general than notions previously considered.\medskip{}

\begin{prop}
\label{prop:converge->eqm}The following property holds almost surely:
If $\sigma_{t}$ converges to some point $\sigma^{*}$, then $\sigma^{*}$
must be an equilibrium.
\end{prop}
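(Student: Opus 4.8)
The plan is to read the conclusion off Theorem~\ref{Theo:APT} together with the fact that the continuous-time interpolation inherits the limit. Suppose $\sigma_t\to\sigma^*$. Since $\mathbf{w}$ only linearly connects consecutive values $\sigma_t,\sigma_{t+1}$ over the time interval $[\tau_t,\tau_{t+1})$ of length $1/(t+1)$, every value $\mathbf{w}(\tau)$ is a convex combination of two consecutive $\sigma$'s, so $\|\mathbf{w}(\tau_t+s)-\sigma^*\|\le\max\{\|\sigma_t-\sigma^*\|,\|\sigma_{t+1}-\sigma^*\|\}\to 0$; hence $\mathbf{w}(\tau)\to\sigma^*$ as $\tau\to\infty$, and in particular, for any $T>0$, $\sup_{0\le s\le T}\|\mathbf{w}(t+s)-\sigma^*\|\to 0$. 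The idea is then: Theorem~\ref{Theo:APT} forces a solution of the differential inclusion \eqref{eq:DI-1-1} that is uniformly close to the constant $\sigma^*$ on $[0,T]$; passing to a limit produces a solution that stays at $\sigma^*$, which is exactly $0\in\Delta F(\Delta\Theta(\sigma^*))-\sigma^*$, i.e.\ $\sigma^*\in\Delta F(\Delta\Theta(\sigma^*))$.

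To make this precise I would fix $T>0$ and a sequence $t_n\to\infty$. For each $n$, Theorem~\ref{Theo:APT} yields $\boldsymbol{\sigma}^n\in S^T_{\mathbf{w}(t_n)}$ with $\sup_{0\le s\le T}\|\mathbf{w}(t_n+s)-\boldsymbol{\sigma}^n(s)\|\to 0$; combined with $\sup_{0\le s\le T}\|\mathbf{w}(t_n+s)-\sigma^*\|\to 0$ this gives $\sup_{0\le s\le T}\|\boldsymbol{\sigma}^n(s)-\sigma^*\|\to 0$. Because the right-hand side of \eqref{eq:DI-1-1} takes values in the bounded set $\Delta X-\Delta X$, the $\boldsymbol{\sigma}^n$ are uniformly Lipschitz on $[0,T]$, so along a subsequence they converge uniformly (Arzel\`a--Ascoli) to an absolutely continuous $\boldsymbol{\sigma}^\infty$ which, by the uniform bound, satisfies $\boldsymbol{\sigma}^\infty(s)=\sigma^*$ for all $s\in[0,T]$. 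The remaining point is that $\boldsymbol{\sigma}^\infty$ is itself a solution of \eqref{eq:DI-1-1}: this is the standard closure property of differential inclusions with nonempty-, compact-, convex-valued, uhc right-hand side, applied to $\sigma\mapsto\Delta F(\Delta\Theta(\sigma))-\sigma$ (convexity because $\Delta F(\Delta\Theta(\sigma))$ is the set of distributions supported on a subset of $X$, a face of $\Delta X$; upper hemicontinuity from Lemma~\ref{Lemma:Theta(sigma)} and uhc of $F$) --- these are precisely the regularity properties under which Theorem~\ref{Theo:APT} is established; see, e.g., \citet{aubin2012differential}. Since $\boldsymbol{\sigma}^\infty\equiv\sigma^*$ solves \eqref{eq:DI-1-1}, for a.e.\ $s$ we get $0=\dot{\boldsymbol{\sigma}}^\infty(s)\in\Delta F(\Delta\Theta(\sigma^*))-\sigma^*$, i.e.\ $\sigma^*\in\Delta F(\Delta\Theta(\sigma^*))$, which is equilibrium in the sense of Definition~\ref{def:equilibrium}.

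A shorter variant, if one invokes the full BHS2005 machinery, is to observe that $\{\sigma^*\}$ is then the limit set of the bounded asymptotic pseudotrajectory $\mathbf{w}$ and is therefore internally chain transitive, hence invariant under the flow of \eqref{eq:DI-1-1}; an invariant singleton necessarily carries the constant solution, and the same conclusion $0\in\Delta F(\Delta\Theta(\sigma^*))-\sigma^*$ follows. I would likely present the direct argument as the main proof and mention this as a remark.

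I expect the only genuine obstacle to be the closure step --- that a uniform limit of solutions of \eqref{eq:DI-1-1} is again a solution --- which hinges on convex-valuedness and upper hemicontinuity of the driving correspondence. Since Section~\ref{sec:Characterization-of-asymptotic} already verifies exactly this regularity in order to apply BHS2005, in the write-up this step reduces to a citation rather than new work, and everything else (interpolation converging to $\sigma^*$, the triangle-inequality bound, equicontinuity via the bounded right-hand side, and reading off $\dot{\boldsymbol{\sigma}}^\infty=0$) is routine.
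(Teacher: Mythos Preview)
Your argument is correct, but it takes a genuinely different route from the paper's proof. The paper argues by contradiction: if $\sigma^*$ is not an equilibrium, there is some action $x$ with $\sigma^*(x)>0$ yet $x\notin F(\Delta\Theta(\sigma^*))$; by upper hemicontinuity of $F$ and $\Theta(\cdot)$, this persists on a ball $B_\varepsilon(\sigma^*)$, so any solution of \eqref{eq:DI-1-1} starting in that ball must strictly decrease the $x$-component and therefore exit the ball in a uniform time $T$. Theorem~\ref{Theo:APT} then forces $\mathbf{w}$ to track such a solution and hence leave the $\varepsilon/2$-ball, contradicting $\sigma_t\to\sigma^*$. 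This is elementary and exploits the concrete structure of the inclusion (a coordinate that only drifts down).

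Your approach instead passes to a limit of solutions: you produce solutions $\boldsymbol{\sigma}^n$ uniformly close to the constant $\sigma^*$, extract a uniform limit by Arzel\`a--Ascoli, and invoke the closure property of solution sets for uhc, compact- and convex-valued differential inclusions to conclude the constant $\sigma^*$ is itself a solution. This is perfectly valid---the regularity you need for the closure step is exactly what the paper verifies to apply BHS2005---and your shorter variant via internal chain transitivity is also sound. The trade-off: your route is more abstract and would generalize verbatim to any differential inclusion satisfying the BHS2005 hypotheses, whereas the paper's argument is self-contained and avoids citing the closure lemma, at the cost of being specific to this setting.
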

\begin{proof}
See Appendix \ref{Pf:prop:converge->eqm}.
\end{proof}
\medskip{}

\subsection{Attracting sets and repelling equilibrium}

Proposition \ref{prop:converge->eqm} shows that, if the action frequency
$\sigma_{t}$ converges, its limit must be an equilibrium of the differential
inclusion. However, this is not a complete characterization of the
long-run behavior of the action frequency, for two reasons. First,
the proposition applies only to the case in which the action frequency
converges. It does not tell us what happens when the action frequency
does not converge, and also it is not clear when the action frequency
converges. Second, even when the action frequency converges, if there
are multiple equilibria, the proposition does not tell us which one
will arise as a long-run outcome. In this section, we will introduce
two concepts, \textit{attracting sets} and \textit{repelling equilibria},
which are useful to make a better prediction about the asymptotic
behavior of the action frequency.

Let $d(\sigma,A)$ denote the distance from a point $\sigma$ to a
set $A$, that is, let $d(\sigma,A)=\inf_{\tilde{\sigma}\in A}\left\Vert \sigma-\tilde{\sigma}\right\Vert $.
The following definition is standard in the stochastic approximation
literature (e.g. BHS2015).\medskip{}

\begin{defn}
\label{def:attracting} A set $A\subseteq\triangle X$ is \textbf{\textit{\emph{attracting}}}
if there is a set $\mathcal{U}$ such that $A\subset\textrm{int}\mathcal{U}$
and such that for any $\varepsilon>0$, there is $T$ such that $d(\bm{\sigma}(t),A)<\varepsilon$
for any initial value $\bm{\sigma}(0)\in\mathcal{U}$, for any solution
$\bm{\sigma}\in S_{\bm{\sigma}(0)}^{\infty}$ to the differential
inclusion, and for any $t>T$.
\end{defn}
\medskip{}

In this definition, we require uniform convergence, in that as long
as the initial value is chosen from $\mathcal{U}$, $\bm{\sigma}(t)$
is in the $\varepsilon$-neighborhood of $A$ for \textit{all} periods
$t>T$. Intuitively, this implies that once $\bm{\sigma}(t)$ enters
the $\varepsilon$-neighborhood of $A$, it will never leave this
neighborhood. The largest set $\mathcal{U}$ which satisfies the property
in this definition is the \textbf{basin of attraction} of $A$, and
we will denote it by $\mathcal{U}_{A}$. A set $A$ is a \textbf{\textit{\emph{globally
attracting}}} if it is attracting and its basin of attraction is the
whole space $\triangle X$. An equilibrium $\sigma^{\ast}$ is \textbf{\textit{\emph{attracting}}}
if the set $A=\{\sigma^{\ast}\}$ is attracting.

The following proposition shows that an attracting set appears as
a long-run outcome in some sense. Let $E$ denote the set of all equilibria.

\medskip{}

\begin{prop}
\label{Prop:Attracting} The following results hold:
\end{prop}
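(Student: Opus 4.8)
The plan is to derive the proposition from Theorem~\ref{Theo:APT} together with the general theory of asymptotic pseudotrajectories for differential inclusions in \citet*{benaim2005stochastic}. The first step is to record that, since $\Delta X$ is compact, the continuous-time interpolation $\mathbf{w}$ of $(\sigma_t)_t$ is a \emph{bounded} asymptotic pseudotrajectory of the differential inclusion~(\ref{eq:DI-1-1}); this is exactly what~(\ref{eq:APTeq}) asserts. I would then invoke the BHS2015 result that the limit set $L(\mathbf{w})=\bigcap_{t\geq0}\overline{\{\mathbf{w}(s):s\geq t\}}$ of a bounded asymptotic pseudotrajectory is nonempty, compact, connected, and internally chain transitive (ICT) for~(\ref{eq:DI-1-1}). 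Because the interpolation preserves the accumulation points of the discrete process, every asymptotic statement about $(\sigma_t)_t$ becomes a statement about ICT sets of~(\ref{eq:DI-1-1}): in particular $\sigma_t\to A$ iff $L(\mathbf{w})\subseteq A$, and $\sigma_t$ does not converge to an equilibrium $\sigma^{*}$ precisely on the event $\{L(\mathbf{w})\neq\{\sigma^{*}\}\}$.

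For the statements about an attracting set $A$ with basin $\mathcal{U}_A$, the key tool is the dichotomy in \citet*{benaim2005stochastic}: any ICT set that meets $\mathcal{U}_A$ is contained in $A$. Applying this to $L(\mathbf{w})$, whenever $L(\mathbf{w})\cap\mathcal{U}_A\neq\emptyset$ (e.g.\ whenever $\sigma_t$ enters $\mathcal{U}_A$ infinitely often) we get $L(\mathbf{w})\subseteq A$, i.e.\ $\sigma_t\to A$. If $A$ is globally attracting, then $\mathcal{U}_A=\Delta X$, so $L(\mathbf{w})\cap\mathcal{U}_A\neq\emptyset$ holds automatically (as $L(\mathbf{w})$ is a nonempty subset of $\Delta X$), which yields $\sigma_t\to A$ $P^f$-a.s. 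Any positive-probability statement for a merely attracting $A$ would in addition require showing that with positive $P^f$-probability $(\sigma_t)_t$ eventually lies in a compact subset of $\mathcal{U}_A$; here the full-support prior (Assumption~\ref{ass:3}) and the regularity of $\Theta(\cdot)$ and $F$ (Lemma~\ref{Lemma:Theta(sigma)}, Assumption~\ref{ass:4}) are what make the relevant events have positive probability.

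The part concerning a repelling equilibrium $\sigma^{*}$ is where I expect the real difficulty to lie, since the pseudotrajectory property is not by itself enough: $\sigma^{*}$ being an equilibrium, the constant path $\boldsymbol{\sigma}\equiv\sigma^{*}$ solves~(\ref{eq:DI-1-1}), so Theorem~\ref{Theo:APT} does not exclude $L(\mathbf{w})=\{\sigma^{*}\}$. The plan is instead to work directly with the recursion $\sigma_{t+1}-\sigma_t=\tfrac1{t+1}(\mathbf{1}(x_{t+1})-\sigma_t)$: repulsion means that in a neighborhood of $\sigma^{*}$ every admissible velocity $r-\sigma$ with $r\in\Delta F(\Delta\Theta(\sigma))$ has a component uniformly bounded away from zero in a direction pointing away from $\sigma^{*}$, so the projection of $\sigma_t-\sigma^{*}$ onto that direction is pushed outward and no realized increment path can shadow the constant solution. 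Converting this drift estimate into $P^f(\sigma_t\to\sigma^{*})=0$ needs control of the fluctuations of $\sigma_t$ and of the time spent near $\sigma^{*}$, and this is the delicate point: unlike in stochastic fictitious play, the randomness here is not an additive martingale-difference term but enters through the dependence of $\mu_{t+1}$---and hence of the chosen action---on the history of consequences, so the classical nonconvergence-to-unstable-equilibria arguments (e.g.\ \citet{benaim1999mixed}) must be adapted to the differential-inclusion structure established in Section~\ref{sec:asymptotic-beliefs}.
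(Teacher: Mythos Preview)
Your proposal conflates several distinct results. Proposition~\ref{Prop:Attracting} has exactly two parts, both about \emph{attracting} sets: (i) if $A$ is globally attracting then $\sigma_t\to A$ a.s., and (ii) if $\Delta X$ is covered by the basins of finitely many attracting sets $A_1,\dots,A_N$ together with the equilibrium set $E$, then a.s.\ $\sigma_t$ approaches $E$ or one of the $A_n$. There is no repelling-equilibrium clause and no positive-probability clause here; those are Propositions~\ref{Prop:Repelling} and~\ref{prop:converge_robust_attracting}, each with its own separate proof. So the second and third paragraphs of your proposal are simply off-target for this statement.

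For the part you do address, your route via the BHS2005 machinery is correct and genuinely different from the paper's argument. You invoke the abstract fact that the limit set $L(\mathbf w)$ of a bounded asymptotic pseudotrajectory is internally chain transitive, together with the attractor dichotomy (an ICT set that meets the basin of an attractor is contained in the attractor). This handles (i) immediately once $\mathcal U_A=\Delta X$, and also yields (ii): either $L(\mathbf w)$ meets some $\mathcal U_{A_n}$, in which case $L(\mathbf w)\subseteq A_n$, or it meets none, in which case $L(\mathbf w)\subseteq\Delta X\setminus\bigcup_n\mathcal U_{A_n}\subseteq E$. The only point to check carefully is that the paper's Definition~\ref{def:attracting} (uniform-in-initial-condition convergence of all solutions) delivers an attractor in the sense needed for the BHS2005 dichotomy; it does, but you should state this rather than leave it implicit.

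The paper instead gives a self-contained, elementary iterative argument that avoids the ICT abstraction altogether: choose $T$ so that every solution starting in a basin is within $\varepsilon/2$ of the attractor on $[T,2T]$, choose $\tilde T$ so that the pseudotrajectory approximation in Theorem~\ref{Theo:APT} is within $\varepsilon/2$ on windows of length $2T$, and then chain these windows to show that once $\mathbf w$ enters a basin it is trapped in the $\varepsilon$-neighborhood of the attractor forever. Your approach is shorter if one is willing to cite BHS2005 as a black box; the paper's approach makes explicit exactly where the \emph{uniformity} in Definition~\ref{def:attracting} is used to defeat the shadowing problem discussed in the text following the proposition.
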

\begin{itemize}
\item[(i)] If $A$ is globally attracting, then the action frequency $\sigma_{t}$
approaches this set $A$ almost surely: $\lim_{t\to\infty}d(\sigma_{t},A)=0$.
In particular, if $A$ is a globally attracting equilibrium, $\sigma^{t}$
converges to that equilibrium almost surely.
\item[(ii)] Suppose that there are finitely many attracting sets $(A_{1},\cdots,A_{N})$
such that $\triangle X$ is the union of the basins $(\mathcal{U}_{A_{1}},\cdots,\mathcal{U}_{A_{N}})$
of these attractors and of the equilibrium set $E$. Then almost surely,
$\sigma_{t}$ approaches the equilibrium set $E$ or one of these
attractors: $\lim_{t\to\infty}d(\sigma_{t},E)=0$ or $\lim_{t\to\infty}d(\sigma_{t},A_{n})=0$
for some $n$.
\end{itemize}
\begin{proof}
See Appendix \ref{Pf:Prop:Attracting}.
\end{proof}
\medskip{}

Recall that in the definition of attracting sets, we require uniform
convergence. This property is crucial in order to obtain Proposition
\ref{Prop:Attracting}. To see this, let $\bm{w}(t)$ denote the current
action frequency. Theorem \ref{Theo:APT} implies that the motion
of the action frequency in the future is approximated by a solution
$\bm{\sigma}\in S_{\bm{w}(t)}^{\infty}$ to the differential inclusion
for some (long but) \textit{finite time} $T$; but it does not guarantee
that the action frequency $\bm{w}$ is approximated by $\bm{\sigma}$
forever. So even if all solutions $\bm{\sigma}\in S_{\bm{w}(t)}^{\infty}$
starting from the current value $\bm{w}(t)$ converge to some equilibrium
$\sigma^{\ast}$, the action frequency $\bm{w}$ may not converge
there.\footnote{This is a so-called ``shadowing'' problem in the literature on stochastic
approximation. See Section 8 of \citet{Benaim99} for more details.} Formally, Theorem \ref{Theo:APT} implies that, for any $T$ and
$\varepsilon>0$, if $t$ is large enough, then $\lVert\bm{w}(t+T)-\bm{\sigma}(T)\rVert<\varepsilon$
for some $\bm{\sigma}\in S_{\bm{w}(t)}^{\infty}$, so the action frequency
$\bm{w}(t+T)$ in time $t+T$ is close to the equilibrium $\sigma^{\ast}$.
However, after time $t+T$, the action frequency $\bm{w}$ can be
quite different from $\bm{\sigma}$, and it may move away from the
equilibrium $\sigma^{\ast}$.

This suggests that in order to guarantee convergence to $\sigma^{\ast}$,
we need a stronger assumption, and uniform convergence is precisely
the property we want. To see how it works, note that Theorem \ref{Theo:APT}
can be applied iteratively, so that the action frequency $\bm{w}$
from time $t+T$ to $t+2T$ is approximated by a solution $\bm{\sigma}^{\prime}\in S_{\bm{w}(t+T)}^{\infty}$
starting from $\bm{w}(t+T)$. As mentioned earlier, this value $\bm{w}(t+T)$
is close to the equilibrium $\sigma^{\ast}$. So if the equilibrium
$\sigma^{\ast}$ is attracting, then $\bm{w}(t+T)$ is in the basin
of $\sigma^{\ast}$, and the solution $\bm{\sigma}^{\prime}$ starting
from this point stays around the equilibrium $\sigma^{\ast}$. This
in turn implies that the action frequency $\bm{w}(t+2T)$ in time
$t+2T$ is also close to $\sigma^{\ast}$. A similar argument shows
that $\bm{w}(t+nT)$ in time $t+nT$ is close to $\sigma^{\ast}$
for every $n=1,2,\cdots$. The proof of Proposition \ref{Prop:Attracting}
generalizes this idea and shows convergence to attracting sets.\footnote{Formally, the whole path of $\bm{w}$ is approximated by a \textit{chain}
of trajectories $(\bm{\sigma}_{1},\bm{\sigma}_{2},\cdots)$ where
$\lVert\bm{\sigma}_{n}(T)-\bm{\sigma}_{n+1}(0)\rvert<\varepsilon$,
and uniform convergence ensures that this chain of trajectories converges
to $\sigma^{\ast}$.}

Now we apply the result to an example where the action does not converge,
but the action frequency does.

\medskip{}

\begin{example} \label{example_negativereinf} The consequence space
is $Y=\{0,1\}$. There are two actions, $x_{1}$ and $x_{2}$, and
the probability that the consequence equals one depends on the action:
$Q(1\mid x_{1})=3/4$ and $Q(1\mid x_{2})=1/4$. The agent, however,
has a misspecified model and believes the consequence does not depend
on the action: $Q_{\theta}(1\mid x_{k})=\theta\in\Theta=[0,1]$ for
$k\in\{1,2\}$, i.e., according to model $\theta$, the probability
that the consequence is $y=1$ is $\theta$ irrespective of the action.
The KLD function in this case is 
\[
K(\sigma,\theta)=\sigma(x_{1})\left(\frac{3}{4}\ln\frac{3/4}{\theta}+\frac{1}{4}\ln\frac{1/4}{1-\theta}\right)+\sigma(x_{2})\left(\frac{1}{4}\ln\frac{1/4}{\theta}+\frac{3}{4}\ln\frac{3/4}{1-\theta}\right),
\]
and there is a unique minimizer $\theta(\sigma)=\frac{3}{4}\sigma(x_{1})+\frac{1}{4}\sigma(x_{2})$.
Naturally, the model that best fits the true model is a convex combination
of the probability that $y=1$ under actions $x_{1}$ and $x_{2}$.

The payoff function satisfies $\pi(x_{1},y)=1-y$ and $\pi(x_{2},y)=y$;
in particular, the agent prefers action $x_{1}$ if $y=0$ and $x_{2}$
if $y=1$. Letting $E_{\mu}[y]\equiv\int\theta\mu(d\theta)$ denote
the agent's perceived expected value of $y$, the optimal correspondence
satisfies $F(\mu)=\{x_{1}\}$ if $E_{\mu}[y]<.5$, $=\{x_{2}\}$ if
$E_{\mu}[y]<.5$, and $=\{x_{1},x_{2}\}$ if $E_{\mu}[y]=.5$.

Actions in this example are negatively reinforcing in the sense that
doing more of one action makes the agent want to do less of that action.
This feature can be seen in Figure \ref{fig:Example_negat_reinfor},
where we have plotted $\sigma\mapsto\Delta F(\Delta\Theta(\sigma))$.
For example, if the agent takes pure action $x_{1}$, i.e., $\sigma(x_{1})=1$,
then the closest model is $\theta(\sigma)=\frac{3}{4}$, so the agent's
belief is degenerate at $\theta=3/4$, i.e., $\mu=\delta_{3/4}$,
and she prefers to take action $x_{2}$, not $x_{1}$. Similarly,
if the agent takes only action $x_{2}$, she prefers to take action
$x_{1}$.

\begin{figure}
\qquad{}\qquad{}\qquad{}\qquad{}\includegraphics{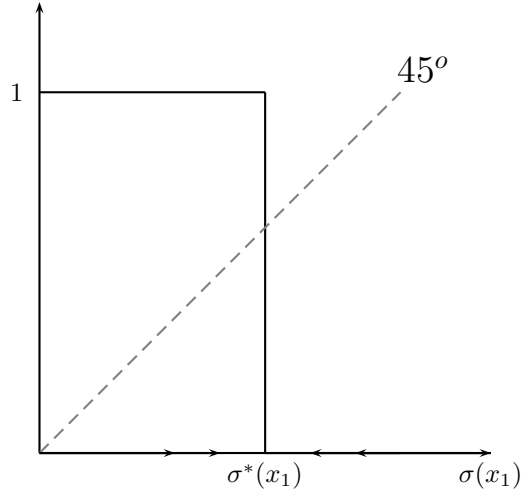}

\caption{\label{fig:Example_negat_reinfor} Globally attracting equilibrium}
\end{figure}
The feature of negatively reinforcing actions is present in several
examples in the literature, and previous work has shown that the action
does not converge in those examples (e.g., \citet{nyarko1991learning},
\citet{esponda2016berk}, \citet*{fudenberg2017active}).\footnote{Negative reinforcement is also present in some of the examples in
\citet{spiegler2016bayesian} as well as in the voting environments
of Esponda and Pouzo (2017, 2019a)\nocite{esponda2017conditional}\nocite{esponda2019retrospective}
and \citet{esponda2018endogenous} and in the investment environment
of \citet{jehiel2018investment}.} We can use our differential inclusion to go beyond this result and
show that the action \emph{frequency} does converge. In the example,
$\sigma^{*}(x_{1})=\sigma^{*}(x_{2})=1/2$ is the unique equilibrium
point: Given $\sigma^{*}$, the closest model is $\theta(\sigma^{*})=1/2$,
and, given the belief $\delta_{1/2}$, the agent is indifferent between
each of the actions in the support of $\sigma^{*}$. Moreover, as
Figure \ref{fig:Example_negat_reinfor} shows, for any initial condition,
the solutions to the differential inclusion converge to $\sigma^{*}$,
and so $\{\sigma^{*}\}$ is a globally attracting set. Proposition
\ref{Prop:Attracting}(i) implies that the action frequency almost
surely converges to $\sigma^{*}$. \end{example}

\medskip{}

In the next example, we show that the attracting set need not be an
equilibrium.\medskip{}

\begin{example} \label{exampletriangle} The consequence space is
$Y=\mathbb{R}^{3}$. There are three actions, $x_{1}$, $x_{2}$,
and $x_{3}$. Given an action $x_{k}$, the consequence $y$ follows
the normal distribution $N(e_{k},I)$ where $e_{k}\in\mathbb{R}^{3}$
is the unit vector whose $k$th component is one, and $I$ is the
identity matrix. However, the agent does not recognize that the action
influences the consequence. Formally, the model space is the probability
simplex $\Theta=\{\theta=(\theta_{1},\theta_{2},\theta_{3})|\sum_{k=1}^{3}\theta_{k}=1,\;\;\theta_{k}\geq0\;\;\forall k\}$,
and for each model $\theta=(\theta_{1},\theta_{2},\theta_{3})$, the
agent believes that $y$ follows the normal distribution $N(\theta,I)$.
Assume that for each degenerate belief $\delta_{\theta}$, the policy
$F(\delta_{\theta})$ is given as in Figure \ref{figuretrianglebr},
where the triangle represents the model space $\Theta$. For example,
if the current belief puts probability one on the model $\theta=e_{1}$,
then the policy $F$ selects the action $x_{2}$.

 \begin{figure}[htbp]  
\begin{minipage}{0.5\hsize}   
\begin{center}       
\includegraphics{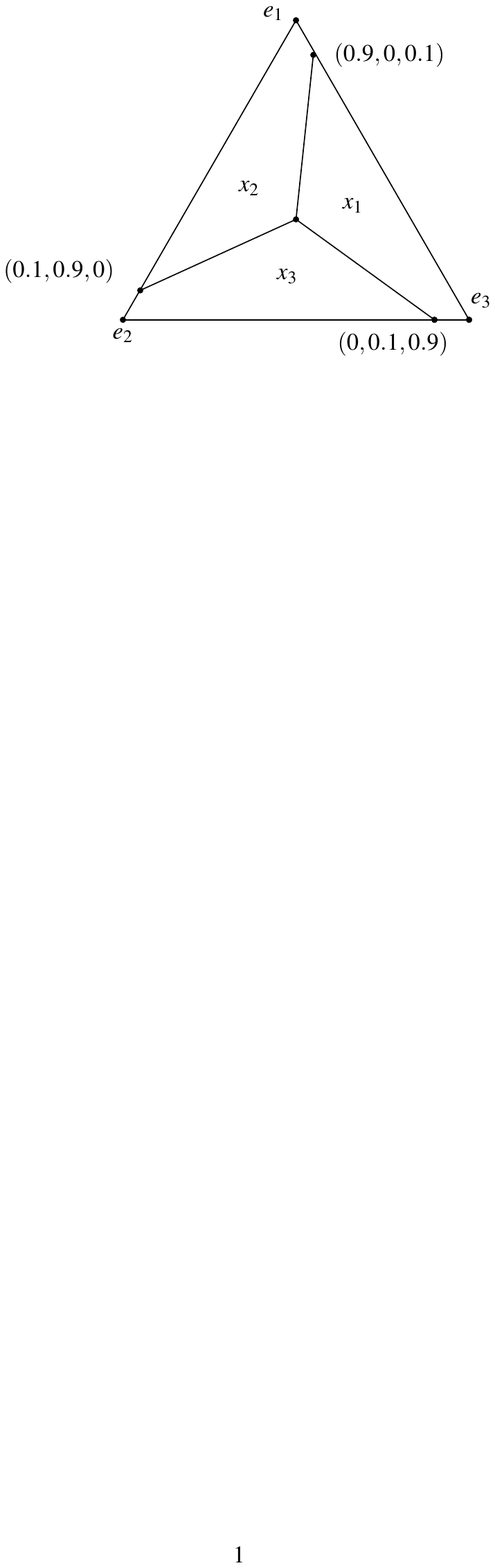}
\caption{Policy $F(\delta_{\theta})$ for each model $\theta$}        
\label{figuretrianglebr}   
\end{center}        
\end{minipage}  
\begin{minipage}{0.5\hsize}   
\begin{center}       
\includegraphics{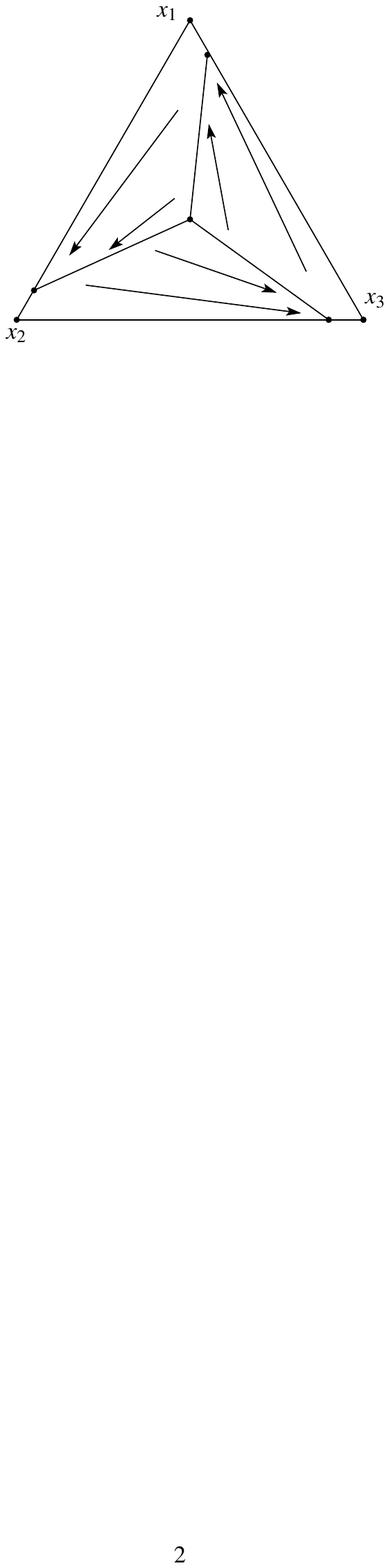} 
\caption{Differential Inclusion}     
\label{figuretriangledrift}   
\end{center}  
\end{minipage}  
\end{figure}

In this example, given a mixed action $\sigma\in\triangle X$, the
consequence follows the normal distribution $N(\sigma,I)$, so the
Kullback-Leibler divergence $K(\theta,\sigma)$ has a unique minimizer
$\theta=\sigma$. Accordingly, a solution to the differential inclusion
is described as in Figure \ref{figuretriangledrift}, where the triangle
represents the whole action space $\triangle X$ and each arrow points
to the corresponding vertex in the large triangle.

This example has a unique equilibrium, $\sigma^{\ast}=(\frac{1}{3},\frac{1}{3},\frac{1}{3})$.
This equilibrium is not attracting. Indeed, starting from any nearby
point $\sigma\neq\sigma^{\ast}$, a solution to the differential inclusion
moves away from the equilibrium, as described in Figure \ref{figureunstable}.

On the other hand, the cycle described by the arrows in Figure \ref{figuretrianglecycle}
is attracting. The basin of attraction is the whole space $\triangle X$
except the equilibrium point $\sigma^{\ast}=(\frac{1}{3},\frac{1}{3},\frac{1}{3})$.
That is, given any initial value $\sigma\neq\sigma^{\ast}$, any solution
to the differential inclusion will eventually follow this cycle. (The
proof is straightforward and hence omitted. )

 \begin{figure}[htbp] 	
\begin{minipage}{0.5\hsize} 		
\begin{center} 			
\includegraphics{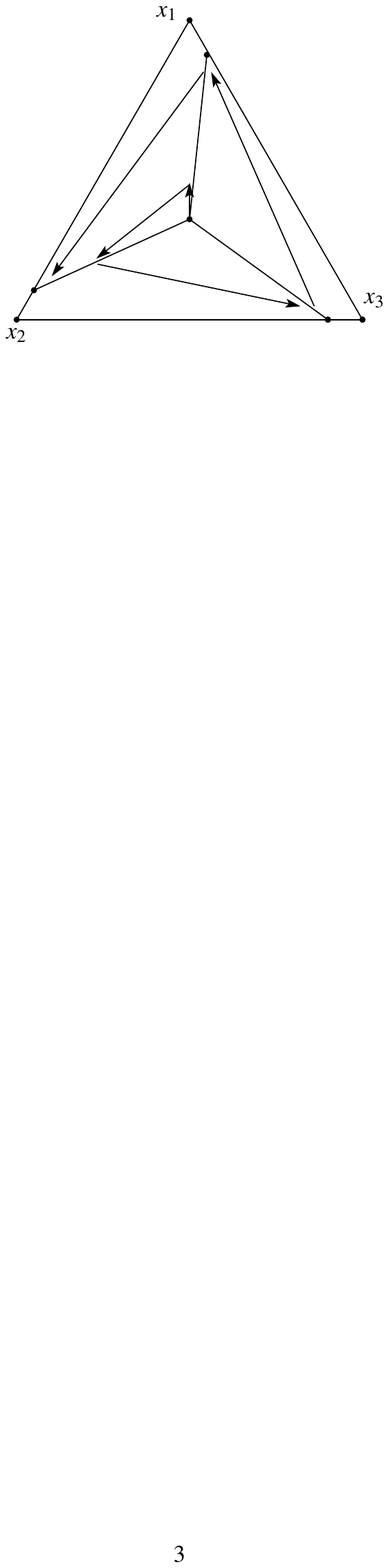}			
\caption{Instability of the equilibrium }  			
\label{figureunstable} 		
\end{center}       	
\end{minipage} 	
\begin{minipage}{0.5\hsize} 		
\begin{center} 			
\includegraphics{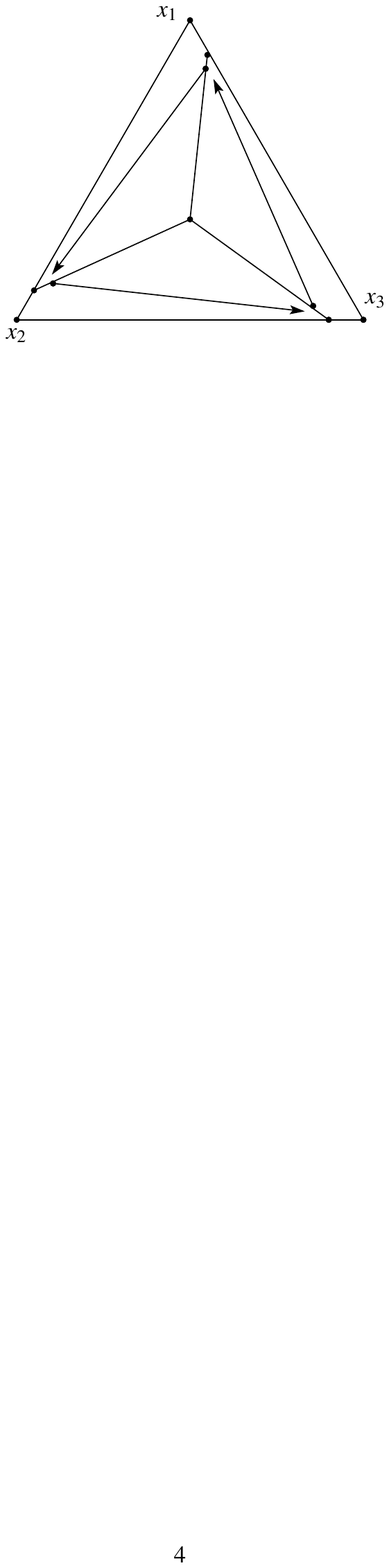}	
\caption{Limit cycle.}   
\label{figuretrianglecycle} 		
\end{center} 	
\end{minipage} 
\end{figure}

\end{example}

\medskip{}

Proposition \ref{Prop:Attracting}(ii) implies that in Example \ref{exampletriangle},
the action frequency $\sigma_{t}$ must converge to the (non-attracting)
equilibrium $\sigma=(\frac{1}{3},\frac{1}{3},\frac{1}{3})$ or follow
the limit cycle described in Figure \ref{figuretrianglecycle}. But
which one is more likely to occur? It turns out that the equilibrium
$\sigma^{\ast}$ in the example above is unstable, in that the action
frequency never converges there. So the action frequency follows the
limit cycle almost surely.

To see why the equilibrium $\sigma^{\ast}$ is unstable, suppose that
the current action frequency is exactly this equilibrium, i.e., $\sigma_{t}=\sigma^{\ast}$.
Suppose also that the agent chooses some action today, say $x_{1}$.
This changes the action frequency in the next period, and we have
$\sigma_{t+1}=\frac{1}{t+1}\delta_{x_{1}}+\frac{t}{t+1}\sigma^{\ast}$.
Note that this new action frequency is slightly different from the
equilibrium $\sigma^{\ast}$. Then starting from this action frequency,
a solution to the differential inclusion moves away from the equilibrium
(See Figure \ref{figureunstable}), which implies instability of $\sigma^{\ast}$.
More formally, this equilibrium $\sigma^{\ast}$ is \textit{repelling}
in the following sense:

\medskip{}

\begin{defn}
\label{def:repelling_eqm} An equilibrium $\sigma^{\ast}$ is \textit{repelling}
if there is a natural number $T$ and an open neighborhood $\mathcal{U}$
of $\sigma^{\ast}$ such that for any $\sigma\in\mathcal{U}$, for
any $x\in F(\triangle\Theta(\sigma^{\ast}))$, and for any $\overline{\beta}\in(0,1)$,
there is $\beta\in(\overline{\beta},1)$ such that for any $\bm{\sigma}\in S_{\beta\sigma+(1-\beta)\delta_{x}}^{\infty}$,
we have $\bm{\sigma}(t)\notin\mathcal{U}$ for some $t\in[0,T]$.
\end{defn}
\medskip{}

In Example \ref{exampletriangle}, starting from \textit{any} nearby
point $\sigma\neq\sigma^{\ast}$ of the equilibrium $\sigma^{\ast}$,
the solution to the differential inclusion moves away from the equilibrium
$\sigma^{\ast}$. In such a case, the condition stated in the definition
is satisfied, so this equilibrium $\sigma^{\ast}$ is repelling.

But our definition of repelling equilibrium is a bit more general.
Roughly, an equilibrium $\sigma^{\ast}$ is repelling if starting
from \textit{almost all} nearby points $\sigma\neq\sigma^{\ast}$
of the equilibrium $\sigma^{\ast}$, all the solutions to the differential
inclusion eventually leave its neighborhood. This is illustrated in
the following example:

\medskip{}

\begin{example} We add one more action $x_{3}^{\prime}$ to Example
\ref{exampletriangle}. This new action $x_{3}^{\prime}$ is redundant,
and is identical to the action $x_{3}$. Formally, the signal distribution
given the action $x_{3}^{\prime}$ is $N(e_{3},I)$, and the policy
$F(\mu)$ contains $x_{3}^{\prime}$ for all $\mu$ such that $F(\mu)$
contains $x_{3}$ in Example \ref{exampletriangle}. The agent still
believes that the action does not influence the signal distribution.

This example has a continuum of equilibria; any mixed action $\sigma$
with $\sigma(x_{1})=\sigma(x_{2})=\sigma(x_{3})+\sigma(x_{3}^{\prime})=\frac{1}{3}$
is an equilibrium. Pick one equilibrium $\sigma^{\ast}$, and pick
an open neighborhood $\mathcal{U}$. This neighborhood $\mathcal{U}$
contains equilibrium points and non-equilibrium points. The set of
equilibrium points is continuous, but has measure zero; so almost
all the points in $\mathcal{U}$ are non-equilibrium points. Starting
from these non-equilibrium points, all the solutions to the differential
inclusion leave the neighborhood $\mathcal{U}$, just as described
in Figure \ref{figureunstable}. However, starting from the equilibrium
points, a solution to the differential inclusion can stay there forever.
So $\mathcal{U}$ contains some points from which the solution to
the differential inclusion does not leave $\mathcal{U}$. Still, this
equilibrium $\sigma^{\ast}$ is repelling. Indeed, given any point
$\sigma\in\mathcal{U}$ and given any action $x$, if we choose $\beta$
sufficiently close to one, the perturbed point $\beta\sigma+(1-\beta)\delta_{x}$
is not an equilibrium; so starting from this perturbed point, the
solution to the differential inclusion eventually leaves $\mathcal{U}$.\end{example}\medskip{}

The following proposition asserts that repelling equilibria do not
arise as long-run outcomes.\medskip{}

\begin{prop}
\label{Prop:Repelling}If $\sigma^{\ast}$ is a repelling equilibrium,
then the action frequency $\sigma_{t}$ converges to $\sigma^{\ast}$
with probability zero.
\end{prop}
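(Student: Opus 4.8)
The plan is to argue by contradiction, using Theorem~\ref{Theo:APT} to transport the repelling behaviour of the differential inclusion (\ref{eq:DI-1-1}) onto the interpolated frequency $\mathbf{w}$. Suppose $P^{f}(\{\sigma_{t}\to\sigma^{\ast}\})>0$, and fix within this event a sample path on which the (almost-sure) conclusions of Theorems~\ref{Theo:Berk} and~\ref{Theo:APT} also hold. Let $T\in\mathbb{N}$ and the open neighbourhood $\mathcal{U}\ni\sigma^{\ast}$ be as in Definition~\ref{def:repelling_eqm}, and fix $\varepsilon>0$ with $\varepsilon<d(\sigma^{\ast},\Delta X\setminus\mathcal{U})$; this quantity is positive because $\mathcal{U}$ is an open, proper subset of the compact set $\Delta X$ containing $\sigma^{\ast}$.

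The first step I would carry out is a localization claim: there are an open neighbourhood $\mathcal{V}\subseteq\mathcal{U}$ of $\sigma^{\ast}$ and a time $t_{1}$ such that, whenever $t\ge t_{1}$ and $\sigma_{t}\in\mathcal{V}$, the realized action satisfies $x_{t+1}\in F(\Delta\Theta(\sigma^{\ast}))$. This follows by combining Theorem~\ref{Theo:Berk}---which forces $\mu_{t+1}$ to place asymptotically all mass on $\{\theta:K(\theta,\sigma_{t})-K^{\ast}(\sigma_{t})\le\delta_{t}\}$ for some $\delta_{t}\to0$---with the upper hemicontinuity of $\Theta(\cdot)$ (Lemma~\ref{Lemma:Theta(sigma)}(ii)) and of $F$ into the finite set $X$ (Assumption~\ref{ass:4}): for $\sigma_{t}$ near $\sigma^{\ast}$ the set $\Theta(\sigma_{t})$ lies in a small neighbourhood of $\Theta(\sigma^{\ast})$, hence so does the (essential) support of $\mu_{t+1}$, and then $F(\mu_{t+1})\subseteq F(\Delta\Theta(\sigma^{\ast}))$. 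This is precisely the reduction already performed inside the proof of Theorem~\ref{Theo:APT}, which I would cite rather than repeat. Since the chosen path has $\sigma_{t}\to\sigma^{\ast}$, there is a (path-dependent) $t_{0}\ge t_{1}$ with $\sigma_{t}\in\mathcal{V}$ for all $t\ge t_{0}$; hence, for every $t\ge t_{0}$, $\sigma_{t+1}=\tfrac{t}{t+1}\sigma_{t}+\tfrac{1}{t+1}\delta_{x_{t+1}}$ with $\sigma_{t}\in\mathcal{U}$ and $x_{t+1}\in F(\Delta\Theta(\sigma^{\ast}))$.

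The key observation is that these one-step perturbations already sit on the interpolation curve: from (\ref{eq:interpolation}) and (\ref{eq:system0-1}), $\mathbf{w}(\tau_{t}+s)=(1-s)\sigma_{t}+s\,\delta_{x_{t+1}}$ for all $s\in[0,\tfrac{1}{t+1})$. Therefore, for $t\ge t_{0}$, I would apply Definition~\ref{def:repelling_eqm} with $\overline{\beta}:=\tfrac{t}{t+1}\in(0,1)$, $\sigma:=\sigma_{t}\in\mathcal{U}$, and $x:=x_{t+1}$: this yields $\beta\in(\tfrac{t}{t+1},1)$, equivalently $s_{t}:=1-\beta\in(0,\tfrac{1}{t+1})$, such that the point $\mathbf{w}(\theta_{t})$, where $\theta_{t}:=\tau_{t}+s_{t}$, has the property that every solution $\bm{\sigma}\in S_{\mathbf{w}(\theta_{t})}^{\infty}$ satisfies $\bm{\sigma}(r)\notin\mathcal{U}$ for some $r\in[0,T]$. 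Since any $[0,T]$-solution from $\mathbf{w}(\theta_{t})$ extends to one in $S_{\mathbf{w}(\theta_{t})}^{\infty}$ (solutions exist globally because $\Delta X$ is compact and forward-invariant for (\ref{eq:DI-1-1})), the same exit property holds for every element of $S_{\mathbf{w}(\theta_{t})}^{T}$.

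To finish, note $\theta_{t}\ge\tau_{t}\to\infty$, so applying Theorem~\ref{Theo:APT} along the times $\theta_{t}$ gives, for all large $t$, a solution $\bm{\sigma}^{(t)}\in S_{\mathbf{w}(\theta_{t})}^{T}$ with $\sup_{0\le r\le T}\|\mathbf{w}(\theta_{t}+r)-\bm{\sigma}^{(t)}(r)\|<\varepsilon$. By the previous step, $\bm{\sigma}^{(t)}(r_{t})\notin\mathcal{U}$ for some $r_{t}\in[0,T]$, whence $\|\mathbf{w}(\theta_{t}+r_{t})-\sigma^{\ast}\|\ge d(\sigma^{\ast},\Delta X\setminus\mathcal{U})-\varepsilon>0$. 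But $\theta_{t}+r_{t}\to\infty$ and $\mathbf{w}(\tau)\to\sigma^{\ast}$ along the chosen path, so $\mathbf{w}(\theta_{t}+r_{t})\to\sigma^{\ast}$---a contradiction. Hence $P^{f}(\{\sigma_{t}\to\sigma^{\ast}\})=0$. I expect the main obstacle to be the localization claim: turning the ``on average'' concentration statement of Theorem~\ref{Theo:Berk} into the pointwise conclusion $x_{t+1}\in F(\Delta\Theta(\sigma^{\ast}))$, and keeping the order of the two ``eventually'' conditions (``$t$ large'' versus ``$\sigma_{t}$ close to $\sigma^{\ast}$'') straight; the time-reparametrization bookkeeping is routine once the identity $\mathbf{w}(\tau_{t}+s)=(1-s)\sigma_{t}+s\,\delta_{x_{t+1}}$ is in place.
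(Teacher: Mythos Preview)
Your proposal is correct and follows essentially the same approach as the paper's proof: both fix an almost-sure path, use Theorem~\ref{Theo:Berk} together with upper hemicontinuity of $F$ and $\Theta(\cdot)$ to ensure that eventually the chosen action $x_{t+1}$ lies in $F(\Delta\Theta(\sigma^{\ast}))$, then exploit the fact that the interpolation $\mathbf{w}$ traverses the segment $\beta\sigma_{t}+(1-\beta)\delta_{x_{t+1}}$ so that Definition~\ref{def:repelling_eqm} produces a starting point from which every differential-inclusion solution exits $\mathcal{U}$ within time $T$, and finally invoke Theorem~\ref{Theo:APT} to conclude that $\mathbf{w}$ itself must leave a neighbourhood of $\sigma^{\ast}$. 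Your write-up is in fact somewhat more explicit than the paper's in two places: the identity $\mathbf{w}(\tau_{t}+s)=(1-s)\sigma_{t}+s\,\delta_{x_{t+1}}$ making the correspondence $\beta=1-s$ transparent, and the observation that a $[0,T]$-solution extends to an element of $S_{\mathbf{w}(\theta_{t})}^{\infty}$ so that the repelling definition (stated for infinite-horizon solutions) applies.
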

\begin{proof}
See Appendix \ref{Pf:Prop:Repelling}.
\end{proof}

\subsection{\label{subsec:Converge_attracting}Convergence to attracting sets
for some prior}

Proposition \ref{Prop:Attracting} provides a useful set of conditions
under which the action frequency converges to an attracting set, such
as the set of equilibria. Moreover, Proposition \ref{Prop:Repelling}
shows that the frequency cannot converge to repelling equilibria.
These propositions, however, do not imply that the action frequency
converges to any one specific attracting set or equilibrium (unless
it is globally attracting). We will show that if an attracting set
$A$ satisfies some additional property, then the action frequency
converges to it (i.e., $\lim_{t\to\infty}d(\sigma_{t},A)=0$) with
positive probability for some initial prior.\footnote{Theorem 7.3 of \citet{Benaim99} gives a sufficient condition for
convergence to an attracting set for some prior. This result, however,
relies on a technical assumption ((24) in his paper) that does not
hold in our environment. Hence we cannot use his theorem and need
to develop a new tool.} Throughout this section, let $B_{\varepsilon}(A)$ denote the $\varepsilon$-neighborhood
of $A$, i.e., it is the set of all $\sigma$ such that $d(\sigma,A)<\varepsilon$.

We first introduce the idea of a ``perturbed differential inclusion.''
Given an initial value $\bm{\sigma}(0)$, let $\bm{S}_{\bm{\sigma}(0)}^{\infty,\varepsilon}$
denote the set of all solutions to the following differential inclusion:
\begin{align}
\dot{\bm{\sigma}}(t)\in\bigcup_{\tilde{\sigma}\in B_{\varepsilon}(\bm{\sigma}(t))}\triangle F(\triangle\Theta(\tilde{\sigma}))-\bm{\sigma}(t).\label{perturbeddi}
\end{align}
Recall that in the original differential inclusion, the agent chooses
an action from $F(\triangle\Theta(\bm{\sigma}(t)))$. In (\ref{perturbeddi}),
this choice set is expanded, so that the agent chooses an action from
$F(\triangle\Theta(\tilde{\sigma}))$, where $\tilde{\sigma}$ is
a perturbation of the current action frequency $\bm{\sigma}(t)$.

\medskip{}

\begin{defn}
\label{def:robust_attracting} A set $A$ is \textbf{\textit{\emph{robustly
attracting}}} if it is attracting and there is $\zeta>0$ and $\varepsilon>0$
such that for any initial value $\bm{\sigma}(0)\in B_{\zeta}(A)$,
any solution $\bm{\sigma}\in\bm{S}_{\bm{\sigma}(0)}^{\infty,\varepsilon}$
to the perturbed differential inclusion never leaves the basin $\mathcal{U}_{A}$;
i.e., $\bm{\sigma}(t)\in\mathcal{U}_{A}$ for all $t\geq0$.
\end{defn}
\medskip{}

In some special cases, attracting sets and robustly attracting sets
are equivalent. For example, as will be explained in Proposition \ref{propequivalence},
attracting sets are robustly attracting when $\Theta$ is the one-dimensional
interval $[0,1]$. The same result holds when there are only two actions
i.e., $|X|=2$. (The proof is straightforward and hence omitted.)
However, in general, attracting sets need not be attracting. Such
an example can be found in Appendix \ref{subsec:ex_robust_attract}.

A sufficient condition for a set $A$ to be robustly attracting is
that the (non-perturbed) differential inclusion has a contraction
property in a neighborhood of $A$. Formally, let $V(\sigma)=d(\sigma,A)$,
and suppose that there is an open neighborhood $U$ of $A$ such that
$(\tilde{\sigma}-\sigma)\cdot\nabla V(\sigma)<0$ for all $\sigma\in U\setminus A$
and $\tilde{\sigma}\in\triangle F(\triangle\Theta(\sigma))$.\footnote{ More generally, $A$ is robustly attracting if there is an open neighborhood
$U$ of $A$ and a function $V:U\to\bm{R}_{+}$ such that (i) $V(\sigma)=0$
if and only if $\sigma\in A$, (ii) $(\tilde{\sigma}-\sigma)\cdot\nabla V(\sigma)<0$
for all $\sigma\in U\setminus A$ and $\tilde{\sigma}\in\triangle F(\triangle\Theta(\sigma))$,
and (iii) $\nabla V$ is Lipchitz-continuous. Note that condition
(ii) here is a bit more demanding than Lyapunov stability, which requires
$V(\bm{\sigma}(t))<V(\bm{\sigma}(0))$ for all $\bm{\sigma}(0)$ and
$\bm{\sigma}\in S_{\bm{\sigma}(0)}^{\infty}$.} Then this $A$ is robustly attracting.\footnote{The proof is as follows. It is obvious that $A$ is attracting, so
we will show that the condition stated in the definition of robustly
attracting sets is satisfied. Pick $\zeta>0$ such that $B_{2\zeta}(A)$
is in the set $U$ defined above. Let $C=\overline{B_{2\zeta}(A)}\setminus B_{\zeta}(A)$.
Since $C$ is compact, $(\hat{\sigma}-\sigma)\cdot\nabla V(\sigma)<0$
is bounded away from zero uniformly. Then Lipchitz-continuity of $\nabla V$
ensures that there is $\varepsilon>0$ such that $(\hat{\sigma}-\tilde{\sigma})\cdot\nabla V(\tilde{\sigma})<0$
for all $\sigma\in C$, $\tilde{\sigma}\in B_{\varepsilon}(\sigma)$,
and $\hat{\sigma}\in F(\triangle\Theta(\sigma))$. This implies that
any solution to the $\varepsilon$-perturbed differential inclusion
(\ref{perturbeddi}) also has a contraction property in the interior
of the set $C$; i.e., if the current action frequency $\tilde{\sigma}$
is an interior point of $C$ and $d(\tilde{\sigma},C)\geq\varepsilon$,
then at the next instant, the action frequency becomes closer to the
set $A$. This immediately implies that $A$ is robustly attracting. } Note that this contraction property is satisfied by any strict equilibrium;
a pure action $\delta_{x}$ is a \textbf{\textit{\emph{strict equilibrium}}}
if there is an open neighborhood $U$ of $\delta_{x}$ such that $F(\triangle\Theta(\tilde{\sigma}))=\{x\}$
for all $\tilde{\sigma}\in U$. So any strict equilibrium is robustly
attracting.

We will show that the action frequency converges to a robustly attracting
set with positive probability, at least for some initial prior.\medskip{}

\begin{prop}
\label{prop:converge_robust_attracting} For each robustly attracting
set $A$, there is an initial prior $\mu_{0}^{\ast}$ with full support
such that $\lim_{t\to\infty}d(\sigma_{t},A)=0$ with positive probability.
\end{prop}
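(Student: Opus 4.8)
The plan is to split the argument into an \emph{absorption} step and a \emph{reachability} step. The absorption step shows that once the action frequency lies in a small enough neighborhood of $A$ at a time by which the error in Theorem \ref{Theo:Berk} and the pseudotrajectory error in Theorem \ref{Theo:APT} are already small, the frequency is trapped in $\mathcal{U}_A$ forever and converges to $A$. The reachability step constructs a particular full-support prior $\mu_0^{*}$ for which the event that this configuration is reached has positive $P^{f}$-probability. The role of the ``robustly attracting'' notion is precisely to make absorption go through despite the absence of the exogenous noise that Benaim's Theorem 7.3 would rely on.

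For absorption, fix $\varepsilon,\zeta$ as in Definition \ref{def:robust_attracting}. By Theorem \ref{Theo:Berk}, $\delta_t:=\int_\Theta(K(\theta,\sigma_t)-K^{*}(\sigma_t))\mu_{t+1}(d\theta)\to 0$ $P^{f}$-a.s., and by Theorem \ref{Theo:APT} the interpolation $\mathbf{w}$ is a.s.\ an asymptotic pseudotrajectory of \eqref{eq:DI-1-1}. Using uhc of $\Theta(\cdot)$ (Lemma \ref{Lemma:Theta(sigma)}) and of $F$, when $\delta_t$ is small the belief $\mu_{t+1}$ is supported within a small neighborhood of $\Theta(\sigma_t)$, so the recursion \eqref{eq:system0-1-1-1} is, for $t$ past a random time $T^{*}$, a discrete path of the $\varepsilon$-perturbed inclusion \eqref{perturbeddi}; the same stochastic-approximation argument that gives Theorem \ref{Theo:APT} then shows $\mathbf{w}$ is an asymptotic pseudotrajectory of \eqref{perturbeddi} as well. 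Since, by robust attraction, $\varepsilon$-perturbed solutions started in $B_\zeta(A)$ remain in $\mathcal{U}_A$ (after possibly shrinking $\varepsilon$ and $\zeta$ so the shadowing buffer stays inside $\mathcal{U}_A$), a chaining argument over successive finite horizons — the one sketched in the discussion after Proposition \ref{Prop:Attracting} — shows that on $\{\sigma_{t_0}\in B_{\zeta/2}(A),\ t_0\ge T^{*}\}$ the frequency stays in $\mathcal{U}_A$ for all $t\ge t_0$. Feeding this into the (unperturbed) attracting property — for each $\eta>0$ pick the horizon $T_\eta$ after which every solution from $\mathcal{U}_A$ is within $\eta$ of $A$, and iterate Theorem \ref{Theo:APT} over blocks of length $T_\eta$ — yields $\limsup_t d(\sigma_t,A)\le c\,\eta$ for a universal $c$, hence $d(\sigma_t,A)\to 0$. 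So it suffices to produce a prior and a positive-probability event on which $\sigma_{t_0}\in B_{\zeta/2}(A)$ at some $t_0\ge T^{*}$.

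For reachability, fix $\sigma^{*}\in A$ and $\theta^{*}\in\Theta(\sigma^{*})$ and let $\mu_0^{*}$ put mass $1-\rho$ on a tiny ball $B_r(\theta^{*})$ (with a density there) and mass $\rho$ on an arbitrary full-support measure; this is full support for any $\rho,r>0$. Concentrating near $\theta^{*}$ does two things. First, on the positive-probability set of histories where each $g_{x_\tau}(y_\tau)$ from Assumption \ref{ass:2}(iii) is bounded, the Bayesian operator cannot enlarge the support of the ball component and can only inflate its complement by a bounded factor over the first $N$ periods; taking $r$ and then $\rho$ small, $\mu_t$ stays so close to $\delta_{\theta^{*}}\in\Delta\Theta(\sigma^{*})$ for $t\le N$ that $f(\mu_t)\in F(\delta_{\theta^{*}})$ throughout, while the small belief jitter driven by the $y_\tau$'s — non-degenerate because $\theta\mapsto q_\theta(y\mid x)$ varies near $\theta^{*}$ for suitable $y$ — lets $f$ pick up the various selections in $F(\delta_{\theta^{*}})$; choosing $N$ and restricting the realizations further, one arranges $\sigma_N\in B_{\zeta/4}(A)$ on a positive-probability set (when $\Theta(\sigma^{*})$ is a singleton and $\sigma^{*}$ is an equilibrium, $\mathrm{supp}(\sigma^{*})\subseteq F(\delta_{\theta^{*}})$, which is the structural input that makes this possible; for sets $A$ containing no equilibrium, such as a limit cycle, one instead steers $\sigma_N$ to a point of the basin $\mathcal U_A$ and lets the subsequent dynamics, once the belief tracks $\sigma_t$ again, carry the frequency into $B_{\zeta/4}(A)$). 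Second, with the prior concentrated near $\theta^{*}\in\Theta(\sigma^{*})$, along histories on which $\sigma_t$ stays near $\sigma^{*}$ the relevant mass $\kappa_\varepsilon$ from \eqref{eq:kappa_eps} is close to $1$, so $\delta_t$, and hence $T^{*}$, is small with probability close to $1$; intersecting the steering event with the probability-$\to 1$ event from the uniform law of large numbers (Lemma \ref{Lemma:uniformconvergence}) that these errors are already below threshold by time $N$, we obtain a positive-probability event on which $\sigma_N\in B_{\zeta/2}(A)$ and $N\ge T^{*}$. Applying the absorption step with $t_0=N$ completes the proof.

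The main obstacle is the reachability step, and within it the need to control \emph{two things simultaneously}: that the action frequency is driven into $B_\zeta(A)$, and that by that (necessarily not-too-late) time the belief-tracking and pseudotrajectory errors already lie below the thresholds robust attraction requires. One cannot simply wait for the a.s.\ vanishing of the errors, since by the random time at which they are small the frequency may have drifted away from $A$; the resolution is to make the prior's concentration near $\Theta(\sigma^{*})$ do double duty — it pins down the agent's early actions and it forces the Berk error to decay fast along the relevant histories. The most delicate point in executing this is the interaction between the possibly set-valued policy $F$, the beliefs it induces near $\delta_{\theta^{*}}$, and the target frequency: one must verify the agent can actually be steered into $B_\zeta(A)$ with positive probability rather than being locked by the particular selection $f$ onto a single pure action, and, when $A$ contains no equilibrium, that the transient caused by an initially ``off'' belief does not prevent the frequency from entering the basin.
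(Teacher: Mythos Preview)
Your overall decomposition into \emph{absorption} and \emph{reachability} matches the paper's architecture, and your absorption step is close in spirit to the paper's Claims 3--5. The genuine gap is in reachability.

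Your reachability argument attempts to \emph{steer} the action frequency into $B_{\zeta}(A)$ through the dynamics, by concentrating $\mu_0^{*}$ near some $\theta^{*}\in\Theta(\sigma^{*})$ and hoping that the resulting belief jitter makes the fixed, deterministic selection $f$ ``pick up the various selections in $F(\delta_{\theta^{*}})$.'' This is not justified and in general false: $f$ is a single function $\Delta\Theta\to X$, and nothing prevents $f(\mu)$ from being constant on a whole neighborhood of $\delta_{\theta^{*}}$. In that case, no matter how the observations $y_\tau$ jitter the belief within that neighborhood, the first $N$ actions are all the same pure action $x$, so $\sigma_N=\delta_x$, which need not be close to any point of $A$. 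You flag this yourself as ``the most delicate point,'' but you do not resolve it; the sentence about belief jitter letting $f$ pick up various selections is precisely the missing argument. The secondary claim---that concentrating the prior near $\theta^{*}$ makes the random time $T^{*}$ small with probability close to~$1$---is also not established: $\kappa_\varepsilon$ close to~$1$ speeds the exponential in the proof of Theorem~\ref{Theo:Berk}, but does not by itself bound the time at which $\sup_\theta|L_t(\theta)-K(\theta,\sigma_t)|$ first falls below the required $\eta$.

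The paper avoids steering entirely. It first proves a law-of-large-numbers bound that is \emph{uniform in the prior and the policy} (Claim~1): for every $\eta>0$ there exist $T$ and $\underline q>0$ with $P^{f}\bigl(|L_t(\theta)-K(\theta,\sigma_t)|<\eta\ \forall t\ge T,\forall\theta\bigr)>\underline q$ for every $\mu_0$ and every $f$. Then it \emph{constructs} a fictitious length-$t^{*}$ history $h^{t^{*}}$ whose action frequency is already in $B_{\zeta/2}(A)$ and whose observations are close to their means, and \emph{defines} $\mu_0^{*}$ to be the posterior of an arbitrary full-support $\mu_0$ after $h^{t^{*}}$; this $\mu_0^{*}$ has full support because Bayesian updating preserves support. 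The fictitious history may have $P^{f}$-probability zero, but that is irrelevant: once the game restarts from $\mu_0^{*}$, Claim~1 gives a \emph{fresh} event of probability at least $\underline q>0$ on which the continuation likelihoods stay near their means. On that event, combining the built-in ``good'' prefix with the good continuation, the posterior is concentrated for all large $t$ (Claim~2), so each discrete step of the interpolation is \emph{exactly} a step of the $\varepsilon$-perturbed inclusion \eqref{perturbeddi} (Claim~3, no chaining needed), and robust attraction keeps the concatenated path inside $\mathcal{U}_A$ (Claim~5). Finally, since the fictitious prefix's contribution to $\sigma_t$ vanishes as $t\to\infty$, the action frequency in the restarted game visits $\mathcal{U}_A$ infinitely often, and Claim~4 (an $\mathcal{U}_A$-visitation $\Rightarrow$ convergence lemma) yields $d(\sigma_t,A)\to 0$ with probability at least $\underline q$. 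The key idea you are missing is to manufacture the target action frequency by choice of history rather than by controlling the agent's endogenous play.
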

\begin{proof}
See Appendix \ref{pf:prop:converge_robust_attracting}.
\end{proof}

\subsection{Belief convergence when $\Theta=[0,1]$}

In this subsection, we will focus on a special case in which the model
space is the one-dimensional interval $\Theta=[0,1]$. This special
case includes most of the current applications in the literature and
it allows us to provide a more powerful characterization of the action
frequency and the belief. Specifically, we will first explain that
our differential inclusion reduces to a one-dimensional problem; this
reduction considerably simplifies our analysis, because in general
the action frequency $\sigma$ is multi-dimensional and solving the
differential inclusion can be a difficult task. Then we will show
that the belief converges to an equilibrium belief almost surely,
and we will provide a simple characterization of attracting/repelling
equilibria.

Throughout this subsection, we will impose the following identifiability
assumption:

\medskip{}

\begin{assumption} \label{assumptionidentifiability} The following
two conditions hold:
\begin{itemize}
\item[(i)] For each $\sigma$, there is a unique minimizer of $K(\sigma,\theta)$
which we denote by $\theta(\sigma)\in[0,1]$, that is, $\Theta(\sigma)=\{\theta(\sigma)\}$.
\item[(ii)] For each $\sigma$ with $\theta(\sigma)\in(0,1)$, we have $\left.\frac{\partial^{2}K(\sigma,\theta)}{\partial\theta^{2}}\right|_{\theta=\theta(\sigma)}>0$.
\end{itemize}
\end{assumption}

\medskip{}

Part (i) is what we call the identifiability condition, which asserts
that for each mixed action $\sigma$, there is a unique model which
best fits the true world. EP2016 provide a more detailed discussion
about this identifiability condition. Note that the best model $\theta(\sigma)$
is continuous in $\sigma$, because $\Theta(\sigma)$ is upper hemi-continuous
in $\sigma$,

Part (ii) requires that whenever $\theta(\sigma)$ is an interior
solution (so that the first-order condition is satisfied at the minimum),
it satisfies the second-order condition. This technical assumption
is crucial for the strict monotonicity result (Proposition \ref{propmonotone}(iii)),
which is needed to prove instability of repelling models (Proposition
\ref{propequivalence2}). But all other results remain true even if
this assumption (ii) is dropped.

The following proposition shows that the closest model $\theta(\sigma)$
is monotone with respect to the action $\sigma$. In the proof, we
first show that the KL divergence $K(\sigma,\theta)$ has the increasing
differences property. Then the result follows from the monotone selection
theorem of \citet{topkis1998supermodularity} and \citet{edlin1998strict}.

\medskip{}

\begin{prop}
\label{propmonotone} Suppose that Assumption \ref{assumptionidentifiability}
holds. Pick any $\sigma$ and $\tilde{\sigma}$, and for each $\beta\in[0,1]$,
let $\sigma_{\beta}=\beta\sigma+(1-\beta)\tilde{\sigma}$. Then the
following results are true: 
\end{prop}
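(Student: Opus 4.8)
The plan is to use that $K(\cdot,\theta)$ is affine in $\sigma$, which turns the problem into a one-parameter comparative-statics exercise that can be settled by a monotone selection theorem.

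\emph{Reduction.} Writing $K(\sigma,\theta)=\sum_{x\in X}k_{x}(\theta)\sigma(x)$ and using affineness in $\sigma$,
\[
K(\sigma_{\beta},\theta)=\beta K(\sigma,\theta)+(1-\beta)K(\tilde{\sigma},\theta)=K(\tilde{\sigma},\theta)+\beta\,D(\theta),\qquad D(\theta):=K(\sigma,\theta)-K(\tilde{\sigma},\theta),
\]
so $\theta(\sigma_{\beta})$ is the minimizer over $[0,1]$ of a one-dimensional objective indexed by $\beta\in[0,1]$. By Assumption~\ref{assumptionidentifiability}(i) this minimizer is unique for every $\beta$, and combined with upper hemicontinuity of $\Theta(\cdot)$ from Lemma~\ref{Lemma:Theta(sigma)}(ii), the map $\beta\mapsto\theta(\sigma_{\beta})$ is continuous, with endpoints $\theta(\sigma_{0})=\theta(\tilde{\sigma})$ and $\theta(\sigma_{1})=\theta(\sigma)$. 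Relabeling $\sigma\leftrightarrow\tilde{\sigma}$ if necessary, I take $\theta(\sigma)\geq\theta(\tilde{\sigma})$, so that the direction of monotonicity to be shown is ``nondecreasing in $\beta$''.

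\emph{Increasing differences and weak monotonicity.} For $\beta_{1}<\beta_{2}$ with $\theta_{i}:=\theta(\sigma_{\beta_{i}})$, adding the optimality inequalities $K(\sigma_{\beta_{1}},\theta_{1})\leq K(\sigma_{\beta_{1}},\theta_{2})$ and $K(\sigma_{\beta_{2}},\theta_{2})\leq K(\sigma_{\beta_{2}},\theta_{1})$ yields $(\beta_{2}-\beta_{1})(D(\theta_{1})-D(\theta_{2}))\geq0$, so $\beta\mapsto D(\theta(\sigma_{\beta}))$ is weakly decreasing. I then claim $\beta\mapsto\theta(\sigma_{\beta})$ is monotone. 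If it were not, continuity and the intermediate value theorem would give $\beta'<\beta^{\circ}<\beta''$ with $\theta(\sigma_{\beta'})=\theta(\sigma_{\beta''})=:v$ and $\theta(\sigma_{\beta^{\circ}})\neq v$; weak monotonicity of $D(\theta(\sigma_{\cdot}))$ forces $D$ to be constant along the selection on $[\beta',\beta'']$, and re-running the optimality inequalities forces $\beta\mapsto K(\tilde\sigma,\theta(\sigma_{\beta}))$ to be constant there too; but then $K(\sigma_{\beta'},\cdot)=K(\tilde\sigma,\cdot)+\beta'D(\cdot)$ is constant and attains its minimum at every point of the nondegenerate interval $\{\theta(\sigma_{\beta}):\beta\in[\beta',\beta'']\}$, contradicting uniqueness of the minimizer for $\sigma_{\beta'}$ (Assumption~\ref{assumptionidentifiability}(i)). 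This is precisely the increasing-differences property of $-K(\sigma_{\beta},\theta)$ in $(\beta,\theta)$ on the relevant rectangle, so one may equivalently invoke Topkis's monotone selection theorem \citep{topkis1998supermodularity}; either way one obtains the weak-monotonicity conclusions, and since $\theta(\sigma_{\beta})$ moves monotonically between $\theta(\tilde\sigma)$ and $\theta(\sigma)$ this also yields the sandwich-type claim.

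\emph{Strict monotonicity under Assumption~\ref{assumptionidentifiability}(ii), and the main obstacle.} On the range of $\beta$ where $\theta(\sigma_{\beta})$ is interior, the first-order condition holds and Assumption~\ref{assumptionidentifiability}(ii) gives $\partial_{\theta}^{2}K(\sigma_{\beta},\theta)\big|_{\theta=\theta(\sigma_{\beta})}>0$; an implicit-function argument then makes $\beta\mapsto\theta(\sigma_{\beta})$ differentiable with derivative proportional to $-D'(\theta(\sigma_{\beta}))$ over that positive second derivative, and $\theta(\sigma)\neq\theta(\tilde\sigma)$ forces $D$ to be nonconstant on the range of the selection (otherwise all $K(\sigma_{\beta},\cdot)$ share a minimizer), so the derivative cannot vanish on a nondegenerate set — which is the strict single-crossing input to the monotone selection theorem of \citet{edlin1998strict}, giving strict monotonicity wherever $\theta(\sigma_{\beta})$ is interior. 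The delicate step throughout is the increasing-differences/single-crossing claim: $K(\sigma,\cdot)$ is not assumed quasi-convex, so the argument must genuinely use that uniqueness in Assumption~\ref{assumptionidentifiability}(i) holds for \emph{every} convex combination $\sigma_{\beta}$ together with continuity of the selection; in addition one must separately dispose of the boundary cases where $\theta(\sigma)$ or $\theta(\tilde\sigma)$ equals $0$ or $1$ (so the first- and second-order conditions used for strictness are unavailable) and the trivial case $\theta(\sigma)=\theta(\tilde\sigma)$.
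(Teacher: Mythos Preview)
Your argument for parts (i) and (ii) is correct and takes a genuinely different route from the paper. The paper first proves a single-peakedness lemma (showing that $K(\cdot,\sigma)$ is strictly decreasing and $K(\cdot,\tilde\sigma)$ strictly increasing on $[\theta(\tilde\sigma),\theta(\sigma)]$), deduces increasing differences of $K(\sigma_\beta,\theta)$ in $(\beta,\theta)$ on that rectangle, and then invokes Topkis. Your revealed-preference/three-point argument bypasses all of this and uses only uniqueness of the minimizer at every $\sigma_\beta$ together with continuity of the selection; this is more elementary and avoids smoothness of $K$ in $\theta$. One caveat: your aside that the revealed-preference inequality ``is precisely the increasing-differences property of $-K(\sigma_\beta,\theta)$ on the relevant rectangle'' overstates what you have shown. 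You only established $D(\theta(\sigma_{\beta_1}))\geq D(\theta(\sigma_{\beta_2}))$ for $\beta_1<\beta_2$, i.e.\ monotonicity of $D$ \emph{along the selection}; increasing differences on the whole rectangle would require $D$ to be weakly decreasing as a function of $\theta$ on $[\theta(\tilde\sigma),\theta(\sigma)]$, which is exactly what the paper's single-peakedness lemma delivers and you have not.

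This distinction becomes a genuine gap in part (iii). You claim that ``$D$ nonconstant on the range of the selection'' implies ``the derivative cannot vanish on a nondegenerate set,'' and invoke Edlin--Shannon. But Edlin--Shannon requires strict single crossing (equivalently, strict increasing differences), which you have not established. Concretely: suppose $\theta(\sigma_{\beta_1})=\theta(\sigma_{\beta_2})=\theta^*\in(0,1)$ for $\beta_1<\beta_2$. Weak monotonicity forces $\theta(\sigma_\beta)\equiv\theta^*$ on $[\beta_1,\beta_2]$, and the first-order conditions then give $D'(\theta^*)=0$ and $\partial_\theta K(\tilde\sigma,\theta^*)=\partial_\theta K(\sigma,\theta^*)=0$. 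This is perfectly consistent with $D$ being nonconstant on $[\theta(\tilde\sigma),\theta(\sigma)]$ (e.g.\ $D'$ could vanish at the single point $\theta^*$ only), so your ``$D$ nonconstant'' observation yields no contradiction. The paper closes this gap precisely via its single-peakedness lemma: once $K(\cdot,\sigma)$ and $K(\cdot,\tilde\sigma)$ are known to be single-peaked, a critical point $\theta^*$ that satisfies the strict second-order condition for some $\sigma_{\beta}$ is forced to be the \emph{global} minimizer for all $\beta$ (the FOC at $\theta^*$ holds for every $\beta$, the second derivative is affine in $\beta$, and single-peakedness promotes local to global), contradicting $\theta(\sigma)\neq\theta(\tilde\sigma)$. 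To repair your argument you need some structural statement of this type that promotes an interior critical point satisfying the SOC to the unique global minimizer.
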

\begin{itemize}
\item[(i)] If $\theta(\sigma)=\theta(\tilde{\sigma})$, then $\theta(\sigma_{\beta})=\theta(\sigma)$
for all $\beta\in[0,1]$.
\item[(ii)] If $\theta(\tilde{\sigma})<\theta(\sigma)$, then $\theta(\sigma_{\beta})$
is weakly increasing with respect to $\beta$.
\item[(iii)] If $\theta(\tilde{\sigma})<\theta(\sigma)$, then $\theta(\sigma_{\beta_{1}})<\theta(\sigma_{\beta_{2}})$
for any $\beta_{1}$ and $\beta_{2}$ such that $\beta_{1}<\beta_{2}$
and $\theta(\sigma_{\beta_{1}})\in(0,1)$.
\end{itemize}
\begin{proof}
See Appendix \ref{pf:propmonotone}.
\end{proof}
\medskip{}

The monotonicity result above ensures that the motion of the closest
model $\theta(\sigma)$ is characterized by a simple, one-dimensional
problem. Note that when $\Theta=[0,1]$, the best model $\theta(\sigma)$
can move in only three directions; it can go up, down, or stay the
same. In particular, since the motion of the action frequency is approximated
by 
\[
\dot{\bm{\sigma}}=\sigma-\bm{\sigma}(t)
\]
for some $\sigma\in\triangle F(\delta_{\theta(\bm{\sigma}(t))})$,
the monotonicity result in Proposition 5 implies that, at each time
$t$,
\begin{itemize}
\item $\theta(\bm{\sigma}(t))$ moves up if $\theta(\sigma)>\theta(\bm{\sigma}(t))$
for all $\sigma\in\triangle F(\delta_{\theta(\bm{\sigma}(t))})$.
\item $\theta(\bm{\sigma}(t))$ moves down if $\theta(\sigma)<\theta(\bm{\sigma}(t))$
for all $\sigma\in\triangle F(\delta_{\theta(\bm{\sigma}(t))})$.
\end{itemize}
To better understand the motion of $\theta(\bm{\sigma}(t))$, consider
the following example:

\medskip{}

\begin{example} \label{exampleonedimensional} The consequence space
is $Y=\mathbb{R}$, and the agent has two actions, $x_{0}$ and $x_{1}$.
Given an action $x_{k}$, the consequence $y$ follows the normal
distribution $N(k,1)$. The agent does not recognize that the action
influences the consequence, and she believes that given a model $\theta\in[0,1]$,
$y$ follows the normal distribution $N(\theta,1)$ regardless of
the chosen action. Consider an upper hemi-continuous policy $F$ which
satisfies 
\[
F(\delta_{\theta})=\left\{ \begin{array}{ll}
\{x_{0}\} & \textrm{if }\theta\in[0,\frac{1}{3})\cup(\frac{2}{3},1]\\
\{x_{1}\} & \textrm{if }\theta\in(\frac{1}{3},\frac{2}{3})\\
\{x_{0},x_{1}\} & \textrm{if }\theta\in\{\frac{1}{3},\frac{2}{3}\}
\end{array}\right..
\]

Given a mixed action $\sigma$, the consequence follows the normal
distribution $N(\sigma(x_{1}),1)$, so the closest model is $\theta(\sigma)=\sigma(x_{1})$.
Hence the motion of $\theta(\bm{\sigma}(t))$ can be described by
the arrows in Figure \ref{figureonedimensional}: $\theta(\bm{\sigma}(t))$
will move up in the middle region (i.e., $\theta(\bm{\sigma}(t))\in(\frac{1}{3},\frac{2}{3})$),
because the agent chooses the action $x_{1}$ and the corresponding
model is $\theta(\delta_{x_{1}})=1$. For the other region, $\theta(\bm{\sigma}(t))$
will move down because the agent chooses the action $x_{0}$ and the
corresponding model is $\theta(\delta_{x_{0}})=0$. 
\begin{figure}[htbp]
\centering{} 
\includegraphics{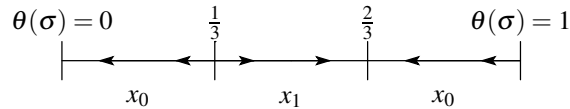}
\caption{Motion of $\theta(\bm{\sigma}(t))$}
\label{figureonedimensional}
\end{figure}
\end{example}

\medskip{}

Using the fact that the closest model $\theta(\sigma)$ follows the
simple rule above, we will now show that it converges almost surely.
The intuition is as follows. If the model space is $\Theta=[0,1]$
and $\theta(\sigma)$ follows a recursive rule as in Figure \ref{figureonedimensional},
it cannot be cyclic. This implies that $\theta(\sigma)$ cannot oscillate
forever and it must converge. A model $\theta^{\ast}$ is an \textbf{\textit{\emph{equilibrium
model}}} if there is an equilibrium $\sigma^{\ast}$ such that $\theta(\sigma^{\ast})=\theta^{\ast}$.
In Example \ref{exampleonedimensional}, there are three equilibrium
models, $0$, $1/3$, and $2/3$. Let $\Theta^{\ast}\subseteq\Theta$
denote the set of all equilibrium models. Then we have the following
result:

\medskip{}

\begin{prop}
\label{proponedimensional} Suppose that Assumption \ref{assumptionidentifiability}
holds, and that $\Theta^{\ast}$ is finite. Then almost surely, $\lim_{t\to\infty}\theta(\sigma_{t})$
exists and $\lim_{t\to\infty}\theta(\sigma_{t})\in\Theta^{\ast}$.
\end{prop}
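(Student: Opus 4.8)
The plan is to combine Theorem \ref{Theo:APT} with the reduction of the dynamics to the scalar variable $\theta(\sigma)$, exploiting the monotonicity in Proposition \ref{propmonotone} and the finiteness of $\Theta^{*}$. Work on the probability‑one event on which Theorem \ref{Theo:APT} holds, and let $L=\bigcap_{t}\overline{\{\sigma_{s}:s\ge t\}}$ be the limit set of $(\sigma_{t})_{t}$. By Theorem \ref{Theo:APT} and the corollary noted after it, $L$ is a nonempty, compact, connected, internally chain transitive set of the differential inclusion $\dot\sigma\in\Delta F(\delta_{\theta(\sigma)})-\sigma$ (Assumption \ref{assumptionidentifiability}(i) lets us replace $\Delta\Theta(\sigma)$ by the singleton $\{\delta_{\theta(\sigma)}\}$). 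Since $\theta(\cdot)$ is continuous (Lemma \ref{Lemma:Theta(sigma)}(ii) plus single‑valuedness) and $L$ is connected, $\theta(L)$ is a compact subinterval $[\underline\alpha,\overline\alpha]\subseteq[0,1]$; and because $L$ is the limit set, $\lim_{t\to\infty}\theta(\sigma_{t})$ exists if and only if $\underline\alpha=\overline\alpha$, in which case the limit equals that common value. Hence it suffices to prove that $\theta(L)$ is a single point and that this point lies in $\Theta^{*}$.

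Next I would set up the one‑dimensional phase line. Writing $\partial_{\theta}K(\sigma,\theta)=-c(\theta)\cdot\sigma$, which is affine in $\sigma$, one checks $\Theta^{*}=\{\theta:\theta\in\theta(\Delta F(\delta_{\theta}))\}$: if $\hat\sigma\in\Delta F(\delta_{\theta})$ with $\theta(\hat\sigma)=\theta$ then $\hat\sigma\in\Delta F(\delta_{\theta(\hat\sigma)})=\Delta F(\Delta\Theta(\hat\sigma))$ is an equilibrium, and conversely. Thus for $\theta\notin\Theta^{*}$ the interval $\theta(\Delta F(\delta_{\theta}))=[\underline\theta(\theta),\overline\theta(\theta)]$ lies strictly on one side of $\theta$, so a drift sign $\psi(\theta)\in\{+1,-1\}$ is well defined according to whether $\theta<\underline\theta(\theta)$ or $\theta>\overline\theta(\theta)$. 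Upper hemicontinuity of $F$ and continuity of $\theta(\cdot)$ (Berge) make $\{\theta<\underline\theta(\theta)\}$ and $\{\theta>\overline\theta(\theta)\}$ relatively open, so $\psi$ is constant on each connected component of $[0,1]\setminus\Theta^{*}$; moreover $\psi(0)\ne-1$ and $\psi(1)\ne+1$ since $\underline\theta\ge0$ and $\overline\theta\le1$ everywhere, which also forces $\Theta^{*}\ne\emptyset$. The dynamic input is then: along any solution $\sigma(\cdot)$ of the differential inclusion, $t\mapsto\theta(\sigma(t))$ is strictly monotone in the direction $\psi$ while it stays in a component of $[0,1]\setminus\Theta^{*}$. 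Weak monotonicity is Proposition \ref{propmonotone}(ii) applied to the segment from $\sigma(t)$ to the chosen $\hat\sigma(t)\in\Delta F(\delta_{\theta(\sigma(t))})$; strictness uses Proposition \ref{propmonotone}(iii) together with Assumption \ref{assumptionidentifiability}(ii), which via the implicit function theorem and $\partial_{\theta}^{2}K>0$ gives $\nabla\theta(\sigma)=c(\theta(\sigma))/\partial_{\theta}^{2}K(\sigma,\theta(\sigma))$ on $\{\theta(\sigma)\in(0,1)\}$, whence $\tfrac{d}{dt}\theta(\sigma(t))=-\partial_{\theta}K(\hat\sigma(t),\theta(\sigma(t)))/\partial_{\theta}^{2}K(\sigma(t),\theta(\sigma(t)))$ has the sign of $\psi$ and is bounded away from $0$ on compact subsets of $\{\theta(\sigma)\in(0,1)\}$; the boundary cases $\theta(\sigma(t))\in\{0,1\}$ are treated separately, using affineness of $\partial_{\theta}K$ in $\sigma$ to show a solution cannot linger on the level set $\{\theta=1\}$ (resp. $\{\theta=0\}$) when $\psi=-1$ (resp. $+1$).

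Using these ingredients I would build a continuous Lyapunov function $V=G\circ\theta$ on $\Delta X$, where $G:[0,1]\to\mathbb{R}$ is piecewise linear with slope $-1$ on components of $[0,1]\setminus\Theta^{*}$ with $\psi=+1$ and slope $+1$ on those with $\psi=-1$, and with the finitely many kink heights chosen recursively so that $G$ is continuous. By the monotonicity fact, $V$ is nonincreasing along every solution, strictly decreasing along solutions that stay off $\theta^{-1}(\Theta^{*})$, and $V(\theta^{-1}(\Theta^{*}))=G(\Theta^{*})$ is finite, hence has empty interior. The Lyapunov criterion for internally chain transitive sets (see \citet*{benaim2005stochastic} and \citet{Benaim99}, Section 6) then yields $L\subseteq\theta^{-1}(\Theta^{*})$ with $V$ constant on $L$; since $L$ is connected and $\Theta^{*}$ is finite, $\theta(L)$ is a single point $\theta^{*}\in\Theta^{*}$. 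Therefore $\lim_{t\to\infty}\theta(\sigma_{t})$ exists and equals $\theta^{*}\in\Theta^{*}$ almost surely, as claimed.

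I expect the strict‑monotonicity step to be the main obstacle. Proposition \ref{propmonotone}(iii) delivers strict monotonicity of $\theta$ only along line segments in $\Delta X$, whereas a solution of the differential inclusion is not a segment, so one must move to the differential version via the $C^{1}$ identity for $\nabla\theta$ (Assumption \ref{assumptionidentifiability}(ii)) and argue that the relevant directional derivative is not merely $\le 0$ but uniformly negative on compact pieces where $\theta\in(0,1)$; separately, one must rule out a trajectory becoming stuck on $\{\theta=0\}$ or $\{\theta=1\}$, where $\theta(\cdot)$ need not be differentiable, and here the affine dependence of $\partial_{\theta}K$ on $\sigma$ is what makes the argument go through. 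A more routine point is to verify that the differential inclusion satisfies the standing hypotheses (bounded, upper hemicontinuous, compact convex values) under which the internal‑chain‑transitivity corollary of Theorem \ref{Theo:APT} and the Lyapunov criterion apply, which follows from Assumption \ref{ass:4} and continuity of $\theta(\cdot)$.
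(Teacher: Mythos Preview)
Your approach is correct and takes a genuinely different route from the paper. Both arguments rest on the same one-dimensional reduction: by Proposition \ref{propmonotone}, the scalar $\theta(\bm{\sigma}(t))$ can move only in a fixed direction $\psi\in\{+1,-1\}$ on each component of $[0,1]\setminus\Theta^{*}$ (this is exactly the paper's dichotomy lemma for the intervals $(\theta_n,\theta_{n+1})$). From there the paths diverge. The paper works directly with Theorem \ref{Theo:APT}: a separating-hyperplane argument shows that every solution of the differential inclusion pushes $\theta$ across a non-equilibrium interval in uniformly bounded time, this is transferred to $\mathbf{w}$ by an $\varepsilon$--$T$ approximation, and a contradiction argument then forces $\liminf\theta(\sigma_t)=\limsup\theta(\sigma_t)\in\Theta^{*}$. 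You instead invoke the BHS2005 machinery: the limit set $L$ is compact, connected, and internally chain transitive, your piecewise-linear $V=G\circ\theta$ is a Lyapunov function for $\theta^{-1}(\Theta^{*})$ with $V(\theta^{-1}(\Theta^{*}))$ finite, and the Lyapunov criterion yields $\theta(L)\subseteq\Theta^{*}$, hence a singleton by connectedness. Your route is more conceptual and shorter once the BHS results are granted; the paper's is more self-contained and, notably, needs only \emph{weak} monotonicity of $\theta(\bm{\sigma}(t))$ along solutions, since strict progress is carried by the linear functional $\lambda\cdot\bm{\sigma}(t)$ rather than by $\theta$ itself. One caveat on your side: the \emph{statement} of Proposition \ref{propmonotone}(iii) does not by itself give a nonzero directional derivative of $\theta$ at the base point---strict monotonicity of $g(\beta)=\theta(\sigma_\beta)$ is compatible with $g'(0)=0$---so for your strict-decrease step you actually need the intermediate claim from its \emph{proof}, namely $\partial_\theta K(\hat\sigma,\theta(\sigma))\neq 0$ whenever $\theta(\hat\sigma)\neq\theta(\sigma)$ and $\theta(\sigma)\in(0,1)$, which then feeds directly into your formula for $\tfrac{d}{dt}\theta(\bm{\sigma}(t))$.
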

\begin{proof}
See Appendix \ref{pf:proponedimensional}.
\end{proof}
\medskip{}

This proposition, together with Theorem \ref{Theo:Berk}, implies
that the posterior belief $\mu_{t}$ converges almost surely, and
the limit belief is a degenerate belief on some equilibrium model.
When there are multiple equilibrium models, Proposition \ref{proponedimensional}
does not tell us which one will arise as a long-run outcome. To address
this concern, we define \textit{attracting models} as follows.

\medskip{}

\begin{defn}
\label{def:attractingmodel}A model $\theta^{\ast}\in[0,1]$ is \textbf{\textit{\emph{attracting}}}
if there is $\varepsilon>0$ such that
\end{defn}
\begin{itemize}
\item $\theta(\delta_{x})\geq\theta^{\ast}$ for any $\theta\in(\theta^{\ast}-\varepsilon,\theta^{\ast})$
and for any $x\in F(\delta_{\theta})$.
\item $\theta(\delta_{x})\leq\theta^{\ast}$ for any $\theta\in(\theta^{\ast},\theta^{\ast}+\varepsilon)$
and for any $x\in F(\delta_{\theta})$. 
\end{itemize}
\medskip{}

Intuitively, a model $\theta^{\ast}$ is attracting if it is locally
absorbing, in that $\theta(\bm{\sigma}(t))$ moves toward $\theta^{\ast}$
in its neighborhood. Indeed, the first bullet point in the definition
asserts that if $\theta(\bm{\sigma}(t))$ is slightly lower than $\theta^{\ast}$
in the current period $t$, then it will go up, and hence be closer
to $\theta^{\ast}$ at the next instant. Similarly, the second bullet
point in the definition ensures that if $\theta(\bm{\sigma}(t))$
is slightly higher than $\theta^{\ast}$ in the current period $t$,
then it will go down. In Example \ref{exampleonedimensional}, the
equilibrium models $0$ and $2/3$ are attracting, while $1/3$ is
not.

Given an attracting model $\theta^{\ast}$, let $A=\{\sigma\in\triangle F(\delta_{\theta^{\ast}})|\theta(\sigma)=\theta^{\ast}\}$
be the set of equilibria $\sigma$ in which the agent has a degenerate
belief on $\theta^{\ast}$.\footnote{Upper hemi-continuity of $F$ ensures that this set $A$ is non-empty,
which in turn implies that any attracting model is an equilibrium
model. For the special case in which $F(\delta_{\theta^{\ast}})$
contains only one component, this set $A$ is a singleton. Similarly,
even when $F(\delta_{\theta^{\ast}})$ contains only two components,
the set $A$ is a singleton for generic parameters. On the other hand,
when $F(\delta_{\theta^{\ast}})$ contains three or more actions,
the set $A$ is typically continuous.} The following proposition shows that this set $A$ is robustly attracting,
which means that these equilibria should arise as a long-run outcome
at least for some initial prior. Also the proposition shows that the
converse is true, i.e., if a set $A=\{\sigma\in\triangle F(\delta_{\theta^{\ast}})|\theta(\sigma)=\theta^{\ast}\}$
is robustly attracting, then $\theta^{\ast}$ is an attracting model.\medskip{}

\begin{prop}
\label{propequivalence} Under Assumption \ref{assumptionidentifiability},
for each $\theta^{\ast}$, the following properties are equivalent:
\end{prop}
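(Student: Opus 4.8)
The plan is to prove the equivalence --- among ``$\theta^{\ast}$ is an attracting model,'' ``the set $A\equiv\{\sigma\in\Delta F(\delta_{\theta^{\ast}}):\theta(\sigma)=\theta^{\ast}\}$ is attracting,'' and ``$A$ is robustly attracting'' --- by exploiting the one-dimensional reduction of the differential inclusion that Proposition~\ref{propmonotone} makes available. The set $A$ is nonempty by upper hemicontinuity of $F$, and ``robustly attracting $\Rightarrow$ attracting'' is immediate from the definitions, so it will be enough to show: (i) if $\theta^{\ast}$ is an attracting model then $A$ is robustly attracting; and (ii) if $A$ is attracting then $\theta^{\ast}$ is an attracting model. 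The device I would use in both directions is the affine-integral representation of solutions: any solution of $\dot{\bm{\sigma}}(t)=r(t)-\bm{\sigma}(t)$ satisfies $\bm{\sigma}(t)=e^{-t}\bm{\sigma}(0)+(1-e^{-t})\bar r(t)$ with $\bar r(t)$ a convex combination of the targets $\{r(s):0\le s\le t\}$. Combining this with the fact --- which I would derive from Proposition~\ref{propmonotone}(i)--(ii), first for two points and then for arbitrary (and, by continuity of $\theta(\cdot)$, continuum) convex combinations --- that $\theta(\cdot)$ evaluated at a convex combination of points lying weakly on one side of $\theta^{\ast}$ again lies weakly on that side, lets me control the scalar $\theta(\bm{\sigma}(t))$ without differentiating the non-smooth map $\theta(\cdot)$.

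For (i): fix $\varepsilon_{0}>0$ witnessing that $\theta^{\ast}$ is an attracting model, and use uniform continuity of $\theta(\cdot)$ on the compact simplex to pick $\zeta>0$ and a perturbation radius $\varepsilon>0$ so small that $d(\sigma,A)<\zeta$ forces $|\theta(\sigma)-\theta^{\ast}|<\varepsilon_{0}/2$, that $\|\tilde\sigma-\sigma\|<\varepsilon$ forces $|\theta(\tilde\sigma)-\theta(\sigma)|<\varepsilon_{0}/2$, and (using upper hemicontinuity of $F$) that $F(\delta_{\theta})\subseteq F(\delta_{\theta^{\ast}})$ whenever $|\theta-\theta^{\ast}|<\varepsilon_{0}$. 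Then whenever $0<|\theta(\bm{\sigma}(t))-\theta^{\ast}|<\varepsilon_{0}$, any admissible $\varepsilon$-perturbation of $\bm{\sigma}(t)$ still lies strictly on the same side of $\theta^{\ast}$, so the attracting-model property forces every target $r(t)$ to satisfy $\theta(r(t))\ge\theta^{\ast}$ when $\theta(\bm{\sigma}(t))<\theta^{\ast}$ and $\theta(r(t))\le\theta^{\ast}$ when $\theta(\bm{\sigma}(t))>\theta^{\ast}$; the integral representation then shows $\theta(\bm{\sigma}(t))$ cannot exceed $\theta^{\ast}$ once it is $\le\theta^{\ast}$ nor fall below $\theta^{\ast}$ once it is $\ge\theta^{\ast}$, so the orbit is trapped in $\{|\theta(\cdot)-\theta^{\ast}|<\varepsilon_{0}\}$, hence drawn exponentially into the fixed compact convex set $\Delta F(\delta_{\theta^{\ast}})$, and moreover $\theta(\bm{\sigma}(t))\to\theta^{\ast}$ (a value strictly in between would have to be stationary, but the attracting-model condition rules out any equilibrium model in $(\theta^{\ast}-\varepsilon_{0},\theta^{\ast})\cup(\theta^{\ast},\theta^{\ast}+\varepsilon_{0})$). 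Continuity of $\theta(\cdot)$ on the compact set $\Delta F(\delta_{\theta^{\ast}})$ then gives $d(\bm{\sigma}(t),A)\to0$, and since all of this holds for the $\varepsilon$-perturbed inclusion, solutions started in $B_{\zeta}(A)$ never leave $\mathcal{U}_{A}$; this is exactly robust attraction, and it is the contraction mechanism described in the footnote following Definition~\ref{def:robust_attracting}.

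For (ii), I would argue the contrapositive. If $\theta^{\ast}$ is not an attracting model, then --- passing to a subsequence and using finiteness of $X$ --- there is a point $\theta'$ arbitrarily close to $\theta^{\ast}$, say with $\theta'<\theta^{\ast}$ (the other case symmetric), and a fixed action $x\in F(\delta_{\theta'})$ with $\theta(\delta_{x})=\theta^{\ast}-c$ for some $c>0$; upper hemicontinuity of $F$ then gives $x\in F(\delta_{\theta})$ throughout a left neighborhood of $\theta^{\ast}$. Starting from a $\sigma_{0}$ with $d(\sigma_{0},A)$ as small as desired but $\theta(\sigma_{0})<\theta^{\ast}$, consider the solution that plays $x$ while $\theta(\bm{\sigma}(t))<\theta^{\ast}$; by the integral representation $\theta(\bm{\sigma}(t))$ then decreases monotonically toward $\theta(\delta_{x})=\theta^{\ast}-c$, so $\bm{\sigma}(t)$ moves a definite distance (of order $c$ in the $\theta$-coordinate) away from $A$ within a time bounded independently of $\sigma_{0}$. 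This contradicts the uniform escape time demanded in the definition of an attracting set, so ``$A$ attracting'' forces $\theta^{\ast}$ to be an attracting model.

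The hard part will be the non-smoothness of $\sigma\mapsto\theta(\sigma)$, which prevents differentiating $\theta(\bm{\sigma}(t))$ along trajectories and forces the whole argument through the affine-integral representation together with the line-monotonicity of Proposition~\ref{propmonotone} (which itself must be promoted from two points to arbitrary, and then continuum, convex combinations). The single most delicate point is the behavior on the indifference surface $\{\sigma:\theta(\sigma)=\theta^{\ast}\}$, where $F(\delta_{\theta^{\ast}})$ may contain both ``high'' and ``low'' actions and where $A$ may be an entire face of $\Delta F(\delta_{\theta^{\ast}})$ rather than a single point: there one has to verify that no absolutely continuous solution can leave the surface with positive speed once it reaches it --- the vector field immediately reverses, a fact I would extract again from the integral representation on a short interval --- so that $\theta(\bm{\sigma}(t))$ is genuinely pinned to $\theta^{\ast}$ and the claimed $d(\bm{\sigma}(t),A)\to0$ really does follow. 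The remaining ingredients --- uniform continuity of $\theta(\cdot)$ on the compact simplex, the finite-to-continuum extension of monotonicity, and $F(\delta_{\theta})\subseteq F(\delta_{\theta^{\ast}})$ near $\theta^{\ast}$ from upper hemicontinuity --- are routine.
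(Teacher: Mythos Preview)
Your overall architecture is exactly the paper's: (c)$\Rightarrow$(b) is trivial, so you prove (a)$\Rightarrow$(c) and (b)$\Rightarrow$(a). But in both directions the execution has a real gap that the paper avoids by a different technical device.

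\textbf{(a)$\Rightarrow$(c).} Your claim that ``any admissible $\varepsilon$-perturbation of $\bm{\sigma}(t)$ still lies strictly on the same side of $\theta^{\ast}$'' whenever $0<|\theta(\bm{\sigma}(t))-\theta^{\ast}|<\varepsilon_{0}$ is false: you only arranged $|\theta(\tilde\sigma)-\theta(\bm{\sigma}(t))|<\varepsilon_{0}/2$, so if $|\theta(\bm{\sigma}(t))-\theta^{\ast}|<\varepsilon_{0}/2$ the perturbation can land on, or across, the surface $\{\theta(\cdot)=\theta^{\ast}\}$. Once that happens the target $r(t)$ can lie on \emph{either} side of $\theta^{\ast}$, and your integral representation $\bm{\sigma}(t)=e^{-t}\bm{\sigma}(0)+(1-e^{-t})\bar r(t)$ no longer pins down the sign of $\theta(\bm{\sigma}(t))-\theta^{\ast}$, because $\bar r(t)$ mixes targets from both sides. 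The same difficulty already arises for the \emph{unperturbed} inclusion once the trajectory crosses $\theta^{\ast}$: the affine-integral picture does not cleanly control the nonlinear coordinate $\theta(\cdot)$ when targets alternate sides. The paper sidesteps this by passing to \emph{linear} coordinates. Since $\{\sigma:\theta(\sigma)<\theta^{\ast}\}$ and $\{\sigma:\theta(\sigma)>\theta^{\ast}\}$ are convex (Proposition~\ref{propmonotone}), it takes separating hyperplanes $\lambda_{i}\cdot\sigma=k_{i}$ and computes $\lambda_{i}\cdot\dot{\bm{\sigma}}(t)\ge k_{i}-\lambda_{i}\cdot\bm{\sigma}(t)$, which gives a clean Lyapunov-type estimate. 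For robust attraction it then only checks the behavior at the fixed levels $\theta^{\ast}\pm\varepsilon/2$ (where perturbations cannot cross $\theta^{\ast}$), rather than at every level as you attempt.

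\textbf{(b)$\Rightarrow$(a).} The step ``upper hemicontinuity of $F$ then gives $x\in F(\delta_{\theta})$ throughout a left neighborhood of $\theta^{\ast}$'' is wrong: for finite-valued uhc $F$ the set $\{\theta:x\in F(\delta_{\theta})\}$ is closed, not open, so $x$ can be available only along a sequence $\theta_{n}\uparrow\theta^{\ast}$ and nowhere in between. Your trajectory that ``plays $x$ while $\theta(\bm{\sigma}(t))<\theta^{\ast}$'' is then not a solution of the differential inclusion. The paper avoids this by a \emph{barrier} construction: it fixes a single level $\bar\theta\in(\theta(\bm{\sigma}(0)),\theta^{\ast})$ at which the offending action $\bar\sigma\in F(\delta_{\bar\theta})$ with $\theta(\bar\sigma)\le\bar\theta$ is available, and builds a solution that only needs to invoke $\bar\sigma$ at the instants when $\theta(\bm{\sigma}(t))=\bar\theta$; this traps the trajectory below $\bar\theta$ forever and contradicts $\bm{\sigma}(0)\in\mathcal{U}_{A}$.

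In short, your monotone/integral approach is the right intuition, but the nonlinearity of $\theta(\cdot)$ forces the paper to linearize via separating hyperplanes for (a)$\Rightarrow$(c), and the possible sparsity of $\{\theta:x\in F(\delta_{\theta})\}$ forces a single-level barrier (rather than a continuous trajectory playing $x$) for (b)$\Rightarrow$(a).
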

\begin{itemize}
\item[(a)] $\theta^{\ast}$ is attracting.
\item[(b)] The set $A=\{\sigma\in\triangle F(\delta_{\theta^{\ast}})|\theta(\sigma)=\theta^{\ast}\}$
is attracting.
\item[(c)] The set $A$ is robustly attracting.
\end{itemize}
\begin{proof}
See Appendix \ref{pf:propequivalence}.
\end{proof}
\medskip{}

In the same spirit, we define \textit{repelling models} as follows:

\medskip{}

\begin{defn}
\label{def:repellingmodel}A model $\theta^{\ast}\in(0,1)$ is \textbf{\textit{\emph{repelling}}}
if $\theta^{\ast}\neq\theta(\delta_{x})$ for each pure action $x\in F(\delta_{\theta^{\ast}})$
and there is $\varepsilon>0$ such that
\end{defn}
\begin{itemize}
\item $\theta(\delta_{x})\leq\theta^{\ast}-\varepsilon$ for any $\theta\in(\theta^{\ast}-\varepsilon,\theta^{\ast})$
and for any $x\in F(\delta_{\theta})$.
\item $\theta(\delta_{x})\geq\theta^{\ast}+\varepsilon$ for any $\theta\in(\theta^{\ast},\theta^{\ast}+\varepsilon)$
and for any $x\in F(\delta_{\theta})$.
\end{itemize}
\medskip{}

In words, a model $\theta^{\ast}$ is repelling if $\theta(\bm{\sigma}(t))$
moves away from $\theta^{\ast}$ in its neighborhood. Indeed, the
first bullet point implies that if $\theta(\bm{\sigma}(t))$ is slightly
below $\theta^{\ast}$, it will move down further at the next instant.
The second bullet point implies that if $\theta(\bm{\sigma}(t))$
is slightly above $\theta^{\ast}$, it will go up at the next instant.
In Example \ref{exampleonedimensional}, the equilibrium model $\theta=\frac{1}{3}$
is repelling. In the definition above, we consider only interior models
$\theta\in(0,1)$. This is so because whenever an extreme point $\theta=0,1$
is supported by some equilibrium (i.e., there is an equilibrium $\sigma$
such that $\theta(\sigma)=\theta$), there is a pure-strategy equilibrium
$\delta_{x}$ supporting it.

The following proposition shows that if $\theta^{\ast}$ is repelling,
then any equilibrium in which the agent has a degenerate belief on
this model $\theta^{\ast}$ is repelling; hence these equilibria do
not arise as long-run outcomes.

\medskip{}

\begin{prop}
\label{propequivalence2} Under Assumption \ref{assumptionidentifiability},
$\theta^{\ast}\in(0,1)$ is repelling if and only if it is not supported
by a pure equilibrium, there is at least one mixed equilibrium $\sigma^{\ast}$
with $\theta(\sigma^{\ast})=\theta^{\ast}$, and all mixed equilibria
$\sigma^{\ast}$ with $\theta(\sigma^{\ast})=\theta^{\ast}$ are repelling.
\end{prop}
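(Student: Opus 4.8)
\emph{Strategy and setup.} Under Assumption~\ref{assumptionidentifiability}(i), along any solution $\bm\sigma$ the inclusion (\ref{eq:DI-1-1}) reduces to $\dot{\bm\sigma}(t)=\sigma'(t)-\bm\sigma(t)$ for a measurable selection $\sigma'(t)\in\triangle F(\delta_{\theta(\bm\sigma(t))})$, so the whole argument runs in terms of the scalar $\theta(\bm\sigma(t))$, whose motion is controlled by the monotonicity in Proposition~\ref{propmonotone}. Observe first that ``$\theta^{\ast}$ is not supported by a pure equilibrium'' is \emph{exactly} the first clause of Definition~\ref{def:repellingmodel}: $\delta_{x}$ is an equilibrium with $\theta(\delta_{x})=\theta^{\ast}$ iff $x\in F(\delta_{\theta^{\ast}})$ and $\theta(\delta_{x})=\theta^{\ast}$. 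I prove the two implications separately. $(\Rightarrow)$ Assume $\theta^{\ast}$ is a repelling model, with constant $\varepsilon>0$. The ``no pure equilibrium'' clause holds by the observation above. For existence of a mixed equilibrium I use uhc of $F$: taking $\theta_{n}\uparrow\theta^{\ast}$ and $x_{n}\in F(\delta_{\theta_{n}})$, the repelling inequalities give $\theta(\delta_{x_{n}})\le\theta^{\ast}-\varepsilon$, so since $X$ is finite some action $x^{-}$ recurs, hence $x^{-}\in F(\delta_{\theta^{\ast}})$ with $\theta(\delta_{x^{-}})\le\theta^{\ast}-\varepsilon$; symmetrically there is $x^{+}\in F(\delta_{\theta^{\ast}})$ with $\theta(\delta_{x^{+}})\ge\theta^{\ast}+\varepsilon$. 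As $\triangle F(\delta_{\theta^{\ast}})$ is convex and compact and $\theta(\cdot)$ is continuous, its image under $\theta(\cdot)$ is an interval containing $[\theta^{\ast}-\varepsilon,\theta^{\ast}+\varepsilon]$, so some $\sigma^{\ast}\in\triangle F(\delta_{\theta^{\ast}})$ has $\theta(\sigma^{\ast})=\theta^{\ast}$; by Assumption~\ref{assumptionidentifiability}(i), $\triangle\Theta(\sigma^{\ast})=\{\delta_{\theta^{\ast}}\}$, so $\sigma^{\ast}$ is an equilibrium, and it is not pure since every pure point of $\triangle F(\delta_{\theta^{\ast}})$ has $\theta(\cdot)\ne\theta^{\ast}$. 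The same argument shows the support of any equilibrium with $\theta(\sigma^{\ast})=\theta^{\ast}$ meets both $\{x:\theta(\delta_{x})<\theta^{\ast}\}$ and $\{x:\theta(\delta_{x})>\theta^{\ast}\}$.

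It remains to show every such $\sigma^{\ast}$ is a repelling equilibrium. Pick a neighborhood $\mathcal U$ of $\sigma^{\ast}$ so small that $\theta(\mathcal U)\subseteq(\theta^{\ast}-\varepsilon/2,\theta^{\ast}+\varepsilon/2)$ and $\theta(\cdot)$ is $C^{1}$ on $\mathcal U$ (the latter by Assumption~\ref{assumptionidentifiability}(ii) and the implicit function theorem at the interior point $\theta^{\ast}$). Given $\sigma\in\mathcal U$, $x\in F(\delta_{\theta^{\ast}})$ and $\overline{\beta}$, a short case analysis using Proposition~\ref{propmonotone} and continuity of $\theta(\cdot)$ shows that for $\beta\in(\overline{\beta},1)$ close enough to $1$ the point $\sigma_{0}:=\beta\sigma+(1-\beta)\delta_{x}$ lies in $\mathcal U$ with $\theta(\sigma_{0})\ne\theta^{\ast}$; say $\theta(\sigma_{0})<\theta^{\ast}$ (the other case is symmetric). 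While a solution from $\sigma_{0}$ stays in $\mathcal U$, every admissible $\sigma'(t)\in\triangle F(\delta_{\theta(\bm\sigma(t))})$ has $\theta(\sigma'(t))\le\theta^{\ast}-\varepsilon$ (by the repelling inequalities for the pure actions in $F(\delta_{\theta(\bm\sigma(t))})$ and convexity of $\{\sigma:\theta(\sigma)\le\theta^{\ast}-\varepsilon\}$, which follows from Proposition~\ref{propmonotone}), so by Proposition~\ref{propmonotone}(iii) $\theta(\bm\sigma(t))$ is strictly decreasing and cannot return to $\theta^{\ast}$. Differentiating the first-order condition $\partial_{\theta}K(\bm\sigma(t),\theta(\bm\sigma(t)))=0$ gives
\[
\tfrac{d}{dt}\,\theta(\bm\sigma(t))\;=\;-\,\frac{\partial_{\theta}K\big(\sigma'(t),\theta(\bm\sigma(t))\big)}{\partial_{\theta}^{2}K\big(\bm\sigma(t),\theta(\bm\sigma(t))\big)},
\]
and, using that $\theta(\bm\sigma(t))$ lies a uniform distance $\ge\varepsilon/2$ to the right of the minimizer $\theta(\sigma'(t))$ of $K(\sigma'(t),\cdot)$ together with the two-sided bounds on $\partial_{\theta}^{2}K$ near $(\sigma^{\ast},\theta^{\ast})$ from Assumption~\ref{assumptionidentifiability}(ii) and compactness, one gets $\tfrac{d}{dt}\theta(\bm\sigma(t))\le-c$ for a constant $c>0$. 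Hence $\theta(\bm\sigma(\cdot))$ drops below $\theta^{\ast}-\varepsilon/2$, so the solution leaves $\mathcal U$, before the fixed time $T:=\lceil\varepsilon/(2c)\rceil$; this is Definition~\ref{def:repelling_eqm}.

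$(\Leftarrow)$ I argue by contraposition: suppose $\theta^{\ast}$ is not a repelling model. If the first clause of Definition~\ref{def:repellingmodel} fails, there is $x\in F(\delta_{\theta^{\ast}})$ with $\theta(\delta_{x})=\theta^{\ast}$, hence $\delta_{x}$ is a pure equilibrium supporting $\theta^{\ast}$ and the proposition's first conclusion fails. Otherwise the separation condition fails; if moreover $\theta^{\ast}$ is not an equilibrium model then the second conclusion fails. So assume $\theta^{\ast}$ is an equilibrium model (hence, by the first clause, all supporting equilibria are mixed) and the separation fails. Passing to a subsequence and, say, on the side below $\theta^{\ast}$, uhc of $F$ produces an action $x$ with $\theta(\delta_{x})>\theta^{\ast}$ that lies in $F(\delta_{\theta})$ for a sequence $\theta\uparrow\theta^{\ast}$ (so also $x\in F(\delta_{\theta^{\ast}})$). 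Fix the equilibrium set $A=\{\sigma\in\triangle F(\delta_{\theta^{\ast}}):\theta(\sigma)=\theta^{\ast}\}$. Using $x$, one can, from any action frequency with $\theta$-coordinate slightly below $\theta^{\ast}$, follow a solution of (\ref{eq:DI-1-1}) whose $\theta$-coordinate is driven upward (Proposition~\ref{propmonotone}) and is then absorbed into $A$, staying forever in any prescribed neighborhood of a suitably chosen point $\sigma\in A$. Taking this $\sigma$ and any $x'\in F(\delta_{\theta^{\ast}})$ with $\theta(\delta_{x'})<\theta^{\ast}$, the points $\beta\sigma+(1-\beta)\delta_{x'}$ with $\beta$ near $1$ have $\theta$-coordinate slightly below $\theta^{\ast}$, so for every $T$ and every $\beta$ close to $1$ there is a solution from $\beta\sigma+(1-\beta)\delta_{x'}$ that never leaves the prescribed neighborhood on $[0,T]$; hence that equilibrium is not repelling, and the third conclusion fails.

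\emph{Main obstacle.} Two steps carry the real work. In $(\Rightarrow)$, the delicate point is upgrading ``$\theta(\bm\sigma(t))$ moves away from $\theta^{\ast}$'' to a \emph{rate} bounded away from zero that does not degenerate as $\theta(\bm\sigma(t))\uparrow\theta^{\ast}$; this is exactly where Assumption~\ref{assumptionidentifiability}(ii) — through the closed-form expression for $\tfrac{d}{dt}\theta(\bm\sigma(t))$ and the ensuing lower bound on $\partial_{\theta}K(\sigma'(t),\theta(\bm\sigma(t)))$ — is indispensable, and it is what makes the exit time $T$ uniform (a naive compactness argument fails, since the perturbed starting points can be pushed arbitrarily close to $A$). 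In $(\Leftarrow)$, the delicate point is constructing a solution that stays inside a \emph{small} neighborhood: the ``good'' action extracted from the failure of the separation condition need not be available at every belief the trajectory visits, so one must splice the trajectory to respect the availability constraints encoded in $F$ while ensuring it is absorbed into $A$ rather than overshooting past $\theta^{\ast}$.
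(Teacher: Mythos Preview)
Your overall architecture matches the paper's, and your $(\Leftarrow)$ direction is essentially the paper's argument: extract from the failure of the repelling-model condition a ``good'' action that can steer $\theta(\bm\sigma(t))$ back toward $\theta^{\ast}$, and build a trajectory that never leaves a small neighborhood of a chosen equilibrium. The splicing issue you flag is real and is exactly where the paper, too, is terse; to close it you should use that, by upper hemi-continuity and finiteness of $X$, $F(\delta_{\theta})\subseteq F(\delta_{\theta^{\ast}})$ for $\theta$ in a one-sided neighborhood of $\theta^{\ast}$, which lets you choose the ``good'' action consistently along the whole first phase.

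The $(\Rightarrow)$ direction, however, takes a genuinely different route from the paper and has a gap. The paper does \emph{not} bound $\tfrac{d}{dt}\theta(\bm\sigma(t))$ directly. Instead, it uses Proposition~\ref{propmonotone} to observe that the level set $\{\sigma:\theta(\sigma)\ge\theta^{\ast}+\varepsilon\}$ is convex, picks a separating hyperplane $\lambda_{1}\cdot\sigma=k_{1}$, and shows that whenever $\theta(\bm\sigma(t))\in(\theta^{\ast},\theta^{\ast}+\tfrac{\varepsilon}{2})$ one has $\lambda_{1}\cdot\dot{\bm\sigma}(t)\ge k_{1}-\lambda_{1}\cdot\bm\sigma(t)>0$. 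This linear Lyapunov function gives a uniform exit-time bound $T$ with no smoothness on $K$ beyond what is already implicit in Proposition~\ref{propmonotone}. Your route instead implicit-differentiates the first-order condition and asserts a uniform rate $\tfrac{d}{dt}\theta(\bm\sigma(t))\le-c$ from ``two-sided bounds on $\partial_{\theta}^{2}K$ near $(\sigma^{\ast},\theta^{\ast})$.'' That step does not follow from the paper's assumptions: Assumption~\ref{assumptionidentifiability}(ii) gives $\partial_{\theta}^{2}K(\sigma,\theta)|_{\theta=\theta(\sigma)}>0$ only \emph{at the minimizer}, not uniformly on intervals; and to lower-bound the numerator $\partial_{\theta}K(\sigma'(t),\theta(\bm\sigma(t)))$ you would need convexity of $K(\sigma'(t),\cdot)$ (or at least $\partial_{\theta}^{2}K(\sigma'(t),s)\ge m>0$ for all $s$ between $\theta(\sigma'(t))$ and $\theta(\bm\sigma(t))$), whereas the paper only establishes single-peakedness (Lemma~\ref{lemmasinglepeaked}), which is compatible with $\partial_{\theta}K$ vanishing away from the minimum. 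Moreover, your appeal to ``compactness near $(\sigma^{\ast},\theta^{\ast})$'' is misplaced because $\sigma'(t)$ lives far from $\sigma^{\ast}$ (indeed $\theta(\sigma'(t))\le\theta^{\ast}-\varepsilon$). To make your approach work you would need an additional global strict-convexity hypothesis on $K(\cdot,\theta)$; the paper's hyperplane argument avoids this entirely.
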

\begin{proof}
See Appendix \ref{pf:propequivalence}.
\end{proof}

\subsection{\label{subsec:Applications}Application: Positively reinforcing beliefs}

We conclude by applying our results to a large class of economically
relevant environments where beliefs are positively reinforcing in
the sense that higher beliefs lead to higher actions which in turn
lead to higher beliefs. Two examples in this class are \citet{esponda2008behavioral}'s
economies with adverse selection, which includes applications to bilateral
trade, insurance markets, auctions, and performance pay, and \citet*{heidhues2018unrealistic}'s
environment of an agent whose overconfidence biases her learning about
a fundamental, which includes applications to delegation, control
in organizations, and public policy choices. In contrast to this work,
we are able to derive results without additional restrictive assumptions
of a technical nature.\footnote{\citet{esponda2008behavioral} does not generally tackle the question
of convergence; this question is tackled by \citet*{heidhues2018unrealistic},
who establish convergence under the assumption that there is a unique
equilibrium and that the distribution of noise is log concave.}

Without loss of generality, we assume the actions are ordered according
to $x_{1}<...<x_{\left|X\right|}$. We then make the following meaningful
economic assumptions:
\begin{itemize}
\item $\Theta=[0,1]$
\item The identifiability conditions (i)-(ii) in Assumption \ref{assumptionidentifiability}
hold; in particular, let $\theta(\sigma)$ denote the closest model
given $\sigma$.
\item Higher actions lead to higher beliefs: $x\mapsto\theta(\delta_{x})$
is an increasing function.
\item Higher beliefs lead to higher actions: Formally, the mapping $\theta\mapsto F(\delta_{\theta})$
is uhc and satisfies $\max F(\delta_{\theta})\leq\min F(\delta_{\theta'})$
for all $\theta'>\theta$.
\end{itemize}
The first assumption requires a one-dimensional space of models, but
it is less restrictive than one might imagine. In applications, $\theta$
usually represents the mean of a continuous random variable. But,
more generally, the assumption allows for non-parametric models whenever
the random variables takes a finite number of values. For example,
in \citet{esponda2008behavioral}'s case of a buyer who does not know
the distribution over a finite number of product values, $v_{1},...,v_{K}$,
a model can be represented by a vector $\phi=(\phi_{1},...,\phi_{K})$,
where $\phi_{j}$ denotes the probability that the value is $v_{j}$.
If the buyer cares about the expected value of the object, we can
work with a one-dimensional space by defining the transformation $\theta=\sum_{j=1}^{K}\phi_{j}v_{j}$.

The second assumption is an identification condition that says that,
no matter the data, there is a unique model that best fits the data.
This is a natural assumption in the class of environments we study.
Whenever it fails in applications, it is natural to refine beliefs
in a manner that the condition holds. In the buyer example, if the
buyer decides to offer a price of zero, then no trade happens. Then
the buyer does not get any feedback about the value of the product
and her beliefs are unrestricted. In this case, it is common to assume
that the buyer's belief at a price of zero is the limit, as price
goes to zero, of her belief at a positive price, where trade does
happen with positive probability.

The third assumption says that the agent's degenerate belief increases
with the action. For example, the higher the price offered by the
buyer, the higher the quality of products that she trades and, therefore,
her belief about the value of the product.

The last assumption says that the optimal action increases with the
agent's (degenerate) belief. For example, the higher the agent's belief
about the value of the object, then the higher the optimal price.
The continuity requirement simply says that the agent must be indifferent
at beliefs where her behavior changes. The left panel of Figure \ref{fig:Example_positivelyreinf}
depicts an example of this property.

By definition, $\theta^{*}$ is an equilibrium model if and only if
$\theta^{*}\in\theta(\Delta F(\delta_{\theta^{*}}))$. As we show
in the proof of the next proposition, the mapping $\theta\mapsto\theta(\Delta F(\delta_{\theta}))$
has the staircase property. An example is depicted in the right panel
of Figure \ref{fig:Example_positivelyreinf}. In the example, there
are three equilibrium models, two of which are attracting ($\theta_{1}^{*}$
and $\theta_{3}^{*}$) and one of which is repelling ($\theta_{2}^{*}$).
The attracting models are associated with a corresponding pure-action
equilibrium ($x_{2}$ and $x_{4}$, respectively) while the repelling
model is associated with a mixed-action equilibrium (a combination
of $x_{3}$ and $x_{4}$). Our previous results imply that the agent's
action converges to a pure-action equilibrium. This is a general feature
of this class of environments, as long as pure-action equilibria are
strict, a property that is generically true.

\begin{figure}
\begin{centering}
\hspace{-2cm}\includegraphics[scale=0.4]{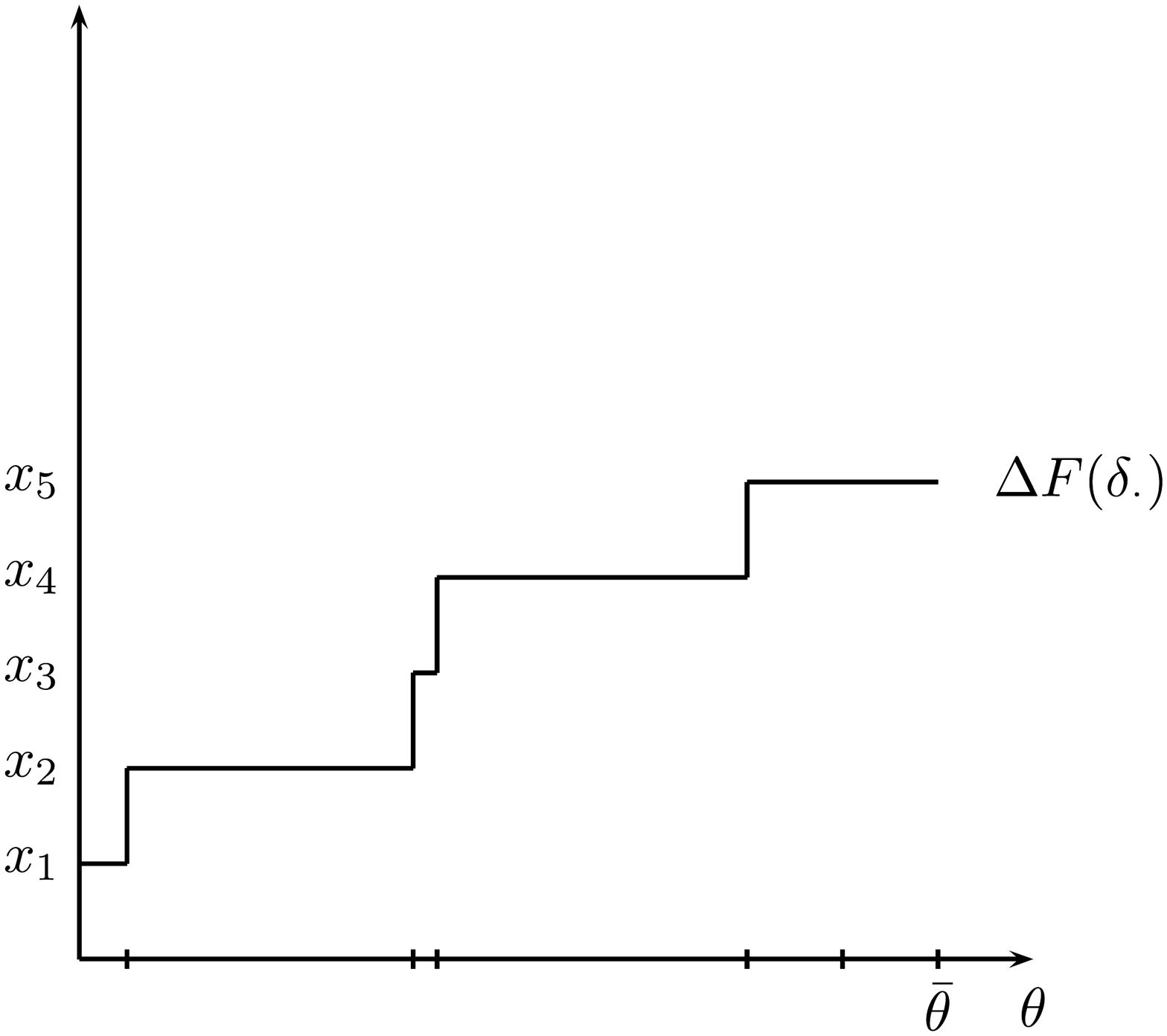}~~~~~\includegraphics[scale=0.4]{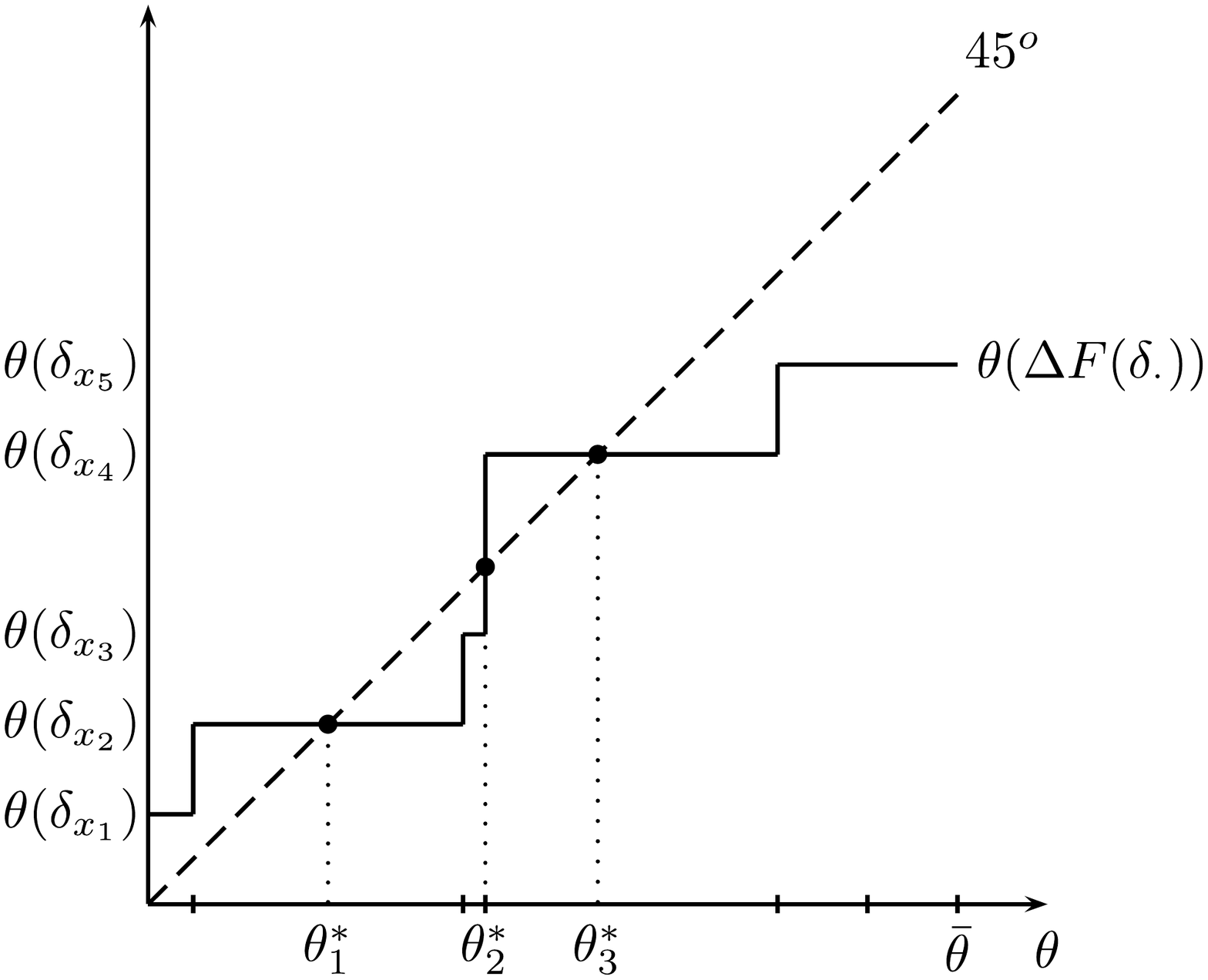}
\par\end{centering}
\caption{\label{fig:Example_positivelyreinf} Example of equilibrium with positively-reinforcing
actions: The left panel shows the correspondence of optimal actions
and the right panel illustrates equilibrium.}
\end{figure}
\medskip{}

\begin{prop}
\label{prop:app_monotone}If all pure-action equilibria are strict,
then the action sequence $(x_{t})_{t}$ almost surely converges to
a pure-action equilibrium.
\end{prop}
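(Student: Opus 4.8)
The plan is to collapse the problem to the one-dimensional motion of the closest model $\theta(\sigma_t)$ and then exploit the monotone ``staircase'' structure forced by the positive-reinforcement assumptions. Since $\theta\mapsto F(\delta_\theta)$ is uhc, monotone in the strong set order, and valued in the finite set $X$, it partitions $[0,1]$ into finitely many intervals on which $F(\delta_\cdot)$ is a constant singleton together with finitely many thresholds, so the correspondence $g(\theta):=\theta(\Delta F(\delta_\theta))$ has finitely many fixed points and $\Theta^\ast$ is finite. Proposition~\ref{proponedimensional} then applies and yields a full-probability event on which $\theta(\sigma_t)\to\theta^\infty$ for some $\theta^\infty\in\Theta^\ast$; on this event, Theorem~\ref{Theo:Berk} and Assumption~\ref{assumptionidentifiability}(i) give $\mu_t\to\delta_{\theta^\infty}$. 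It remains to identify $\theta^\infty$ and to pass from convergence of the model to convergence of the action.

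The second step classifies $\Theta^\ast$. Using monotonicity of $x\mapsto\theta(\delta_x)$ and Proposition~\ref{propmonotone}, $g$ is increasing, equal to the constant $\theta(\delta_{x_i})$ on each interval where $F(\delta_\cdot)=\{x_i\}$ and jumping up at each threshold. If $\theta^\ast\in\Theta^\ast$ is supported by a pure-action equilibrium $\delta_{x_i}$, i.e.\ $\theta^\ast=\theta(\delta_{x_i})$, then by hypothesis $\delta_{x_i}$ is strict, so $F(\delta_\theta)=\{x_i\}$ on a neighborhood of $\theta^\ast$ and Definition~\ref{def:attractingmodel} holds trivially: $\theta^\ast$ is an attracting model. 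Otherwise $\theta^\ast$ must sit at a threshold $\bar\theta$; on its two sides $g$ is the constant $\theta(\delta_{x_i})$ and $\theta(\delta_{x_j})$ with $\theta(\delta_{x_i})<\bar\theta<\theta(\delta_{x_j})$, where strictness of these inequalities and of $\bar\theta\neq\theta(\delta_x)$ for all $x\in F(\delta_{\bar\theta})$ follows because equality would exhibit a pure-action equilibrium at $\bar\theta$, which by hypothesis would be strict, contradicting the jump. Choosing $\varepsilon$ smaller than these gaps and the adjacent interval lengths shows $\bar\theta$ is a repelling model (Definition~\ref{def:repellingmodel}). Hence every element of $\Theta^\ast$ is either an attracting model supported by a strict pure-action equilibrium, or a repelling model.

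The third step rules out the second alternative for $\theta^\infty$ and concludes. Suppose $\theta^\infty=\bar\theta$ is a repelling model. By Proposition~\ref{propequivalence2} every mixed equilibrium $\sigma^\ast$ with $\theta(\sigma^\ast)=\bar\theta$ is a repelling equilibrium; the set $\mathcal{F}$ of such equilibria is non-empty and compact. Since $\mu_t\to\delta_{\bar\theta}$, upper hemicontinuity of $F$ forces $x_t\in F(\delta_{\bar\theta})$ from some time on, so $d(\sigma_t,\Delta F(\delta_{\bar\theta}))\to0$, and combined with $\theta(\sigma_t)\to\bar\theta$ and continuity of $\theta$ this gives $d(\sigma_t,\mathcal{F})\to0$. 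Covering a neighborhood of $\mathcal{F}$ by the bounded-time escape neighborhoods from Definition~\ref{def:repelling_eqm} and using Proposition~\ref{Prop:Repelling} together with Theorem~\ref{Theo:APT} (which tracks $\mathbf{w}$ along the differential inclusion over bounded horizons) contradicts $d(\sigma_t,\mathcal{F})\to0$. Therefore, on the full-probability event, $\theta^\infty$ is an attracting model supported by a strict pure-action equilibrium $\delta_{x_i}$ with $F(\delta_\theta)=\{x_i\}$ near $\theta^\infty=\theta(\delta_{x_i})$; since $\mu_t\to\delta_{\theta^\infty}$, upper hemicontinuity of $F$ gives $F(\mu_t)=\{x_i\}$ for all large $t$, hence $x_t=f(\mu_t)=x_i$ eventually, so $(x_t)_t$ converges to the pure-action equilibrium $x_i$.

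I expect the main obstacle to be precisely this third step: Proposition~\ref{proponedimensional} only delivers convergence of $\theta(\sigma_t)$, whereas Proposition~\ref{Prop:Repelling} only forbids convergence of the full frequency $\sigma_t$ to a \emph{single} repelling equilibrium, and when $F(\delta_{\bar\theta})$ is ``thick'' at a repelling threshold the frequency $\sigma_t$ need not converge even though $\theta(\sigma_t)$ does. One must therefore rule out $\sigma_t$ drifting indefinitely inside the set $\mathcal{F}$ of repelling equilibria that share the model $\bar\theta$; the natural way to do so is to observe that on both sides of $\bar\theta$ the one-dimensional $\theta$-dynamics point strictly away from $\bar\theta$, so the interpolated curve $t\mapsto\theta(\mathbf{w}(t))$ is itself repelled from $\bar\theta$, but making this rigorous inside the differential-inclusion framework (and dealing with the borderline possibility $\theta(\sigma_t)=\bar\theta$) is the delicate part. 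A secondary, more mechanical difficulty is the case analysis in the second step, in particular extracting the uniform gap $\varepsilon$ of Definition~\ref{def:repellingmodel} and handling thresholds at which $F$ skips over intermediate actions.
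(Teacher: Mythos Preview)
Your overall strategy coincides with the paper's: establish the staircase shape of $\theta\mapsto\theta(\Delta F(\delta_\theta))$, conclude that $\Theta^\ast$ is finite, apply Proposition~\ref{proponedimensional}, classify each equilibrium model as attracting (supported by a strict pure equilibrium) or repelling (interior of a vertical segment), rule out the repelling ones via Propositions~\ref{propequivalence2} and~\ref{Prop:Repelling}, and then pass to action convergence. The paper organises the classification as four cases (endpoints of a vertical segment are excluded directly by the strictness hypothesis; interiors of vertical segments are the repelling case; interiors of horizontal segments and boundary plateaus are the attracting pure cases), but the content is the same as your second step.

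Your self-identified obstacle in the third step is a genuine worry for the \emph{covering} argument you sketch, but it dissolves once you look inside the proof of Proposition~\ref{propequivalence2}. There the escape neighborhood is taken to be the single set
\[
\mathcal{U}=\{\sigma\in\Delta X:\ |\theta(\sigma)-\theta^\ast|<\varepsilon/2\},
\]
and the claim proved there shows that from \emph{every} point of $\mathcal{U}$ (after the one-step perturbation that handles $\theta(\sigma)=\theta^\ast$), every solution to the differential inclusion exits $\mathcal{U}$ within a uniform time $T$. Thus one does not need to cover $\mathcal{F}$ by many small neighborhoods; there is a single $\mathcal{U}$ containing all equilibria with model $\theta^\ast$. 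If $\theta(\sigma_t)\to\theta^\ast$, then $\sigma_t\in\mathcal{U}$ eventually, and the argument in the proof of Proposition~\ref{Prop:Repelling} (applied verbatim with this $\mathcal{U}$ and Theorem~\ref{Theo:APT}) forces $\mathbf{w}$ to leave $\mathcal{U}$ infinitely often, a contradiction. The paper is equally terse at this point, simply citing Propositions~\ref{propequivalence2} and~\ref{Prop:Repelling}; your version, once you replace the covering by the single $\mathcal{U}$, is a cleaner spelling-out of the same step. Your final passage from $\mu_t\to\delta_{\theta^\infty}$ to $x_t=x_i$ eventually via upper hemicontinuity of $F$ is also the right route and is in fact more transparent than the paper's concluding paragraph.
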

\begin{proof}
See Appendix \ref{pf:prop:app_monotone}.
\end{proof}

\section{\label{sec:berk-nash}Relationship to Berk-Nash equilibrium}

In this section, we relate the notion of equilibrium that arises naturally
from the differential inclusion approach to EP2016's definition of
Berk-Nash equilibrium. To facilitate the comparison, we assume that
the agent chooses optimal actions, and denote the correspondence of
optimal actions by $F_{\beta}$, where $\beta\in[0,1)$ denotes the
agent's discount factor ($\beta=0$ is the case of a myopic agent).

The definition of equilibrium that emerges from the differential inclusion
approach is that of a probability distribution over actions satisfying
$\sigma\in\Delta F_{\beta}(\Delta\Theta(\sigma))=\Delta\cup_{\mu\in\Delta\Theta(\sigma)}F_{\beta}(\mu)$
(see Definition \ref{def:equilibrium}). Equivalently, $\sigma$ is
an equilibrium if and only if for every action $x$ in the support
of $\sigma$ there exists a belief $\mu_{x}$ such that $x\in F_{\beta}(\mu_{x})$.
In contrast, EP2016 define a Berk-Nash equilibrium to be a probability
distribution over actions satisfying $\sigma\in\cup_{\mu\in\Delta\Theta(\sigma)}\Delta F_{0}(\mu)$.\footnote{This is the definition of Berk-Nash equilibrium for the single agent
case; EP2016 also consider the case of multiple agents.} Note that $\sigma$ is a Berk-Nash equilibrium if and only if there
exists a belief $\mu$ such that, for every $x$ in the support of
$\sigma$, $x\in F_{0}(\mu)$.

There are two differences between the definition of equilibrium in
this paper and a Berk-Nash equilibrium: A Berk-Nash equilibrium (1)
restricts actions to be supported by the same belief; and (2) requires
actions to be myopically optimal. These two properties are common
in most other standard equilibrium concepts, such as Nash equilibrium.
Following \citet{fudenberg1993self}, the first property is known
as the unitary-belief property, and puts restrictions on the set of
mixed actions that can constitute an equilibrium.\footnote{\citet{fudenberg1993self} showed that non-unitary equilibria make
sense in a game where there are multiple players and, for each player,
there is an underlying population of agents in the role of that player,
and different agents may have different experiences (hence, beliefs)
about other players.} The second property is convenient because myopic optimality is easier
to characterize than general optimality. We will show that both of
these properties are natural provided the following condition holds.

\medskip{}

\begin{defn}
The family of models is \textbf{weakly identified given} $\sigma\in\Delta X$
if $\theta,\theta'\in\Theta(\sigma)$ implies that $Q_{\theta}(\cdot\mid x)=Q_{\theta'}(\cdot\mid x)$
for all $x$ such that $\sigma(x)>0$.\medskip{}

The definition of weak identification was introduced by EP2016. It
says that the belief is uniquely determined along the equilibrium
path, but leaves open the possibility of multiple beliefs for actions
that are not in the support of $\sigma$. Weak identification is immediately
satisfied if the agent's family of models is correctly specified,
but it is also satisfied in many of the applications of misspecified
learning in the literature; see EP2016 for further discussion.\medskip{}
\end{defn}
\begin{prop}
\label{Prop:BerkNash}Suppose that the family of models is weakly
identified given $\sigma$. Then $\Delta\cup_{\mu\in\Delta\Theta(\sigma)}F_{\beta}(\mu)\subseteq\Delta\cup_{\mu\in\Delta\Theta(\sigma)}F_{0}(\mu)=\cup_{\mu\in\Delta\Theta(\sigma)}\Delta F_{0}(\mu)$.
\end{prop}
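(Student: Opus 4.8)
The plan is to reduce the claim to two sub-statements and drive both from a single consequence of weak identification. Write $v_\mu(x)\equiv\int_Y\pi(x,y)\,\bar Q_\mu(dy\mid x)$ for the myopic value of $x$ at $\mu$, so that $F_0(\mu)=\arg\max_{x\in X}v_\mu(x)$ and $\mu\mapsto v_\mu(x)$ is affine. The ``$\supseteq$'' half of the rightmost equality is immediate, so what needs work is (a) $\Delta\cup_\mu F_\beta(\mu)\subseteq\Delta\cup_\mu F_0(\mu)$ and (b) $\Delta\cup_\mu F_0(\mu)\subseteq\cup_\mu\Delta F_0(\mu)$, and I would establish both for mixed actions supported on $\mathrm{supp}(\sigma)$ --- the case in which the statement gets used, with the left set tested at $\sigma$ itself. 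The one fact I would extract from weak identification: for every $x$ with $\sigma(x)>0$ and all $\mu,\mu'\in\Delta\Theta(\sigma)$ one has $\bar Q_\mu(\cdot\mid x)=\bar Q_{\mu'}(\cdot\mid x)$; hence (i) $v_\mu(x)$ equals a constant $\bar v(x)$ over $\Delta\Theta(\sigma)$, and (ii) since the likelihood $\theta\mapsto q_\theta(y\mid x)$ is then $\mu$-a.e.\ constant, $B(x,y,\mu)=\mu$ for $\bar Q_\mu(\cdot\mid x)$-a.e.\ $y$. In words: an on-path action is payoff-deterministic and belief-preserving at any belief over closest models.

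\emph{Step 1 (the inclusion (a)).} I would show: if $x^*\in\mathrm{supp}(\sigma)$ and $x^*\in F_\beta(\mu^*)$ with $\mu^*\in\Delta\Theta(\sigma)$, then $x^*\in F_0(\mu^*)$. Let $W$ solve (\ref{eq:BellmanWopt-1-1}). Since $x^*$ attains the maximum at $\mu^*$ and preserves the belief, (\ref{eq:BellmanWopt-1-1}) collapses to $W(\mu^*)=v_{\mu^*}(x^*)+\beta W(\mu^*)$, i.e.\ $(1-\beta)W(\mu^*)=v_{\mu^*}(x^*)$. For any $x\in X$, Bayes plausibility gives $\int_Y B(x,y,\mu^*)\,\bar Q_{\mu^*}(dy\mid x)=\mu^*$, and $W$ is convex on $\Delta\Theta$ (it is the optimal value, and each fixed policy's expected discounted payoff is affine in the prior), so $\int_Y W(B(x,y,\mu^*))\,\bar Q_{\mu^*}(dy\mid x)\ge W(\mu^*)$; feeding this into the Bellman inequality $W(\mu^*)\ge v_{\mu^*}(x)+\beta\int_Y W(B(x,y,\mu^*))\,\bar Q_{\mu^*}(dy\mid x)$ gives $(1-\beta)W(\mu^*)\ge v_{\mu^*}(x)$, hence $v_{\mu^*}(x^*)\ge v_{\mu^*}(x)$ and $x^*\in F_0(\mu^*)$. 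Consequently every action in the support of an $\hat\sigma\in\Delta\cup_\mu F_\beta(\mu)$ with $\mathrm{supp}(\hat\sigma)\subseteq\mathrm{supp}(\sigma)$ lies in $\cup_\mu F_0(\mu)$, which is (a).

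\emph{Step 2 (the equality (b)).} Take $\hat\sigma\in\Delta\cup_\mu F_0(\mu)$ with $\mathrm{supp}(\hat\sigma)\subseteq\mathrm{supp}(\sigma)$ and, for each $x\in\mathrm{supp}(\hat\sigma)$, pick $\mu_x\in\Delta\Theta(\sigma)$ with $x\in F_0(\mu_x)$, i.e.\ $\bar v(x)=v_{\mu_x}(x)\ge v_{\mu_x}(x')$ for all $x'$. Evaluating at $x'\in\mathrm{supp}(\hat\sigma)$, where $v_{\mu_x}(x')=\bar v(x')$, gives $\bar v(x)\ge\bar v(x')$ for all $x,x'\in\mathrm{supp}(\hat\sigma)$, so $\bar v$ is a common constant $\bar v^*$ there. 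Set $\mu'=\frac{1}{|\mathrm{supp}(\hat\sigma)|}\sum_{x\in\mathrm{supp}(\hat\sigma)}\mu_x\in\Delta\Theta(\sigma)$; by affineness of $v$,
\[
v_{\mu'}(x')=\frac{1}{|\mathrm{supp}(\hat\sigma)|}\sum_{x\in\mathrm{supp}(\hat\sigma)}v_{\mu_x}(x')\le\frac{1}{|\mathrm{supp}(\hat\sigma)|}\sum_{x\in\mathrm{supp}(\hat\sigma)}v_{\mu_x}(x)=\bar v^*
\]
for every $x'\in X$, with equality whenever $x'\in\mathrm{supp}(\hat\sigma)$. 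Hence $\mathrm{supp}(\hat\sigma)\subseteq F_0(\mu')$, i.e.\ $\hat\sigma\in\Delta F_0(\mu')\subseteq\cup_\mu\Delta F_0(\mu)$.

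\emph{Main obstacle.} The delicate part is the reduction inside Step 1: establishing that an on-path action is belief-preserving --- the precise place weak identification enters, and where I would need to be careful that densities are pinned down only up to $\nu$-null sets --- and that $W$ is convex, so that deviating and updating cannot push the continuation value below $W(\mu^*)$. The remaining ingredients (Bayes plausibility, the averaging in Step 2) are routine. I would also flag that Step 1 genuinely uses $x^*\in\mathrm{supp}(\sigma)$: an off-path action can be $\beta$-optimal purely for its informational value without being myopically optimal at any belief over closest models, which is why the proposition is invoked with the left set evaluated at $\sigma$.
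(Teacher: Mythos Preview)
Your argument is correct and follows essentially the same route as the paper: Step 1 is identical (belief preservation under weak identification, convexity of $W$, and Bayes plausibility), and your explicit restriction to $\hat\sigma$ supported on $\mathrm{supp}(\sigma)$ is exactly what the paper accomplishes implicitly by overloading the symbol $\sigma$. The only deviation is in Step 2, where you average the $\mu_x$'s to build a common belief; this works, but it is unnecessary---once you have shown $\bar v$ is constant on the support, any single $\mu_x$ already makes every support action myopically optimal (since $v_{\mu_x}(x')=\bar v(x')=\bar v(x)=\max_{x''}v_{\mu_x}(x'')$), which is how the paper closes the argument.
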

\begin{proof}
See the Appendix \ref{pf:Prop:BerkNash}.
\end{proof}
\medskip{}

Proposition \ref{Prop:BerkNash} implies that, if the agent is myopic,
then equilibrium and Berk-Nash equilibrium are equivalent concepts
under weak identification. Moreover, if the agent is not myopic, then
the set of equilibria are contained in the set of Berk-Nash equilibria.\footnote{Of course, if the agent is not myopic, then there might be Berk-Nash
equilibria that are not equilibria. This is similar to the idea in
the bandit literature that more patient agents might be willing to
experiment with actions that myopic agents would not.}

We conclude by relating Proposition \ref{prop:converge->eqm} in Section
\ref{sec:Convergence-to-equilibrium} to one of the main results in
EP2016: They show that, if the sequence of distributions over actions
converges, then it converges to a Berk-Nash equilibrium. In our environment
there is no motive for mixing, so convergence of the sequence of distributions
over actions implies that the actions converge. Propositions \ref{prop:converge->eqm}
and \ref{Prop:BerkNash} strengthen EP2016's conclusion by showing
that, under weak identification, even though actions may not converge,
if the action \emph{frequency} converges, then it converges to a Berk-Nash
equilibrium. Of course, the main contribution of this paper is
to go beyond the characterization of equilibrium and to provide tools
to tackle the question of convergence and stability.\newpage{}

\addcontentsline{toc}{section}{References}

\bibliographystyle{aer}
\bibliography{bibtex}

\renewcommand{\thesection}{\Alph{section}}
\setcounter{section}{0}

\section{Appendix}

In this appendix, we present the proofs omitted from the text. In
some places, we use the fact that $\theta\mapsto\log\frac{q(Y|x)}{q_{\theta}(Y|x)}$
is finite and continuous $Q(\cdot|x)-a.s.$ for all $x\in X$. This
fact follows from Assumptions \ref{ass:1}-\ref{ass:2}.

\subsection{\label{pf:Lemma:Theta(sigma)}Proof of Lemma \ref{Lemma:Theta(sigma)}}

Continuity of $K$: For any $(\theta,\sigma)\in\Theta\times\Delta X$
take a sequence $(\theta_{n},\sigma_{n})_{n}$ in $\Theta\times\Delta X$
that converges to this point. By the triangle inequality and the fact
that $K$ is finite under Assumption \ref{ass:2}(iii) it follows
that $\left|K(\theta_{n},\sigma_{n})-K(\theta,\sigma)\right|\leq\left|K(\theta_{n},\sigma)-K(\theta,\sigma)\right|+\left|K(\theta_{n},\sigma_{n})-K(\theta_{n},\sigma)\right|$.

It suffices to show that both terms on the RHS vanish as $n\rightarrow\infty$.
Regarding the first term in the RHS, observe that for any $\sigma\in\Delta X$,
$\theta\mapsto\log\frac{q(Y|X)}{q_{\theta}(Y|X)}$ is finite and continuous
$Q\cdot\sigma-a.s.$ Under Assumption \ref{ass:2}(iii), by the DCT
this implies that $\theta\mapsto K(\theta,\sigma)$ is continuous
for any $\sigma\in\Delta X$. Thus $\lim_{n\rightarrow\infty}\left|K(\theta_{n},\sigma)-K(\theta,\sigma)\right|=0$.
Regarding the other term in the RHS of the display, observe that under
Assumption \ref{ass:2}(iii)
\[
\left|K(\theta_{n},\sigma_{n})-K(\theta_{n},\sigma)\right|\leq\sum_{x\in X}\int g_{x}(y)Q(dy\mid x)|\sigma_{n}(x)-\sigma(x)|
\]
and the RHS vanishes as $\int g_{x}(y)Q(dy\mid x)<\infty$ for all
$x\in X$.

Finally, continuity of $K$, compactness of $\Theta$ (by Assumption
\ref{ass:2}(ii)) and the Theorem of the Maximum imply that $\sigma\mapsto\Theta(\sigma)$
is compact-valued, uhc, and that $\sigma\mapsto K^{\ast}(\sigma)$
is continuous.

\subsection{\label{pf:Lemma:uniformconvergence}Proof of Lemma \ref{Lemma:uniformconvergence}}

Let $(\theta,z)\mapsto g(\theta,z)\equiv\log\frac{q(y\mid x)}{q_{\theta}(y\mid x)}$,
where $z=(y,x)\in Y\times X$. For any $\theta\in\Theta$ and any
$\epsilon>0$, let $O(\theta,\epsilon)\equiv\left\{ \theta'\colon||\theta'-\theta||<\epsilon\right\} $.

$\textsc{Step 1.}$ Pointwise convergence. Fix any $\epsilon>0$ and
any $\theta\in\Theta$. For any $\tau\geq0$ and history $h$, let
\[
\zeta_{\tau}(h)\equiv\sup_{\theta'\in O(\theta,\epsilon)}g(\theta',z_{\tau}(h))-E_{Q(\cdot|x_{\tau}(h))}\left[\sup_{\theta'\in O(\theta,\epsilon)}g(\theta',Y,x_{\tau}(h))\right].
\]

The process $(\zeta_{t})_{t}$ is a Martingale difference under $P^{f}$
and the filtration generated by $\left\{ h^{t}\equiv(x_{0}(h),y_{0}(h),x_{1}(h),y_{1}(h),...,x_{t}(h))\colon t\geq0\right\} $,
because $E_{P^{f}(\cdot|h^{t})}\left[\zeta_{t}(h)\right]=0$ for all
$t$. Define $h\mapsto\zeta^{t}(h)\equiv\sum_{\tau=0}^{t}\left(1+\tau\right)^{-1}\zeta_{\tau}(h)$
for any $t\geq0$. Since $(\zeta_{t})_{t}$ is a Martingale difference
sequence, then $(\zeta^{t})_{t}$ is also a Martingale difference.

By the Martingale Convergence Theorem, there exist a $\mathcal{H}\subseteq\mathbb{H}$
(potentially depending on $\theta\in\Theta$) and $\zeta\in L^{2}(\mathbb{H},\mathbb{R},P^{f})$
such that $P^{f}(\mathcal{H})=1$ and, for any $h\in\mathcal{H}$,
$\zeta^{t}(h)\rightarrow\zeta(h)$, provided $\sup_{t}E_{P^{f}}\left[\left(\zeta^{t}\right)^{2}\right]<\infty$.
This condition is satisfied because
\begin{align*}
E_{P^{f}}\left[\left(\zeta^{t}\right)^{2}\right] & =E_{P^{f}}\left[\sum_{\tau=0}^{t}\left(1+\tau\right)^{-2}\left(\zeta_{\tau}\right)^{2}\right]+2E_{P^{f}}\left[\sum_{\tau>\tau'}\left(1+\tau\right)^{-1}\left(1+\tau'\right)^{-1}\zeta_{\tau}\zeta_{\tau'}\right]\\
 & =\sum_{\tau=0}^{t}\left(1+\tau\right)^{-2}E_{P^{f}}\left[\left(\zeta_{\tau}\right)^{2}\right]\\
 & \leq\sum_{\tau=0}^{t}\left(1+\tau\right)^{-2}E_{P^{f}}\left[\int\left(\sup_{\theta'\in O(\theta,\epsilon)}g(\theta',y,X_{\tau})\right)^{2}Q\left(dy\mid X_{\tau}\right)\right]\\
 & \le C\max_{x\in X}\int\sup_{\theta'\in O(\theta,\epsilon)}\left(g(\theta',y,x)\right)^{2}Q\left(dy\mid x\right),
\end{align*}
where the second line follows from the fact that, for any $\tau>\tau'$,
$E_{P^{f}}\left[\zeta_{\tau}\zeta_{\tau'}\right]=E_{P^{f}}\left[E_{P^{f}(\cdot|h^{\tau})}\left[\zeta_{\tau}\right]\zeta_{\tau'}\right]=0$,
and where the last line follows from the fact that $C\equiv\lim_{t\rightarrow\infty}\sum_{\tau=0}^{t}\left(1+\tau\right)^{-2}<\infty$.
By Assumption \ref{ass:2}(iii), for any $(x,y)\in X\times Y$, $\sup_{\theta'\in O(\theta,\epsilon)}\left(g(\theta',y,x)\right)^{2}\leq\left(g_{x}(y)\right)^{2}$
with $\int\left(g_{x}(y)\right)^{2}Q(dy\mid x)<\infty$. Thus, $\sup_{t}E_{P^{f}}\left[\left(\zeta^{t}\right)^{2}\right]<\infty$.
By invoking Kronecker Lemma it follows that $\lim_{t\rightarrow\infty}\left(1+t\right)^{-1}\sum_{\tau=0}^{t}\zeta^{t}=0$
$P^{f}$-a.s. Therefore, we have established that, for all $\theta\in\Theta$,
\[
\lim_{t\rightarrow\infty}\left(1+t\right)^{-1}\sum_{\tau=0}^{t}\left(\sup_{\theta'\in O(\theta,\epsilon)}g(\theta',z_{\tau})-E_{Q(\cdot|x_{\tau})}\left[\sup_{\theta'\in O(\theta,\epsilon)}g(\theta',Y,x_{\tau})\right]\right)=0\,\,\,\,\,\text{\ensuremath{P^{f}}-a.s.}
\]

$\textsc{Step 2}$. Uniform convergence. Observe that, for any $\epsilon>0$
and any $\theta\in\Theta$, there exists $\delta(\theta,\epsilon)$
such that
\begin{equation}
E_{Q(\cdot|x)}\left[\sup_{\theta'\in O(\theta,\delta(\theta,\epsilon))}g(\theta',Y,x)-g(\theta,Y,x)\right]<0.25\epsilon\label{eq:unifconv-1}
\end{equation}
for all $x\in X$. To see this claim, note that, since $\theta\mapsto g(\theta,Y,x)$
is continuous $Q(\cdot|x)-a.s.$ for all $x\in X$, $\lim_{\delta\rightarrow0}\sup_{\theta'\in O(\theta,\delta)}\left|g(\theta',Y,x)-g(\theta,Y,x)\right|=0$
a.s.$-Q(\cdot\mid x)$ for all $x\in X$. Also, by Assumption \ref{ass:2}(iii),
$\sup_{\theta'\in O(\theta,\delta)}\left|g(\theta',y,x)-g(\theta,y,x)\right|\leq2g_{x}(y)$
and $\int g_{x}(y)Q(dy|x)<\infty$, Thus, by the DCT, $\lim_{\delta\rightarrow0}E_{Q(\cdot|x)}\left[\sup_{\theta'\in O(\theta,\delta)}\left|g(\theta',Y,x)-g(\theta,Y,x)\right|\right]=0$
for all $x\in X$.

Observe that $(O(\theta,\delta(\theta,\epsilon)))_{\theta\in\Theta}$
is an open cover of $\Theta$. By compactness of $\Theta$, there
exists a finite sub-cover $(O(\theta_{j},\delta(\theta_{j},\epsilon)))_{j=1,...J(\epsilon)}$.
Thus, for all $\epsilon>0$, 

\begin{align*}
 & \sup_{\theta\in\Theta}\left|\left(1+t\right)^{-1}\sum_{\tau=0}^{t}\left(g(\theta,z_{\tau})-E_{Q(\cdot|x_{\tau})}\left[g(\theta,Y,x_{\tau})\right]\right)\right|\\
\leq & \max_{j}\sup_{\theta\in O(\theta_{j},\delta(\theta_{j},\epsilon))}\left|\left(1+t\right)^{-1}\sum_{\tau=0}^{t}\left(g(\theta,z_{\tau})-E_{Q(\cdot|x_{\tau})}\left[g(\theta,Y,x_{\tau})\right]\right)\right|\\
\leq & \max_{j}\left(1+t\right)^{-1}\sum_{\tau=0}^{t}\left(\sup_{\theta\in O(\theta_{j},\delta(\theta_{j},\epsilon))}\left|g(\theta,z_{\tau})-E_{Q(\cdot|x_{\tau})}\left[g(\theta,Y,x_{\tau})\right]\right|\right)\\
\leq & \max_{j}\left(1+t\right)^{-1}\sum_{\tau=0}^{t}\left(\left|\sup_{\theta\in O(\theta_{j},\delta(\theta_{j},\epsilon))}g(\theta,z_{\tau})-E_{Q(\cdot|x_{\tau})}\left[\inf_{\theta\in O(\theta_{j},\delta(\theta_{j},\epsilon))}g(\theta,Y,x_{\tau})\right]\right|\right)\\
\leq & \max_{j}\left(1+t\right)^{-1}\sum_{\tau=0}^{t}\left(\left|\sup_{\theta\in O(\theta_{j},\delta(\theta_{j},\epsilon))}g(\theta,z_{\tau})-E_{Q(\cdot|x_{\tau})}\left[\sup_{\theta\in O(\theta_{j},\delta(\theta_{j},\epsilon))}g(\theta,Y,x_{\tau})\right]\right|\right)\\
 & +\max_{j}\left(1+t\right)^{-1}\sum_{\tau=0}^{t}\left(E_{Q(\cdot|x_{\tau})}\left[\sup_{\theta\in O(\theta_{j},\delta(\theta_{j},\epsilon))}g(\theta,Y,x_{\tau})-\inf_{\theta\in O(\theta_{j},\delta(\theta_{j},\epsilon))}g(\theta,Y,x_{\tau})\right]\right)\\
= & I+II.
\end{align*}

By Step 1 and the fact that we are adding over a finite number of
$\theta_{j}$'s, the limit as $t\rightarrow\infty$ of the term $I$
is equal to zero $P^{f}$-a.s. For the second term, note that (\ref{eq:unifconv-1})
implies that 
\[
II\leq2\max_{x\in X}\int\sup_{\theta\in O(\theta_{j},\delta(\theta_{j},\epsilon))}\left|g(\theta,y,x)-g(\theta_{j},y,x)\right|Q(dy\mid x)\leq0.5\epsilon.
\]
Since $0\leq II\leq0.5\epsilon$ holds for all $\epsilon>0$, it follows
that $II=0$. Therefore, using the definition of $g$, we have established
that 
\[
\lim_{t\rightarrow\infty}\sup_{\theta\in\Theta}\left(1+t\right)^{-1}\sum_{\tau=0}^{t}\left(\log\frac{q(y_{\tau}\mid x_{\tau})}{q_{\theta}(y_{\tau}\mid x_{\tau})}-E_{Q(\cdot|x_{\tau})}\left[\log\frac{q(Y\mid x_{\tau})}{q_{\theta}(Y\mid x_{\tau})}\right]\right)=0
\]
$P^{f}$-a.s. The statement in the lemma then follows by noting that
\begin{align*}
K(\theta,\sigma_{t}) & =\sum_{x\in X}E_{Q(\cdot|x)}\left[\log\frac{q(Y\mid x)}{q_{\theta}(Y\mid x)}\right]\sigma_{t}(x)=\left(1+t\right)^{-1}\sum_{\tau=0}^{t}E_{Q(\cdot|x_{\tau})}\left[\log\frac{q(Y\mid x_{\tau})}{q_{\theta}(Y\mid x_{\tau})}\right].
\end{align*}

\subsection{\label{pf:eq:kappa_eps}Proof of equation (\ref{eq:kappa_eps}) in
Theorem \ref{Theo:Berk}}

For simplicity, set $k\equiv\varepsilon/2>0$. Continuity of $(\theta,\sigma)\mapsto\bar{K}(\theta,\sigma)\equiv K(\theta,\sigma)-K^{\ast}(\sigma)$
(see Lemma \ref{Lemma:Theta(sigma)}(i)) and compactness of $\Theta\times\Delta X$
imply that $\bar{K}$ is uniformly continuous.  For any $\sigma$,
take some $\theta_{\sigma}\in\Theta(\sigma)$ (this is possible because
$\Theta(\sigma)$ is nonempty; see Lemma \ref{Lemma:Theta(sigma)}(ii)).
By uniform continuity of $\bar{K}$, there exists $\delta_{k}>0$
such that $\left\Vert \theta_{\sigma}-\theta'\right\Vert <\delta_{k}$
and $\left\Vert \sigma-\sigma'\right\Vert <\delta_{k}$ imply $\bar{K}(\theta',\sigma')<\bar{K}(\theta_{\sigma},\sigma)+k=k$,
where the last equality follows because $\bar{K}(\theta_{\sigma},\sigma)=0$.
This implies that for all $\sigma$, $\{\theta':\left\Vert \theta_{\sigma}-\theta'\right\Vert <\delta_{k}\}\subseteq\{\theta:\bar{K}(\theta,\sigma')\leq k\}$
for all $\sigma'\in B(\sigma,\delta_{k})\equiv\{\sigma':\left\Vert \sigma-\sigma'\right\Vert <\delta_{k}\}$.
Thus, for all $\sigma$, $\mu_{0}(\{\theta:\bar{K}(\theta,\sigma')\leq k\})\geq\mu_{0}(\{\theta':\left\Vert \theta_{\sigma}-\theta'\right\Vert <\delta_{k}\})$
for all $\sigma'\in B(\sigma,\delta_{k})$. The balls $\{B(\sigma,\delta_{k})\}_{\sigma}$
form an open cover for $\Delta X$. Since $\Delta X$ is compact,
there exists a finite subcover $\{B(\sigma^{i},\delta_{k})\}_{i=1}^{n}$.
Let $r\equiv\min_{i\in\{1,...,n\}}\mu_{0}(\{\theta':\left\Vert \theta_{\sigma^{i}}-\theta'\right\Vert <\delta_{k}\})$
which is strictly positive by Assumption \ref{ass:3}. Take any $\sigma'$,
there exists $i$ such that $\sigma'\in B(\sigma^{i},\delta_{k})$;
by the previous argument $\mu_{0}(\{\theta:\bar{K}(\theta,\sigma')\leq k\})\geq\mu_{0}(\{\theta':\left\Vert \theta_{\sigma^{i}}-\theta'\right\Vert <\delta_{k}\})\geq r>0$.

\subsection{\label{pf:Theo:APT}Proof of Theorem \ref{Theo:APT}}

The proof of Theorem \ref{Theo:APT} consists of three parts. Part
1 defines an enlargement of the set of actions that allows us to adopt
the methods developed by BHS2005. Part 2 and 3 correspond to the arguments
in the proofs of Proposition 1.3 and Theorem 4.2 in BHS2005, respectively,
and we provide them here for completeness. Throughout the proof we
fix a history from the set of histories with probability 1 defined
by the statement of Theorem \ref{Theo:Berk}; we omit the history
from the notation. 

\emph{Part 1. Enlargement of the set} $\Delta F(\mu)$. Let $\mathbb{S}=\{a-b\mid a,b\in\Delta X\}$
and let $\Xi:\mathbb{R}_{+}\times\Delta X\rightrightarrows\mathbb{S}$
be defined such that, for all $(\delta,\sigma)\in\mathbb{R}_{+}\times\Delta X$,
\[
\Xi(\delta,\sigma)=\left\{ y\in\mathbb{S}:\begin{array}{c}
\exists\sigma'\in\Delta X,\mu'\in\Delta\Theta\,\,\,\,s.t.\,\,\,\,y\in\Delta F(\mu')-\sigma',\\
\mu'\in M(\delta,\sigma'),\left\Vert \sigma'-\sigma\right\Vert \leq\delta
\end{array}\right\} ,
\]
where $M:\mathbb{R}_{+}\times\Delta X\rightrightarrows\Delta\Theta$
is defined such that, for all $(\delta,\sigma')\in\mathbb{R}_{+}\times\Delta X$,
\[
M(\delta,\sigma')\equiv\{\mu'\in\Delta\Theta:\int_{\Theta}\bar{K}(\theta,\sigma')\mu'(d\theta)\leq\delta\},
\]
where $\bar{K}(\theta,\sigma')\equiv K(\theta,\sigma')-K^{*}(\sigma')$.
Note that $\Theta(0,\sigma)=\Theta(\sigma)$ and so $\Xi(0,\sigma)=\cup_{\mu\in\Delta\Theta(\sigma)}\Delta F(\mu)-\sigma$.
\medskip{}

Claim 1: $(\delta,\sigma)\mapsto\Xi(\delta,\sigma)$ is uhc.

Proof. Because $\mathbb{S}$ is compact, it suffices to show that
$\Xi$ has the closed graph property. For this purpose, we will first
show that $(\delta,\sigma')\mapsto M(\delta,\sigma')$ is uhc. To
establish this claim, note that $\Delta\Theta$ is compact because
of the assumption that $\Theta$ is compact. Hence, we will show that
$M$ has the closed graph property. Take $(\mu'_{n})_{n}$ converging
to $\mu'$ (in the weak topology), $(\delta_{n})_{n}$ converging
to $\delta$, and $(\sigma'_{n})_{n}$ converging to $\sigma'$. Suppose
that $\mu'_{n}\in M(\delta_{n},\sigma_{n}')$ for all $n$. We will
show that $\mu'\in M(\delta,\sigma')$. Since $(\mu'_{n})_{n}$ converges
(weakly) to $\mu'$ and $\bar{K}(\theta,\cdot)$ is continuous (see
Lemma \ref{Lemma:Theta(sigma)}), it follows that
\begin{align*}
\lim_{n}\left(\int_{\Theta}\bar{K}(\theta,\sigma_{n}')\mu_{n}'(d\theta)-\int_{\Theta}\bar{K}(\theta,\sigma')\mu'(d\theta)\right) & =\lim_{n}\left(\int_{\Theta}\bar{K}(\theta,\sigma_{n}')\mu_{n}'(d\theta)-\int_{\Theta}\bar{K}(\theta,\sigma')\mu_{n}'(d\theta)\right)\\
 & \,\,\,\,+\lim_{n}\left(\int_{\Theta}\bar{K}(\theta,\sigma')\mu_{n}'(d\theta)-\int_{\Theta}\bar{K}(\theta,\sigma')\mu'(d\theta)\right)\\
 & =0.
\end{align*}
Also, since $\mu'_{n}\in M(\delta_{n},\sigma_{n}')$, then $\int_{\Theta}\bar{K}(\theta,\sigma_{n}')\mu_{n}'(d\theta)\leq\delta_{n}$.
Taking limits of this last expression on both sides, we obtain $\int_{\Theta}\bar{K}(\theta,\sigma')\mu'(d\theta)\leq\delta$,
implying that $\mu'\in M(\delta,\sigma')$.

Next, to show that $\Xi$ has the closed graph property, take $(y_{n})_{n}$
converging to $y$, $(\delta_{n})_{n}$ converging to $\delta$, and
$(\sigma_{n})_{n}$ converging to $\sigma$. Suppose that $y_{n}\in\Xi(\delta_{n},\sigma_{n})$
for all $n$. We will show that $y\in\Xi(\delta,\sigma)$. Since $y_{n}\in\Xi(\delta_{n},\sigma_{n})$
for all $n$, there exists a sequence $(\mu'_{n},\sigma'_{n})_{n}$
such that $y_{n}\in\Delta F(\mu'_{n})-\sigma'_{n}$, $\left\Vert \sigma_{n}'-\sigma_{n}\right\Vert \leq\delta_{n}$,
and $\mu'_{n}\in M(\delta_{n},\sigma'_{n})$. Because the sequence
$(\mu_{n},\sigma'_{n})_{n}$ lives in a compact set, $\Delta\Theta\times\Delta X$,
there exists a subsequence, $(\mu'_{n(k)},\sigma'_{n(k)})_{k}$ that
converges to $(\mu',\sigma')$. By uhc of $M$ and of $\mu\mapsto\Delta F(\mu)$
(due to the assumption that $F$ is uhc), it follows that $y\in\Delta F(\mu')-\sigma',\left\Vert \sigma'-\sigma\right\Vert \leq\delta$,
and $\mu'\in M(\delta,\sigma')$. Thus, $y\in\Xi(\delta,\sigma)$.\medskip{}

Claim 2: There exists a sequence $(\delta_{t})_{t}$ with $\lim_{t\rightarrow\infty}\delta_{t}=0$
such that, for all $t$, $\sigma_{t+1}-\sigma_{t}\in\frac{1}{t+1}\Xi(\delta_{t},\sigma_{t})$.

Proof. By equation (\ref{eq:system0-1-1-1}) in the text, $\sigma_{t+1}-\sigma_{t}\in\frac{1}{t+1}(\Delta F(\mu_{t+1})-\sigma_{t})$
for all $t$. By Theorem \ref{Theo:Berk}, there exists a sequence
$(\delta_{t})_{t}$ with $\lim_{t\rightarrow\infty}\delta_{t}=0$
such that, for all $t$, $\int_{\Theta}\bar{K}(\theta,\sigma_{t})\mu_{t+1}(d\theta)\leq\delta_{t}$.
Thus, $\Delta F(\mu_{t+1})-\sigma_{t}\subseteq\Xi(\delta_{t},\sigma_{t})$
for all $t$, and the claim follows.\medskip{}

\emph{Part 2. The interpolation of $(\sigma_{t})_{t}$ is what BHS2005
call a perturbed solution of the differential inclusion. }Define $m(t)\equiv\sup\{k\geq0:t\geq\tau_{k}\}$,
where $\tau_{0}=0$ and $\tau_{k}=\sum_{i=1}^{k}1/i$. Let $\mathbf{w}$
be the continuous-time interpolation of \textbf{$(\sigma_{t})_{t}$},
as defined in equation (\ref{eq:interpolation}) in the text. By Claim
2, for any $t$, $\mathbf{w}(t)\in\sigma_{m(t)}+(t-\tau_{m(t)})\Xi(\delta_{m(t)},\sigma_{m(t)})$;
hence, $\dot{\mathbf{w}}(t)\in\Xi(\delta_{m(t)},\sigma_{m(t)})$ for
almost every $t$. Let $\boldsymbol{\gamma}(t)\equiv\delta_{m(t)}+\bigl\Vert\mathbf{w}(t)-\sigma_{m(t)}\bigr\Vert.$
Then $\dot{\mathbf{w}}(t)\in\Xi(\boldsymbol{\gamma}(t),\mathbf{w}(t))$
for almost every $t$. In addition, note that $\lim_{t\rightarrow\infty}\boldsymbol{\gamma}(t)=0$
because $(\delta_{t})_{t}$ goes to zero, $m(t)$ goes to infinity,
and $\boldsymbol{w}$ is the interpolation of \emph{$(\sigma_{t})_{t}$}.\medskip{}

\emph{Part 3. A perturbed solution is an asymptotic pseudotrajectory
(i.e., it satisfies equation (\ref{eq:APTeq}) in the text). }Let
$\boldsymbol{v}(t)\equiv\dot{\mathbf{w}}(t)\in\Xi(\boldsymbol{\gamma}(t),\mathbf{w}(t))$
for almost every $t$. Then
\begin{equation}
\mathbf{w}(t+s)-\mathbf{w}(t)=\int_{0}^{s}\mathbf{v}(t+\tau)d\tau.\label{eq:abscont}
\end{equation}
Since $\mathbb{S}$ is a bounded set, $\boldsymbol{v}$ is uniformly
bounded; therefore, $\mathbf{w}$ is uniformly continuous. Hence,
the family of functions $\left\{ s\mapsto\mathbf{S}^{t}(\mathbf{w})(s)\colon t\in\mathbb{R}\right\} $
\textemdash{} where for each $(t,s)$ $\mathbf{S}^{t}(\mathbf{w})(s)=\mathbf{w}(s+t)$
\textemdash{} is equicontinuous and, therefore, relatively compact
with respect to $L^{\infty}(\mathbb{R},\Delta X,Leb)$, where $Leb$
is the Lebesgue measure; all $L^{p}$ spaces in the proof are with
respect to Lebesgue, so we drop it from subsequent notation. Therefore,
there exists a subsequence $(t_{n})_{n}$ and a $\boldsymbol{w}^{*}\in L^{\infty}(\mathbb{R},\Delta X)$
such that $\mathbf{w}^{*}=\lim_{t_{n}\rightarrow\infty}\mathbf{S}^{t_{n}}(\mathbf{w})$.

Set $t=t_{n}$ in (\ref{eq:abscont}) and define $\mathbf{v}_{n}(s)=\mathbf{v}(t_{n}+s)$.
Then
\[
\mathbf{w^{*}}(s)-\mathbf{w^{*}}(0)=\lim_{n\rightarrow\infty}\int_{0}^{s}\mathbf{v}_{n}(\tau)d\tau.
\]
Since $\mathbf{v}_{n}\in L^{\infty}(\mathbb{R},\mathbb{S})$ for all
$n$, then $\mathbf{v}_{n}\in L^{2}([0,T],\mathbb{S})$. By the Banach-Alouglu
Theorem, there exists a subsequence, which we still denote as $(t_{n})_{n}$,
and a $\mathbf{v}^{*}\in L^{2}([0,T],\mathbb{S})$ such that $(\mathbf{v}_{n})_{n}$
converges in the weak topology to $\mathbf{v}^{*}$; therefore, 
\begin{equation}
\lim_{n\rightarrow\infty}\int_{0}^{s}\mathbf{v}_{n}(\tau)d\tau=\int_{0}^{s}\mathbf{v}^{*}(\tau)d\tau\label{eq:pointwise}
\end{equation}
pointwise in $s\in[0,T]$. Indeed, convergence is uniform because
the family $\left\{ s\mapsto\int_{0}^{s}\mathbf{v}_{n}(\tau)d\tau\colon n\in\mathbb{N}\right\} $
is equicontinuous and $[0,T]$ is compact. In addition, $\mathbf{v}^{*}\in L^{2}([0,T],\mathbb{S})$,
then $\mathbf{w}^{*}$ is absolutely continuous in $[0,T]$.

The proof concludes by showing the claim that $\mathbf{v}^{*}(\tau)\in\Delta F(\Delta\Theta(\mathbf{w}^{*}(\tau)))-\mathbf{w}^{*}(\tau)$
Lebesgue-a.s. in $\tau\in[0,T]$. We will prove it by showing that
$\mathbf{v}^{*}(\tau)\in Co(\Xi(0,\mathbf{w}^{*}(\tau)))$ Lebesgue-a.s.
in $\tau\in[0,T]$, where $Co$ denotes the convex hull; the desired
claim then follows because the facts that $\Delta F(\Delta\Theta(\sigma))-\sigma$
is a convex set and contains $\Xi(0,\sigma)$ and, by definition,
$Co(\Xi(0,\sigma))$ is the smallest convex set that contains $\Xi(0,\sigma)$,
imply that $Co(\Xi(0,\sigma))\subseteq\Delta F(\Delta\Theta(\sigma))-\sigma$.

We will prove $\mathbf{v}^{*}(\tau)\in Co(\Xi(0,\mathbf{w}^{*}(\tau)))$
Lebesgue-a.s., in several steps. First, we show that weak convergence
of $(\mathbf{v}_{n})_{n}$ to $\mathbf{v}^{*}$ implies almost sure
convergence of a weighted average of $(\mathbf{v}_{n})_{n}$ to $\mathbf{v}^{*}$.
Formally, by Mazur's Lemma, for each $n\in\mathbb{N}$, there exists
a $N(n)\in\mathbb{N}$ and a non-negative vector, $(\alpha_{n},...,\alpha_{N(n)})$,
such that $\sum_{i=n}^{N(n)}\alpha_{i}=1$, and $\lim_{n\rightarrow\infty}\left\Vert \bar{\boldsymbol{v}}_{n}-\boldsymbol{v}^{*}\right\Vert _{L^{2}([0,T],\mathbb{S})}=0$
where $\bar{\boldsymbol{v}}_{n}\equiv\sum_{k=n}^{N(n)}\alpha_{k}\boldsymbol{v}_{n}$.
Therefore, as $\lim_{n\rightarrow\infty}\left\Vert \bar{\boldsymbol{v}}_{n}-\boldsymbol{v}^{*}\right\Vert _{L^{2}([0,T],\mathbb{S})}=0$,
it follows that $\lim_{n\rightarrow\infty}\bar{\boldsymbol{v}}_{n}=\boldsymbol{v}^{*}$
a.s.-Lebesgue.

Fix $\tau\in[0,T]$ such that the previous claim holds. Define $\boldsymbol{\gamma}_{n}(\tau)\equiv\gamma(t_{n}+\tau)$
and $\mathbf{w}_{n}(\tau)\equiv\mathbf{w}(t_{n}+\tau)$. By uhc of
$\Xi$ at $(0,\sigma)$ for all $\sigma$ (see Claim 1 in Part 1)
and the facts that $\boldsymbol{\gamma}_{n}(\tau)\rightarrow0$ and
$\mathbf{w}_{n}(\tau)\rightarrow\boldsymbol{w}^{*}(\tau)$, it follows
that, for any $\varepsilon>0$, there exists $N_{\varepsilon}$ such
that, for all $n\geq N_{\varepsilon}$, $\Xi(\boldsymbol{\gamma}_{n}(\tau),\mathbf{w}_{n}(\tau))\subseteq\Xi^{\varepsilon}(0,\mathbf{w}^{*}(\tau))$,
where $\Xi^{\varepsilon}(0,\mathbf{w}^{*}(\tau))\equiv\{y'\in\mathbb{S}:\left\Vert y'-y\right\Vert \leq\varepsilon,y\in\Xi(0,\mathbf{w}^{*}(\tau))\}$.
Recall that $\mathbf{v}_{n}(\tau)\in\Xi(\gamma_{n}(\tau),\mathbf{w}_{n}(\tau))$
for all $n$; therefore, $\mathbf{\bar{v}}_{n}(\tau)\in Co(\Xi^{\varepsilon}(0,\mathbf{w}^{*}(\tau)))$
for all $n\geq N_{\varepsilon}$. Since $Co(\Xi^{\varepsilon}(0,\mathbf{w}^{*}(\tau)))$
is closed and $\lim_{j\rightarrow\infty}\bar{\boldsymbol{v}}_{n}(\tau)=\boldsymbol{v}^{*}(\tau)$,
it follows that $\boldsymbol{v}^{*}(\tau)\in Co(\Xi^{\varepsilon}(0,\mathbf{w}^{*}(\tau)))$.
Since this is true for all $\varepsilon>0$, it follows that $\boldsymbol{v}^{*}(\tau)\in Co(\Xi(0,\mathbf{w}^{*}(\tau)))$.

\subsection{\label{Pf:prop:converge->eqm}Proof of Proposition \ref{prop:converge->eqm}}

Let $\sigma^{\ast}$ be an arbitrary non-equilibrium point. Then there
is a pure action $x$ such that $\sigma^{\ast}(x)>0$ and $x\notin F(\triangle\Theta(\sigma^{\ast}))$.
Choose such $x$. By upper hemi-continuity of $F$ (Assumption \ref{ass:4})
and $\Theta(\cdot)$ (Lemma \ref{Lemma:Theta(sigma)}) it follows
that there exists a $\varepsilon>0$ such that $x\notin F(\triangle\Theta(\sigma))$
for all $\sigma\in B_{\varepsilon}(\sigma^{\ast})$ and such that
$\inf_{\sigma\in B_{\varepsilon}(\sigma^{\ast})}\sigma(x)>0$. Pick
such $\varepsilon>0$. Then there is some $T>0$ such that for any
initial value in this $\varepsilon$-neighborhood, $\bm{\sigma}(0)\in B_{\varepsilon}(\sigma^{\ast})$
and any solution $\bm{\sigma}\in S_{\bm{\sigma}(0)}^{\infty}$ to
the differential inclusion leaves this neighborhood within time $T$,
i.e., we have 
\begin{align}
\lVert\bm{\sigma}(\tau)-\sigma^{\ast}\rVert\geq\varepsilon\label{inequalityproof1}
\end{align}
for some $\tau<T$. Such $T$ exists, because the share of the action
$x$ decreases whenever $\bm{\sigma}(\tau)$ is in the set $B_{\varepsilon}(\sigma^{\ast})$.

Now, pick a sample path $h$ such that the property stated in Theorem
\ref{Theo:APT} holds. We will show that $\sigma_{t}$ cannot stay
in the $\frac{\varepsilon}{2}$-neighborhood of $\sigma^{\ast}$ forever.
This completes the proof, because it implies that almost surely, $\sigma_{t}$
cannot converge to any non-equilibrium point $\sigma^{\ast}$.

Pick $\tilde{T}$ such that for any time $t>\tilde{T}$, 
\begin{align}
\inf_{\bm{\sigma}\in S_{\bm{w}(t)}^{T}}\lVert\bm{w}(t+s)-\bm{\sigma}(s)\rVert<\frac{\varepsilon}{2}\quad\forall s\in[0,T]\label{inequalityproof2}
\end{align}
Suppose there exists a $t>\tilde{T}$ such that $\bm{w}(t)\in B_{\frac{\varepsilon}{2}}(\sigma^{\ast})$
(if no such $t$ exists, then the proof is finished because it follows
that $\sigma_{t}$ is outside a $\varepsilon/2$ neighborhood of $\sigma^{\ast}$
for all $t>\tilde{T}$). Then from (\ref{inequalityproof1}) and
(\ref{inequalityproof2}), there is $s\in[0,T]$ such that $\lVert\bm{w}(t+s)-\sigma^{\ast}\rVert\geq\varepsilon/2$.
So $\sigma_{t}$ cannot stay in the $(\varepsilon/2)$-neighborhood
forever.

\subsection{\label{Pf:Prop:Attracting}Proof of Proposition \ref{Prop:Attracting}}

Part (i) directly follows from part (ii). Proof of part (ii): Pick
a history from the set of histories with probability one defined by
the statement of Theorem \ref{Theo:APT}, and let $\bm{w}$ denote
the interpolation of the action frequency $\sigma_{t}$ given this
path. If there is $t^{\ast}$ such that $\bm{w}(t)\in E$ for all
$t>t^{\ast}$, the result follows. So we will focus on the case in
which for any $t^{\ast}$, there is $t>t^{\ast}$ such that $\bm{w}(t)\notin E$.

Pick attracting sets $(A_{1},\cdots,A_{N})$ as stated. Pick an arbitrarily
small $\varepsilon>0$. Without loss of generality, we assume that
for each attracting set $A_{n}$, the $\varepsilon$-neighborhood
of $A_{n}$ is in the basin of attraction $\mathcal{U}_{A_{n}}$.

Pick $T$ large enough that for any attracting set $A_{n}$, for any
initial value $\bm{\sigma}(0)\in\mathcal{U}_{A_{n}}$, for any $\bm{\sigma}_{\bm{\sigma}(0)}^{2T}$,
and for any $s\in[T,2T]$, 
\begin{align}
d(\bm{\sigma}(s),A_{n})<\frac{\varepsilon}{2}.\label{inequality1}
\end{align}
Also, pick $\tilde{T}$ large enough that for any $t>\tilde{T}$ and
for any $s\in[0,2T]$ 
\begin{align}
\inf_{\bm{\sigma}\in S_{\bm{w}(t)}^{2T}}\lVert\bm{w}(t+s)-\bm{\sigma}(s)\rVert<\frac{\varepsilon}{2}.\label{inequality2}
\end{align}

Recall that for any $t^{\ast}$, there is $t>t^{\ast}$ such that
$\bm{w}(t)\notin E$. This implies that there is $t>\tilde{T}$ and
an attracting set $A_{n}$ such that $\bm{w}(t)\in\mathcal{U}_{A_{n}}$.
Pick such $t$ and $A_{n}$. From (\ref{inequality1}) and (\ref{inequality2}),
we have $d(\bm{w}(t+s),A_{n})<\varepsilon$ for all $s\in[T,2T]$.
This implies that $\bm{w}(t+s)\in\mathcal{U}_{A_{n}}$ for all $s\in[T,2T]$,
so applying the same argument iteratively, we have $d(\bm{w}(t+s),A_{n})<\varepsilon$
for all $s\geq T$, which means that $\bm{w}$ will stay in the $\varepsilon$-neighborhood
of the attracting set $A_{n}$ forever. Since $\varepsilon$ can be
arbitrarily small, $d(\bm{w}(t)-A_{n})$ converges to zero as $t\to\infty$.
(Note that choosing smaller $\varepsilon$ does not influence $A_{n}$.)

\subsection{\label{Pf:Prop:Repelling}Proof of Proposition \ref{Prop:Repelling}}

Let $\sigma^{\ast}$ be a repelling equilibrium, and pick $\mathcal{U}$
and $T$ as in the definition of repelling equilibrium. Pick a history
from the set of histories with probability one defined by the statements
of Theorems \ref{Theo:Berk} and \ref{Theo:APT}. Let $\bm{w}$ denote
the interpolation of the action frequency $\sigma_{t}$ given this
path. It suffices to show that $\bm{w}(t)$ does not converge to $\sigma^{\ast}$
given this history.

Pick a sufficiently small $\varepsilon>0$, so that $2\varepsilon$-neighborhood
of $\sigma^{\ast}$ is a subset of $\mathcal{U}$. Without loss of
generality, we can assume that there is $\eta>0$ such that $F(\mu)\subseteq F(\triangle\Theta(\sigma^{\ast}))$
for any $\mu$ such that $\int(K(\theta,\sigma)-K^{\ast}(\sigma))\mu(d\theta)<\eta$
for some $\sigma\in B_{\varepsilon}(\sigma^{\ast})$. (If necessary,
take $\varepsilon$ small.) Pick such $\eta>0$.

From Theorems \ref{Theo:Berk} and \ref{Theo:APT}, there is $T^{\ast}$
and $\tau^{\ast}$ such that $T^{\ast}=\sum_{i=1}^{\tau^{\ast}}\frac{1}{i}$,
\begin{align}
\int(K(\theta,\sigma_{\tau})-K^{\ast}(\sigma_{\tau}))\mu_{\tau+1}(d\theta)<\eta\label{inequalitytheorem3}
\end{align}
for all $\tau\geq\tau^{\ast}$, and 
\begin{align}
\inf_{\bm{\sigma}\in S_{\bm{w}(t)}^{T}}\sup_{0\leq s\leq T}\lVert\bm{w}(t+s)-\bm{\sigma}(s)\rVert<\varepsilon\label{inequalitytheorem4}
\end{align}
for all $t\geq T^{\ast}$.

Suppose that $\bm{w}(t)$ is in the $\varepsilon$-neighborhood of
$\sigma^{\ast}$ for some $t>T^{\ast}$ such that $t=\sum_{i=1}^{\tau}\frac{1}{i}$
for some $\tau$. We will show that there is $t^{\prime}>0$ such
that $\bm{w}(t+t^{\prime})$ is not in the $\varepsilon$-neighborhood
of $\sigma^{\ast}$. This completes the proof, because it implies
that $\bm{w}$ cannot stay around $\sigma^{*}$ forever. Let $\sigma=\bm{w}(t)$
\textit{\emph{satisfy the condition in the definition of repelling
equilibrium.}} From (\ref{inequalitytheorem3}) and the definition
of $\eta$, the agent chooses some action $x\in F(\triangle\Theta(\sigma^{\ast}))$
in the current period. This means that $\bm{w}(\tilde{t})$ moves
toward $\delta_{x}$ during the time $\tilde{t}\in[t,t+\frac{1}{\tau+1}]$.
Then from the condition in the definition of repelling equilibrium,
there is $\tilde{t}\in[t,t+\frac{1}{\tau+1}]$ such that for any $\bm{\sigma}\in S_{\bm{w}(\tilde{t})}^{\infty}$,
we have $\bm{\sigma}(t)\notin\mathcal{U}$ for some $t\in[0,T]$.
Then as in the previous case, we can show that there is $t^{\prime}\leq T$
such that $\lVert\bm{w}(\tilde{t}+t^{\prime})-\sigma\rVert>\varepsilon$,
as desired.

\subsection{\label{subsec:ex_robust_attract} Attracting Sets Need Not Be Robustly
Attracting}

The agent has three actions, $x_{1}$, $x_{2}$, and $x_{3}$. Given
an action $x_{k}$, a consequence $y$ is randomly drawn from $Y=\mathbb{R}^{3}$
according to the normal distribution $N(e_{k},I)$, so the action
influences the mean of the consequence $y$. However, the agent does
not recognize that the action influences the consequence. Her model
space is the probability simplex $\Theta=\triangle X$, and for each
model $\theta$, she believes that the consequence follows the normal
distribution $N(\theta,I)$. So given a mixture $\sigma\in\triangle X$,
the closest model is $\theta=\sigma$, i.e., $\Theta(\sigma)=\{\sigma\}$
for each $\sigma$.

For each degenerate belief $\delta_{\theta}$, the optimal policy
is given as follows. Consider the model space $\Theta$, and choose
the points $A=(\frac{2}{3},0,\frac{1}{3})$, $B=(\frac{1}{3},\frac{2}{3},0)$,
$C=(0,\frac{1}{3},\frac{2}{3})$, and $\sigma^{\ast}=(\frac{1}{3},\frac{1}{3},\frac{1}{3})$
as in Figure \ref{figurerobustbr}. For each model $\theta$ in the
interior of the triangle $AB\sigma^{\ast}$, $F(\delta_{\theta})=\{x_{2}\}$,
i.e., the optimal policy is $x_{2}$ if the belief puts probability
one on some model $\theta$ in this triangle. Similarly, the optimal
action is $x_{3}$ for the triangle $BC\sigma^{\ast}$, and $x_{1}$
for the triangle $CA\sigma^{\ast}$. For the point $\sigma^{\ast}$
and the models outside the triangle $ABC$, all actions are optimal,
that is, $F(\delta_{\theta})=X$ for these models $\theta$. For all
models on the boundary of the triangles, the optimal policy is chosen
in such a way that $F(\delta_{\theta})$ is upper hemi-continuous
with respect to $\theta$. For example, on the line $A\sigma^{\ast}$,
$F(\delta_{\theta})=\{x_{1},x_{2}\}$.

 \begin{figure}[htbp] 	
\begin{minipage}{0.5\hsize} 		
\begin{center} 			
\includegraphics{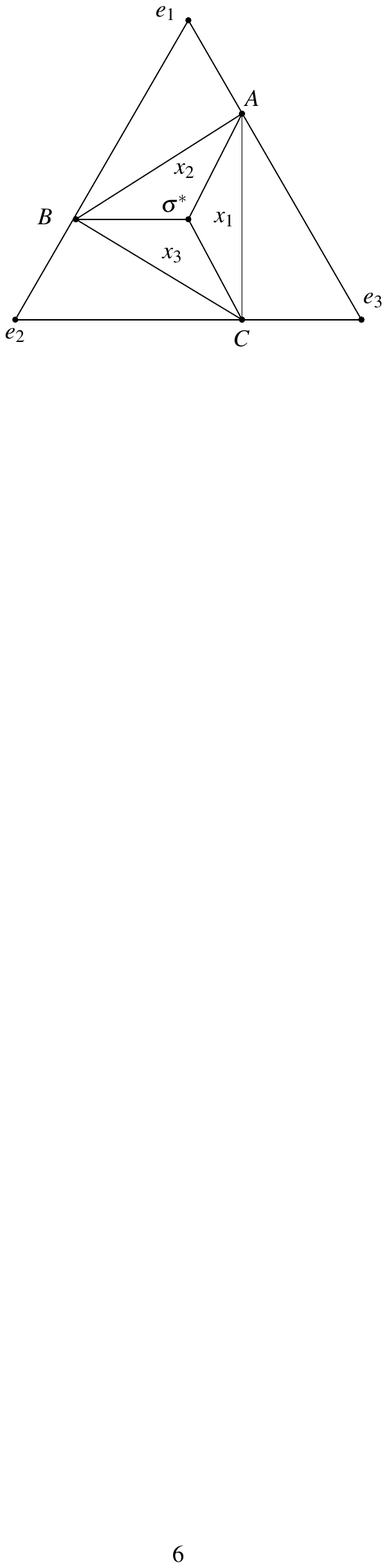} 			
\caption{Policy $F(\delta_{\theta})$ for each model $\theta$}  			
\label{figurerobustbr} 		
\end{center}       	
\end{minipage} 	
\begin{minipage}{0.5\hsize} 		
\begin{center} 			
\includegraphics{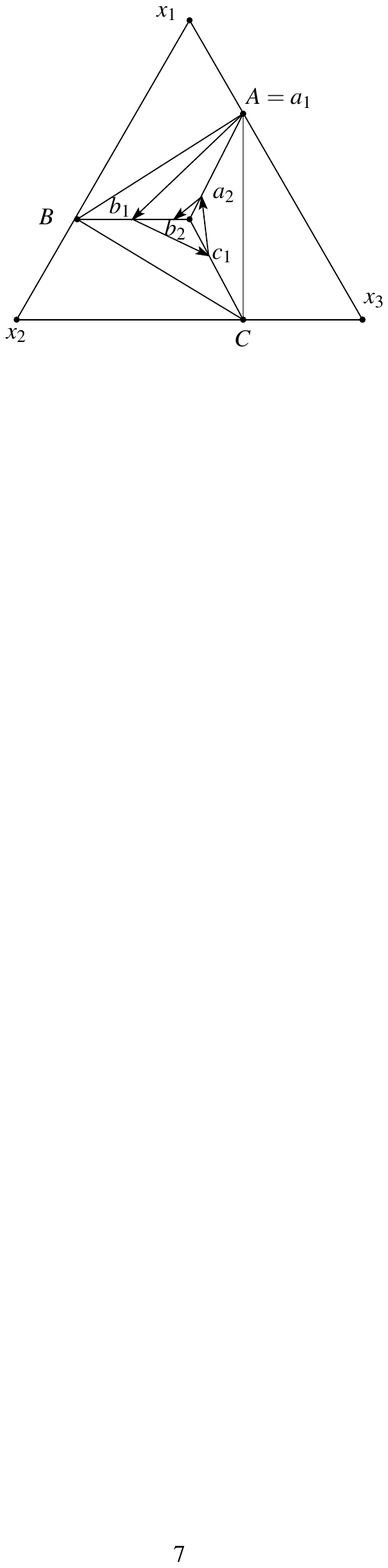} 	
\caption{Path starting from $a_{1}$.}   
\label{figurerobustpath} 		
\end{center} 	
\end{minipage} 
\end{figure}

In this example, the model $\theta=\sigma^{\ast}$ is an attracting
equilibrium, and its basin of attraction is the interior of the triangle
$ABC$. For example, suppose that the action frequency so far is the
point $a_{1}=A$, and the action $x_{2}$ is chosen today. Then the
new action frequency is an interior point of the triangle $AB\sigma^{\ast}$,
and the agent chooses $x_{2}$ until the action frequency hits the
point $b_{1}=(\frac{1}{3},\frac{1}{2},\frac{1}{6})$ on the line $B\sigma^{\ast}$.
After that, the agent chooses the action $x_{3}$ until the action
frequency hits the point $c_{1}=(\frac{2}{9},\frac{1}{3},\frac{4}{9})$
on the line $C\sigma^{\ast}$; then the agent chooses the action $x_{1}$
until the action frequency hits the point $a_{2}=(\frac{5}{12},\frac{1}{4},\frac{1}{3})$.
From there on, the solution to the differential inclusion takes the
path $a_{2}b_{2}c_{2}a_{3}b_{3}c_{3}\cdots$ and converges to $\sigma^{\ast}$,
where 
\begin{align*}
 & a_{n}=(a_{n}^{1},a_{n}^{2},a_{n}^{3})=\left(1-\frac{1}{3}-\frac{1}{9c_{n-1}^{3}},\frac{1}{9c_{n-1}^{3}},\frac{1}{3}\right)\\
 & b_{n}=(b_{n}^{1},b_{n}^{2},b_{n}^{3})=\left(\frac{1}{3},1-\frac{1}{3}-\frac{1}{9a_{n}^{1}},\frac{1}{9a_{n}^{1}}\right)\\
 & c_{n}=(c_{n}^{1},c_{n}^{2},c_{n}^{3})=\left(\frac{1}{9b_{n}^{2}},\frac{1}{3},1-\frac{1}{3}-\frac{1}{9b_{n}^{2}}\right).
\end{align*}
See Figure \ref{figurerobustpath}. Similarly, starting from any interior
points of the triangle $ABC$, any solution $\bm{\sigma}$ to the
differential inclusion will eventually converge to $\sigma^{\ast}$.

Now we will modify this example in such a way that the equilibrium
$\sigma^{\ast}$ is still attracting but not robustly attracting.
Take the points $d_{0}$, $d_{1}$, $\cdots$ as in Figure \ref{figurerobustd},
that is, $d_{0}$ is the intersection point of the line $AB$ and
the line passing through $\sigma^{\ast}$ and $C$, and for each $n\geq1$,
$d_{n}$ is the intersection point of the line $a_{n}b_{n}$ and the
line passing through $\sigma^{\ast}$ and $C$. Then take the sequence
$(z_{0},z_{1},\cdots)$ such that $z_{0}=d_{1}$, $z_{1}=(d_{1}^{1},\frac{d_{1}^{2}+d_{2}^{2}}{2},1-d_{1}^{1}-\frac{d_{1}^{2}+d_{2}^{2}}{2})$,
and $z_{k}=\frac{z_{k-2}+d_{2}}{2}$ for each $k\geq2$. Intuitively,
$z_{0}z_{1}\cdots$ is a ``jagged bridge'' which connects $d_{1}$
and $d_{2}$, whose step size shrinks as it goes. See Figure \ref{figurerobustbridge}.

 \begin{figure}[htbp] 	
\begin{minipage}{0.5\hsize} 		
\begin{center} 			
\includegraphics{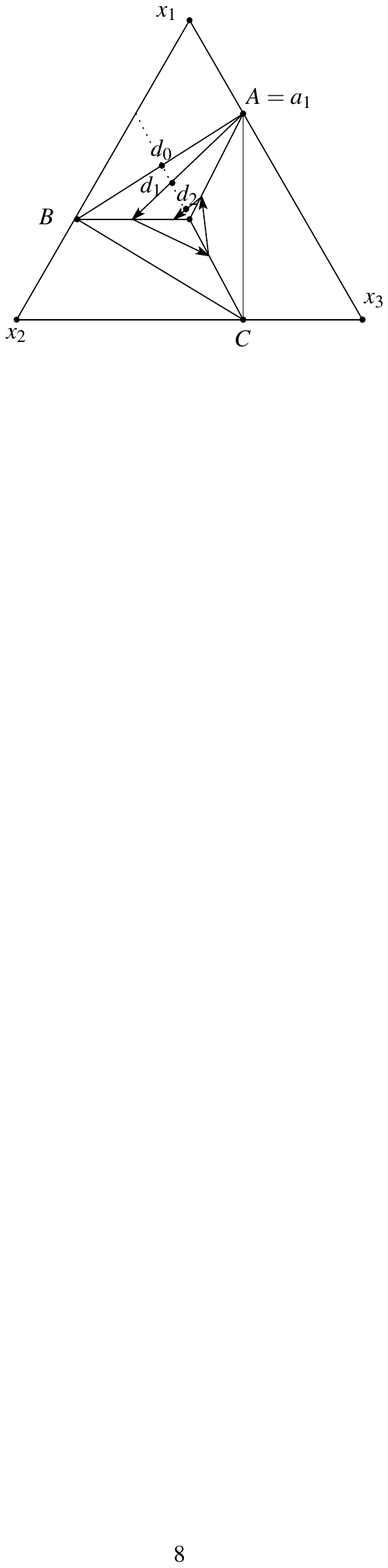}			
\caption{Policy $F(\delta_{\theta})$ for each model $\theta$}  			
\label{figurerobustd} 		
\end{center}       	
\end{minipage} 	
\begin{minipage}{0.5\hsize} 		
\begin{center} 			
\includegraphics{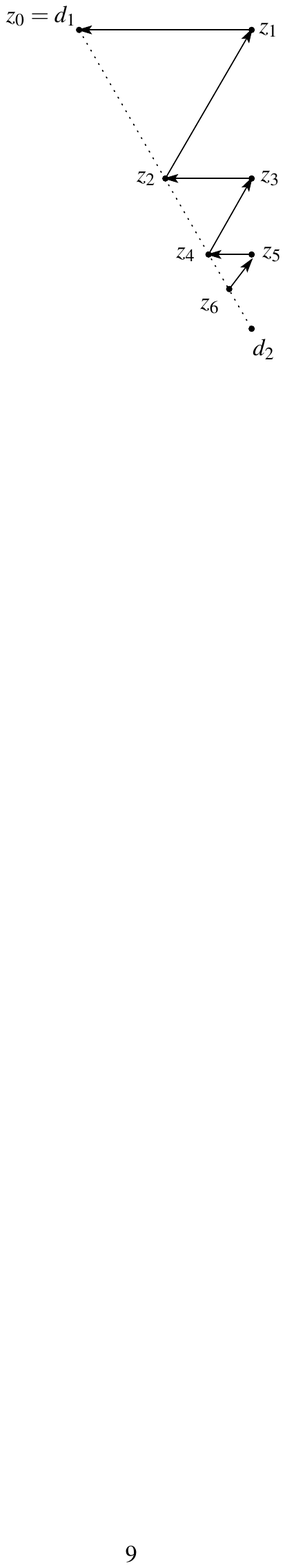}
\caption{Jagged path. It does not reach $d_{2}$.}   
\label{figurerobustbridge} 		
\end{center} 	
\end{minipage} 
\end{figure}

Assume that for each model $\theta$ on this bridge $z_{0}z_{1}z_{2}\cdots$,
the optimal policy is $F(\delta_{\theta})=\{x_{1},x_{2}\}$. Then
starting from any point on this bridge $z_{0}z_{1}\cdots$, a solution
$\bm{\sigma}$ to the differential inclusion can move along this bridge
and reach the point $d_{1}$. However, starting from the point $d_{2}$,
$\bm{\sigma}$ cannot move to $d_{1}$; this is so because for every
large $n$, $z_{n}$ is slightly different from $d_{2}$, which means
that the bridge $z_{0}z_{1}\cdots$ do not reach the point $d_{2}$
exactly. Accordingly, the asymptotic motion of $\bm{\sigma}$ is the
same as before, i.e., as long as the starting point is in the interior
of the triangle $ABC$, $\bm{\sigma}$ converges to $\sigma^{\ast}$.

The same is true even if we add more bridges. Suppose that for each
$n$, there is a jagged path from $d_{n}$ toward $d_{n+1}$. Even
with this change, $\sigma^{\ast}$ is still attracting, for example,
starting from the point $b_{1}$, $\bm{\sigma}$ must follow the path
$b_{1}c_{1}a_{2}b_{2}c_{2}\cdots$ and eventually converge to $\sigma^{\ast}$.

However, adding these bridges significantly changes the solution $\tilde{\bm{\sigma}}$
to the perturbed differential inclusion. Indeed, starting from the
point $d_{n}$, $\tilde{\bm{\sigma}}$ can move to $d_{n-1}$ through
the jagged path, because this path is $\varepsilon$-close to the
point $d_{n}$ for any small $\varepsilon$. For the same reason,
$\tilde{\bm{\sigma}}$ can move to $d_{n-2}$, $d_{n-3}$, $\cdots$,
and can eventually reach the point $d_{0}$, which is outside of the
basin of $\sigma^{\ast}$. This implies that $\sigma^{\ast}$ is not
robustly attracting with these bridges.

\subsection{\label{pf:prop:converge_robust_attracting}Proof of Proposition \ref{prop:converge_robust_attracting}}

We will first present a few preliminary results. We have seen in Lemma
\ref{Lemma:uniformconvergence} that given any initial prior $\mu_{0}$
and given any policy $f$, there is $T$ such that with positive probability,
the consequence frequency is close to the mean (more formally, the
sample average of the likelihood $L_{t}$ is close to the mean) for
\textit{all} periods after $T$. The following claim shows that this
$T$ can be chosen independently of $\mu_{0}$ and $f$. The proof
can be found at the end.

\begin{claim} \label{claim1} For any $\eta>0$, there is $T$ and
$\underline{q}>0$ such that for any initial prior $\mu_{0}$ with
full support and for any $f$, 
\begin{align}
P^{f}(\forall t\geq T\forall\theta\;\;|L_{t}(\theta)-K(\theta,\sigma_{t})|<\eta)>\underline{q}.\label{claiminequality1}
\end{align}
\end{claim}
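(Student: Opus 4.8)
The plan is to recognize $L_t(\theta)-K(\theta,\sigma_t)$ as a time-normalized sum of martingale differences whose conditional second moments are bounded by a constant depending only on the primitives, and then to upgrade the almost-sure uniform convergence of Lemma \ref{Lemma:uniformconvergence} to a tail bound that is uniform in the prior and the policy by means of a quantitative maximal inequality. Concretely, writing $g(\theta,y,x)\equiv\ln\bigl(q(y\mid x)/q_\theta(y\mid x)\bigr)$, one has the exact identity $L_t(\theta)-K(\theta,\sigma_t)=t^{-1}\sum_{\tau=1}^{t}D_\tau(\theta)$ with $D_\tau(\theta)\equiv g(\theta,y_\tau,x_\tau)-E_{Q(\cdot\mid x_\tau)}[g(\theta,Y,x_\tau)]$. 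For \emph{any} policy $f$ and \emph{any} prior $\mu_0$, the action $x_\tau$ is determined by the history $(x_1,y_1,\dots,x_{\tau-1},y_{\tau-1})$ and $y_\tau$ is drawn from $Q(\cdot\mid x_\tau)$ given that history; hence, with respect to the natural filtration $(\mathcal F_t)_t$ (where $\mathcal F_t$ is generated by $(x_1,y_1,\dots,x_t,y_t)$), $(D_\tau(\theta))_\tau$ is a martingale difference sequence under $P^f$ with $E^f[D_\tau(\theta)^2\mid\mathcal F_{\tau-1}]\le\bar G\equiv\max_{x\in X}E_{Q(\cdot\mid x)}[g_x^2]$, which is finite by Assumption \ref{ass:2}(iii) and finiteness of $X$, and which crucially does not depend on $\mu_0$ or $f$.

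I would handle the supremum over $\theta$ by the bracketing argument of Step 2 in the proof of Lemma \ref{Lemma:uniformconvergence}. Given $\eta>0$, for each $\theta$ pick $\delta(\theta)>0$ with $E_{Q(\cdot\mid x)}\bigl[\sup_{\theta'\in O(\theta,\delta(\theta))}|g(\theta',Y,x)-g(\theta,Y,x)|\bigr]<\eta/4$ for all $x\in X$ — possible by the dominated-convergence estimate underlying (\ref{eq:unifconv-1}) — extract a finite subcover $(O(\theta_j,\delta(\theta_j)))_{j=1}^{J}$ of $\Theta$, set $G_j^{+}(y,x)\equiv\sup_{\theta'\in O(\theta_j,\delta(\theta_j))}g(\theta',y,x)$ and $G_j^{-}$ the corresponding infimum, and center these: $\tilde G_i(y,x)\equiv G_j^{\pm}(y,x)-E_{Q(\cdot\mid x)}[G_j^{\pm}(Y,x)]$ for $i=1,\dots,N(\eta)$ with $N(\eta)=2J$. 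The same chain of inequalities as in that proof then yields, for every $t$, the bound $\sup_{\theta\in\Theta}|L_t(\theta)-K(\theta,\sigma_t)|\le\eta/2+\max_{1\le i\le N(\eta)}\bigl|t^{-1}\sum_{\tau=1}^{t}\tilde G_i(y_\tau,x_\tau)\bigr|$, where the $\eta/2$ is deterministic (it is the oscillation term, controlled via (\ref{eq:unifconv-1})) and each $\bigl(\tilde G_i(y_\tau,x_\tau)\bigr)_\tau$ is again a martingale difference sequence under $P^f$ with conditional second moment $\le\bar G$. The key point is that $N(\eta)$, the radii $\delta(\theta_j)$, and $\bar G$ are functions of $Q$, $\mathcal Q_\Theta$, and $\eta$ alone.

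Finally I would invoke a quantitative maximal inequality. For an $L^2$-martingale $(M_t)_t$ with $E^f[M_t^2]\le vt$, Kolmogorov's maximal inequality in its martingale form, applied over the dyadic blocks $[2^k,2^{k+1})$, yields $P^f\bigl(\exists t\ge T:\ |M_t|/t\ge\lambda\bigr)\le C v/(\lambda^2 T)$ for a universal constant $C$. Applying this to each $M_t^i\equiv\sum_{\tau\le t}\tilde G_i(y_\tau,x_\tau)$ with $\lambda=\eta/2$ and $v=\bar G$, and taking a union bound over the $N(\eta)$ indices, gives $P^f\bigl(\exists t\ge T:\ \max_i|M_t^i|/t\ge\eta/2\bigr)\le 4C N(\eta)\bar G/(\eta^2 T)$. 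Choosing $T=T(\eta)$ large enough that this right-hand side is strictly less than $1/2$, on the complementary event — which then has $P^f$-probability greater than $1/2$ — one has $\sup_\theta|L_t(\theta)-K(\theta,\sigma_t)|<\eta$ for every $t\ge T$. Since $T$ and all constants were selected using only $Q$, $\mathcal Q_\Theta$, and $\eta$, this proves (\ref{claiminequality1}) with $\underline q=1/2$, uniformly over full-support priors $\mu_0$ and policies $f$ (in fact full support of $\mu_0$ is not needed here).

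The hard part is precisely this uniformity in $(\mu_0,f)$: a priori the entire process depends on the prior and the policy through the endogenous action sequence, so one cannot merely quote the almost-sure statement of Lemma \ref{Lemma:uniformconvergence}. The resolution is that the martingale property of $D_\tau(\theta)$ and the conditional-variance bound $\bar G$ use only that $x_\tau$ is adapted and that $y_\tau\sim Q(\cdot\mid x_\tau)$; that the bracketing cover is constructed from the primitives alone; and that the maximal inequality carries a universal constant. A secondary point of care is to route the oscillation of $g$ over each ball into the deterministic $\eta/2$ term through (\ref{eq:unifconv-1}) rather than into the stochastic part, so that what remains to be controlled probabilistically is a fixed finite number of genuine martingales.
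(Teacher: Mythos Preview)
Your argument is correct and takes a genuinely different route from the paper. The paper proves Claim~\ref{claim1} by decomposing action-by-action: it writes $L_t(\theta)=\sum_{x}\sigma_t(x)L_t(\theta,x)$ where $L_t(\theta,x)$ averages the log-likelihood ratios over periods in which $x$ was chosen, observes (via an implicit coupling with exogenous i.i.d.\ sequences $(Y^x_k)_k\sim Q(\cdot\mid x)$) that each $L_t(\theta,x)$ has a law that does not depend on $(\mu_0,f)$, and then invokes the i.i.d.\ version of Lemma~\ref{Lemma:uniformconvergence} for each fixed $x$. Rarely-played actions (those with $t\sigma_t(x)\le T$) are handled by a crude uniform bound $\xi$ on the log-likelihood ratio. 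Your approach instead treats $L_t(\theta)-K(\theta,\sigma_t)$ directly as a normalized martingale, reduces to finitely many bracketing martingales exactly as in Step~2 of Lemma~\ref{Lemma:uniformconvergence}, and then replaces the qualitative Martingale-Convergence/Kronecker argument of Step~1 by a quantitative dyadic-block Kolmogorov/Doob maximal inequality whose constants depend only on $\bar G=\max_x E_{Q(\cdot\mid x)}[g_x^2]$. The paper's decomposition is conceptually transparent about \emph{why} the bound is policy-free, but as written it needs $\sup_{y,\theta}|g(\theta,y,x)|<\infty$ to control rarely-played actions, which does not follow from Assumption~\ref{ass:2}(iii). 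Your argument uses only the stated $L^2$ bound, delivers an explicit $\underline q=\tfrac12$, and sidesteps the rarely-played-action issue entirely because you never split by action.
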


The next claim just summarizes what we have seen in the proof of Theorem
\ref{Theo:Berk}: It shows that if the past consequence frequency
is close to the mean as stated in the above claim, and if the initial
prior $\mu_{0}$ satisfies some technical condition, then the posterior
belief $\mu_{t+1}$ concentrates on the states which approximately
minimize $K(\theta,\sigma_{t})$ for large $t$.

\begin{claim} \label{claim2} For any $\eta>0$ and for any $\kappa>0$,
there is $T$ such that for any initial prior $\mu_{0}$ and for any
$t>T$ such that $|L_{t}(\theta)-K(\theta,\sigma_{t})|<\frac{\eta}{16}$
and $\mu_{0}(\{\theta:K(\theta,\sigma_{t})-K^{\ast}(\sigma_{t})\leq\frac{\eta}{4}\})\geq\kappa$,
\[
\int(K(\theta,\sigma_{t})-K^{\ast}(\sigma_{t}))\mu_{t+1}(d\theta)<\eta.
\]
\end{claim}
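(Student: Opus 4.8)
The plan is to re-run the proof of Theorem \ref{Theo:Berk} verbatim, but with every free parameter pinned down explicitly in terms of $\eta$ and $\kappa$, and then to observe that each constant produced along the way depends only on $\eta$ and $\kappa$ — not on $\mu_0$, the policy $f$, or the realized frequency $\sigma_t$. Write $\bar K(\theta,\sigma)\equiv K(\theta,\sigma)-K^{*}(\sigma)$; recall (as in the proof of Theorem \ref{Theo:Berk}, using Assumption \ref{ass:2}(iii)) that $0\le K^{*}(\sigma)<\infty$ for every $\sigma$, and (Lemma \ref{Lemma:Theta(sigma)}) that $\bar K$ is continuous, so the manipulations below are legitimate. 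Fix $\eta>0$ and $\kappa>0$ and set $\varepsilon\equiv\eta/2$. Exactly as in the proof of Theorem \ref{Theo:Berk} — starting from the posterior representation (\ref{eq:posterior}), bounding $\bar K$ by $\varepsilon$ on $\{\bar K(\cdot,\sigma_t)<\varepsilon\}$, and lower-bounding the denominator by its integral over $\{\bar K(\cdot,\sigma_t)\le\varepsilon/2\}$ — one obtains, for every $t$ and every $\mu_0$,
\[
\int\bar K(\theta,\sigma_t)\,\mu_{t+1}(d\theta)\ \le\ \varepsilon+\frac{A_t^{\varepsilon}}{B_t^{\varepsilon}},
\]
with $A_t^{\varepsilon}$ and $B_t^{\varepsilon}$ as defined there.

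Next I would invoke the first hypothesis, $|L_t(\theta)-K(\theta,\sigma_t)|<\eta/16$ for all $\theta$, which gives $\bar K(\theta,\sigma_t)-\eta/16<L_t(\theta)-K^{*}(\sigma_t)<\bar K(\theta,\sigma_t)+\eta/16$. Substituting this into $A_t^{\varepsilon}$ and $B_t^{\varepsilon}$ multiplies the numerator by at most $e^{t\eta/16}$ and divides the denominator by at most $e^{t\eta/16}$, so
\[
\frac{A_t^{\varepsilon}}{B_t^{\varepsilon}}\ \le\ e^{t\eta/8}\,\frac{\int_{\{\bar K(\cdot,\sigma_t)\ge\varepsilon\}}\bar K(\theta,\sigma_t)e^{-t\bar K(\theta,\sigma_t)}\mu_0(d\theta)}{\int_{\{\bar K(\cdot,\sigma_t)\le\varepsilon/2\}}e^{-t\bar K(\theta,\sigma_t)}\mu_0(d\theta)}.
\]
For $t\ge 1/\varepsilon=2/\eta$ the map $x\mapsto xe^{-tx}$ is decreasing on $\{x\ge\varepsilon\}$, so the numerator is at most $\varepsilon e^{-t\varepsilon}\mu_0(\{\bar K(\cdot,\sigma_t)\ge\varepsilon\})\le\varepsilon e^{-t\varepsilon}$; and on $\{\bar K(\cdot,\sigma_t)\le\varepsilon/2\}$ we have $e^{-t\bar K}\ge e^{-t\varepsilon/2}$, so — here the second hypothesis enters, since $\varepsilon/2=\eta/4$ — the denominator is at least $e^{-t\varepsilon/2}\mu_0(\{\bar K(\cdot,\sigma_t)\le\eta/4\})\ge\kappa e^{-t\varepsilon/2}$. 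Combining,
\[
\frac{A_t^{\varepsilon}}{B_t^{\varepsilon}}\ \le\ \frac{\varepsilon}{\kappa}\,e^{t\eta/8-t\varepsilon/2}\ =\ \frac{\eta}{2\kappa}\,e^{-t\eta/8}.
\]

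The right-hand side depends only on $\eta$, $\kappa$, and $t$, so there is $T^{*}=T^{*}(\eta,\kappa)$ with $(\eta/(2\kappa))e^{-t\eta/8}<\eta/2$ for all $t>T^{*}$; taking $T\equiv\max\{T^{*},2/\eta\}$ then yields $\int\bar K(\theta,\sigma_t)\mu_{t+1}(d\theta)\le\varepsilon+A_t^{\varepsilon}/B_t^{\varepsilon}<\eta/2+\eta/2=\eta$ for every $t>T$, every $\mu_0$, and every policy, which is the claim. I do not anticipate a serious obstacle here: the content is purely bookkeeping. The single point worth stating carefully is that in the proof of Theorem \ref{Theo:Berk} the analogue of the denominator bound was obtained via a compactness argument over $\Delta X$ together with Assumption \ref{ass:3} (full support of $\mu_0$), producing a constant $\kappa_\varepsilon$ that depends on $\mu_0$; that step is exactly what is bypassed by \emph{assuming} $\mu_0(\{\bar K(\cdot,\sigma_t)\le\eta/4\})\ge\kappa$ outright, and this is what decouples $T$ from $\mu_0$ and from $f$. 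One should also keep $t\ge 2/\eta$ so that the monotonicity of $x\mapsto xe^{-tx}$ is available on $\{\bar K(\cdot,\sigma_t)\ge\varepsilon\}$.
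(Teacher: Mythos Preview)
Your proposal is correct and is exactly the approach the paper intends: the paper's proof of this claim consists of the single sentence ``Directly follows from the proof of Theorem \ref{Theo:Berk},'' and you have carried out precisely that re-run with the constants made explicit (your $\varepsilon=\eta/2$ and the hypothesis $|L_t-K|<\eta/16$ playing the role of the paper's choice $\eta=\varepsilon/8$). Your observation that the hypothesis $\mu_0(\{\bar K(\cdot,\sigma_t)\le\eta/4\})\ge\kappa$ substitutes for the compactness/full-support argument yielding $\kappa_\varepsilon$ in (\ref{eq:kappa_eps}), and that this is what makes $T$ uniform in $\mu_0$ and $f$, is the essential point.
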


\begin{proof} Directly follows from the proof of Theorem \ref{Theo:Berk}.
\end{proof}

The next claim shows that if the posterior belief $\mu_{t+1}$ is
concentrated as stated in the above claim then the motion of the action
frequency $\sigma_{t}$ is described by the perturbed differential
inclusion. A difference from Theorem \ref{Theo:APT} is that here
the motion of the action frequency $\bm{w}$ exactly matches a solution
to the perturbed differential inclusion. In contrast, in Theorem \ref{Theo:APT},
we take the limit as $t\to\infty$, so a solution to the differential
inclusion is an approximation of the action frequency $\bm{w}$.

\begin{claim} \label{claim3} Let $F$ be an uhc policy correspondence.
Then for any $\varepsilon>0$, there is $\eta>0$ such that given
a sample path $h$, for any $t>\frac{1}{\varepsilon}$ such that $\int(K(\theta,\sigma_{t})-K^{\ast}(\sigma_{t}))\mu_{t}(d\theta)<\eta$,
there is $\bm{\sigma}\in\bm{S}_{\bm{w}(T)}^{\infty,\varepsilon}$
such that $\bm{w}(T+s)=\bm{\sigma}(s)$ for all $s\in[0,\frac{1}{t+1}]$,
where $T=\sum_{\tau=1}^{t}\frac{1}{\tau}$. \end{claim}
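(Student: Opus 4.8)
The plan is to show that, once $t$ is large and the belief has concentrated, the piecewise-linear interpolation $\bm{w}$ restricted to the single ``step'' $[T,T+\tfrac1{t+1}]$ is literally a piece of a solution of the perturbed differential inclusion (\ref{perturbeddi}), and then to extend that piece to a solution on all of $[0,\infty)$.

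First I would identify the action played at time $t+1$. Choosing $\eta$ small, the hypothesis $\int\bar{K}(\theta,\sigma_t)\mu_t(d\theta)<\eta$ forces, by Markov's inequality, $\mu_t$ to place mass at least $1-\sqrt\eta$ on $\{\theta:\bar{K}(\theta,\sigma_t)\le\sqrt\eta\}$; by uniform continuity of $\bar{K}$ on the compact set $\Theta\times\triangle X$ (Lemma \ref{Lemma:Theta(sigma)}) this set lies in an arbitrarily small neighborhood of $\Theta(\sigma_t)$, so $\mu_t$ — and, by the same computation one period later along the given path, $\mu_{t+1}$ — is weakly within $\rho(\eta)\to0$ of $\triangle\Theta(\sigma_t)$, i.e. $\mu_{t+1}\in M(\delta_t,\sigma_t)$ in the notation of the proof of Theorem \ref{Theo:APT} with $\delta_t\le\eta$. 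In particular $\delta_{x_{t+1}}\in\triangle F(\mu_{t+1})$ with $\mu_{t+1}$ nearly minimizing $K(\cdot,\sigma_t)$, and for $\eta$ small this gives $x_{t+1}\in F(\triangle\Theta(\sigma_t))$, hence $\delta_{x_{t+1}}\in\triangle F(\triangle\Theta(\sigma_t))$.

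Next I would compute the velocity of $\bm{w}$ on the step. Combining (\ref{eq:system0-1}) with (\ref{eq:interpolation}) gives, for $s\in[0,\tfrac1{t+1}]$, that $\bm{w}(T+s)=\sigma_t+s(\delta_{x_{t+1}}-\sigma_t)$ and $\dot{\bm{w}}(T+s)=\delta_{x_{t+1}}-\sigma_t$ for a.e.\ $s$. Because the whole step lies within $\tfrac1{t+1}<\varepsilon$ of $\bm{w}(T)=\sigma_t$ (this is where $t>\tfrac1\varepsilon$ enters), $\sigma_t$ is an admissible perturbation point $\tilde\sigma$ for every $\bm{w}(T+s)$ on the step, and by the previous paragraph $\delta_{x_{t+1}}-\sigma_t\in\triangle F(\triangle\Theta(\sigma_t))-\sigma_t$; thus $\dot{\bm{w}}(T+s)$ lies in the right-hand side of (\ref{perturbeddi}) evaluated at $\bm{w}(T+s)$, so $s\mapsto\bm{w}(T+s)$ solves (\ref{perturbeddi}) on $[0,\tfrac1{t+1}]$. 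To finish, I would extend this local solution to $[0,\infty)$: the right-hand side of (\ref{perturbeddi}) is a union, over the compact ball $\overline{B_\varepsilon(\sigma)}$, of the uhc compact-valued maps $\tilde\sigma\mapsto\triangle F(\triangle\Theta(\tilde\sigma))-\tilde\sigma$ (Lemma \ref{Lemma:Theta(sigma)} and Assumption \ref{ass:4}), hence is itself uhc, bounded and compact-valued; after passing to its closed convex hull it satisfies the standard hypotheses of \citet{aubin2012differential} under which every local solution extends globally, and our arc is already a selection of the un-convexified map, so it still solves the enlarged inclusion. Concatenating yields $\bm{\sigma}\in\bm{S}^{\infty,\varepsilon}_{\bm{w}(T)}$ with $\bm{\sigma}$ agreeing with $\bm{w}(T+\cdot)$ on $[0,\tfrac1{t+1}]$.

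The main obstacle is the word ``exactly'': unlike Theorem \ref{Theo:APT}, no vanishing error is available to absorb, so the piecewise-linear arc must be an honest trajectory of the perturbed inclusion on the whole interval, not merely in the $t\to\infty$ limit. Two points need care. First, the constant velocity $\delta_{x_{t+1}}-\sigma_t$ is not of the form $r-\bm{w}(T+s)$ with $r\in\triangle X$ once $s>0$ and $\sigma_t(x_{t+1})<1$, so the drift in (\ref{perturbeddi}) must effectively be recentered at the frozen point $\sigma_t$ (using the $\varepsilon$-slack), not at the moving point $\bm{w}(T+s)$. Second, from $x_{t+1}\in F(\mu_{t+1})$ one cannot in general conclude $x_{t+1}\in F(\mu')$ for $\mu'$ exactly in $\triangle\Theta(\sigma_t)$, since uhc of $F$ can drop actions in the limit; this is precisely why one wants to route the verification through the $\Xi/M$-enlargement used in the proof of Theorem \ref{Theo:APT} (``$\mu$ nearly minimizes $K(\cdot,\tilde\sigma)$'') rather than through the nominal $\triangle\Theta(\tilde\sigma)$. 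Making $\eta$ uniform over sample paths and sorting out the index bookkeeping between $\mu_t$ and $\mu_{t+1}$ is routine given Theorem \ref{Theo:Berk} and Claim \ref{claim2}, but should be stated explicitly.
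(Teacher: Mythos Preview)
Your proposal has the right skeleton, but the step you present as the body of the argument contains a genuine gap. You assert that ``for $\eta$ small this gives $x_{t+1}\in F(\triangle\Theta(\sigma_t))$,'' i.e.\ that $\sigma_t$ itself can serve as the perturbation point $\tilde\sigma$. No uniform $\eta>0$ achieves this. Near any $\sigma_0$ at which $\Theta(\cdot)$ is multi-valued one can choose $\sigma$ with $\Theta(\sigma)=\{\theta_1\}$ a singleton yet $\bar K(\theta_2,\sigma)$ arbitrarily small for some $\theta_2\ne\theta_1$; then $\mu=\delta_{\theta_2}$ satisfies $\int\bar K\,d\mu<\eta$ while $F(\mu)\not\subseteq F(\triangle\Theta(\sigma))$. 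The paper's proof is organized precisely to sidestep this and does \emph{not} take $\tilde\sigma=\sigma_t$. For each fixed $\sigma$ it uses uhc of $(\delta,\sigma')\mapsto M(\delta,\sigma')$ together with uhc of $F$ and finiteness of $X$ to find $(\varepsilon_\sigma,\eta_\sigma)$ such that every $\mu$ with $\int\bar K(\theta,\tilde\sigma)\mu(d\theta)<\eta_\sigma$ for some $\tilde\sigma\in B_{\varepsilon_\sigma}(\sigma)$ satisfies $F(\mu)\subseteq F(\triangle\Theta(\sigma))$; it then extracts a finite subcover $\{B_{\varepsilon_{\sigma_m}}(\sigma_m)\}$ of the compact set $\triangle X$ and sets $\eta=\min_m\eta_{\sigma_m}$. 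For an arbitrary $\sigma_t$ the perturbation point is the nearby cover center $\tilde\sigma=\sigma_m$, not $\sigma_t$. This compactness/finite-cover step is the entire substance of the paper's argument; the uniformity you dismiss as ``routine'' does not follow from your Markov-inequality route.

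On the two obstacles you flag: obstacle 2 is misdiagnosed. With $X$ finite, uhc of $F$ gives $F(\mu)\subseteq F(\mu')$ for $\mu$ sufficiently close to $\mu'$, so actions are \emph{gained}, not dropped, at the limit; the inclusion goes the direction you need. The real difficulty is the uniformity over $\sigma_t$ described above, and routing through the $\Xi/M$-enlargement cannot repair it because the perturbed inclusion (\ref{perturbeddi}) is formulated via $\triangle\Theta(\tilde\sigma)$, not an $M$-type set. Obstacle 1 (the constant velocity $\delta_{x_{t+1}}-\sigma_t$ is not of the form $r-\bm w(T+s)$ with $r\in\triangle X$ once $s>0$) is a legitimate observation, but the paper's ``it is sufficient to show\ldots'' passes over it as well; it is a wrinkle in the formulation of the claim rather than a point on which your approach diverges from the paper's.
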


\begin{proof} Pick $\varepsilon>0$ arbitrarily. It is sufficient
to show that there is $\eta>0$ such that for any $\sigma$ and for
any $\mu$ such that $\int(K(\theta,\sigma)-K^{\ast}(\sigma))\mu(d\theta)<\eta$,
there is $\tilde{\sigma}\in B_{\frac{\varepsilon}{2}}(\sigma)$ such
that $F(\mu)\subseteq F(\triangle\Theta(\tilde{\sigma}))$.

Note that for each $\sigma$, there is $\varepsilon_{\sigma}<\varepsilon$
and $\eta_{\sigma}$ such that $F(\mu)\subseteq F(\triangle\Theta(\sigma))$
for all $\mu$ such that $\int(K(\theta,\tilde{\sigma})-K^{\ast}(\tilde{\sigma}))\mu(d\theta)<\eta_{\sigma}$
for some $\tilde{\sigma}\in B_{\varepsilon_{\sigma}}(\sigma)$. Since
$\triangle X$ is compact, there is a finite subcover $\{B_{\varepsilon_{\sigma_{1}}}(\sigma_{1}),\cdots,B_{\varepsilon_{\sigma_{M}}}(\sigma_{M})\}$.
Let $\eta=\min_{m}\eta_{\sigma_{m}}>0$. This $\eta$ satisfies the
property we want. Indeed, for any $\sigma$ and $\mu$ such that $\int(K(\theta,\sigma)-K^{\ast}(\sigma))\mu(d\theta)<\eta$,
if we set $\tilde{\sigma}=\sigma_{m}$ such that $\sigma\in B_{\varepsilon_{\sigma_{m}}}(\sigma_{m})$,
we have $F(\mu)\subseteq F(\triangle\Theta(\tilde{\sigma}))$. \end{proof}

The next claim shows that to prove convergence to an attracting set
$A$, it suffices to show that $\sigma_{t}$ visits the basin of $A$
infinitely often with positive probability.

\begin{claim} \label{claim4} Suppose that given an initial prior
$\mu_{0}$ and a policy $f$, $\sigma_{t}$ visits the basin of $A$
infinitely often with positive probability, i.e., $P^{f}(\forall T\exists t>T\;\;\;\;\sigma_{t}\in\mathcal{U}_{A})>0$.
Then $P^{f}(\lim_{t\to\infty}d(\sigma_{t},A)=0)>0$. \end{claim}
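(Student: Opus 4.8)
The plan is to localize the argument already used for Proposition \ref{Prop:Attracting} so that it requires only a \emph{single} entry of the action frequency into the basin $\mathcal{U}_A$, rather than the global basin-covering hypothesis used there. Concretely, I would fix a sample path $h$ lying in the intersection of the probability-one event on which the conclusion of Theorem \ref{Theo:APT} holds and the event $\{\sigma_t\in\mathcal{U}_A \text{ infinitely often}\}$; by hypothesis this intersection has positive probability, so it suffices to show $\lim_{t\to\infty}d(\sigma_t,A)=0$ on every such path. Let $\bm{w}$ denote the continuous-time interpolation of $(\sigma_t)_t$ along this path.

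Fix any $\varepsilon>0$ small enough that $B_{2\varepsilon}(A)\subseteq\mathcal{U}_A$. The key steps are: (i) apply the defining (uniform) convergence property of an attracting set from Definition \ref{def:attracting}, with accuracy $\varepsilon/2$, to obtain $T>0$ such that every solution $\bm{\sigma}\in S_{\bm{\sigma}(0)}^{\infty}$ of the differential inclusion with $\bm{\sigma}(0)\in\mathcal{U}_A$ satisfies $d(\bm{\sigma}(s),A)<\varepsilon/2$ for all $s\ge T$; (ii) apply Theorem \ref{Theo:APT} on the interval $[0,2T]$ with accuracy $\varepsilon/2$ to obtain a (path-dependent) $\tilde T$ such that for every $t>\tilde T$ there is $\bm{\sigma}\in S_{\bm{w}(t)}^{2T}$, which we extend to a solution in $S_{\bm{w}(t)}^{\infty}$, with $\sup_{0\le s\le 2T}\lVert\bm{w}(t+s)-\bm{\sigma}(s)\rVert<\varepsilon/2$; (iii) since $\sigma_t\in\mathcal{U}_A$ infinitely often, pick an interpolation time $t_0>\tilde T$ with $\bm{w}(t_0)\in\mathcal{U}_A$; (iv) combine (i) and (ii) to get $d(\bm{w}(t_0+s),A)<\varepsilon$ for $s\in[T,2T]$, so in particular $\bm{w}(t_0+T)\in B_\varepsilon(A)\subseteq\mathcal{U}_A$, and then re-apply the same step starting from time $t_0+T$ and iterate over the blocks $[nT,(n+1)T]$ to conclude $d(\bm{w}(t_0+s),A)<\varepsilon$ for all $s\ge T$. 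Since the values $\sigma_t$ are among the values of $\bm{w}$, this yields $\limsup_{t\to\infty}d(\sigma_t,A)\le\varepsilon$, and letting $\varepsilon\downarrow 0$ gives the claim.

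The main point to get right is the iteration in step (iv): it uses crucially that the $\varepsilon$-neighborhood of $A$ is contained in $\mathcal{U}_A$ (so each re-entry into the basin is automatic) and that the time $T$ furnished by the attracting property is uniform over all initial conditions in $\mathcal{U}_A$ and controls \emph{all} times $s\ge T$, not merely a bounded window; these are exactly the features encoded in Definition \ref{def:attracting}. The remaining points are routine: the bookkeeping between the interpolation $\bm{w}$ and the discrete sequence $(\sigma_t)_t$ (the interpolation preserves the relevant accumulation behavior, as noted after equation (\ref{eq:interpolation})), and observing that the positive-probability hypothesis intersects the probability-one conclusion of Theorem \ref{Theo:APT} in a set of positive probability. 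Note that, unlike the rest of the proof of Proposition \ref{prop:converge_robust_attracting}, this claim uses neither robust attraction nor the perturbed differential inclusion — those enter only when verifying its hypothesis that $\sigma_t$ visits $\mathcal{U}_A$ infinitely often with positive probability.
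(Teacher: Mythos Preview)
Your proposal is correct and essentially identical to the paper's own proof: both intersect the probability-one event from Theorem~\ref{Theo:APT} with the positive-probability event of infinitely many visits to $\mathcal{U}_A$, then replay the block-iteration argument from the proof of Proposition~\ref{Prop:Attracting} (choosing $T$ from the uniform attracting property and $\tilde T$ from Theorem~\ref{Theo:APT} on $[0,2T]$, and chaining over successive $T$-blocks). Your closing remark that robust attraction is not used here is also accurate.
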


\begin{proof} Let $\mathcal{H}$ be the set of all $h$ which satisfies
the property stated in Theorem \ref{Theo:Berk}. Note that $P^{f}(\mathcal{H})=1$.
Also, let $\tilde{\mathcal{H}}$ be the set of all $h$ such that
$\sigma_{t}$ visits the basin of $A$ infinitely often, i.e., it
is the set of all $h$ such that for any $T$, there is $t>T$ such
that $\sigma_{t}\in\mathcal{U}_{A}$. Let $\mathcal{H}^{\ast}=\mathcal{H}\cap\tilde{\mathcal{H}}$.
By the assumption, we have $P^{f}(\mathcal{H}^{\ast})=P^{f}(\tilde{\mathcal{H}})>0$.

Pick a sample path $h\in\mathcal{H}^{\ast}$. To prove the claim,
it suffices to show that $\lim_{t\to\infty}d(\sigma_{t},A)=0$ given
this path. Pick an arbitrary $\varepsilon>0$. Without loss of generality,
we assume that $B_{\varepsilon}(A)$ is in the basin of attraction
$\mathcal{U}_{A}$.

Pick $T$ large enough that (\ref{inequality1}) holds for any initial
value $\bm{\sigma}(0)\in\mathcal{U}_{A}$, for any $\bm{\sigma}\in S_{\bm{\sigma}(0)}^{2T}$,
and for any $s\in[T,2T]$. Also, pick $\tilde{T}$ large enough that
(\ref{inequality2}) holds for any $t>\tilde{T}$ and for any $s\in[0,2T]$.

Since $\sigma_{t}$ visits $\mathcal{U}_{A}$ infinitely often, there
is $t>\tilde{T}$ such that $\bm{w}(t)\in\mathcal{U}_{A}$. Pick such
$t$. Then as in the proof of Proposition \ref{Prop:Attracting}(ii),
we can show that $\bm{w}$ will stay in the $\varepsilon$-neighborhood
of the set $A$ forever. Since $\varepsilon$ can be arbitrarily small,
$\lim_{t\to\infty}d(\bm{w}(t),A)=0$. \end{proof}

Now we will prove the proposition. Let $A$ be a robustly attracting
set, and let $\overline{\zeta}>0$ be such that $B_{\overline{\zeta}}(A)\subset\mathcal{U}_{A}$.
Let $\zeta$ and $\varepsilon$ be as in the definition of robustly
attracting set. Then pick $\eta$ as in Claim \ref{claim3}, pick
an arbitrary $\kappa>0$, and pick $T$ as stated in Claim \ref{claim2}.

Pick $t^{\ast}$ large enough that $\frac{1}{t^{\ast}}<\varepsilon$
and 
\begin{align}
\frac{t^{\ast}}{t^{\ast}+T}\sigma+\frac{T}{t^{\ast}+T}\tilde{\sigma}\in B_{\zeta}(A)\label{claiminequality3}
\end{align}
for all $\sigma\in B_{\frac{\zeta}{2}}(A)$ and $\tilde{\sigma}\in\triangle X$.
Now, consider the following hypothetical situation: 
\begin{itemize}
\item[(a)] The initial prior is $\mu_{0}$ such that $\mu_{0}(\{\theta:K(\theta,\sigma)-K^{\ast}(\sigma)\leq\frac{\eta}{4}\})>\kappa$
for all $\sigma$. The current period is $t^{\ast}+1$. 
\item[(b)] The action frequency in the past is close to $A$, in that $\sigma_{t^{\ast}}\in B_{\frac{\zeta}{2}}(A)$. 
\item[(c)] The past observation is close to the mean, in that $|L_{t^{\ast}}(\theta)-K(\theta,\sigma_{t^{\ast}})|<\frac{\eta}{16}$
for all $\theta$.\footnote{Claim \ref{claim1} ensures that this condition can be satisfied by
some consequence sequence.} 
\end{itemize}
Let $h^{t^{\ast}}$ be a history which satisfies all the properties
above. (Given a policy $f$, the probability of such a history $h^{t^{\ast}}$
may be zero, but this does not affect the following argument.) Let
$\mathcal{H}$ be the set of histories such that the history during
the first $t^{\ast}$ periods is exactly $h^{t^{\ast}}$ and 
\begin{align}
|L_{t^{\ast}+1,t}(\theta)-K(\theta,\sigma_{t^{\ast}+1,t})|<\frac{\eta}{16}\label{claiminequality2}
\end{align}
for all $t\geq t^{\ast}+T$, where $L_{t^{\ast}+1,t}$ is the sample
average of the likelihood from period $t^{\ast}+1$ to period $t$,
and $\sigma_{t^{\ast}+1,t}$ is the action frequency from period $t^{\ast}+1$
to period $t$. From Claim \ref{claim1}, we know $P^{f}(\mathcal{H}|h^{t^{\ast}})>\underline{q}$.

Pick a path $h\in\mathcal{H}$. We claim that given this path, $\sigma_{t}$
never leaves the basin of $A$ after period $t^{\ast}$. The proof
can be found at the end.

\begin{claim} \label{claim5} For each path $h\in\mathcal{H}$, $\sigma_{t}\in\mathcal{U}_{A}$
for all $t>t^{\ast}$. \end{claim}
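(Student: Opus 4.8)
The plan is to split the index range at $t^{\ast}+T$, treating the ``boundary layer'' $t^{\ast}<t\le t^{\ast}+T$ and the tail $t\ge t^{\ast}+T$ by different arguments. On the boundary layer I would bound $\sigma_t$ directly, using that it is a convex combination putting almost all weight on $\sigma_{t^{\ast}}\in B_{\zeta/2}(A)$. On the tail I would show that the interpolated action frequency $\bm{w}$ \emph{exactly} traces a solution of the $\varepsilon$-perturbed differential inclusion started inside $B_{\zeta}(A)$, and then invoke the robustly-attracting property of $A$ (Definition~\ref{def:robust_attracting}) to conclude that this solution, hence $\bm{w}$, never leaves $\mathcal{U}_A$.

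First I would record a uniform likelihood bound valid along every $h\in\mathcal{H}$. Because $L_t$ is a time-average and $K$ is linear in its second argument,
\[
L_t(\theta)=\tfrac{t^{\ast}}{t}L_{t^{\ast}}(\theta)+\tfrac{t-t^{\ast}}{t}L_{t^{\ast}+1,t}(\theta),\qquad K(\theta,\sigma_t)=\tfrac{t^{\ast}}{t}K(\theta,\sigma_{t^{\ast}})+\tfrac{t-t^{\ast}}{t}K(\theta,\sigma_{t^{\ast}+1,t}).
\]
Subtracting, and using condition~(c) of the hypothetical for the first block and (\ref{claiminequality2}) for the second, gives $\sup_{\theta}\lvert L_t(\theta)-K(\theta,\sigma_t)\rvert<\eta/16$ for all $t\ge t^{\ast}+T$. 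For the boundary layer, write $\sigma_t=(1-\lambda_t)\sigma_{t^{\ast}}+\lambda_t\bar{\sigma}_t$ with $\lambda_t=\tfrac{t-t^{\ast}}{t}\le\tfrac{T}{t^{\ast}+T}$ and $\bar{\sigma}_t\in\triangle X$; since $\sigma_{t^{\ast}}\in B_{\zeta/2}(A)$ and $\lambda_t$ is small by the choice of $t^{\ast}$ (this is precisely (\ref{claiminequality3}) at $t=t^{\ast}+T$, and the triangle inequality for $t^{\ast}<t<t^{\ast}+T$), it follows that $\sigma_t\in B_{\zeta}(A)\subseteq\mathcal{U}_A$; in particular $\sigma_{t^{\ast}+T}\in B_{\zeta}(A)$.

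For the tail, fix $t\ge t^{\ast}+T$. Condition~(a) supplies the $\mu_0$-mass bound $\mu_0(\{\theta:K(\theta,\sigma_t)-K^{\ast}(\sigma_t)\le\eta/4\})>\kappa$, we have $t>T$, and the previous paragraph supplies $\lvert L_t(\theta)-K(\theta,\sigma_t)\rvert<\eta/16$, so Claim~\ref{claim2} gives $\int\bigl(K(\theta,\sigma_t)-K^{\ast}(\sigma_t)\bigr)\mu_{t+1}(d\theta)<\eta$; then Claim~\ref{claim3} (applicable since $t>1/\varepsilon$) produces a solution of the $\varepsilon$-perturbed differential inclusion that coincides with $\bm{w}$ on $[\tau_t,\tau_{t+1}]$. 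Concatenating these trajectories over all $t\ge t^{\ast}+T$ — they agree with the continuous, piecewise-linear, Lipschitz curve $\bm{w}$ and share endpoints, so the glued curve is absolutely continuous and satisfies the inclusion at almost every point — yields a single $\bm{\sigma}\in\bm{S}^{\infty,\varepsilon}_{\sigma_{t^{\ast}+T}}$ with $\bm{\sigma}(s)=\bm{w}(\tau_{t^{\ast}+T}+s)$ for all $s\ge 0$ (note $\tau_{t^{\ast}+T}+s$ exhausts $[\tau_{t^{\ast}+T},\infty)$ since the harmonic series diverges). Since $\sigma_{t^{\ast}+T}\in B_{\zeta}(A)$ and $A$ is robustly attracting, Definition~\ref{def:robust_attracting} forces $\bm{\sigma}(s)\in\mathcal{U}_A$ for all $s\ge 0$, i.e.\ $\sigma_t\in\mathcal{U}_A$ for every $t\ge t^{\ast}+T$; together with the boundary-layer bound this gives $\sigma_t\in\mathcal{U}_A$ for all $t>t^{\ast}$, as claimed.

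The step I expect to be the main obstacle is the tail argument: chaining Claims~\ref{claim2} and~\ref{claim3} with the time indexing kept consistent between the posterior that governs the action on a given interpolation interval and the one whose concentration Claim~\ref{claim2} delivers, and then verifying that the countably many per-interval perturbed-DI pieces glue into a bona fide global solution of the perturbed inclusion to which Definition~\ref{def:robust_attracting} can be applied. The likelihood decomposition above and the convex-combination estimate in the boundary layer are the only other non-routine points; everything else is bookkeeping.
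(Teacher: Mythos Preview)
Your proposal is correct and follows essentially the same two-step argument as the paper: first use (b) and (\ref{claiminequality3}) to control $\sigma_t$ on the boundary layer $t^{\ast}<t\le t^{\ast}+T$, then combine (a), (c), (\ref{claiminequality2}) with Claims~\ref{claim2} and~\ref{claim3} to see the action frequency after $t^{\ast}+T$ follows a solution of the $\varepsilon$-perturbed differential inclusion started in $B_{\zeta}(A)$, which stays in $\mathcal{U}_A$ by robust attraction. You are in fact more explicit than the paper in two places it leaves implicit: the convex decomposition $L_t=\tfrac{t^{\ast}}{t}L_{t^{\ast}}+\tfrac{t-t^{\ast}}{t}L_{t^{\ast}+1,t}$ (and the analogous identity for $K$) needed to feed (c) and (\ref{claiminequality2}) into Claim~\ref{claim2}, and the concatenation of the per-interval pieces from Claim~\ref{claim3} into a single global perturbed-DI solution.
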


Let $\mu^{\ast}$ be the posterior belief induced by the initial prior
$\mu_{0}$ and the history $h^{t^{\ast}}$ above.

Now, consider a new game in which the agent's initial prior is $\mu^{\ast}$.
Since the agent's action is determined by the belief, her play in
this new game is exactly the same as her play in the continuation
game induced by the initial prior $\mu_{0}$ and the history $h^{t^{\ast}}$.
So Claim \ref{claim5} implies that in this new game, with positive
probability, the action frequency $\sigma_{t}$ will stay in the basin
$\mathcal{U}_{A}$ in all periods $t>\tilde{T}$, where $\tilde{T}$
is a sufficiently large number. (This is so because the action frequency
$\sigma_{t^{\ast}}$ during the first $t^{\ast}$ periods has almost
no impact on the action frequency $\sigma_{t}$ for large $t$.) Then
Claim \ref{claim4} implies that in this new game, the action frequency
$\sigma_{t}$ converges to $A$ with positive probability.

\subsubsection{Proof of Claim \ref{claim1}}

Let $P^{x}$ denote the probability distribution of the histories
$h=(x_{t},y_{t})_{t=1}^{\infty}$ when the agent chooses $x$ every
period,

\begin{claim} For any $\eta>0$, there is $T$ such that for any
action $x$, 
\[
P^{x}(\forall t\geq T\forall\theta\quad|L_{t}(\theta)-K(\theta,x)|<\eta)>0
\]
\end{claim}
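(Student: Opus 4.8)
The plan is to deduce this claim directly from Lemma~\ref{Lemma:uniformconvergence} together with the finiteness of $X$. First I would observe that for each fixed action $x\in X$, playing $x$ in every period is precisely the behavior induced by the constant policy $f\equiv x$, so that $P^{x}=P^{f}$ for this policy; under this measure the action frequency satisfies $\sigma_{t}=\delta_{x}$ for every $t$, and hence $K(\theta,\sigma_{t})=K(\theta,x)$ in the shorthand of the claim. Applying Lemma~\ref{Lemma:uniformconvergence} to this policy therefore yields
\[
\lim_{t\to\infty}\sup_{\theta\in\Theta}\left|L_{t}(\theta)-K(\theta,x)\right|=0\qquad P^{x}\text{-a.s.}
\]

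Next I would convert this almost-sure statement into a positive-probability statement with a \emph{deterministic} threshold. For $T\in\mathbb{N}$ set $E_{T}^{x}\equiv\{h:\ \sup_{\theta}|L_{t}(\theta)-K(\theta,x)|<\eta\text{ for all }t\geq T\}$. These events increase with $T$, and the displayed almost-sure limit implies that $\bigcup_{T}E_{T}^{x}$ has $P^{x}$-probability one; by continuity of $P^{x}$ from below, $P^{x}(E_{T}^{x})\to1$ as $T\to\infty$, so there is a finite $T_{x}$ with $P^{x}(E_{T_{x}}^{x})>0$ (indeed one can take the probability above $1/2$, though that is not needed).

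Finally I would use finiteness of $X$ to obtain a common threshold: put $T\equiv\max_{x\in X}T_{x}<\infty$. Since $T\geq T_{x}$ gives $E_{T}^{x}\supseteq E_{T_{x}}^{x}$ (imposing the inequality on the smaller index set $\{t\geq T\}$ is the weaker requirement), we get $P^{x}(E_{T}^{x})\geq P^{x}(E_{T_{x}}^{x})>0$ for every $x\in X$, which is exactly the asserted inequality.

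There is no substantive obstacle here: the analytic content is entirely carried by Lemma~\ref{Lemma:uniformconvergence}, and the only points requiring care are (i) recognizing that $P^{x}$ is a legitimate instance of $P^{f}$ so that the lemma applies, and (ii) the elementary bookkeeping — continuity of measure from below followed by a maximum over the finite set $X$ — that upgrades ``eventually, almost surely'' to ``from a single deterministic time $T$ onward, with positive probability, uniformly in $x$.'' If desired, the same argument delivers the stronger bound ``probability arbitrarily close to $1$,'' but the weaker $>0$ version suffices for the use of Claim~\ref{claim1} in the proof of Proposition~\ref{prop:converge_robust_attracting}.
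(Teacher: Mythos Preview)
Your proof is correct and follows essentially the same approach as the paper: invoke Lemma~\ref{Lemma:uniformconvergence} for the constant policy $f\equiv x$, convert the almost-sure statement into $P^{x}(E_{T}^{x})\to1$ via monotone continuity, and then take the maximum over the finitely many actions. The paper's proof is simply a terse two-line version of exactly this argument.
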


\begin{proof} Pick any $\eta>0$. From Lemma \ref{Lemma:uniformconvergence},
$\lim_{T\to\infty}P^{x}(\forall t\geq T\forall\theta\quad|L_{t}(\theta)-K(\theta,x)|<\eta)=1$.
This implies the result we want. \end{proof}

Now we will prove Claim \ref{claim1}. Let $L_{t}(\theta,x)=\frac{1}{t\sigma_{t}(x)}\sum_{\tau=1}^{t}1_{\{x_{\tau}=x\}}\log\frac{q(y_{t}|x_{t})}{q_{\theta}(y_{t}|x_{t})}$
be the sample average of the likelihood ratio, where the sample is
taken from the periods in which the agent chooses $x$. Note that
we have $L_{t}(\theta)=\sum_{x\in X}\sigma_{t}(x)L_{t}(\theta,x)$.

Pick $\eta>0$ arbitrarily, and pick $T$ as in the above claim. Let
$\mathcal{H}$ be the set of histories $h$ such that $|L_{t}(\theta,x)-K(\theta,x)|<\eta$
for all $x$ and $t$ such that $t\sigma_{t}(x)>T$. Then there is
$\underline{q}>0$ such that $P^{f}(\mathcal{H})>\underline{q}$ for
any initial prior $\mu_{0}$ and any policy $f$.

Pick an arbitrary $h\in\mathcal{H}$, and let $\xi>0$ be such that
$\left|\log\frac{q(y|x)}{q_{\theta}(y|x)}-\log\frac{q(\tilde{y}|x)}{q_{\theta}(\tilde{y}|x)}\right|<\xi$
for all $x$, $\theta$, $y$, and $\tilde{y}$. Then we have 
\[
|L_{t}(\theta,x)-K(\theta,x)|<\left\{ \begin{array}{ll}
\eta & \textrm{if }t\sigma_{t}(x)>T\\
\xi & \textrm{otherwise}
\end{array}\right.
\]
for all $x$, $\theta$, and $t$. This implies that $\sigma_{t}(x)|L_{t}(\theta,x)-K(\theta,x)|<\max\{\eta,\frac{T\xi}{t}\}$.
So for any $t>T^{\ast}\equiv\frac{T\xi}{\varepsilon}$, we have $\sigma_{t}(x)|L_{t}(\theta,x)-K(\theta,x)|<\eta$.
Hence for any $t>T^{\ast}$, 
\begin{align*}
|L_{t}(\theta)-K(\theta,\sigma_{t})|\leq\sum_{x\in X}\sigma_{t}(x)|L_{t}(\theta,x)-K(\theta,x)|<\eta.
\end{align*}
Since $K$ and $T^{\ast}$ are chosen independently of $h\in\mathcal{H}$,
this implies the result we want.

\subsubsection{Proof of Claim \ref{claim5}}

Pick $h$ as stated. From (b) and (\ref{claiminequality3}), we have
$\sigma_{t}\in B_{\zeta}(A)\subseteq\mathcal{U}_{A}$ for all $t\in\{t^{\ast}+1,\cdots,t^{\ast}+T\}$,
regardless of the agent's play during these periods.

So what remains is to show that $\sigma_{t}\in\mathcal{U}_{A}$ for
all $t>t^{\ast}+T$. From (\ref{claiminequality2}), (a), (c), and
Claim \ref{claim2}, we have $\int(K(\theta,\sigma_{t})-K^{\ast}(\sigma_{t}))\mu_{t+1}(d\theta)<\eta$
for all $t\geq t^{\ast}+T$. Then Claim \ref{claim3} implies that
the motion of the action frequency after period $t^{\ast}+T$ is described
by some solution to the $\varepsilon$-perturbed differential inclusion.
Since $\sigma_{t+T^{\ast}}\in B_{\zeta}(A)$ and $\sigma^{\ast}$
is robustly attracting, we have $\sigma_{t}\in B_{\overline{\zeta}}(A)\subseteq\mathcal{U}_{A}$
for all $t\geq t^{\ast}+T$.

\subsection{\label{pf:propmonotone}Proof of Proposition \ref{propmonotone}}

We will start with a useful lemma, which shows that Assumption \ref{assumptionidentifiability}
essentially requires single-peakedness of the Kullback-Leibler divergence
$K(\theta,\delta_{x})$. Let $\underline{\theta}=\min_{\sigma\in\triangle X}\theta(\sigma)$,
and let $\overline{\theta}=\max_{\sigma\in\triangle X}\theta(\sigma)$.
Then we have the following lemma. The proof can be found at the end.

\begin{lemma} \label{lemmasinglepeaked} If Assumption \ref{assumptionidentifiability}
holds, then for each action frequency $\sigma$, the Kullback-Leibler
divergence $K(\theta,\sigma)$ is \textit{single-peaked} with respect
to $\theta$ in $[\underline{\theta},\overline{\theta}]$, that is,
we have $K(\theta,\sigma)>K(\tilde{\theta},\sigma)$ for each $\theta\in[\underline{\theta},\theta(\sigma))$
and $\tilde{\theta}\in(\theta,\theta(\sigma)]$, and $K(\theta,\sigma)<K(\tilde{\theta},\sigma)$
for each $\theta\in[\theta(\sigma),\overline{\theta})$ and $\tilde{\theta}\in(\theta,\overline{\theta}]$.
\end{lemma}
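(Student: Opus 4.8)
The plan is to argue by contradiction, using two facts. First, under Assumption~\ref{assumptionidentifiability}(i) the correspondence $\Theta(\cdot)$ is single-valued, and since it is uhc by Lemma~\ref{Lemma:Theta(sigma)}, the map $\sigma\mapsto\theta(\sigma)$ is a continuous function on $\Delta X$. Second, $\Delta X$ is compact and connected, so the image $\theta(\Delta X)$ is a compact interval, which by the definitions of $\underline{\theta}$ and $\overline{\theta}$ is exactly $[\underline{\theta},\overline{\theta}]$; hence every $c\in[\underline{\theta},\overline{\theta}]$ equals $\theta(\sigma')$ --- i.e.\ is the unique minimizer of $K(\cdot,\sigma')$ --- for some $\sigma'\in\Delta X$. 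I will also use that $K(\theta,\cdot)$ is affine in $\sigma$, so along any segment $\sigma_\beta:=\beta\sigma+(1-\beta)\sigma'$ one has $K(\theta,\sigma_\beta)=\beta K(\theta,\sigma)+(1-\beta)K(\theta,\sigma')$ for all $\theta$.

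Fix $\sigma$. I first prove strict monotonicity on the left branch $[\underline{\theta},\theta(\sigma)]$; the right branch is symmetric. Suppose it fails, so there are $\underline{\theta}\le a<b\le\theta(\sigma)$ with $K(a,\sigma)\le K(b,\sigma)$. I claim $b<\theta(\sigma)$: otherwise $K(b,\sigma)=K^{*}(\sigma)\le K(a,\sigma)$, which together with $K(a,\sigma)\le K(b,\sigma)$ forces $K(a,\sigma)=K^{*}(\sigma)$, making $a\ne\theta(\sigma)$ a second minimizer of $K(\cdot,\sigma)$ and contradicting Assumption~\ref{assumptionidentifiability}(i). Since $a\in[\underline{\theta},\overline{\theta}]$, the first paragraph gives $\sigma'\in\Delta X$ with $\theta(\sigma')=a$; set $\sigma_\beta:=\beta\sigma+(1-\beta)\sigma'$, $\beta\in[0,1]$. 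Then $\beta\mapsto\theta(\sigma_\beta)$ is continuous with $\theta(\sigma_0)=a$ and $\theta(\sigma_1)=\theta(\sigma)$, and $a<b<\theta(\sigma)$, so the intermediate value theorem yields $\beta^{*}\in(0,1)$ with $\theta(\sigma_{\beta^{*}})=b$.

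The contradiction is now a sign computation. Since $b=\theta(\sigma_{\beta^{*}})$ is the unique minimizer of $K(\cdot,\sigma_{\beta^{*}})$ and $a\ne b$, we have $K(b,\sigma_{\beta^{*}})<K(a,\sigma_{\beta^{*}})$. But by affinity,
\[
K(b,\sigma_{\beta^{*}})-K(a,\sigma_{\beta^{*}})=\beta^{*}\bigl(K(b,\sigma)-K(a,\sigma)\bigr)+(1-\beta^{*})\bigl(K(b,\sigma')-K(a,\sigma')\bigr).
\]
The first parenthesis is $\ge0$ by the standing hypothesis $K(a,\sigma)\le K(b,\sigma)$, and the second is $>0$ because $a=\theta(\sigma')$ is the unique minimizer of $K(\cdot,\sigma')$ and $b\ne a$. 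As $\beta^{*}\in(0,1)$, the right-hand side is $>0$, contradicting $K(b,\sigma_{\beta^{*}})<K(a,\sigma_{\beta^{*}})$. Hence $K(\cdot,\sigma)$ is strictly decreasing on $[\underline{\theta},\theta(\sigma)]$.

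For the right branch, if $\theta(\sigma)\le a<b\le\overline{\theta}$ with $K(a,\sigma)\ge K(b,\sigma)$, the same use of uniqueness gives $a>\theta(\sigma)$; pick $\sigma''$ with $\theta(\sigma'')=b$, set $\sigma_\beta:=\beta\sigma+(1-\beta)\sigma''$, use the intermediate value theorem to get $\beta^{*}\in(0,1)$ with $\theta(\sigma_{\beta^{*}})=a$, and conclude from the affine decomposition $K(a,\sigma_{\beta^{*}})-K(b,\sigma_{\beta^{*}})=\beta^{*}\bigl(K(a,\sigma)-K(b,\sigma)\bigr)+(1-\beta^{*})\bigl(K(a,\sigma'')-K(b,\sigma'')\bigr)$, whose right-hand side is $>0$ (first term $\ge0$ by hypothesis, second $>0$ by uniqueness of $\theta(\sigma'')=b$), contradicting $K(a,\sigma_{\beta^{*}})<K(b,\sigma_{\beta^{*}})$. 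Note this argument uses only Assumption~\ref{assumptionidentifiability}(i) and the continuity of $K$ from Lemma~\ref{Lemma:Theta(sigma)}; part~(ii) is not needed for this lemma. I do not anticipate a real obstacle: the only points requiring care are forming the negation of single-peakedness correctly, extracting the strict inequality $b<\theta(\sigma)$ (resp.\ $a>\theta(\sigma)$) from uniqueness, and keeping the orientation of the inequalities straight in the final sign chase.
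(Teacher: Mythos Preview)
Your proof is correct and follows essentially the same approach as the paper's: argue by contradiction, exploit that $K(\theta,\cdot)$ is affine in $\sigma$, use continuity of $\theta(\cdot)$ together with the intermediate value theorem along a segment $\sigma_\beta$, and derive the contradiction from uniqueness of the minimizer. Your version is slightly more streamlined than the paper's, because by choosing $\sigma'$ with $\theta(\sigma')=a$ you guarantee $K(b,\sigma')-K(a,\sigma')>0$ directly and thereby avoid the two-case split the paper needs in its intermediate Claim (where $\tilde\sigma$ is arbitrary with $\theta(\tilde\sigma)<\theta(\sigma)$); your observation that only Assumption~\ref{assumptionidentifiability}(i) is used here is also correct and matches the paper's own remark.
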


\bigskip{}

\textbf{Proof of Lemma \ref{lemmasinglepeaked}:} We first prove the
following claim:

\begin{claim} \label{claimsinglepeaked} Under Assumption \ref{assumptionidentifiability},
for each $\sigma$ and $\tilde{\sigma}$ such that $\theta(\sigma)>\theta(\tilde{\sigma})$,
$K(\theta,\sigma)$ is strictly decreasing with respect to $\theta$
in $[\theta(\tilde{\sigma}),\theta(\sigma)]$, and $K(\theta,\tilde{\sigma})$
is strictly increasing with respect to $\theta$ in $[\theta(\tilde{\sigma}),\theta(\sigma)]$.
\end{claim}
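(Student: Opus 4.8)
The plan is to reduce both monotonicity assertions to a single one‑dimensional comparative‑statics fact along the segment $\sigma_{\beta}=\beta\sigma+(1-\beta)\tilde{\sigma}$, $\beta\in[0,1]$. Set $\ell(\theta)\equiv K(\theta,\sigma)-K(\theta,\tilde{\sigma})$, so that $K(\theta,\sigma_{\beta})=K(\theta,\tilde{\sigma})+\beta\,\ell(\theta)$ is affine in $\beta$, and equivalently $K(\theta,\sigma)=K(\theta,\sigma_{\beta})+(1-\beta)\ell(\theta)$ and $K(\theta,\tilde{\sigma})=K(\theta,\sigma_{\beta})-\beta\,\ell(\theta)$. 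By Assumption \ref{assumptionidentifiability}(i) each $K(\cdot,\sigma_{\beta})$ has a unique minimizer $g(\beta)\equiv\theta(\sigma_{\beta})$; since $\Theta(\cdot)$ is uhc (Lemma \ref{Lemma:Theta(sigma)}(ii)) and single‑valued it is continuous, and since $\beta\mapsto\sigma_{\beta}$ is continuous, $g:[0,1]\to[0,1]$ is continuous with $g(0)=\theta(\tilde{\sigma})<\theta(\sigma)=g(1)$. The whole statement will follow once I establish (a) that $g$ is nondecreasing, hence maps onto $[\theta(\tilde{\sigma}),\theta(\sigma)]$, and (b) that $\ell$ is strictly decreasing on $[\theta(\tilde{\sigma}),\theta(\sigma)]$.

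First I would record the revealed‑preference inequality: for $\beta<\beta'$, adding $K(g(\beta),\sigma_{\beta'})\ge K(g(\beta'),\sigma_{\beta'})$ and $K(g(\beta'),\sigma_{\beta})\ge K(g(\beta),\sigma_{\beta})$ and using affineness in $\beta$ yields $(\beta'-\beta)\ell(g(\beta))\ge(\beta'-\beta)\ell(g(\beta'))$, i.e. $\ell(g(\beta))\ge\ell(g(\beta'))$, with strict inequality whenever $g(\beta)\neq g(\beta')$ (uniqueness of the minimizers makes both added inequalities strict in that case).

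The main obstacle is (a), monotonicity of $g$. I would show that $g$ cannot oscillate: if $g(\beta_{1})=g(\beta_{3})=\theta^{\ast}$ with $\beta_{1}<\beta_{3}$, then $g\equiv\theta^{\ast}$ on $[\beta_{1},\beta_{3}]$. For $\beta\in(\beta_{1},\beta_{3})$ and any $\theta\neq\theta^{\ast}$, expand $K(\theta,\sigma_{\beta})-K(\theta^{\ast},\sigma_{\beta})$ around $\beta_{1}$ and around $\beta_{3}$: it equals $\bigl[K(\theta,\sigma_{\beta_{1}})-K(\theta^{\ast},\sigma_{\beta_{1}})\bigr]+(\beta-\beta_{1})\bigl(\ell(\theta)-\ell(\theta^{\ast})\bigr)$ and also $\bigl[K(\theta,\sigma_{\beta_{3}})-K(\theta^{\ast},\sigma_{\beta_{3}})\bigr]+(\beta-\beta_{3})\bigl(\ell(\theta)-\ell(\theta^{\ast})\bigr)$. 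Both bracketed terms are strictly positive (uniqueness of the minimizers at $\beta_{1}$ and $\beta_{3}$), and since $\beta-\beta_{1}>0>\beta-\beta_{3}$, whichever sign $\ell(\theta)-\ell(\theta^{\ast})$ has, one of the two expressions is a sum of a strictly positive and a nonnegative term; hence $K(\theta,\sigma_{\beta})>K(\theta^{\ast},\sigma_{\beta})$ and $g(\beta)=\theta^{\ast}$. A continuous non‑monotone function must attain some value at two points while taking a strictly different value between them (pick an intermediate level on one side of a local extremum and use the intermediate value theorem), which contradicts this; so $g$ is monotone, and as $g(0)<g(1)$ it is nondecreasing and onto $[\theta(\tilde{\sigma}),\theta(\sigma)]$. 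Then, for $\theta_{a}<\theta_{b}$ in that interval, choosing preimages $\beta_{a},\beta_{b}$ forces $\beta_{a}<\beta_{b}$ by monotonicity, so the first step gives $\ell(\theta_{a})>\ell(\theta_{b})$, which is (b).

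Finally I would conclude. For $\theta_{a}<\theta_{b}$ in $[\theta(\tilde{\sigma}),\theta(\sigma)]$, choose $\beta$ with $g(\beta)=\theta_{b}$; then $K(\theta_{a},\sigma_{\beta})>K(\theta_{b},\sigma_{\beta})$ (strict minimum) and $(1-\beta)\ell(\theta_{a})\ge(1-\beta)\ell(\theta_{b})$, hence $K(\theta_{a},\sigma)=K(\theta_{a},\sigma_{\beta})+(1-\beta)\ell(\theta_{a})>K(\theta_{b},\sigma_{\beta})+(1-\beta)\ell(\theta_{b})=K(\theta_{b},\sigma)$; symmetrically, choosing $\beta$ with $g(\beta)=\theta_{a}$ and using $K(\theta,\tilde{\sigma})=K(\theta,\sigma_{\beta})-\beta\,\ell(\theta)$ gives $K(\theta_{a},\tilde{\sigma})<K(\theta_{b},\tilde{\sigma})$. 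This proves the claim; note that only Assumption \ref{assumptionidentifiability}(i) is used, the second‑order condition (ii) being needed only later for the strict comparative statics in Proposition \ref{propmonotone}(iii).
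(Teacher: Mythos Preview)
Your proof is correct and takes a genuinely different route from the paper's. The paper argues by contradiction with a two–case split: assuming $\theta'<\theta''$ in the interval with $K(\theta',\sigma)\le K(\theta'',\sigma)$, it distinguishes whether $K(\theta',\tilde\sigma)\le K(\theta'',\tilde\sigma)$ (then $\theta''$ is never the unique minimizer of $K(\cdot,\sigma_\beta)$, contradicting continuity of $\beta\mapsto\theta(\sigma_\beta)$) or the reverse (then, picking $\beta'$ with $\theta(\sigma_{\beta'})=\theta'$, the affine structure shows $\theta''$ is never the minimizer for $\beta\ge\beta'$, again contradicting continuity). You instead build the result constructively: a revealed–preference inequality gives $\ell\circ g$ nonincreasing; a direct ``constancy between equal values'' lemma plus the intermediate value theorem gives monotonicity of $g$; these combine to make $\ell$ strictly decreasing on the interval; and the final step reads off both monotonicity conclusions from $K(\cdot,\sigma)=K(\cdot,\sigma_\beta)+(1-\beta)\ell$ and $K(\cdot,\tilde\sigma)=K(\cdot,\sigma_\beta)-\beta\ell$ at well–chosen $\beta$.

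What your approach buys is a cleaner logical ordering relative to the surrounding results: the paper proves this Claim first and then uses it (via an increasing–differences argument and Topkis) to obtain the weak monotonicity of $\beta\mapsto\theta(\sigma_\beta)$ in Proposition~\ref{propmonotone}(ii), whereas you obtain that monotonicity directly from the constancy lemma and continuity, and then derive the Claim from it. Both proofs use only Assumption~\ref{assumptionidentifiability}(i), as you note. The paper's case split is shorter on the page; your decomposition is more modular and makes the role of the affine structure and the revealed–preference inequality explicit.
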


\begin{proof} Pick $\sigma$ and $\tilde{\sigma}$ as stated. For
each $\beta\in[0,1]$, let $\sigma_{\beta}=\beta\sigma+(1-\beta)\tilde{\sigma}$.

We will prove only that $K(\theta,\sigma)$ is strictly decreasing
with respect to $\theta$ on $[\theta(\tilde{\sigma}),\theta(\sigma)]$.
Suppose not, so that there is $\theta^{\prime}$, $\theta^{\prime\prime}\in[\theta(\tilde{\sigma}),\theta(\sigma)]$
such that $\theta^{\prime}<\theta^{\prime\prime}$ and $K(\theta^{\prime},\sigma)\leq K(\theta^{\prime\prime},\sigma)$.
We consider the following two cases.

\textit{Case 1:} $K(\theta^{\prime},\tilde{\sigma})\leq K(\theta^{\prime\prime},\tilde{\sigma})$.
In this case, $K(\theta^{\prime},\sigma_{\beta})\leq K(\theta^{\prime\prime},\sigma_{\beta})$
for all $\beta$, so $\theta^{\prime\prime}$ cannot be the unique
minimizer of $K(\theta,\sigma_{\beta})$, i.e., $\theta(\sigma_{\beta})\neq\theta^{\prime\prime}$
for all $\beta$. But this is a contradiction, because $\theta(\sigma_{\beta})$
is continuous in $\beta$ and $\theta(\sigma_{0})\leq\theta^{\prime\prime}\leq\theta(\sigma_{1})$.

\textit{Case 2:} $K(\theta^{\prime},\tilde{\sigma})>K(\theta^{\prime\prime},\tilde{\sigma})$.
Let $\beta^{\prime}$ be such that $\theta(\sigma_{\beta^{\prime}})=\theta^{\prime}$.
Then we have $K(\theta^{\prime},\sigma_{\beta^{\prime}})<K(\theta^{\prime\prime},\sigma_{\beta^{\prime}})$,
which is equivalent to 
\[
\beta^{\prime}(K(\theta^{\prime},\sigma)-K(\theta^{\prime\prime},\sigma))<(1-\beta^{\prime})(K(\theta^{\prime\prime},\tilde{\sigma})-K(\theta^{\prime},\tilde{\sigma})).
\]
Then for all $\beta\geq\beta^{\prime}$, 
\[
\beta(K(\theta^{\prime},\sigma)-K(\theta^{\prime\prime},\sigma))<(1-\beta)(K(\theta^{\prime\prime},\tilde{\sigma})-K(\theta^{\prime},\tilde{\sigma})),
\]
which implies $K(\theta^{\prime},\sigma_{\beta})<K(\theta^{\prime\prime},\sigma_{\beta})$.
So $\theta(\sigma_{\beta})\neq\theta^{\prime\prime}$ for all $\beta\geq\beta^{\prime}$.
But this is a contradiction, because $\theta(\sigma_{\beta})$ is
continuous in $\beta$ and $\theta(\sigma_{\beta^{\prime}})<\theta^{\prime\prime}<\theta(\sigma_{1})$.
\end{proof}

Pick an arbitrary $\sigma^{\ast}$. We will show that the Kullback-Leibler
divergence $K(\theta,\sigma^{\ast})$ is single-peaked in $[\underline{\theta},\overline{\theta}]$.
First, consider the case in which $\theta(\sigma^{\ast})=\underline{\theta}$.
Let $\tilde{\sigma}=\sigma^{\ast}$, and let $\sigma$ be such that
$\theta(\sigma)=\overline{\theta}$. Then from the claim above, $K(\theta,\sigma^{\ast})$
is strictly increasing with respect to $\theta$ in $[\underline{\theta},\overline{\theta}]$,
which implies single-peakedness.

Next, consider the case in which $\theta(\sigma^{\ast})<\underline{\theta}$.
Let $\tilde{\sigma}=\sigma^{\ast}$, and let $\sigma$ be such that
$\theta(\sigma)=\overline{\theta}$. Then from the claim above, $K(\theta,\sigma^{\ast})$
is strictly increasing with respect to $\theta$ in $[\theta(\sigma^{\ast}),\overline{\theta}]$.
Similarly, letting $\sigma=\sigma^{\ast}$ and $\tilde{\sigma}$ be
such that $\theta(\tilde{\sigma})=\underline{\theta}$, the claim
above implies that $K(\theta,\sigma^{\ast})$ is strictly decreasing
with respect to $\theta$ in $[\underline{\theta},\theta(\sigma^{\ast})]$.
Hence $K(\theta,\sigma^{\ast})$ is single-peaked.\bigskip{}

\textbf{Proof of Proposition \ref{propmonotone}:}

Part (i): A standard algebra shows that 
\[
K(\theta,\sigma_{\beta})=\beta K(\theta,\sigma)+(1-\beta)K(\theta,\tilde{\sigma})
\]
for each $\theta$. Then the result follows immediately.

Part (ii): We first show that $\theta(\sigma_{\beta})\geq\theta(\tilde{\sigma})$
for all $\beta$. Suppose not so that there is $\beta_{1}\in(0,1)$
such that $\theta(\sigma_{\beta_{1}})<\theta(\tilde{\sigma})$. Then
since $\theta(\sigma_{\beta})$ is continuous in $\beta$ and $\theta(\sigma_{\beta_{1}})<\theta(\tilde{\sigma})<\theta(\sigma_{1})$,
there must be some $\beta_{2}$ such that $\beta_{1}<\beta_{2}<1$
and $\theta(\sigma_{\beta_{2}})=\theta(\tilde{\sigma})$. But then
from part (i), we have $\theta(\sigma_{\beta})=\theta(\tilde{\sigma})$
for all $\beta\in[0,\beta_{2}]$, and in particular $\theta(\sigma_{\beta_{1}})=\theta(\tilde{\sigma})$.
This is a contradiction.

Similarly, we can show that $\theta(\sigma_{\beta})\leq\theta(\sigma)$
for all $\beta$. Taken together, we have $\theta(\sigma_{\beta})\in[\theta(\tilde{\sigma}),\theta(\sigma)]$
for all $\beta$. Now, from Claim \ref{claimsinglepeaked} in the
proof of Lemma \ref{lemmasinglepeaked}, $K(\theta,\sigma)$ has increasing
differences, in that 
\[
\frac{\partial^{2}K(\theta,\sigma)}{\partial\theta\partial\beta}=\frac{\partial K(\theta,\sigma)}{\partial\theta}-\frac{\partial K(\theta,\tilde{\sigma})}{\partial\theta}\geq0.
\]
for all $\beta$ and $\theta\in[\theta(\tilde{\sigma}),\theta(\sigma)]$.
So the monotone selection theorem of Topkis implies the result we
want.

Part (iii): Pick $\beta_{1}$ and $\beta_{2}$ as stated. Let $\theta^{\ast}=\theta(\sigma_{\beta_{1}})$.
This $\theta^{\ast}$ is an interior solution, so it must solve the
first-order condition $\frac{\partial K(\theta^{\ast},\sigma_{\beta_{1}})}{\partial\theta}=0$,
which is equivalent to 
\begin{equation}
\beta_{1}\frac{\partial K(\theta^{\ast},\sigma)}{\partial\theta}+(1-\beta_{1})\frac{\partial K(\theta^{\ast},\tilde{\sigma})}{\partial\theta}=0.\label{foc}
\end{equation}

We claim that each term in the left-hand side is non-zero:

\begin{claim} $\frac{\partial K(\theta^{\ast},\sigma)}{\partial\theta}\neq0$.
\end{claim}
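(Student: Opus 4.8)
The plan is to argue by contradiction. Suppose $\frac{\partial K(\theta^{\ast},\sigma)}{\partial\theta}=0$. Since $\beta_1<\beta_2\le 1$ forces $\beta_1<1$, the first-order condition (\ref{foc}) then gives $\frac{\partial K(\theta^{\ast},\tilde\sigma)}{\partial\theta}=0$ as well. Because $K(\theta,\cdot)$ is linear in the action frequency, so that $K(\theta,\sigma_\beta)=\beta K(\theta,\sigma)+(1-\beta)K(\theta,\tilde\sigma)$ for every $\theta$, it follows that $\theta^{\ast}$ is a critical point of $\theta\mapsto K(\theta,\sigma_\beta)$ for every $\beta\in[0,1]$.

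Next I would track the second derivative. Define $G(\beta)\equiv\frac{\partial^{2}K(\theta^{\ast},\sigma_\beta)}{\partial\theta^{2}}$, which is affine in $\beta$ by the same linearity. I would show that $G$ must take incompatible signs. First, for any $\beta$ with $\theta(\sigma_\beta)=\theta^{\ast}$: since $\theta^{\ast}=\theta(\sigma_{\beta_1})\in(0,1)$, the interior minimizer $\theta^{\ast}$ of $K(\cdot,\sigma_\beta)$ satisfies the second-order condition of Assumption \ref{assumptionidentifiability}(ii), so $G(\beta)>0$. Second, for any $\beta$ with $\theta(\sigma_\beta)\ne\theta^{\ast}$: single-peakedness (Lemma \ref{lemmasinglepeaked}) implies $K(\cdot,\sigma_\beta)$ is strictly monotone on the neighborhood of $\theta^{\ast}$ that lies on the side of $\theta^{\ast}$ facing $\theta(\sigma_\beta)$; combining this with $\frac{\partial K(\theta^{\ast},\sigma_\beta)}{\partial\theta}=0$ and a second-order Taylor expansion of $K(\cdot,\sigma_\beta)$ at $\theta^{\ast}$ forces $G(\beta)\le 0$.

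Then I would invoke Proposition \ref{propmonotone}(i): the level set $\{\beta\in[0,1]:\theta(\sigma_\beta)=\theta^{\ast}\}$ is a closed interval $[\underline\gamma,\overline\gamma]\ni\beta_1$ (it contains the whole segment between any two of its points, since $\beta\mapsto\sigma_\beta$ is affine, and it is closed by continuity of $\theta(\cdot)$). On $[\underline\gamma,\overline\gamma]$ we have $G>0$, and on its complement $G\le 0$. Since $G$ is affine, hence continuous, $\underline\gamma>0$ would give $G(\underline\gamma)\le 0$ by taking limits from the left, contradicting $G(\underline\gamma)>0$; likewise $\overline\gamma<1$ is impossible. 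Hence $[\underline\gamma,\overline\gamma]=[0,1]$, so $\theta(\sigma)=\theta(\sigma_1)=\theta^{\ast}=\theta(\sigma_0)=\theta(\tilde\sigma)$, contradicting the standing hypothesis $\theta(\tilde\sigma)<\theta(\sigma)$ of part (iii). This proves $\frac{\partial K(\theta^{\ast},\sigma)}{\partial\theta}\neq0$.

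The main obstacle is the second bullet of the middle step: extracting $G(\beta)\le 0$ at a critical point that is not the minimizer of a single-peaked function. Lemma \ref{lemmasinglepeaked} only delivers strict monotonicity of function \emph{values} on $[\underline\theta,\overline\theta]$, so the Taylor argument must be run on whatever (possibly one-sided, if $\theta^{\ast}$ is an endpoint of $[\underline\theta,\overline\theta]$) neighborhood is available, and it relies on $\theta\mapsto K(\theta,\sigma)$ being $C^{2}$, which is implicit in Assumption \ref{assumptionidentifiability}(ii). The affine-in-$\beta$ structure of both $\frac{\partial K(\theta^{\ast},\sigma_\beta)}{\partial\theta}$ and $\frac{\partial^{2}K(\theta^{\ast},\sigma_\beta)}{\partial\theta^{2}}$ is what makes the sign bookkeeping close cleanly; everything else is routine. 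The symmetric companion claim $\frac{\partial K(\theta^{\ast},\tilde\sigma)}{\partial\theta}\ne 0$ follows by the identical argument with the roles of $\sigma$ and $\tilde\sigma$ exchanged.
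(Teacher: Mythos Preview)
Your proof is correct and uses essentially the same ingredients as the paper's proof: both suppose the derivative vanishes, deduce from \eqref{foc} that $\theta^{\ast}$ is a critical point of $K(\cdot,\sigma_{\beta})$ for every $\beta$, and then combine single-peakedness (Lemma~\ref{lemmasinglepeaked}) with the strict second-order condition of Assumption~\ref{assumptionidentifiability}(ii) and the affine-in-$\beta$ structure of the second derivative to conclude $\theta(\sigma_{\beta})=\theta^{\ast}$ for all $\beta$, contradicting $\theta(\tilde\sigma)<\theta(\sigma)$.

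The organizational difference is minor. The paper first checks the two endpoints directly, arguing that $\frac{\partial^{2}K(\theta^{\ast},\sigma)}{\partial\theta^{2}}\geq0$ and $\frac{\partial^{2}K(\theta^{\ast},\tilde\sigma)}{\partial\theta^{2}}\geq0$ (a negative value would make $\theta^{\ast}$ a local maximum, violating single-peakedness), then uses $G(\beta_{1})>0$ to get $G(\beta)>0$ on $(0,1]$ or $[0,1)$, and finally invokes single-peakedness again to conclude $\theta(\sigma_{\beta})=\theta^{\ast}$ there. You instead partition $[0,1]$ into the level set $\{\beta:\theta(\sigma_{\beta})=\theta^{\ast}\}$ (where $G>0$) and its complement (where you argue $G\leq0$ via Taylor plus single-peakedness), and let continuity of the affine $G$ collapse the complement. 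Your ``$G\leq0$ off the level set'' is exactly the contrapositive of the paper's ``first- and second-order conditions plus single-peakedness force $\theta^{\ast}$ to be the minimizer,'' so the two arguments are logically equivalent rearrangements of the same idea.
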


\begin{proof} Suppose not so that $\frac{\partial K(\theta^{\ast},\sigma)}{\partial\theta}=0$.
Then from (\ref{foc}), we have $\frac{\partial K(\theta^{\ast},\tilde{\sigma})}{\partial\theta}=0$,
that is, $\theta^{\ast}$ satisfies the first-order condition for
$\sigma$ and $\tilde{\sigma}$. Then we must have $\frac{\partial^{2}K(\theta^{\ast},\sigma)}{\partial\theta^{2}}\geq0$.
Indeed, if not and $\frac{\partial^{2}K(\theta^{\ast},\sigma)}{\partial\theta^{2}}<0$,
$\theta^{\ast}$ becomes the local maxima for $K(\theta,\sigma)$,
which contradicts with the single-peakedness of $K(\theta,\sigma)$.
Similarly we have $\frac{\partial^{2}K(\theta^{\ast},\tilde{\sigma})}{\partial\theta^{2}}\geq0$

Also, from Assumption \ref{assumptionidentifiability}, we know that
the second-order condition, $\frac{\partial^{2}K(\theta^{\ast},\sigma_{\beta_{1}})}{\partial\theta^{2}}>0$,
is satisfied for $\sigma_{\beta_{1}}$, which is equivalent to 
\[
\beta_{1}\frac{\partial^{2}K(\theta^{\ast},\sigma)}{\partial\theta^{2}}+(1-\beta_{1})\frac{\partial^{2}K(\theta^{\ast},\tilde{\sigma})}{\partial\theta^{2}}>0.
\]
This inequality implies $\frac{\partial^{2}K(\theta^{\ast},\sigma)}{\partial\theta^{2}}>0$
or $\frac{\partial^{2}K(\theta^{\ast},\tilde{\sigma})}{\partial\theta^{2}}>0$.
Suppose for now that $\frac{\partial^{2}K(\theta^{\ast},\sigma)}{\partial\theta^{2}}>0$.
(The argument for the case with $\frac{\partial^{2}K(\theta^{\ast},\tilde{\sigma})}{\partial\theta^{2}}>0$
is symmetric, so we will omit it.) Then since $\frac{\partial^{2}K(\theta^{\ast},\tilde{\sigma})}{\partial\theta^{2}}\geq0$,
we have $\frac{\partial^{2}K(\theta^{\ast},\sigma_{\beta})}{\partial\theta^{2}}>0$
for all $\beta\neq0$. Also, since $\frac{\partial K(\theta^{\ast},\sigma)}{\partial\theta}=\frac{\partial K(\theta^{\ast},\tilde{\sigma})}{\partial\theta}=0$,
we have $\frac{\partial K(\theta^{\ast},\sigma_{\beta})}{\partial\theta}=0$
for all $\beta$. So $\theta^{\ast}$ satisfies both the first-order
and the second-order conditions, which implies that $\theta(\sigma_{\beta})=\theta^{\ast}$
for all $\beta\neq0$. Then since $\theta(\sigma_{\beta})$ is continuous
in $\beta$, we have $\theta(\sigma_{\beta})=\theta^{\ast}$ for all
$\beta\in[0,1]$. But this is a contradiction, because we have $\theta(\tilde{\sigma})<\theta(\sigma)$.
\end{proof}

The above claim and (\ref{foc}) imply that 
\[
\frac{\partial K(\theta^{\ast},\sigma_{\beta_{2}})}{\partial\theta}=\beta_{2}\frac{\partial K(\theta^{\ast},\sigma)}{\partial\theta}+(1-\beta_{2})\frac{\partial K(\theta^{\ast},\tilde{\sigma})}{\partial\theta}\neq0,
\]
which means that $\theta^{\ast}$ cannot be the optimal solution for
$\beta_{2}$. (Note that $\theta^{\ast}$ is an interior value, so
the first-order condition is necessary for it to be optimal.) Then
from part (ii), the result follows.

\subsection{\label{pf:proponedimensional}Proof of Proposition \ref{proponedimensional}}

Let $\Theta^{\ast\ast}$ be the union of the equilibrium models and
the boundary points, that is, $\Theta^{\ast\ast}=\Theta^{\ast}\cup\{0,1\}$.
Since $\Theta^{\ast}$ is finite, it can be written as $\Theta^{\ast*}=\{\theta_{0},\theta_{1},\cdots,\theta_{N}\}$
where $0=\theta_{0}<\cdots<\theta_{N}=1$.

We first show that each interval $(\theta_{n},\theta_{n+1})$ has
a useful property.
\begin{lem}
\label{lemmaconvergence0} Each interval $(\theta_{n},\theta_{n+1})$
must satisfy one of the following properties:
\end{lem}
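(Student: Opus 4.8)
The plan is to establish the dichotomy that on each interval $(\theta_n,\theta_{n+1})$ the direction in which the closest model is pulled is constant: either $\theta(\sigma)>\theta$ for every $\theta\in(\theta_n,\theta_{n+1})$ and every $\sigma\in\triangle F(\delta_\theta)$ (so the closest model always moves up inside the interval), or $\theta(\sigma)<\theta$ for every such $\theta$ and $\sigma$ (so it always moves down). These are the two properties in the statement, and in particular they give $\theta(\delta_x)>\theta$ (resp.\ $<\theta$) for every pure $x\in F(\delta_\theta)$, since $\delta_x\in\triangle F(\delta_\theta)$.

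First I would record the elementary fact that for every $\theta\in(\theta_n,\theta_{n+1})$ we have $\theta\notin\theta(\triangle F(\delta_\theta))$. Indeed, unwinding Definition \ref{def:equilibrium} under Assumption \ref{assumptionidentifiability}(i) — which forces $\triangle\Theta(\sigma)=\{\delta_{\theta(\sigma)}\}$ — a model $\theta^{\ast}$ is an equilibrium model exactly when $\theta^{\ast}\in\theta(\triangle F(\delta_{\theta^{\ast}}))$; and by construction $(\theta_n,\theta_{n+1})$ contains no equilibrium models. Next I would observe that $\Phi(\theta):=\theta(\triangle F(\delta_\theta))$ is a compact subinterval of $[\underline{\theta},\overline{\theta}]$: the set $\triangle F(\delta_\theta)$ is a nonempty simplex, hence compact and connected, and $\theta(\cdot)$ is continuous (uniqueness of the minimizer from Assumption \ref{assumptionidentifiability}(i) together with upper hemi-continuity of $\Theta(\cdot)$ from Lemma \ref{Lemma:Theta(sigma)}), so $\Phi(\theta)$ is a compact connected subset of $\mathbb{R}$. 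Since $\theta\notin\Phi(\theta)$ and $\Phi(\theta)$ is an interval, it lies strictly on one side of $\theta$: call $\theta$ \emph{up-type} if $\min\Phi(\theta)>\theta$ and \emph{down-type} if $\max\Phi(\theta)<\theta$; exactly one holds for each $\theta$ in the interval.

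The heart of the argument is to show that the set $U$ of up-type points and the set $D$ of down-type points are each relatively open in $(\theta_n,\theta_{n+1})$. For this I would use that $\theta\mapsto\triangle F(\delta_\theta)$ is uhc (Assumption \ref{ass:4} plus continuity of $\theta\mapsto\delta_\theta$ in the weak topology) and that $\theta(\cdot)$ is continuous, so $\theta\mapsto\Phi(\theta)$ is uhc with compact values. If $\theta_0\in U$, pick $\varepsilon>0$ with $\min\Phi(\theta_0)>\theta_0+2\varepsilon$; uhc gives a neighborhood on which $\Phi(\theta)$ is within $\varepsilon$ of $\Phi(\theta_0)$, hence contained in $(\theta_0+\varepsilon,\infty)$, and after shrinking so that $\theta<\theta_0+\varepsilon$ holds on it, every such $\theta$ is up-type; symmetrically for $D$. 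Since $(\theta_n,\theta_{n+1})$ is connected and $U\sqcup D=(\theta_n,\theta_{n+1})$, one of $U,D$ must be empty, which yields the stated dichotomy.

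I expect the openness step to be the main obstacle, since it is where one must combine the merely upper-hemicontinuous behavior of $F$ with the continuity of $\theta(\cdot)$ and exploit the strict gap between $\theta$ and $\Phi(\theta)$ — the gap is precisely what upgrades "nearby sets" to "strictly on the same side of $\theta$." Two minor points also need care: verifying that $\Phi(\theta)$ is genuinely an interval (connectedness of the simplex is what is used here, not the monotone-selection structure), and translating the pointwise dichotomy into the phrasing in terms of pure actions $\delta_x$ (immediate, as $\delta_x\in\triangle F(\delta_\theta)$). If instead the statement is phrased dynamically — that $\theta(\bm{\sigma}(t))$ is monotone along any solution $\bm{\sigma}$ while it remains in $(\theta_n,\theta_{n+1})$ — the same dichotomy delivers it, with Proposition \ref{propmonotone} supplying the within-instant claim that moving $\bm{\sigma}(t)$ toward some $\sigma\in\triangle F(\delta_{\theta(\bm{\sigma}(t))})$ moves $\theta$ monotonically toward $\theta(\sigma)$.
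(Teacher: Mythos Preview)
Your proof is correct and follows essentially the same route as the paper's: both first rule out $\theta\in\theta(\triangle F(\delta_\theta))$ on the interval (the paper does this in two sub-steps, excluding $\theta(\delta_x)=\theta$ and then the sandwich $\theta(\delta_x)<\theta<\theta(\delta_{\tilde x})$ via mixing, which is exactly your ``$\Phi(\theta)$ is an interval not containing $\theta$'' observation), and then both partition the interval into up-type and down-type points and use upper hemi-continuity plus connectedness to force one side to be empty. The paper's final step is terser (``since $F(\delta_\theta)$ is upper hemi-continuous in $\theta$, one of these sets must be empty''); your explicit openness computation simply spells that out.
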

\begin{itemize}
\item[(i)] For each $\theta\in(\theta_{n},\theta_{n+1})$ and for each $x\in F(\delta_{\theta})$,
we have $\theta(\delta_{x})>\theta$.
\item[(ii)] For each $\theta\in(\theta_{n},\theta_{n+1})$ and for each $x\in F(\delta_{\theta})$,
we have $\theta(\delta_{x})<\theta$. 
\end{itemize}
\begin{proof}
If there is $\theta\in(\theta_{n},\theta_{n+1})$ such that $\theta(\delta_{x})=\theta$
for some $x\in F(\delta_{\theta})$, then this $\theta$ is an equilibrium
model, which is a contradiction. So such $\theta$ does not exist.

Similarly, if there is $\theta\in(\theta_{n},\theta_{n+1})$ such
that $\theta(\delta_{x})<\theta<\theta(\delta_{\tilde{x}})$ for some
$x,\tilde{x}\in F(\delta_{\theta})$, then there is a mixture $\sigma$
of $x$ and $\tilde{x}$ such that $\theta(\sigma)=\theta$, which
implies that $\theta$ is a mixed-strategy equilibrium model. So again
such $\theta$ does not exist.

Accordingly, $(\theta_{n},\theta_{n+1})$ must be the union of the
two sets, $\Theta_{1}$ and $\Theta_{2}$: $\Theta_{1}$ is the set
of all $\theta\in(\theta_{n},\theta_{n+1})$ such that $\theta(\delta_{x})>\theta$
for all $x\in F(\delta_{\theta})$. $\Theta_{2}$ is the set of all
$\theta\in(\theta_{n},\theta_{n+1})$ such that $\theta(\delta_{x})<\theta$
for all $x\in F(\delta_{\theta})$. However, since $F(\delta_{\theta})$
is upper hemi-continuous in $\theta$, one of these sets must be empty.
This implies the result.
\end{proof}
Next, we characterize how the KL minimizer $\theta(\sigma_{t})$ changes
over time, when the motion of $\sigma_{t}$ is determined by the differential
inclusion. Consider an interval $(\theta_{n},\theta_{n+1})$ which
satisfies property (i) in the lemma above. Pick a solution $\bm{\sigma}$
to the differential inclusion, and suppose that $\theta(\bm{\sigma}(t))\in(\theta_{n},\theta_{n+1})$
in the current period $t$. Then from property (i), the agent will
choose an action $x$ such that $\theta(\delta_{x})>\theta(\bm{\sigma}(t))$,
which means that $\theta(\bm{\sigma}(t))$ should move up and eventually
reaches (a neighborhood of) $\theta_{n+1}$. Also, once $\theta(\bm{\sigma}(t))$
goes above $\theta_{n+1}$, it cannot be lower than $\theta_{n+1}$
in any later period. Formally, we have the following result:
\begin{lem}
\label{lemmaconvergence1} Suppose that the interval $(\theta_{n},\theta_{n+1})$
satisfies property (i) stated in Lemma \ref{lemmaconvergence0}. Then
for any $\varepsilon>0$, there is $T>0$ such that given any initial
value $\bm{\sigma}(0)$ with $\theta(\bm{\sigma}(0))>\theta_{n}$
and given any solution $\bm{\sigma}\in S_{\bm{\sigma}(0)}^{\infty}$
to the differential inclusion, we have $\theta(\bm{\sigma}(t))>\theta_{n+1}-\varepsilon$
for all $t\geq T$.
\end{lem}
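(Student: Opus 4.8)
The plan is to combine the monotone-drift structure from Lemma \ref{lemmaconvergence0}(i) with a uniform lower bound on the speed at which $\theta(\bm{\sigma}(t))$ increases while it remains in the interval $(\theta_n,\theta_{n+1})$, and then argue that once $\theta(\bm{\sigma}(t))$ crosses $\theta_{n+1}-\varepsilon$ it cannot fall back below it. First I would fix $\varepsilon>0$ and consider the compact set $C_\varepsilon = \{\sigma\in\triangle X : \theta(\sigma)\in[\theta_n+\tfrac{\varepsilon}{2},\,\theta_{n+1}-\varepsilon]\}$ (intersected with the relevant region; if this set is empty the claim is trivial for such a $\sigma(0)$, so assume it is nonempty). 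By Lemma \ref{lemmaconvergence0}(i), for every $\sigma\in C_\varepsilon$ and every $x\in F(\delta_{\theta(\sigma)})$ we have $\theta(\delta_x)>\theta(\sigma)$; by upper hemi-continuity of $F$, of $\Theta(\cdot)$ (Lemma \ref{Lemma:Theta(sigma)}), and continuity of $\theta(\cdot)$ together with compactness of $C_\varepsilon$, this gap is bounded below: there is $\rho>0$ with $\theta(\delta_x)\geq\theta(\sigma)+\rho$ for all $\sigma\in C_\varepsilon$ and $x\in F(\delta_{\theta(\sigma)})$.

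Next I would translate this into a drift bound on $t\mapsto\theta(\bm{\sigma}(t))$. Along a solution, $\dot{\bm\sigma}(t)=\hat\sigma-\bm\sigma(t)$ for some $\hat\sigma\in\triangle F(\delta_{\theta(\bm\sigma(t))})$, i.e.\ $\bm\sigma(t+h)$ moves (infinitesimally) toward a mixture of Dirac actions each of whose closest model exceeds $\theta(\bm\sigma(t))$ by at least $\rho$. By Proposition \ref{propmonotone}(ii), the map $\beta\mapsto\theta(\beta\,\hat\sigma+(1-\beta)\bm\sigma(t))$ is weakly increasing, and (using part (iii) and the strictness afforded by Assumption \ref{assumptionidentifiability}(ii), or more simply a direct convexity/continuity argument) this yields a strictly positive lower bound $c_\varepsilon>0$ on the lower Dini derivative of $t\mapsto\theta(\bm\sigma(t))$ whenever $\theta(\bm\sigma(t))\in[\theta_n+\tfrac{\varepsilon}{2},\theta_{n+1}-\varepsilon]$. (Here one has to be a little careful because $\theta(\cdot)$ need only be continuous, not differentiable, so I would phrase the monotone growth in terms of: for every such $t$ there is $h_0>0$ with $\theta(\bm\sigma(t+h))\geq\theta(\bm\sigma(t))+c_\varepsilon h$ for $h\in[0,h_0]$, obtained from the monotone-selection bound applied to the convex combination $\bm\sigma(t+h)\approx (1-h)\bm\sigma(t)+h\hat\sigma$.) Setting $T=2/c_\varepsilon$ (plus the finite time to first enter the interval from $\theta(\bm\sigma(0))>\theta_n$, using continuity and the fact that near $\theta_n$ the drift still points upward by Lemma \ref{lemmaconvergence0}(i) applied to the whole open interval), the value $\theta(\bm\sigma(t))$ must exceed $\theta_{n+1}-\varepsilon$ by time $T$.

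The final step is a no-return argument: once $\theta(\bm\sigma(t_0))>\theta_{n+1}-\varepsilon$, I claim $\theta(\bm\sigma(t))>\theta_{n+1}-\varepsilon$ for all $t\geq t_0$. If not, by continuity there is a first time $t_1>t_0$ with $\theta(\bm\sigma(t_1))=\theta_{n+1}-\varepsilon$ approached from above, so just below $t_1$ the value lies in $(\theta_{n+1}-\varepsilon,\theta_{n+1})\subseteq(\theta_n,\theta_{n+1})$, where Lemma \ref{lemmaconvergence0}(i) forces $\theta(\delta_x)>\theta(\bm\sigma(t))$ for every chosen $x$, hence (by Proposition \ref{propmonotone}(ii)) $\theta(\bm\sigma(\cdot))$ is locally nondecreasing there — contradicting a decrease through $\theta_{n+1}-\varepsilon$. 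This also handles the need to state the bound uniformly over all initial values with $\theta(\bm\sigma(0))>\theta_n$ and all solutions $\bm\sigma\in S_{\bm\sigma(0)}^\infty$, since $\rho$ and $c_\varepsilon$ depend only on $\varepsilon$ and the primitives, not on the particular solution.

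The main obstacle I anticipate is the second step: converting the pointwise gap $\theta(\delta_x)\geq\theta(\sigma)+\rho$ into a genuine uniform \emph{rate} of increase of $t\mapsto\theta(\bm\sigma(t))$ without assuming differentiability of $\theta(\cdot)$. The clean way around this is to work entirely with the monotone-comparison statements of Proposition \ref{propmonotone} applied to the line segment from $\bm\sigma(t)$ to the target mixture $\hat\sigma$, extracting a lower bound on $\theta(\bm\sigma(t)+h(\hat\sigma-\bm\sigma(t)))-\theta(\bm\sigma(t))$ that is linear in $h$ and uniform on the compact set $C_\varepsilon$ — Assumption \ref{assumptionidentifiability}(ii) (the strict second-order condition at interior minimizers), which underlies Proposition \ref{propmonotone}(iii), is exactly what prevents this rate from degenerating to zero, so this is where that assumption is used.
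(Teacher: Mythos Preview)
Your strategy is genuinely different from the paper's, and the difference is exactly at the step you flag as the main obstacle. You try to bound the rate of change of $t\mapsto\theta(\bm\sigma(t))$ directly, using the uniform gap $\theta(\delta_x)\geq\theta(\sigma)+\rho$ together with Proposition~\ref{propmonotone}. The difficulty is that Proposition~\ref{propmonotone}(iii) gives only \emph{strict} monotonicity of $\beta\mapsto\theta(\sigma_\beta)$, not a linear-in-$h$ lower bound; so the claim that the Dini derivative is bounded below by some $c_\varepsilon>0$ does not follow from monotone comparison alone. It can be rescued, but you need more: Assumption~\ref{assumptionidentifiability}(ii) together with the implicit function theorem makes $\sigma\mapsto\theta(\sigma)$ continuously differentiable on $\{\sigma:\theta(\sigma)\in(0,1)\}$, and then one checks (via the first-order condition and linearity of $K$ in $\sigma$) that $\nabla\theta(\sigma)\cdot(\hat\sigma-\sigma)>0$ whenever $\theta(\hat\sigma)>\theta(\sigma)$, with uniformity coming from compactness of the graph of $\sigma\mapsto\triangle F(\delta_{\theta(\sigma)})$ over $C_\varepsilon$. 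None of that is in your proposal, and it is not a triviality.

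The paper sidesteps this entirely by working in the linear $\sigma$-coordinates rather than through the nonlinear map $\theta(\cdot)$. Since Proposition~\ref{propmonotone}(ii) makes the sublevel and superlevel sets $\{\sigma:\theta(\sigma)\geq c\}$ convex, a separating hyperplane $(\lambda,k)$ exists, and the linear Lyapunov function $\sigma\mapsto\lambda\cdot\sigma$ has an explicit drift bound $\lambda\cdot\dot{\bm\sigma}(t)\geq k-\lambda\cdot\bm\sigma(t)$ coming straight from $\dot{\bm\sigma}(t)=\hat\sigma-\bm\sigma(t)$ with $\hat\sigma$ on the far side of the hyperplane. No differentiability of $\theta$ is needed. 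The price is a case split: when some action $x\in\bigcup_{\theta\in(\theta_n,\theta_{n+1})}F(\delta_\theta)$ has $\theta(\delta_x)<\theta_{n+1}$, a single hyperplane does not suffice, and the paper uses a finite iterative argument (ordering such actions by the maximal $\theta$ at which they are chosen, and crossing each threshold in uniformly bounded time). Your gap-based approach, if completed, would handle both cases at once, which is a mild advantage; but as written the rate step is the missing ingredient. Your no-return argument is essentially correct.
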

\begin{proof}
Let $X^{\ast}=\cup_{\theta\in(\theta_{n},\theta_{n+1})}F(\delta_{\theta})$.
We first consider the special case in which $\theta(\delta_{x})\geq\theta_{n+1}$
for all $x\in X^{\ast}$. Then we will explain how to extend the proof
for a general case.

\textit{Case 1:} $\theta(\delta_{x})\geq\theta_{n+1}$ \textit{for
all} $x\in X^{\ast}$. Let $\mathcal{X}$ be the set of all mixed
strategies $\sigma$ such that $\theta(\sigma)\geq\theta_{n+1}$.
From Proposition \ref{propmonotone}(ii), this set is convex. Similarly,
the set $\triangle X\setminus\mathcal{X}$ is convex. So there is
a hyperplane $H$ which separates these two sets; i.e., there is a
vector $\lambda\in\mathbb{R}^{|X|}$ and $k\in\mathbb{R}$ such that
$\lambda\cdot\sigma\geq k$ for all $\sigma$ such that $\theta(\sigma)\geq\theta_{n+1}$,
and $\lambda\cdot\sigma<k$ for all $\sigma$ such that $\theta(\sigma)<\theta_{n+1}$.
From Proposition \ref{propmonotone}(ii), for any $\sigma\in\triangle X^{\ast}$,
we have $\theta(\sigma)\geq\theta_{n+1}$ and hence $\lambda\cdot\sigma\geq k$.

Pick an arbitrary solution $\bm{\sigma}$ to the differential inclusion.
Pick any time $t$ such that $\theta(\bm{\sigma}(t))\in(\theta_{n},\theta_{n+1})$.
Then we have 
\begin{align}
\dot{\bm{\sigma}}(t)=\sigma-\bm{\sigma}(t)\label{equation1}
\end{align}
for some $\sigma\in\triangle X^{\ast}$, and also we have 
\begin{align}
\lambda\cdot\dot{\bm{\sigma}}(t)=\lambda\cdot(\sigma-\bm{\sigma}(t))\geq k-\lambda\cdot\bm{\sigma}(t)>0.\label{equation2}
\end{align}
The first equation (\ref{equation1}), together with Proposition \ref{propmonotone}(ii),
implies that $\theta(\bm{\sigma}(t))$ weakly increases as time goes
for all these $t$. That is, if $\theta(\bm{\sigma}(t))\in(\theta_{n},\theta_{n+1})$
in the current time $t$, then we have $\theta(\bm{\sigma}(t+\eta))\geq\theta(\bm{\sigma}(t))$
at the next instant $t+\eta$. The second equation (\ref{equation2})
implies that $\lambda\cdot\dot{\bm{\sigma}}(t)$ strictly increases
as time goes. So $\bm{\sigma}(t)$ moves toward the hyperplane $H$
if $\theta(\bm{\sigma}(t))\in(\theta_{n},\theta_{n+1})$ in the current
time $t$.

These observations immediately imply the result we want. Pick an arbitrary
$\varepsilon>0$, and let $\tilde{\varepsilon}>0$ be such that $\theta(\sigma)>\theta_{n+1}-\varepsilon$
for all $\sigma$ such that $\lambda\cdot\sigma>k-\tilde{\varepsilon}$.
Pick $T$ large enough that 
\begin{align}
\tilde{\varepsilon}T>k-\lambda\cdot\sigma\label{equation3}
\end{align}
for all $\sigma$. From (\ref{equation2}), if $\lambda\cdot\bm{\sigma}(t)>k-\tilde{\varepsilon}$
in the current period $t$, we have $\lambda\cdot\dot{\bm{\sigma}}(t)\geq\tilde{\varepsilon}$
that is, $\lambda\cdot\bm{\sigma}(t)$ increases at a rate at least
$\tilde{\varepsilon}$. Then from (\ref{equation3}), given any initial
value $\bm{\sigma}(0)$ with $\theta(\bm{\sigma}(0))>\theta_{n}$,
there is $t<T$ such that $\lambda\cdot\bm{\sigma}(t)>k-\tilde{\varepsilon}$,
which implies $\theta(\bm{\sigma}(t))>\theta_{n+1}-\varepsilon$.
Also (\ref{equation1}) implies that after this time $t$, $\theta(\bm{\sigma}(\tilde{t}))$
cannot fall below $\theta_{n+1}-\varepsilon$, that is, $\theta(\bm{\sigma}(\tilde{t}))>\theta_{n+1}-\varepsilon$
for all $\tilde{t}>t$. This implies the result, because $t<T$.

\textit{Case 2:} $\theta(\delta_{x})<\theta_{n+1}$ \textit{for some}
$x\in X^{\ast}$. Let $X^{\ast\ast}=\{x_{1},x_{2},\cdots,x_{M}\}$
denote the set of all $x\in X^{\ast}$ such that $\theta(\delta_{x})<\theta_{n+1}$.
For each action $x_{m}$, let $\xi_{m}$ denote the maximal value
of $\theta\in(\theta_{n},\theta_{n+1})$ such that $x_{m}\in F(\delta_{\theta})$.
Note that the maximum exists, because $F$ is upper hemi-continuous.
Also, by the assumption, we have $\xi_{m}<\theta(\delta_{x_{m}})$.
Without loss of generality, assume that $\theta_{n}<\xi_{1}\leq\xi_{2}\cdots\leq\xi_{M}<\theta_{n+1}$.

Then we can show that there is $T_{1}$ such that given any initial
value $\bm{\sigma}(0)$ with $\theta(\bm{\sigma}(0))\in(\theta_{1},\xi_{1}]$
and given any solution $\bm{\sigma}$ to the differential inclusion,
we have $\theta(\bm{\sigma}(t))>\xi_{1}$ for some time $t<T_{1}$.
The proof is very similar to the argument in the previous case: Let
$\lambda_{1}$ and $k_{1}$ be such that $\lambda_{1}\cdot\sigma\geq k_{1}$
for all $\theta(\sigma)\geq\theta(\delta_{x_{1}})$ and $\lambda_{1}\cdot\sigma<k_{1}$
for all $\theta(\sigma)<\theta(\delta_{x_{1}})$. Then for any $t$
such that $\theta(\bm{\sigma}(t))\in(\theta_{n},\xi_{1}]$, we have
$\dot{\bm{\sigma}}(t)=\sigma-\bm{\sigma}(t)$ for some $\sigma\in\triangle X^{\ast}$,
and also 
\[
\lambda_{1}\cdot\dot{\bm{\sigma}}(t)=\lambda_{1}\cdot(\sigma-\bm{\sigma}(t))>k_{1}-\lambda_{1}\cdot\bm{\sigma}(t)>0.
\]
Note that $k_{1}-\lambda_{1}\cdot\bm{\sigma}(t)$ is bounded away
from zero uniformly in $\bm{\sigma}(t)$ with $\theta(\bm{\sigma}(t))\in(\theta_{n},\xi_{1}]$,
because property (i) in Lemma \ref{lemmaconvergence0} ensures $\theta(\delta_{x_{m}})>\xi_{m}$
for each $m$. This immediately implies the existence of $T_{1}$.

Similarly, there is $T_{2}$ such that given any initial value $\bm{\sigma}(0)$
with $\theta(\bm{\sigma}(0))\in(\xi_{1},\xi_{2}]$ and given any solution
$\bm{\sigma}$ to the differential inclusion, we have $\theta(\bm{\sigma}(t))>\xi_{2}$
for some time $t<T_{2}$. Again the proof is very similar to the argument
in Case 1; the only difference is that here we use the fact that the
action $x_{1}$ is never chosen when $\theta(\bm{\sigma}(t))\in(\xi_{1},\xi_{2}]$.

We iterate this process and define $T_{1}$, $T_{2}$, $\cdots$,
$T_{M}$. Also, pick an arbitrarily small $\varepsilon>0$, and let
$T_{M+1}$ be such that given any initial value $\bm{\sigma}(0)$
with $\theta(\bm{\sigma}(0))\in(\xi_{M},\theta_{n+1})$ and given
any solution $\bm{\sigma}$ to the differential inclusion, we have
$\theta(\bm{\sigma}(t))>\theta_{n+1}-\varepsilon$ for some time $t<T_{M+1}$.
Then let $T=T_{1}+\cdots+T_{M+1}$. This $(\varepsilon,T)$ obviously
satisfies the property stated in the lemma.
\end{proof}
The next lemma relates the result in the previous lemma to the motion
of $\theta(\bm{w}(t))$, where $\bm{w}(t)$ is the actual frequency.
It shows that if $\theta(\bm{w}(t))$ visits the interval $(\theta_{n},\theta_{n+1})$
infinitely often, then after a long time, $\theta(\bm{w}(t))$ cannot
be less than $\theta_{n+1}$. That is, $\theta(\bm{w}(t))$ cannot
move against the solution to the differential inclusion in the long
run.
\begin{lem}
\label{lemmaconvergence2} Consider an interval $(\theta_{n},\theta_{n+1})$
which satisfies property (i) in Lemma \ref{lemmaconvergence0}. Pick
a sample path $h$ such that the property stated in Theorem 3 is satisfied
and such that $\theta(\bm{w}(t))$ exceeds $\theta_{n}$ infinitely
often, i.e., for any $T>0$, there is $t>T$ such that $\theta(\bm{w}(t))>\theta_{n}$.
Then $\liminf_{t\to\infty}\theta(\bm{w}(t))\geq\theta_{n+1}$.
\end{lem}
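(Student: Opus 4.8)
The plan is to bootstrap the finite-horizon tracking guarantee of Theorem~\ref{Theo:APT} using the flow-level absorption property supplied by Lemma~\ref{lemmaconvergence1}. Fix the sample path $h$ as in the statement and fix an arbitrary $\varepsilon\in(0,\theta_{n+1}-\theta_{n})$; it suffices to exhibit a time after which $\theta(\bm{w}(t))>\theta_{n+1}-\varepsilon$. First I would apply Lemma~\ref{lemmaconvergence1} with tolerance $\varepsilon/2$ to obtain a horizon $T>0$ such that every solution $\bm{\sigma}\in S_{\bm{\sigma}(0)}^{\infty}$ of the differential inclusion with $\theta(\bm{\sigma}(0))>\theta_{n}$ satisfies $\theta(\bm{\sigma}(s))>\theta_{n+1}-\varepsilon/2$ for all $s\ge T$. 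Since $\Delta X$ is compact and $\sigma\mapsto\theta(\sigma)$ is continuous (Assumption~\ref{assumptionidentifiability}(i) together with upper hemi-continuity of $\Theta(\cdot)$), it is uniformly continuous, so there is $\delta>0$ with $\lVert\sigma-\sigma'\rVert<\delta\Rightarrow|\theta(\sigma)-\theta(\sigma')|<\varepsilon/4$. By Theorem~\ref{Theo:APT} applied with horizon $2T$, there is $\bar t$ such that for every $t\ge\bar t$ some $\bm{\sigma}\in S_{\bm{w}(t)}^{2T}$ (necessarily with $\bm{\sigma}(0)=\bm{w}(t)$) satisfies $\sup_{0\le s\le 2T}\lVert\bm{w}(t+s)-\bm{\sigma}(s)\rVert<\delta$; and since the hypothesis guarantees $\theta(\bm{w}(t))>\theta_{n}$ for arbitrarily large $t$, I can choose $t^{\ast}\ge\bar t$ with $\theta(\bm{w}(t^{\ast}))>\theta_{n}$.

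The heart of the argument is an induction over the windows $[t^{\ast}+kT,t^{\ast}+(k+1)T]$, $k\ge 0$, whose stage-$k$ conclusion is that $\theta(\bm{w}(t))>\theta_{n+1}-3\varepsilon/4$ for every $t\in[t^{\ast}+(k+1)T,t^{\ast}+(k+2)T]$. The quantity carried through the induction is $\theta(\bm{w}(t^{\ast}+kT))>\theta_{n}$: this holds for $k=0$ by the choice of $t^{\ast}$, and for $k\ge 1$ because $t^{\ast}+kT$ is the left endpoint of the window controlled at stage $k-1$, where the bound $\theta(\bm{w}(\cdot))>\theta_{n+1}-3\varepsilon/4>\theta_{n}$ was established (here $\theta_{n+1}-3\varepsilon/4>\theta_{n}$ because $\varepsilon<\theta_{n+1}-\theta_{n}$). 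Given $\theta(\bm{w}(t^{\ast}+kT))>\theta_{n}$, I would apply Theorem~\ref{Theo:APT} at time $t^{\ast}+kT\ge\bar t$ to obtain a differential-inclusion solution $\bm{\sigma}_{k}$ with $\bm{\sigma}_{k}(0)=\bm{w}(t^{\ast}+kT)$ that $\delta$-tracks $\bm{w}(t^{\ast}+kT+\cdot)$ on $[0,2T]$; extending $\bm{\sigma}_{k}$ to an infinite-horizon solution so that Lemma~\ref{lemmaconvergence1} applies, and using $\theta(\bm{\sigma}_{k}(0))>\theta_{n}$, one gets $\theta(\bm{\sigma}_{k}(s))>\theta_{n+1}-\varepsilon/2$ for $s\in[T,2T]$; the uniform-continuity bound then transfers this to $\theta(\bm{w}(t^{\ast}+kT+s))>\theta_{n+1}-3\varepsilon/4$ for $s\in[T,2T]$, i.e.\ exactly on the window $[t^{\ast}+(k+1)T,t^{\ast}+(k+2)T]$. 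Taking the union of the stage-$k$ conclusions over $k\ge 0$ gives $\theta(\bm{w}(t))>\theta_{n+1}-3\varepsilon/4>\theta_{n+1}-\varepsilon$ for all $t\ge t^{\ast}+T$, and letting $\varepsilon\downarrow 0$ yields $\liminf_{t\to\infty}\theta(\bm{w}(t))\ge\theta_{n+1}$.

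The main obstacle — and the reason this is not an immediate corollary of Theorem~\ref{Theo:APT} — is the shadowing problem: Theorem~\ref{Theo:APT} only says that $\bm{w}$ is tracked by an \emph{exact} solution over each bounded horizon, so a single application cannot force $\bm{w}$ to remain near $\theta_{n+1}$ forever, because the tracking solution may be abandoned after time $2T$. What makes the iteration go through is that the absorption time $T$ produced by Lemma~\ref{lemmaconvergence1} is \emph{uniform} over all solutions started above $\theta_{n}$: this lets me re-seed the tracking argument at each time $t^{\ast}+kT$ from a state whose $\theta$-value is again above $\theta_{n}$, so the window bounds chain together. Two minor points to verify in the write-up are that $\inf_{\bm{\sigma}}(\cdots)<\delta$ in Theorem~\ref{Theo:APT} really produces a solution within $\delta$ (it does, because $S_{\bm{w}(t)}^{2T}$ is nonempty by upper hemi-continuity of $F$), and that the single transient window $[t^{\ast},t^{\ast}+T]$, on which the bound may fail, is harmless, since the conclusion is only needed for large $t$ and already holds for all $t\ge t^{\ast}+T$.
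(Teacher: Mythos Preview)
Your proof is correct and follows essentially the same approach as the paper: both invoke Lemma~\ref{lemmaconvergence1} to obtain a uniform absorption time $T$, use Theorem~\ref{Theo:APT} with horizon $2T$ to track $\bm{w}$ by a differential-inclusion solution, transfer the $\theta$-bound via (uniform) continuity of $\theta(\cdot)$, and then iterate over consecutive windows of length $T$. Your write-up is simply more explicit about the induction and the choice of constants than the paper's, which compresses the iteration into ``applying the same argument again.''
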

\begin{proof}
The proof is very similar to that of Proposition \ref{Prop:Attracting}(ii).
Pick $(\theta_{n},\theta_{n+1})$ and $h$ as stated. Pick an arbitrarily
small $\eta>0$. Then pick $\varepsilon>0$ such that $\theta(\sigma)>\theta_{n+1}-\eta$
for all $\sigma$ such that $\lVert\sigma-\tilde{\sigma}\rVert<\varepsilon$
for some $\tilde{\sigma}$ with $\theta(\tilde{\sigma})>\theta_{n+1}-\frac{\eta}{2}$.

From Lemma \ref{lemmaconvergence1}, there is $T>0$ such that given
any initial value $\bm{\sigma}(0)$ with $\theta(\bm{\sigma}(0))>\theta_{n}$
and given any solution $\bm{\sigma}\in S_{\bm{\sigma}(0)}^{\infty}$
to the differential inclusion, 
\begin{align}
\theta(\bm{\sigma}(t))>\theta_{n+1}-\frac{\eta}{2}\label{inequalitylemmaconvergence2}
\end{align}
for all $t\geq T$. Pick such $T$. Also, pick $\tilde{T}$ large
enough that (\ref{inequality2}) holds for any $t>\tilde{T}$ and
for any $s\in[0,2T]$.

By the assumption, there is $t>\tilde{T}$ such that $\theta(\bm{w}(t))>\theta_{n}$.
Pick such $t$. Then from (\ref{inequality2}), (\ref{inequalitylemmaconvergence2}),
and the definition of $\varepsilon$, we have $\theta(\bm{w}(t+s))>\theta_{n+1}-\eta$
for all $s\in[T,2T]$. Applying the same argument again, we obtain
$\theta(\bm{w}(t+s))>\theta_{n+1}-\eta$ for all $s\geq T$, which
implies that $\liminf_{t\to\infty}\theta(\bm{w}(t))\geq\theta_{n+1}-\eta$.
Since $\eta$ can be arbitrarily small, we obtain the result.
\end{proof}
Now we will show that $\theta(\sigma_{t})$ converges almost surely.
Suppose not, so that we have $\liminf_{t\to\infty}\theta(\sigma_{t})<\limsup_{t\to\infty}\theta(\sigma_{t})$
with positive probability. Then there is a path $h$ such that the
property stated in Theorem 3 is satisfied and such that $\liminf_{t\to\infty}\theta(\sigma_{t})<\limsup_{t\to\infty}\theta(\sigma_{t})$.
Pick such $h$.

Let $(\theta_{n},\theta_{n+1})$ be an interval such that the intersection
of the interval and $[\liminf_{t\to\infty}\theta(\sigma_{t}),$ $\limsup_{t\to\infty}\theta(\sigma_{t})]$
is non-empty. Assume for now that this interval satisfies property
(i) stated in lemma \ref{lemmaconvergence0}. By the definition of
$h$, $\theta(\bm{w}(t))$ must exceed $\theta_{n}$ infinitely often,
so Lemma \ref{lemmaconvergence2} implies $\liminf_{t\to\infty}\theta(\bm{w}(t))\geq\theta_{n+1}$.
But this is a contradiction, because it implies that the intersection
of $(\theta_{n},\theta_{n+1})$ and $[\liminf_{t\to\infty}\theta(\sigma_{t}),\limsup_{t\to\infty}\theta(\sigma_{t})]$
is empty.

Likewise, if the interval $(\theta_{n},\theta_{n+1})$ satisfies property
(ii) in Lemma \ref{lemmaconvergence0}, there is a contradiction.
Hence we must have $\liminf_{t\to\infty}\theta(\sigma_{t})=\limsup_{t\to\infty}\theta(\sigma_{t})$
almost surely.

Also, Lemma \ref{lemmaconvergence2} implies that for each interval
$(\theta_{n},\theta_{n+1})$ which satisfies property (i) in Lemma
\ref{lemmaconvergence0}, we have $\lim_{t\to\infty}\theta(\sigma_{t})\in(\theta_{n},\theta_{n+1})$
with zero probability. Obviously the same is true for each interval
$(\theta_{n},\theta_{n+1})$ which satisfies property (ii). Hence
$\lim_{t\to\infty}\theta(\sigma_{t})\in\Theta^{\ast\ast}$ almost
surely.

So for the case in which the boundary points $\{0,1\}$ are equilibrium
models, we have $\lim_{t\to\infty}\theta(\sigma_{t})\in\Theta^{\ast}$.
If $\theta=0$ is not an equilibrium model, then as in Lemma \ref{lemmaconvergence0},
we can show that $\theta(\delta_{x})>\theta$ for any model $\theta\in[\theta_{0},\theta_{1})$
and for any $x\in F(\delta_{\theta})$. Then as in Lemma \ref{lemmaconvergence2},
we can show that if a sample path $h$ satisfies the property stated
in Theorem 3 and $\theta(\bm{w}(t))\in[\theta_{0},\theta_{1})$ infinitely
often, then $\liminf_{t\to\infty}\theta(\bm{w}(t))\geq\theta_{1}$.
This immediately implies that $\sigma_{t}$ converges to $\theta_{0}=0$
with zero probability. Similarly, if $\theta=1$ is not an equilibrium
model, then $\sigma_{t}$ converges to this model with zero probability.
Hence the result follows.

\subsection{\label{pf:propequivalence}Proof of Proposition \ref{propequivalence}}

It is obvious that (c) implies (b). So in this proof, we will show
that (a) implies (c), and (b) implies (a).

\subsubsection{Step 1: (a) implies (c)}

Pick an attracting model $\theta^{\ast}$, and let $A=\{\sigma\in\triangle F(\delta_{\theta^{\ast}})|\theta(\sigma)=\theta^{\ast}\}$.
This set is non-empty, because $F$ is upper hemi-continuous in $\sigma$
and $\theta(\sigma)$ is continuous. We will show that this set $A$
is robustly attracting.

The following notation is useful. Let $\mathcal{X}_{1}$ be the set
of all mixed strategies $\sigma$ such that $\theta(\sigma)<\theta^{\ast}$.
From Proposition \ref{propmonotone}, this set is convex. Similarly,
the set $\triangle X\setminus\mathcal{X}_{1}$ is convex. So there
is a hyperplane $H_{1}$ which separates these two sets; i.e., there
is a vector $\lambda_{1}\in\mathbb{R}^{|X|}$ and $k_{1}$ such that
$\lambda_{1}\cdot\sigma<k_{1}$ for all $\sigma$ such that $\theta(\sigma)<\theta^{\ast}$,
and $\lambda_{1}\cdot\sigma\geq k_{1}$ for all $\sigma$ such that
$\theta(\sigma)\geq\theta^{\ast}$. Similarly, letting $\mathcal{X}_{2}$
be the set of all $\sigma$ such that $\theta(\sigma)>\theta^{\ast}$,
there is a hyperplane $H_{2}$ which separates $\mathcal{X}_{2}$
and $\triangle X\setminus\mathcal{X}_{2}$, i.e., there is a vector
$\lambda_{2}\in\mathbb{R}^{|X|}$ and $k_{2}$ such that $\lambda_{2}\cdot\sigma<k_{2}$
for all $\sigma$ such that $\theta(\sigma)>\theta^{\ast}$, and $\lambda_{2}\cdot\sigma\geq k_{2}$
for all $\sigma$ such that $\theta(\sigma)\leq\theta^{\ast}$. (These
hyperplanes $H_{1}$ and $H_{2}$ may or may not coincide.) Let $\mathcal{X}^{\ast}$
be the set of all $\sigma$ such that $\theta(\sigma)=\theta^{\ast}$.

We first consider the special case in which $F(\delta_{\theta^{\ast}})=X$,
i.e., the agent is indifferent over all actions in the model $\theta^{\ast}$.
In this case, $A=\mathcal{X}^{\ast}$, i.e., the set $A$ is the set
of all mixed actions $\sigma$ with $\theta(\sigma)=\theta^{\ast}$.
Later on, we will explain how to extend the proof technique to the
case with $F(\delta_{\theta^{\ast}})\subset X$.
\begin{flushleft}
\textit{Case 1: $F(\delta_{\theta^{\ast}})=X$}. 
\par\end{flushleft}

\begin{flushleft}
Pick $\varepsilon>0$ as in the definition of attracting models. Then
let $\mathcal{X}_{\varepsilon}$ be the set of all $\sigma$ such
that $|\theta(\sigma)-\theta^{\ast}|<\varepsilon$. We show that this
set $\mathcal{X}_{\varepsilon}$ is (a subset of) the basin of attraction.
That is, given any initial value $\bm{\sigma}(0)\in\mathcal{X}_{\varepsilon}$,
any solution $\bm{\sigma}\in S_{\bm{\sigma}(0)}^{\infty}$ to the
differential inclusion will enter a neighborhood of the set $A=\mathcal{X}^{\ast}$
in finite time and stay there forever.
\par\end{flushleft}

So pick any initial value $\bm{\sigma}(0)\in\mathcal{X}_{\varepsilon}$
and any solution $\bm{\sigma}\in S_{\bm{\sigma}(0)}^{\infty}$. We
first show that this solution $\bm{\sigma}$ never leaves the set
$\mathcal{X}_{\varepsilon}$.
\begin{lem}
\label{lemmaequivalence1} $\bm{\sigma}(t)\in\mathcal{X}_{\varepsilon}$
for all $t$, that is, $|\bm{\sigma}(t)-\theta^{\ast}|<\varepsilon$
for all $t$.
\end{lem}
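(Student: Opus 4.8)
The plan is to prove that the open set $\mathcal{X}_{\varepsilon}=\{\sigma:|\theta(\sigma)-\theta^{\ast}|<\varepsilon\}$ is forward-invariant under the differential inclusion by a first-exit-time argument, reducing the whole claim to two ingredients. The first is that every sublevel set of $\theta(\cdot)$, and its strict version, is convex: if $\theta(\sigma_{1}),\theta(\sigma_{2})\le c$ then Proposition \ref{propmonotone}(i)--(ii) gives $\theta(\beta\sigma_{1}+(1-\beta)\sigma_{2})\le\max\{\theta(\sigma_{1}),\theta(\sigma_{2})\}\le c$, and symmetrically for $\ge c$, $>c$, $<c$. The second is that near the boundary of $\mathcal{X}_{\varepsilon}$ the drift of the inclusion points ``inward'' in the weak sense that its target lies in the appropriate half of $\triangle X$; this is exactly what the two bullet points in the definition of an attracting model give, once combined with the convexity just mentioned and with the observation that, under Assumption \ref{assumptionidentifiability}(i), $\triangle\Theta(\sigma)=\{\delta_{\theta(\sigma)}\}$, so any drift target at frequency $\sigma$ is a convex combination of point masses $\delta_{x}$ with $x\in F(\delta_{\theta(\sigma)})$.

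First I would note that $V(t):=\theta(\bm{\sigma}(t))$ is continuous, since $\sigma\mapsto\theta(\sigma)$ is continuous (uhc of $\Theta(\cdot)$ together with Assumption \ref{assumptionidentifiability}(i)) and $\bm{\sigma}$ is absolutely continuous. Suppose toward a contradiction that $\bm{\sigma}$ leaves $\mathcal{X}_{\varepsilon}$. Then $\{t\ge 0:|V(t)-\theta^{\ast}|\ge\varepsilon\}$ is closed and nonempty and, since $V(0)\in(\theta^{\ast}-\varepsilon,\theta^{\ast}+\varepsilon)$, does not contain $0$; let $t^{\ast}>0$ be its minimum. By continuity $V(t^{\ast})\in\{\theta^{\ast}-\varepsilon,\theta^{\ast}+\varepsilon\}$ while $V(t)\in(\theta^{\ast}-\varepsilon,\theta^{\ast}+\varepsilon)$ for all $t<t^{\ast}$. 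I treat the case $V(t^{\ast})=\theta^{\ast}+\varepsilon$; the other is symmetric. Using continuity of $V$ again, fix $t_{0}<t^{\ast}$ with $V(t)\in(\theta^{\ast},\theta^{\ast}+\varepsilon)$ for all $t\in[t_{0},t^{\ast})$ and $V(t)>\theta^{\ast}$ for all $t\in[t_{0},t^{\ast}]$.

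For a.e. $t\in[t_{0},t^{\ast}]$ there is $\sigma_{t}\in\triangle F(\delta_{V(t)})$ with $\dot{\bm{\sigma}}(t)=\sigma_{t}-\bm{\sigma}(t)$; for $t\in[t_{0},t^{\ast})$ the second bullet in the definition of an attracting model gives $\theta(\delta_{x})\le\theta^{\ast}$ for every $x\in F(\delta_{V(t)})$, so convexity of $D:=\{\sigma:\theta(\sigma)\le\theta^{\ast}\}$ forces $\sigma_{t}\in D$. Solving the linear ODE $\dot{\bm{\sigma}}=\sigma_{t}-\bm{\sigma}$ on $[t_{0},t^{\ast}]$ yields, with $\alpha:=e^{-(t^{\ast}-t_{0})}\in(0,1)$,
\[
\bm{\sigma}(t^{\ast})=\alpha\,\bm{\sigma}(t_{0})+(1-\alpha)\,\bar{\sigma},\qquad \bar{\sigma}:=\frac{1}{1-\alpha}\int_{t_{0}}^{t^{\ast}}e^{-(t^{\ast}-s)}\sigma_{s}\,ds .
\]
Since $\int_{t_{0}}^{t^{\ast}}e^{-(t^{\ast}-s)}\,ds=1-\alpha$, the point $\bar{\sigma}$ is an average of the points $\sigma_{s}\in D$ against a probability measure, so $\bar{\sigma}\in D$ by compactness and convexity of $D$; hence $\theta(\bar{\sigma})\le\theta^{\ast}<\theta^{\ast}+\varepsilon$, and also $\theta(\bm{\sigma}(t_{0}))<\theta^{\ast}+\varepsilon$ because $t_{0}<t^{\ast}$. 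Both $\bm{\sigma}(t_{0})$ and $\bar{\sigma}$ therefore lie in the convex set $\{\sigma:\theta(\sigma)<\theta^{\ast}+\varepsilon\}$, so $\bm{\sigma}(t^{\ast})$ does too --- contradicting $\theta(\bm{\sigma}(t^{\ast}))=\theta^{\ast}+\varepsilon$. This proves the lemma.

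The main obstacle, and the reason for this slightly indirect last step, is that $\sigma\mapsto\theta(\sigma)$ is only continuous, not differentiable, so one cannot simply argue ``$V$ is non-increasing near $t^{\ast}$ because its derivative is nonpositive''; the integral form of the linear ODE together with the convexity of the level sets of $\theta$ is what lets us bypass differentiating $V$. A second, minor point is that at instants where $V(t)=\theta^{\ast}$ exactly --- where, in this Case~1, $F(\delta_{\theta^{\ast}})=X$ and the drift target is unrestricted --- there is no control on $\theta(\sigma_{t})$; this is harmless because the argument uses the drift bound only on $[t_{0},t^{\ast})$, where $V$ stays strictly above $\theta^{\ast}$, and the single instant $t=t^{\ast}$ is Lebesgue-null in the integral.
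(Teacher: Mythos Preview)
Your proof is correct. Both your argument and the paper's rest on the same two ingredients: Proposition~\ref{propmonotone} (which you use in the form ``sublevel and superlevel sets of $\theta(\cdot)$ are convex'') and the two bullet points defining an attracting model (which pin down where the drift target lives). The paper's proof simply asserts that, by Proposition~\ref{propmonotone}, $\theta(\bm{\sigma}(t))$ is weakly increasing whenever it lies in $(\theta^{\ast}-\varepsilon,\theta^{\ast})$ and weakly decreasing whenever it lies in $(\theta^{\ast},\theta^{\ast}+\varepsilon)$, and concludes forward invariance from that monotonicity. Your route is a genuine refinement of this sketch: rather than trying to make sense of ``at the next instant $\theta(\bm{\sigma}(t+\eta))\ge\theta(\bm{\sigma}(t))$'' when the selection $\sigma_{t}$ is only measurable and $\theta(\cdot)$ is not differentiable, you use the variation-of-constants formula to write $\bm{\sigma}(t^{\ast})$ as a convex combination of $\bm{\sigma}(t_{0})$ and an average of the targets $\sigma_{s}$, and then invoke convexity of the strict sublevel set $\{\theta(\cdot)<\theta^{\ast}+\varepsilon\}$. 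This buys you a fully rigorous first-exit-time contradiction without ever differentiating $V$; the paper's version is shorter but leaves exactly that step informal.
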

\begin{proof}
Suppose that $\theta(\bm{\sigma}(t))\in(\theta^{\ast}-\varepsilon,\theta^{\ast})$
for some $t$. Then we have $\dot{\bm{\sigma}}(t)=\sigma-\bm{\sigma}(t)$
for some $\sigma\in\triangle F(\delta_{\theta(\bm{\sigma}(t))})$.
By the definition of $\varepsilon$, we must have $\theta(\sigma)\geq\theta^{\ast}$;
then from Proposition \ref{propmonotone}, at the next instant $t+\eta$,
we have $\theta(\bm{\sigma}(t+\eta))\geq\theta(\bm{\sigma}(t))$,
i.e., $\theta(\bm{\sigma}(t))$ is weakly increasing in $t$ if $\bm{\sigma}(t)\in[\theta^{\ast}-\varepsilon,\theta^{\ast})$.
Similarly, if $\theta(\bm{\sigma}(t))\in(\theta^{\ast},\theta^{\ast}+\varepsilon)$,
then $\theta(\bm{\sigma}(t))$ is weakly decreasing in $t$. This
implies the result we want.
\end{proof}
Next, we show that $A$ is attracting. It suffices to prove the following
lemma:
\begin{lem}
\label{lemmaequivalence2} For any $\varepsilon>0$, there is $T>0$
such that for any initial value $\bm{\sigma}(0)\in\mathcal{X}_{\varepsilon}$
and any solution $\bm{\sigma}\in S_{\bm{\sigma}(0)}^{\infty}$, we
have $d(\bm{\sigma}(t),\mathcal{X}^{\ast})<\varepsilon$ for all $t>T$.
\end{lem}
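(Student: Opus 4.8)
The plan is to reduce the lemma to a one-dimensional statement about the scalar $\theta(\bm{\sigma}(t))$, and then exploit a pair of separating hyperplanes to show that this scalar is driven, at a speed bounded below, toward $\theta^{\ast}$. Write $\varepsilon_{0}$ for the constant in the definition of the attracting model (the quantity called $\varepsilon$ in the Case~1 discussion, so that $\mathcal{X}_{\varepsilon_{0}}$ is the set there written $\mathcal{X}_{\varepsilon}$); by Lemma~\ref{lemmaequivalence1} any solution started in $\mathcal{X}_{\varepsilon_{0}}$ stays there, and in Case~1 one has $\mathcal{X}^{\ast}=A$. Since $\mathcal{X}^{\ast}=\{\sigma:\theta(\sigma)=\theta^{\ast}\}$ is compact and $\theta(\cdot)$ is continuous, for each target $\varepsilon>0$ there is $\eta=\eta(\varepsilon)\in(0,\varepsilon_{0})$ such that $|\theta(\sigma)-\theta^{\ast}|<\eta$ implies $d(\sigma,\mathcal{X}^{\ast})<\varepsilon$ (otherwise a sequence $\sigma_{n}$ with $\theta(\sigma_{n})\to\theta^{\ast}$ and $d(\sigma_{n},\mathcal{X}^{\ast})\ge\varepsilon$ would have a cluster point in $\mathcal{X}^{\ast}$ at distance $\ge\varepsilon$ from it). So it suffices to produce a uniform $T$ with $|\theta(\bm{\sigma}(t))-\theta^{\ast}|<\eta$ for all $t>T$ and all admissible solutions.

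Fix $\eta\in(0,\varepsilon_{0})$ and use the hyperplane $\lambda_{1}\cdot\sigma=k_{1}$ from Case~1 ($\lambda_{1}\cdot\sigma<k_{1}$ iff $\theta(\sigma)<\theta^{\ast}$, $\lambda_{1}\cdot\sigma\ge k_{1}$ iff $\theta(\sigma)\ge\theta^{\ast}$). The set $\{\sigma\in\triangle X:\theta(\sigma)\le\theta^{\ast}-\eta\}$ is compact and sits inside the open half-space $\{\lambda_{1}\cdot\sigma<k_{1}\}$, so $\lambda_{1}\cdot\sigma\le k_{1}-c_{1}$ on it for some $c_{1}=c_{1}(\eta)>0$; let $R_{1}$ denote the oscillation of $\sigma\mapsto\lambda_{1}\cdot\sigma$ over $\triangle X$. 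Along a solution $\bm{\sigma}$ in $\mathcal{X}_{\varepsilon_{0}}$ one has $\dot{\bm{\sigma}}(t)=\hat{\sigma}(t)-\bm{\sigma}(t)$ with $\hat{\sigma}(t)\in\triangle F(\delta_{\theta(\bm{\sigma}(t))})$ (Assumption~\ref{assumptionidentifiability}); when $\theta(\bm{\sigma}(t))<\theta^{\ast}$, the definition of the attracting model together with convexity of $\{\theta(\cdot)\ge\theta^{\ast}\}$ (Proposition~\ref{propmonotone}) gives $\theta(\hat{\sigma}(t))\ge\theta^{\ast}$, hence $\lambda_{1}\cdot\hat{\sigma}(t)\ge k_{1}$ and
\[
\tfrac{d}{dt}\bigl(\lambda_{1}\cdot\bm{\sigma}(t)\bigr)=\lambda_{1}\cdot\hat{\sigma}(t)-\lambda_{1}\cdot\bm{\sigma}(t)\ \ge\ k_{1}-\lambda_{1}\cdot\bm{\sigma}(t)\ >\ 0 .
\]
Thus while $\theta(\bm{\sigma}(t))\le\theta^{\ast}-\eta$ (so $\lambda_{1}\cdot\bm{\sigma}(t)\le k_{1}-c_{1}$) this derivative is $\ge c_{1}$, so $\lambda_{1}\cdot\bm{\sigma}$ increases at rate $\ge c_{1}$ and cannot stay $\le k_{1}-c_{1}$ longer than $T_{1}:=R_{1}/c_{1}$; and the only way to have $\lambda_{1}\cdot\bm{\sigma}(t)<k_{1}-c_{1}$ is $\theta(\bm{\sigma}(t))<\theta^{\ast}$, where $\lambda_{1}\cdot\bm{\sigma}$ is strictly increasing, so once $\lambda_{1}\cdot\bm{\sigma}(t)$ reaches $k_{1}-c_{1}$ it never drops below it again. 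Hence $\theta(\bm{\sigma}(t))>\theta^{\ast}-\eta$ for all $t\ge T_{1}$. The mirror argument with the second hyperplane $\lambda_{2}\cdot\sigma=k_{2}$ ($\lambda_{2}\cdot\sigma<k_{2}$ iff $\theta(\sigma)>\theta^{\ast}$) yields $T_{2}$ with $\theta(\bm{\sigma}(t))<\theta^{\ast}+\eta$ for all $t\ge T_{2}$.

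Taking $T:=\max\{T_{1},T_{2}\}$ gives $|\theta(\bm{\sigma}(t))-\theta^{\ast}|<\eta$ for all $t>T$, uniformly over $\bm{\sigma}(0)\in\mathcal{X}_{\varepsilon_{0}}$ and over solutions, and the reduction step converts this into $d(\bm{\sigma}(t),\mathcal{X}^{\ast})<\varepsilon$, which is the claim. For the general case $F(\delta_{\theta^{\ast}})\subsetneq X$ (where $A\subsetneq\mathcal{X}^{\ast}$), the same $\theta$-argument keeps $\theta(\bm{\sigma}(t))$ within $\eta$ of $\theta^{\ast}$; once $\eta$ is small enough, upper hemicontinuity of $F$ and finiteness of $X$ force $F(\delta_{\theta(\bm{\sigma}(t))})\subseteq F(\delta_{\theta^{\ast}})$, so $\dot{\bm{\sigma}}(t)\in\triangle F(\delta_{\theta^{\ast}})-\bm{\sigma}(t)$ and $\bm{\sigma}(t)$ additionally contracts toward the convex set $\triangle F(\delta_{\theta^{\ast}})$ (its distance to the projection decays exponentially); combining the two estimates gives $d(\bm{\sigma}(t),A)\to0$ uniformly.

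The main obstacle is the interface between the linear estimate and the scalar $\theta(\bm{\sigma}(t))$: increase of $\lambda_{1}\cdot\bm{\sigma}(t)$ does not by itself imply increase of $\theta(\bm{\sigma}(t))$, so one genuinely needs the monotone-selection consequences of Proposition~\ref{propmonotone} (as already used in Lemmas~\ref{lemmaequivalence1} and~\ref{lemmaconvergence1}) both to turn barrier-crossing for $\lambda_{1}\cdot\bm{\sigma}$ into a permanent trap for $\theta(\bm{\sigma}(t))$ and to control the behaviour at the turning point $\theta^{\ast}$, where in Case~1 the drift is unconstrained. In the general case the added difficulty is coordinating the two simultaneous contractions --- toward the level $\theta^{\ast}$ and toward $\triangle F(\delta_{\theta^{\ast}})$.
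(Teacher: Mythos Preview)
Your proof is correct and follows essentially the same route as the paper's: both arguments rest on the key inequality $\lambda_{1}\cdot\dot{\bm{\sigma}}(t)\ge k_{1}-\lambda_{1}\cdot\bm{\sigma}(t)>0$ whenever $\theta(\bm{\sigma}(t))<\theta^{\ast}$ (and its mirror image with $\lambda_{2},k_{2}$), obtained from the separating hyperplanes $H_{1},H_{2}$ and the attracting-model property. The only presentational difference is that the paper goes directly from the hyperplane estimate to $d(\bm{\sigma}(t),\mathcal{X}^{\ast})$ by noting that $\mathcal{X}^{\ast}$ is exactly the slab sandwiched between $H_{1}$ and $H_{2}$, whereas you first reduce to controlling the scalar $|\theta(\bm{\sigma}(t))-\theta^{\ast}|$ via continuity of $\theta(\cdot)$ and compactness, and then translate the linear estimate into a statement about $\theta(\bm{\sigma}(t))$. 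Your distinction between the fixed $\varepsilon_{0}$ from the attracting-model definition and the target $\varepsilon$ in the lemma is in fact cleaner than the paper's overloaded notation; your sketch of the case $F(\delta_{\theta^{\ast}})\subsetneq X$ also matches what the paper does separately in its Case~2.
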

\begin{proof}
Suppose that $\theta(\bm{\sigma}(t))\in(\theta^{\ast}-\varepsilon,\theta^{\ast})$
for some $t$. Then as shown in the proof of the previous lemma, we
have $\dot{\bm{\sigma}}(t)=\sigma-\bm{\sigma}(t)$ for some $\sigma$
such that $\theta(\sigma)\geq\theta^{\ast}$. This in turn implies
that 
\[
\lambda_{1}\cdot\dot{\bm{\sigma}}(t)=\lambda_{1}\cdot(\sigma-\bm{\sigma}(t))\geq k_{1}-\lambda_{1}\cdot\bm{\sigma}(t)>0.
\]
Here the weak inequality follows from $\theta(\sigma)\geq\theta^{\ast}$,
and the strict inequality follows from $\theta(\sigma)<\theta^{\ast}$.
Note that $k_{1}-\lambda_{1}\cdot\bm{\sigma}(t)$ measures the current
distance from $\bm{\sigma}(t)$ to the hyperplane $H_{1}$, and $\lambda_{1}\cdot\dot{\bm{\sigma}}(t)$
measures how much $\bm{\sigma}(t)$ gets closer to the hyperplane
$H_{1}$ at the next instant, So the equation above implies that $\bm{\sigma}(t)$
gets closer to $H_{1}$ as time goes, and the speed of convergence
is bounded away from zero until $\bm{\sigma}(t)$ enters a neighborhood
of $H_{1}$.

Similarly, if $\theta(\bm{\sigma}(t))\in(\theta^{\ast},\theta^{\ast}+\varepsilon)$
for some $t$, then $\bm{\sigma}(t)$ gets closer to the hyperplane
$H_{2}$ as time goes, and the speed of convergence is bounded away
from zero until $\bm{\sigma}(t)$ enters a neighborhood of $H_{2}$.
This implies the result we want, because the set $A=\mathcal{X}^{\ast}$
is the space sandwiched by $H_{1}$ and $H_{2}$ (formally, $A=\triangle X\setminus(\mathcal{X}_{1})\cup\mathcal{X}_{2})$).
\end{proof}
As a last step, we show that the set $A$ is robustly attracting:
\begin{lem}
\label{lemmaequivalence3} The set $A$ is robustly attracting.
\end{lem}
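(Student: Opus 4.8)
The plan is to verify Definition \ref{def:robust_attracting} directly, building on Lemmas \ref{lemmaequivalence1} and \ref{lemmaequivalence2}. We stay in Case 1, so $A=\mathcal{X}^{\ast}=\{\sigma:\theta(\sigma)=\theta^{\ast}\}$, and Lemma \ref{lemmaequivalence2} already shows $A$ is attracting with basin containing $\mathcal{X}_{\varepsilon}=\{\sigma:|\theta(\sigma)-\theta^{\ast}|<\varepsilon\}$ (with $\varepsilon$ the constant in the definition of an attracting model). It remains to produce $\zeta>0$ and $\varepsilon'>0$ such that every solution $\bm{\sigma}$ of the $\varepsilon'$-perturbed inclusion (\ref{perturbeddi}) with $\bm{\sigma}(0)\in B_{\zeta}(A)$ satisfies $\bm{\sigma}(t)\in\mathcal{X}_{\varepsilon}\subseteq\mathcal{U}_{A}$ for all $t$.

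First I would record two elementary facts: $\theta(\cdot)$ is uniformly continuous on the compact simplex $\triangle X$, so there is $\rho>0$ with $|\theta(\sigma)-\theta(\sigma')|<\varepsilon/8$ whenever $\|\sigma-\sigma'\|<\rho$; and there is $\zeta>0$ with $B_{\zeta}(A)\subseteq\{\sigma:|\theta(\sigma)-\theta^{\ast}|<\varepsilon/4\}$, since the right-hand side is an open neighborhood of the compact set $A$. Fix any $\varepsilon'<\min\{\rho,\zeta\}$. The key objects are the ``barrier bands'' $\mathcal{B}^{-}=\{\sigma:\theta(\sigma)\in[\theta^{\ast}-3\varepsilon/4,\theta^{\ast}-\varepsilon/4]\}$ and its mirror image $\mathcal{B}^{+}=\{\sigma:\theta(\sigma)\in[\theta^{\ast}+\varepsilon/4,\theta^{\ast}+3\varepsilon/4]\}$. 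If $\bm{\sigma}(t)\in\mathcal{B}^{-}$, then every $\tilde{\sigma}\in B_{\varepsilon'}(\bm{\sigma}(t))$ has $\theta(\tilde{\sigma})\in(\theta^{\ast}-\varepsilon,\theta^{\ast})$ by the choice of $\rho$; hence, by the definition of an attracting model, every $x\in F(\delta_{\theta(\tilde{\sigma})})$ has $\theta(\delta_{x})\geq\theta^{\ast}$, so every admissible velocity $\sigma-\bm{\sigma}(t)$ with $\sigma\in\triangle F(\triangle\Theta(\tilde{\sigma}))$ satisfies $\theta(\sigma)\geq\theta^{\ast}>\theta(\bm{\sigma}(t))$. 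As in the proof of Lemma \ref{lemmaequivalence1}, Proposition \ref{propmonotone}(ii) then gives $\theta(\bm{\sigma}(t+\eta))\geq\theta(\bm{\sigma}(t))$ for small $\eta>0$, so $\theta\circ\bm{\sigma}$ cannot decrease while $\bm{\sigma}$ remains in $\mathcal{B}^{-}$; symmetrically it cannot increase while $\bm{\sigma}$ remains in $\mathcal{B}^{+}$.

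Next, suppose some such $\bm{\sigma}$ left $\mathcal{X}_{\varepsilon}$, say through the lower boundary, so $\theta(\bm{\sigma}(t))\leq\theta^{\ast}-\varepsilon$ for some $t$ while $\theta(\bm{\sigma}(0))>\theta^{\ast}-\varepsilon/4$. Let $\tau=\inf\{t:\theta(\bm{\sigma}(t))\leq\theta^{\ast}-3\varepsilon/4\}$ and let $\tau'$ be the last time before $\tau$ at which $\theta(\bm{\sigma}(\cdot))=\theta^{\ast}-\varepsilon/4$; continuity of $\theta\circ\bm{\sigma}$ and the intermediate value theorem make both well defined, and a short check shows $\bm{\sigma}(t)\in\mathcal{B}^{-}$ for all $t\in[\tau',\tau]$. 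But then $\theta\circ\bm{\sigma}$ is non-decreasing on $[\tau',\tau]$, contradicting $\theta(\bm{\sigma}(\tau'))=\theta^{\ast}-\varepsilon/4>\theta^{\ast}-3\varepsilon/4=\theta(\bm{\sigma}(\tau))$. The upper boundary is handled symmetrically with $\mathcal{B}^{+}$. Hence $\bm{\sigma}(t)\in\mathcal{X}_{\varepsilon}$ for all $t\geq0$, which is the second requirement in Definition \ref{def:robust_attracting}; combined with Lemma \ref{lemmaequivalence2}, this proves $A$ is robustly attracting.

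The step I expect to be the real obstacle is the behavior of the perturbed inclusion close to $A$ itself: there a perturbation $\tilde{\sigma}\in B_{\varepsilon'}(\bm{\sigma}(t))$ can point to a model on the far side of $\theta^{\ast}$, so the drift need not return toward $A$ and the monotonicity argument breaks down. The device above circumvents this by noting that any escape from $\mathcal{U}_{A}$ must first cross one of the intermediate bands $\mathcal{B}^{\pm}$, which—because of the uniform-continuity slack $\rho$—is separated from $A$ by strictly more than $\varepsilon'$, so on that band the perturbation is controlled and the monotone selection of Proposition \ref{propmonotone}(ii) applies verbatim. The remaining point requiring care is purely bookkeeping: choosing $\varepsilon'$ small relative to $\rho$, $\varepsilon$, and $\zeta$ so that every $\tilde{\sigma}$ arising on a band genuinely lands in the interval where the attracting-model inequalities are available.
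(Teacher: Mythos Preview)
Your proposal is correct and follows essentially the same approach as the paper: both arguments choose the perturbation size small enough (via uniform continuity of $\theta(\cdot)$, or equivalently compactness of the level sets) so that on an intermediate barrier away from $A$ every perturbed drift target still has $\theta\geq\theta^{\ast}$ (resp.\ $\leq\theta^{\ast}$), and then use Proposition~\ref{propmonotone} to conclude the perturbed trajectory cannot cross that barrier. The only cosmetic difference is that the paper places the barrier at the single level sets $\theta=\theta^{\ast}\pm\varepsilon/2$, whereas you thicken these into the bands $\mathcal{B}^{\pm}$ and run an explicit intermediate-value contradiction; your version is slightly more detailed but the underlying mechanism is identical.
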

\begin{proof}
Let $H_{1}^{\prime}$ be the set of all $\sigma$ with $\theta(\sigma)=\theta^{\ast}-\frac{\varepsilon}{2}$,
and $H_{2}^{\prime}$ be the set of all $\sigma$ with $\theta(\sigma)=\theta^{\ast}+\frac{\varepsilon}{2}$.
Take a small $\varepsilon^{\ast}>0$ such that $\theta(\sigma)\in(\theta^{\ast}-\varepsilon,\theta^{\ast})$for
all $\sigma$ with $d(\sigma,H_{1}^{\prime})<\varepsilon^{\ast}$,
and such that $\theta(\sigma)\in(\theta^{\ast},\theta^{\ast}+\varepsilon)$for
all $\sigma$ with $d(\sigma,H_{2}^{\prime})<\varepsilon^{\ast}$.
Note that such $\varepsilon^{\ast}$ exists because $H_{1}^{\prime}$,
$H_{2}^{\prime}$, $\mathcal{X}^{\ast}$, $\{\sigma|\theta(\sigma)=\theta^{\ast}-\varepsilon\}$,
and $\{\sigma|\theta(\sigma)=\theta^{\ast}+\varepsilon\}$ are all
compact and disjoint.

Consider any solution to the $\varepsilon^{\ast}$-perturbed differential
inclusion, and suppose that $\bm{\sigma}(t)\in H_{1}^{\prime}$ for
some $t$, i.e., suppose that $\theta(\bm{\sigma}(t))=\theta^{\ast}-\frac{\varepsilon}{2}$.
Then by the definition of $\varepsilon^{\ast}$, $\dot{\bm{\sigma}}(t)=\sigma-\bm{\sigma}(t)$
for some $\sigma$ such that $\theta(\sigma)\geq\theta^{\ast}$, which
implies that $\theta(\bm{\sigma}(t))$ moves up at the next instant.
Likewise, if $\theta(\bm{\sigma}(t))=\theta^{\ast}+\frac{\varepsilon}{2}$
for some $t$, then $\theta(\bm{\sigma}(t))$ moves down at the next
instant. Accordingly, if the initial value is in the set $\{\sigma|\theta^{\ast}-\frac{\varepsilon}{2}\leq\theta(\sigma)\leq\frac{\varepsilon}{2}\}$,
any solution to the $\varepsilon^{\ast}$-perturbed differential inclusion
cannot leave this set. This implies the result.
\end{proof}
\begin{flushleft}
\textit{ Case 2: $F(\delta_{\theta^{\ast}})\subset X$}.
\par\end{flushleft}

Pick small $\varepsilon>0$ as stated in the the definition of attracting
models. Without loss of generality, we assume that $F(\delta_{\tilde{\theta}})\subseteq F(\delta_{\theta^{\ast}})$
for all $\theta$ such that $|\theta-\theta^{\ast}|<\varepsilon$.
(Take $\varepsilon$ small, if necessary.)

Let $\mathcal{X}_{\varepsilon}$ be as in the previous case. We show
that this set $\mathcal{X}_{\varepsilon}$ is (a subset of) the basin
of attraction. That is, given any initial value $\bm{\sigma}(0)\in\mathcal{X}_{\varepsilon}$,
any solution $\bm{\sigma}\in S_{\bm{\sigma}(0)}^{\infty}$ to the
differential inclusion will enter a neighborhood of the set $A$ in
finite time and stay there forever. Note that now the set $A$ is
a strict subset of $\mathcal{X}^{\ast}$.

Pick any initial value $\bm{\sigma}(0)\in\mathcal{X}_{\varepsilon}$
and any solution $\bm{\sigma}\in S_{\bm{\sigma}(0)}^{\infty}$. Then
Lemma \ref{lemmaequivalence1} still holds, that is, $\bm{\sigma}(t)$
never leaves the set $\mathcal{X}_{\varepsilon}$. Also Lemma \ref{lemmaequivalence2}
still holds, that is, $\bm{\sigma}(t)$ moves toward to the set $\mathcal{X}^{\ast}$
as time goes.

Also, by the definition of $\varepsilon$, we have $F(\triangle\Theta(\sigma))\subseteq F(\delta_{\theta^{\ast}})$
for any $\sigma\in\mathcal{X}_{\varepsilon}$. This implies that at
every time $t$, we have $\dot{\bm{\sigma}}(t)[x]=-\bm{\sigma}(t)[x]$
for each $x\notin F(\delta_{\theta^{\ast}})$. This implies that $\bm{\sigma}(t)$
assigns probability zero on any action $x\notin F(\delta_{\theta^{\ast}})$
in the limit as $t\to\infty$, and in particular, for any $\varepsilon>0$,
there is $T$ such that $\dot{\bm{\sigma}}(t)[x]=-\bm{\sigma}(t)[x]$
for all $x\notin F(\delta_{\theta^{\ast}})$ and $t>T$. This and
Lemma \ref{lemmaequivalence2} imply that the set $A$ is an attractor.

Also, we can show that the set $A$ is robustly attracting; the proof
is very similar to that of Lemma \ref{lemmaequivalence3}, and hence
omitted.

\subsubsection{Step 2: (b) implies (a)}

Pick an arbitrary $\theta^{\ast}$, and let $A=\{\sigma\in\triangle F(\delta_{\theta^{\ast}})|\theta(\sigma)=\theta^{\ast}\}$.
Suppose that $A$ is an attractor. We will show that the model $\theta^{\ast}$
is attracting.

Let $\mathcal{U}_{A}$ be the basin of the set $A$, and let $\bm{\sigma}(0)$
be such that $\theta(\bm{\sigma}(0))<\theta^{\ast}$. Then we have
the following lemma:
\begin{lem}
For any $\theta\in(\theta(\bm{\sigma}(0)),\theta^{\ast})$ and for
any $\sigma\in F(\delta_{\theta})$, we have $\theta(\sigma)>\theta$.
\end{lem}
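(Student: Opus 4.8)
The plan is to argue by contradiction, exploiting the assumption that $\bm{\sigma}(0)$ lies in the basin $\mathcal{U}_A$. Suppose the conclusion fails: there are $\theta_1\in(\theta(\bm{\sigma}(0)),\theta^{\ast})$ and a pure action $x_1\in F(\delta_{\theta_1})$ with $\theta(\delta_{x_1})\le\theta_1$. I would then exhibit a \emph{single} solution of the differential inclusion~(\ref{eq:DI-1-1}), started at $\bm{\sigma}(0)$, that remains forever in the closed sublevel set $K\equiv\{\sigma\in\triangle X:\theta(\sigma)\le\theta_1\}$. Because $\theta(\sigma)=\theta^{\ast}>\theta_1$ for every $\sigma\in A$, the compact sets $K$ and $A$ are disjoint, so $d(K,A)>0$; such a solution would satisfy $d(\bm{\sigma}(t),A)\ge d(K,A)>0$ for all $t$, contradicting the fact that, since $\bm{\sigma}(0)\in\mathcal{U}_A$ and $A$ is attracting, every solution from $\bm{\sigma}(0)$ has $d(\bm{\sigma}(t),A)\to0$.

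To build the trapped solution I would invoke the viability theorem for differential inclusions (see \citet{aubin2012differential}). Under Assumption~\ref{assumptionidentifiability}, $\theta(\cdot)$ is single-valued and continuous in $\sigma$ (Lemma~\ref{Lemma:Theta(sigma)} together with uniqueness of the minimizer), so $\sigma\mapsto G(\sigma)\equiv\triangle F(\delta_{\theta(\sigma)})-\sigma$ is uhc with nonempty, compact, convex values contained in a fixed bounded set; hence it is Marchaud, and the viability theorem gives that $K$ is viable under $G$ iff $G(\sigma)\cap T_K(\sigma)\neq\emptyset$ for every $\sigma\in K$, where $T_K(\sigma)$ is the contingent cone. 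The structural fact that makes this check trivial is that $K$ is \emph{convex}: by Proposition~\ref{propmonotone}(i)--(ii), mixing two points whose $\theta$-values are $\le\theta_1$ yields a point with $\theta$-value $\le\theta_1$. For $\sigma\in K$ with $\theta(\sigma)<\theta_1$, a relative neighborhood of $\sigma$ in $\triangle X$ lies in $K$, so $T_K(\sigma)=T_{\triangle X}(\sigma)\supseteq\triangle X-\sigma\supseteq G(\sigma)$, and the intersection is all of $G(\sigma)\neq\emptyset$. For $\sigma\in K$ with $\theta(\sigma)=\theta_1$, the vector $\delta_{x_1}-\sigma$ lies in $G(\sigma)$ since $x_1\in F(\delta_{\theta_1})=F(\delta_{\theta(\sigma)})$, and it lies in $T_K(\sigma)$ because $(1-h)\sigma+h\delta_{x_1}\in K$ for all $h\in[0,1]$ (a convex combination of the $K$-points $\sigma$ and $\delta_{x_1}$, the latter in $K$ precisely because $\theta(\delta_{x_1})\le\theta_1$). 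Thus $K$ is viable, and since $\bm{\sigma}(0)\in K$ (as $\theta(\bm{\sigma}(0))<\theta_1$), there is a solution from $\bm{\sigma}(0)$ that never leaves $K$, completing the argument.

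I expect the only genuine work to be the tangency/viability bookkeeping just described; once convexity of $K$ is in hand (from Proposition~\ref{propmonotone}), the cone conditions are essentially immediate, and the rest is routine. If one prefers not to cite the viability theorem, the same trapped solution can be constructed directly: follow any solution from $\bm{\sigma}(0)$ until the first (necessarily finite, since that solution converges to $A$) time its $\theta$-value reaches $\theta_1$, and thereafter steer toward $\delta_{x_1}$ whenever $\theta(\bm{\sigma}(t))=\theta_1$ — which by Proposition~\ref{propmonotone}(i)--(ii) never lets $\theta$ exceed $\theta_1$ — and steer freely while $\theta(\bm{\sigma}(t))<\theta_1$, always switching back to $\delta_{x_1}$ before $\theta$ can climb past $\theta_1$. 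The seemingly awkward case $\theta(\delta_{x_1})<\theta_1$ is in fact harmless, since it only drives $\theta(\bm{\sigma}(t))$ strictly below $\theta_1$ (using the strict monotonicity of Proposition~\ref{propmonotone}(iii) once $\theta$ is interior). Either route yields the stated inequality.
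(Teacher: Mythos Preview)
Your argument is correct and follows the same contradiction strategy as the paper: assume some $\theta_1\in(\theta(\bm{\sigma}(0)),\theta^{\ast})$ and $x_1\in F(\delta_{\theta_1})$ with $\theta(\delta_{x_1})\le\theta_1$, then exhibit a solution from $\bm{\sigma}(0)$ that is trapped in $\{\theta(\sigma)\le\theta_1\}$ and hence cannot approach $A$, contradicting $\bm{\sigma}(0)\in\mathcal{U}_A$. The paper simply asserts the existence of such a trapped solution (``consider a solution such that $\dot{\bm{\sigma}}(t)=\overline{\sigma}-\bm{\sigma}(t)$ whenever $\theta(\bm{\sigma}(t))=\overline{\theta}$''), whereas you supply the justification either via the viability theorem---using convexity of the sublevel set from Proposition~\ref{propmonotone} to verify the tangent-cone condition---or via the explicit switching construction; this is a genuine gain in rigor over the paper's version, but the underlying idea is identical.
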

\begin{proof}
Suppose not, so that there is $\overline{\theta}\in(\theta(\bm{\sigma}(0),\theta^{\ast})$
and $\overline{\sigma}\in F(\delta_{\theta})$ such that $\theta(\overline{\sigma})\leq\overline{\theta}$.
Consider a solution to the differential inclusion $\bm{\sigma}\in S_{\bm{\sigma}(0)}^{\infty}$
such that for any time $t$ such that $\theta(\bm{\sigma}(t))=\overline{\theta}$,
we have $\dot{\bm{\sigma}}(t)=\overline{\sigma}-\bm{\sigma}(t)$.
Then $\bm{\sigma}(t)\leq\overline{\theta}$ for all $t$; by the definition
of $\overline{\sigma}$, $\theta(\bm{\sigma}(t))$ must go down whenever
it hits $\theta(\bm{\sigma}(t))=\overline{\theta}$. This contradicts
with the fact that $\bm{\sigma}(0)$ is in the basin of attraction.
\end{proof}
Since $F(\delta_{\theta})$ is upper hemi-continuous in $\theta$,
there is $\varepsilon>0$ such that $F(\delta_{\theta})=F(\delta_{\tilde{\theta}})$
for all $\theta,\tilde{\theta}\in(\theta^{\ast}-\varepsilon,\theta^{\ast})$.
Then the lemma above implies that for any $\theta\in(\theta^{\ast}-\varepsilon,\theta^{\ast})$
and for any $\sigma\in F(\delta_{\theta})$, we have $\theta(\sigma)\geq\theta^{\ast}$.

Similarly, we can show that there is $\tilde{\varepsilon}>0$ such
that for any $\theta\in(\theta^{\ast},\theta^{\ast}+\tilde{\varepsilon})$
and for any $\sigma\in F(\delta_{\theta})$, we have $\theta(\sigma)\leq\theta^{\ast}$.
Hence $\theta^{\ast}$ is attracting.

\subsection{\label{pf:propequivalence2}Proof of Proposition \ref{propequivalence2}}

\textbf{Only if:} Suppose that a model $\theta^{\ast}\in(0,1)$ is
repelling. Then upper hemi-continuity of $F$ implies that there are
pure actions $x$ and $\tilde{x}$ such that $\theta(\delta_{x})<\theta^{\ast}<\theta(\delta_{\tilde{x}})$
and and $x,\tilde{x}\in F(\delta_{\theta^{\ast}})$. Then from Proposition
\ref{propmonotone}, there is a mixture $\sigma^{\ast}$ of these
actions $x$ and $\tilde{x}$ such that $\theta(\sigma^{\ast})=\theta^{\ast}$.
Obviously this $\sigma^{\ast}$ is a mixed equilibrium with $\theta(\sigma^{\ast})=\theta^{\ast}$.
So it suffices to show that all mixed equilibria with $\theta(\sigma^{\ast})=\theta^{\ast}$
are repelling.

Choose $\varepsilon>0$ as stated in the definition of repelling models.
Then as in the proof of Proposition \ref{propequivalence} there is
a hyperplane which separates mixed actions $\sigma$ with $\theta(\sigma)\geq\theta^{\ast}+\varepsilon$
from others. That is, there is $\lambda_{1}\in\mathbb{R}^{|X|}$ and
$k_{1}\in\mathbb{R}$ such that $\lambda_{1}\cdot\sigma\geq k_{1}$
if and only if $\theta(\sigma)\geq\theta^{\ast}+\varepsilon$. Likewise,
there is $\lambda_{2}$ and $k_{2}$ such that $\lambda_{2}\cdot\sigma\geq k_{2}$
if and only if $\theta(\sigma)\leq\theta^{\ast}-\varepsilon$.

Let $\mathcal{U}$ be the set of all $\sigma$ such that $\theta(\sigma)\in(\theta^{\ast}-\frac{\varepsilon}{2},\theta^{\ast}+\frac{\varepsilon}{2})$.
Also, choose $T$ sufficiently large so that 
\begin{align}
(k_{1}-\lambda_{1}\cdot\sigma)T>k_{1}-\lambda_{1}\cdot\tilde{\sigma}\label{t1}
\end{align}
for all $\sigma$ with $\theta(\sigma)\in(\theta^{\ast},\theta^{\ast}+\frac{\varepsilon}{2})$
and for all $\tilde{\sigma}$ with $\theta(\tilde{\sigma})\in(\theta^{\ast},\theta^{\ast}+\frac{\varepsilon}{2})$,
and that 
\begin{align}
(k_{2}-\lambda_{2}\cdot\sigma)T>k_{2}-\lambda_{2}\cdot\tilde{\sigma}\label{t2}
\end{align}
for all $\sigma$ with $\theta(\sigma)\in(\theta^{\ast}-\frac{\varepsilon}{2},\theta^{\ast})$
and for all $\tilde{\sigma}$ with $\theta(\tilde{\sigma})\in(\theta^{\ast}-\frac{\varepsilon}{2},\theta^{\ast})$.

We will show that these $\mathcal{U}$ and $T$ satisfy the property
stated in the definition of repelling equilibria. This completes the
proof, because any mixed equilibrium $\sigma^{*}$ with $\theta(\sigma^{\ast})=\theta^{\ast}$
is in the interior of $\mathcal{U}$. 

The following result is useful: 

\begin{claim} For any initial point $\sigma\in\mathcal{U}$ with
$\theta(\sigma)\neq\theta^{\ast}$ and for any solution $\bm{\sigma}\in S_{\sigma}^{\infty}$
to the differential inclusion, there is $t<T$ such that $\bm{\sigma}(t)\notin\mathcal{U}$.
\end{claim}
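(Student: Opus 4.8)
The plan is to argue by contradiction along the given solution. Fix $\sigma\in\mathcal{U}$ with $\theta(\sigma)\neq\theta^{\ast}$ and a solution $\bm{\sigma}\in S_{\sigma}^{\infty}$, and suppose the conclusion fails, i.e.\ $\bm{\sigma}(t)\in\mathcal{U}$ for every $t\in[0,T)$; equivalently $\theta(\bm{\sigma}(t))\in(\theta^{\ast}-\tfrac{\varepsilon}{2},\theta^{\ast}+\tfrac{\varepsilon}{2})$ on $[0,T)$. By the obvious left/right symmetry (interchanging the roles of $(\lambda_{1},k_{1})$ and $(\lambda_{2},k_{2})$ and of the two bullet points of Definition \ref{def:repellingmodel}), it is enough to treat the case $\theta(\sigma)\in(\theta^{\ast},\theta^{\ast}+\tfrac{\varepsilon}{2})$. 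Throughout, for a.e.\ $t$ write $\dot{\bm{\sigma}}(t)=\sigma'_{t}-\bm{\sigma}(t)$ with $\sigma'_{t}\in\triangle F(\triangle\Theta(\bm{\sigma}(t)))$; by Assumption \ref{assumptionidentifiability}(i) this last set equals $\triangle F(\delta_{\theta(\bm{\sigma}(t))})$.

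The first step is to show the trajectory never drops to $\theta^{\ast}$, i.e.\ $\theta(\bm{\sigma}(t))>\theta^{\ast}$ for all $t\in[0,T)$. Whenever $\theta(\bm{\sigma}(t))\in(\theta^{\ast},\theta^{\ast}+\varepsilon)$, the definition of a repelling model gives $\theta(\delta_{x})\geq\theta^{\ast}+\varepsilon$ for every pure $x\in F(\delta_{\theta(\bm{\sigma}(t))})$; since the upper level set $\{\hat{\sigma}:\theta(\hat{\sigma})\geq\theta^{\ast}+\varepsilon\}$ is convex (Proposition \ref{propmonotone}), also $\theta(\sigma'_{t})\geq\theta^{\ast}+\varepsilon$, and then Proposition \ref{propmonotone}(ii) forces $t\mapsto\theta(\bm{\sigma}(t))$ to be non-decreasing on any subinterval on which it stays in $(\theta^{\ast},\theta^{\ast}+\varepsilon)$ --- this is exactly the monotonicity argument carried out in the proof of Lemma \ref{lemmaconvergence1}. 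Hence, setting $t_{1}:=\inf\{t\in[0,T):\theta(\bm{\sigma}(t))\leq\theta^{\ast}\}$, if $t_{1}<T$ then $\theta(\bm{\sigma}(\cdot))\in(\theta^{\ast},\theta^{\ast}+\tfrac{\varepsilon}{2})$ on $[0,t_{1})$, so it is non-decreasing there, whence $\theta(\bm{\sigma}(t_{1}))=\lim_{t\uparrow t_{1}}\theta(\bm{\sigma}(t))\geq\theta(\bm{\sigma}(0))>\theta^{\ast}$, contradicting $\theta(\bm{\sigma}(t_{1}))\leq\theta^{\ast}$. Thus $\theta(\bm{\sigma}(t))\in(\theta^{\ast},\theta^{\ast}+\tfrac{\varepsilon}{2})$ for all $t\in[0,T)$.

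The second step pushes $\bm{\sigma}$ out through the face $\{\theta(\hat{\sigma})=\theta^{\ast}+\tfrac{\varepsilon}{2}\}$ of $\mathcal{U}$. Since $\theta(\bm{\sigma}(t))\in(\theta^{\ast},\theta^{\ast}+\varepsilon)$ on $[0,T)$, Step 1 gives $\theta(\sigma'_{t})\geq\theta^{\ast}+\varepsilon$, hence $\lambda_{1}\cdot\sigma'_{t}\geq k_{1}$, so for a.e.\ $t\in[0,T)$
\[
\lambda_{1}\cdot\dot{\bm{\sigma}}(t)=\lambda_{1}\cdot\sigma'_{t}-\lambda_{1}\cdot\bm{\sigma}(t)\ \geq\ k_{1}-\lambda_{1}\cdot\bm{\sigma}(t).
\]
Because $\bm{\sigma}(t)$ and the initial point $\sigma$ both have $\theta(\cdot)\in(\theta^{\ast},\theta^{\ast}+\tfrac{\varepsilon}{2})$, inequality (\ref{t1}) yields $k_{1}-\lambda_{1}\cdot\bm{\sigma}(t)>\tfrac{1}{T}\bigl(k_{1}-\lambda_{1}\cdot\sigma\bigr)$ for all such $t$. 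Integrating over $[0,T]$ and using absolute continuity of $\bm{\sigma}$,
\[
\lambda_{1}\cdot\bm{\sigma}(T)-\lambda_{1}\cdot\sigma=\int_{0}^{T}\lambda_{1}\cdot\dot{\bm{\sigma}}(t)\,dt\ \geq\ \int_{0}^{T}\bigl(k_{1}-\lambda_{1}\cdot\bm{\sigma}(t)\bigr)dt\ >\ k_{1}-\lambda_{1}\cdot\sigma,
\]
so $\lambda_{1}\cdot\bm{\sigma}(T)>k_{1}$, i.e.\ $\theta(\bm{\sigma}(T))\geq\theta^{\ast}+\varepsilon$. But $\bm{\sigma}(t)\in\mathcal{U}$ on $[0,T)$ forces, by continuity, $\theta(\bm{\sigma}(T))\leq\theta^{\ast}+\tfrac{\varepsilon}{2}<\theta^{\ast}+\varepsilon$ --- a contradiction. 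Hence $\bm{\sigma}(t)\notin\mathcal{U}$ for some $t<T$.

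I expect the first step to be the only delicate point: one must rule out the possibility that $\theta(\bm{\sigma}(t))$ decreases all the way to the equilibrium value $\theta^{\ast}$ (at which the solution could stall), and this genuinely requires the monotone-comparative-statics content of Proposition \ref{propmonotone}(ii) together with Assumption \ref{assumptionidentifiability}(i), not merely the separating hyperplane used in Step 2. Everything after that is the Gr\"onwall-type integral estimate above, driven entirely by the choice of $T$ made in (\ref{t1}).
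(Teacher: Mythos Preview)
Your proof is correct and follows the same route as the paper's: use the repelling hypothesis to show that whenever $\theta(\bm{\sigma}(t))\in(\theta^{\ast},\theta^{\ast}+\tfrac{\varepsilon}{2})$ the drift $\sigma'_{t}$ has $\theta(\sigma'_{t})\ge\theta^{\ast}+\varepsilon$, translate this via the separating hyperplane $(\lambda_{1},k_{1})$ into the linear lower bound $\lambda_{1}\cdot\dot{\bm{\sigma}}(t)\ge k_{1}-\lambda_{1}\cdot\bm{\sigma}(t)$, and then invoke the choice of $T$ in (\ref{t1}) to force an exit. Your Step~1, which rules out the trajectory dropping to $\theta^{\ast}$, makes explicit a point the paper leaves tacit (it is the same monotonicity argument used in Lemmas~\ref{lemmaequivalence1} and~\ref{lemmaconvergence1}); the paper's own proof of the Claim is terser and simply asserts the escape after establishing the drift inequality.
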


\begin{proof} First, consider the case in which $\theta(\sigma)\in(\theta^{\ast},\theta^{\ast}+\frac{\varepsilon}{2})$.
Pick an arbitrary path $\bm{\sigma}\in S_{\sigma}^{\infty}$. Suppose
that $\theta(\bm{\sigma}(t))\in(\theta^{\ast},\theta^{\ast}+\frac{\varepsilon}{2})$
in some period $t$. Then since $\theta^{\ast}$ is repelling, $\dot{\bm{\sigma}}(t)=\sigma-\bm{\sigma}(t)$
for some $\sigma$ such that $\theta(\sigma)\geq\theta^{\ast}+\varepsilon$.
Hence 
\[
\lambda_{1}\cdot\dot{\bm{\sigma}}(t)=\lambda_{1}\cdot(\sigma-\bm{\sigma}(t))\geq k_{1}-\lambda_{1}\cdot\bm{\sigma}(t)>0
\]
where the weak inequality follows from $\theta(\sigma)\geq\theta^{\ast}+\varepsilon$,
and the strict inequality follows from $\theta(\bm{\sigma}(t))<\theta^{\ast}+\frac{\varepsilon}{2}$.
This implies that $\lambda_{1}\cdot\dot{\bm{\sigma}}(t)$ increases
whenever $\theta(\bm{\sigma}(t))\in(\theta^{\ast},\theta^{\ast}+\frac{\varepsilon}{2})$
in the current period $t$. Hence there is $t<T$ such that $\lambda_{1}\cdot\theta(\bm{\sigma}(t))\geq\theta^{\ast}+\frac{\varepsilon}{2}$,
implying $\bm{\sigma}(t)\notin\mathcal{U}$. A similar argument applies
to the case in which $\theta(\sigma)\in(\theta^{\ast}-\frac{\varepsilon}{2},\theta^{\ast})$.
\end{proof}

Now we will show that $\mathcal{U}$ and $T$ satisfy the property
stated in the definition of repelling equilibria. Pick an arbitrary
$\sigma\in\mathcal{U}$. There are two cases to be considered.

$\textit{Case 1}$: $\theta(\sigma)\neq\theta^{\ast}$. Pick any action
$x$. For $\beta$ close to one, a perturbed mixture $\beta\sigma+(1-\beta)\delta_{x}$
is still in the set $\mathcal{U}$, and $\theta(\beta\sigma+(1-\beta)\delta_{x})\neq\theta^{\ast}$.
Hence from the claim above, starting from this perturbed mixture $\beta\sigma+(1-\beta)\delta_{x}$,
any solution to the differential inclusion must leave the set $\mathcal{U}$
within time $T$. So this $\sigma$ satisfies the property stated
in the definition of repelling equilibria.

$\textit{Case 2}$: $\theta(\sigma)=\theta^{\ast}$. Pick an arbitrary
pure action $x\in F(\delta_{\theta^{\ast}})$. Since $\theta^{\ast}$
is repelling, $\theta(\delta_{x})\neq\theta^{\ast}$. So from Proposition
\ref{propmonotone}(iii), for any $\beta$ sufficiently close to one,
$\theta(\beta\sigma+(1-\beta)\delta_{x})\in(\theta^{\ast}-\frac{\varepsilon}{2},\theta^{\ast})\cup(\theta^{\ast},\theta^{\ast}+\frac{\varepsilon}{2})$.
Hence, from the above claim, starting from this perturbed mixture
$\beta\sigma+(1-\beta)\delta_{x}$, any solution to the differential
inclusion must leave the set $\mathcal{U}$ within time $T$. So this
$\sigma$ satisfies the property stated in the definition of repelling
equilibria.

\textbf{If:} Let $\theta^{\ast}\in(0,1)$ be such that $\theta(\delta_{x})\neq\theta^{\ast}$
for each pure action $x\in F(\delta_{\theta^{\ast}})$, there is at
least one mixed equilibrium $\sigma^{\ast}$ with $\theta(\sigma^{\ast})=\theta^{\ast}$,
and all such mixed equilibria are repelling. We will show that the
model $\theta^{\ast}$ is repelling.

Pick an arbitrary repelling equilibrium $\sigma^{\ast}$ with $\theta(\sigma^{\ast})=\theta^{\ast}$,
and let $T$ and $\mathcal{U}$ be as in the definition of repelling
equilibria. Then each point $\sigma\in\mathcal{U}$ satisfies property
(i) or (ii) in the definition of repelling equilibria. In particular,
$\sigma=\sigma^{\ast}$ satisfies property (ii), i.e., starting from
a perturbed action frequency $\beta\sigma^{\ast}+(1-\beta)\delta_{x}$,
any solution to the differential inclusion must leave the neighborhood
$\mathcal{U}$ of $\sigma^{\ast}$ within time $T$. This is so because
$\sigma^{\ast}$ is an equilibrium and never satisfies property (i).

As the following lemma shows, this property implies that $\theta^{\ast}$
is indeed repelling.
\begin{lem}
$\theta^{\ast}$ is repelling.
\end{lem}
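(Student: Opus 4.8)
The plan is to verify directly the two bullet points of Definition~\ref{def:repellingmodel} for $\theta^{\ast}$; its first requirement, that $\theta(\delta_{x})\neq\theta^{\ast}$ for every pure $x\in F(\delta_{\theta^{\ast}})$, is part of the hypothesis here. Write $F(\delta_{\theta^{\ast}})=X^{-}\cup X^{+}$ with $X^{\pm}=\{x\in F(\delta_{\theta^{\ast}}):\pm(\theta(\delta_{x})-\theta^{\ast})>0\}$, set $2\varepsilon_{0}=\min_{x\in F(\delta_{\theta^{\ast}})}|\theta(\delta_{x})-\theta^{\ast}|>0$, and fix $\varepsilon_{1}>0$ with $F(\delta_{\theta})\subseteq F(\delta_{\theta^{\ast}})$ for $|\theta-\theta^{\ast}|\le\varepsilon_{1}$ (upper hemi-continuity of $F$ plus finiteness of $X$). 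Since $\sigma^{\ast}$ is an equilibrium with $\theta(\sigma^{\ast})=\theta^{\ast}$, identifiability gives $\mathrm{supp}(\sigma^{\ast})\subseteq F(\delta_{\theta^{\ast}})$ and $\sigma^{\ast}\in\triangle F(\delta_{\theta^{\ast}})$; iterating Proposition~\ref{propmonotone}(ii) shows $\theta(\sigma^{\ast})$ lies between the smallest and largest of the $\theta(\delta_{x})$, $x\in\mathrm{supp}(\sigma^{\ast})$, so (all of these being $\neq\theta^{\ast}$) $\mathrm{supp}(\sigma^{\ast})$ meets both $X^{-}$ and $X^{+}$.

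Next I would reduce the two bullets to a statement about the interval of non-equilibrium models immediately below and above $\theta^{\ast}$ (note $\theta^{\ast}\in(0,1)$ is an equilibrium model, being supported by $\sigma^{\ast}$). By the dichotomy of Lemma~\ref{lemmaconvergence0}, the lower interval is either of the ``$\theta$ increases'' type ($\theta(\delta_{x})>\theta$ for all $x\in F(\delta_{\theta})$) or of the ``$\theta$ decreases'' type; moreover every action of $X^{-}$ has $\theta(\delta_{x})\le\theta^{\ast}-2\varepsilon_{0}$, so the inequality $\theta(\delta_{x})>\theta$ fails once $\theta\in(\theta^{\ast}-2\varepsilon_{0},\theta^{\ast})$, i.e.\ such actions disappear from $F(\delta_{\theta})$ near $\theta^{\ast}$. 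Combining these, one checks: the first bullet of Definition~\ref{def:repellingmodel} holds for a small $\varepsilon$ iff the lower interval is of the ``$\theta$ decreases'' type, and symmetrically the second bullet holds iff the upper interval is of the ``$\theta$ increases'' type. So it remains to rule out the lower interval being of the ``$\theta$ increases towards $\theta^{\ast}$'' type, the upper-interval case being symmetric.

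Assume the lower interval is of that type. Pick $x'\in X^{-}\cap\mathrm{supp}(\sigma^{\ast})$ and put $\sigma_{\beta}=\beta\sigma^{\ast}+(1-\beta)\delta_{x'}$; by Proposition~\ref{propmonotone}(iii), $\theta(\sigma_{\beta})<\theta^{\ast}$ for $\beta<1$ with $\theta(\sigma_{\beta})\to\theta^{\ast}$ as $\beta\to1$, and $\sigma_{\beta}\in\triangle F(\delta_{\theta^{\ast}})$ (a convex combination of $\sigma^{\ast}$ and $\delta_{x'}$, both in $\triangle F(\delta_{\theta^{\ast}})$). Taking $\beta$ close to $1$ we may assume $\theta(\sigma_{\beta})$ lies in the lower interval, within $\min(\varepsilon_{1},2\varepsilon_{0})$ of $\theta^{\ast}$, and $\sigma_{\beta}\in\mathcal{U}$, where $T,\mathcal{U}$ are from the definition of a repelling equilibrium. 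The hypothesis of the lemma (that $\sigma^{\ast}$ satisfies that definition) then supplies such a $\beta$ for which \emph{every} solution from $\sigma_{\beta}$ leaves $\mathcal{U}$ within time $T$; I would contradict this by constructing one that does not. As long as $\theta(\bm{\sigma}(t))\in(\theta(\sigma_{\beta}),\theta^{\ast})$, the belief is in the lower interval and within $2\varepsilon_{0}$ of $\theta^{\ast}$, so every selected action lies in $F(\delta_{\theta^{\ast}})$ but not in $X^{-}$, hence in $X^{+}\subseteq F(\delta_{\theta^{\ast}})$; thus $\dot{\bm{\sigma}}(t)\in\triangle F(\delta_{\theta^{\ast}})-\bm{\sigma}(t)$ and, by convexity, $\bm{\sigma}(t)$ stays in $\triangle F(\delta_{\theta^{\ast}})$, while the hyperplane/contraction estimate from the proof of Lemma~\ref{lemmaconvergence1} forces $\theta(\bm{\sigma}(t))$ up to $\theta^{\ast}$ at a time $t^{\ast\ast}$ that $\to0$ as $\beta\to1$. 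At $t^{\ast\ast}$ the point $\bm{\sigma}(t^{\ast\ast})\in\triangle F(\delta_{\theta^{\ast}})$ has $\theta(\bm{\sigma}(t^{\ast\ast}))=\theta^{\ast}$, hence is itself an equilibrium, so the solution may be continued by resting there forever. For $\beta$ close to $1$ this trajectory stays in $\mathcal{U}$ (total displacement $O(1-\beta)$), the desired contradiction; fed back through the reduction above, this shows the two bullets hold and $\theta^{\ast}$ is repelling.

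The step I expect to be the main obstacle is the quantitative claim that $\theta(\bm{\sigma}(t))$ returns to $\theta^{\ast}$ fast enough that $\bm{\sigma}$ never leaves $\mathcal{U}$. This is the same kind of estimate as in the proof of Lemma~\ref{lemmaconvergence1} (Case~1): separate the convex sets $\{\sigma:\theta(\sigma)\ge\theta^{\ast}\}$ and its complement by a hyperplane $\lambda\cdot\sigma=k$, observe that while $\theta(\bm{\sigma}(t))<\theta^{\ast}$ the selected mixture $\sigma$ has $\lambda\cdot\sigma\ge k$, so $\lambda\cdot\dot{\bm{\sigma}}(t)\ge k-\lambda\cdot\bm{\sigma}(t)$ is bounded below by a constant depending only on $\varepsilon_{0}$, whence $t^{\ast\ast}=O(k-\lambda\cdot\sigma_{\beta})=O(1-\beta)$; the uniform velocity bound of the differential inclusion then controls the displacement of $\bm{\sigma}$. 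A minor extra point is to handle the possibility that equilibrium models accumulate at $\theta^{\ast}$: then ``the lower interval'' should be read as a one-sided neighborhood of $\theta^{\ast}$ free of equilibrium models, and if no such neighborhood exists one argues directly from upper hemi-continuity that near $\theta^{\ast}$ no action is self-consistent and no two selected actions straddle the current belief, so the same dichotomy still applies.
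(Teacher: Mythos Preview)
Your argument is correct and follows the same overall strategy as the paper's: assume $\theta^{\ast}$ is not repelling, perturb $\sigma^{\ast}$ toward an action $x'\in X^{-}$ so that the starting point has $\theta$ just below $\theta^{\ast}$, and then exhibit one solution of the differential inclusion that returns to the level set $\{\theta=\theta^{\ast}\}$ and stays in $\mathcal{U}$, contradicting the repelling-equilibrium property of $\sigma^{\ast}$.

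The executions differ in two places. First, the paper extracts from ``not repelling'' a \emph{single} action $x^{\ast}\in X^{+}$ with $x^{\ast}\in F(\delta_{\theta})$ for all $\theta$ in a one-sided neighborhood of $\theta^{\ast}$, and uses the explicit path $\dot{\bm{\sigma}}=\delta_{x^{\ast}}-\bm{\sigma}$; you instead invoke the dichotomy of Lemma~\ref{lemmaconvergence0} to conclude that \emph{every} admissible selection near $\theta^{\ast}$ lies in $X^{+}$, so any solution drifts upward. Your route avoids having to justify that one fixed $x^{\ast}$ is available throughout the whole one-sided neighborhood, at the cost of importing the finiteness/non-accumulation hypothesis behind Lemma~\ref{lemmaconvergence0} (which you flag). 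Second, once $\theta(\bm{\sigma}(t))$ hits $\theta^{\ast}$, the paper continues with $\dot{\bm{\sigma}}=\sigma^{\ast}-\bm{\sigma}$ (so the path converges to $\sigma^{\ast}$), whereas you simply rest at the hitting point, which is itself an equilibrium since it lies in $\triangle F(\delta_{\theta^{\ast}})$ with $\theta=\theta^{\ast}$. Both continuations keep the trajectory within $O(1-\beta)$ of $\sigma^{\ast}$ and hence inside $\mathcal{U}$; yours is the shorter argument, the paper's makes the limit point explicit.
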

\begin{proof}
We prove by contradiction, so suppose that $\theta^{\ast}$ is not
repelling. Then from the upper hemi-continuity of $F$, there is $\varepsilon>0$
and a pure action $x\in F(\delta_{\theta^{\ast}})$ such that

(a)~~~~$\theta(\delta_{x})>\theta^{\ast}$ and $x\in F(\delta_{\theta})$
for all $\theta\in(\theta^{\ast}-\varepsilon,\theta^{\ast})$, or 

(b)~~~$\theta(\delta_{x})<\theta^{\ast}$ and $x\in F(\delta_{\theta})$
for all $\theta\in(\theta^{\ast},\theta^{\ast}+\varepsilon)$. 

Pick such $\varepsilon$ and $x$. In what follows, we focus on the
case in which this action $x$ satisfies property (a). (The proof
for the other case is symmetric, and hence omitted.)

Since $\sigma^{\ast}$ is a mixed equilibrium with $\theta(\sigma^{\ast})=\theta^{\ast}$
and since there is no pure action $x$ with $\theta(\delta_{x})=\theta^{\ast}$,
there must be two actions $x$ and $\tilde{x}$ such that $x,\tilde{x}\in F(\delta_{\theta^{\ast}})$
and $\theta(\delta_{x})<\theta^{\ast}<\theta(\delta_{\tilde{x}})$.
Pick such $x$ and $\tilde{x}$. Pick $\beta\in(0,1)$ close to one
so that $\theta(\beta\sigma^{\ast}+(1-\beta)\delta_{x})\in(\theta^{\ast}-\varepsilon,\theta^{\ast})$.
Then consider the following path $\bm{\sigma}$ which starts from
$\beta\sigma^{\ast}+(1-\beta)\delta_{x}$: 
\begin{align*}
\dot{\bm{\sigma}}(t)=\left\{ \begin{array}{ll}
\delta_{x^{\ast}}-\bm{\sigma}(t) & \textrm{if }\theta(\bm{\sigma}(t))<\theta^{\ast}\\
\sigma^{\ast}-\bm{\sigma}(t) & \textrm{if }\theta(\bm{\sigma}(t))=\theta^{\ast}
\end{array}\right..
\end{align*}
In words, on this path, the share of $x^{\ast}$ increases until $\theta(\bm{\sigma}(t))$
hits $\theta^{\ast}$, and after that $\bm{\sigma}(t)$ moves toward
the equilibrium $\sigma^{\ast}$. Clearly this path solves the differential
inclusion with the initial value $\beta\sigma^{\ast}+(1-\beta)\delta_{x}$,
and in particular, if $\beta$ is sufficiently close to one, this
path never leaves the neighborhood $\mathcal{U}$ of $\sigma^{\ast}$.
This implies that $\sigma=\sigma^{\ast}$ does not satisfy property
(ii) in the definition of repelling equilibria, which is a contradiction.
\end{proof}

\subsection{\label{pf:prop:app_monotone}Proof of Proposition \ref{prop:app_monotone}}

We say a correspondence $B:[0,1]\rightarrow\mathbb{R}$ has the \textbf{staircase
property} if there exists a $K<\infty$, $0=a_{0}<a_{1}<...<a_{K}=1$
in $[0,1]$ and $(A_{i})_{i=0}^{K}$ such that $A_{i}\subseteq\mathbb{R}$
closed, bounded and convex such that for each $i\in\{0,...,K\}$,
(i) $B(a_{i})=A_{i}$ and for each $\theta\in(a_{i},a_{i+1})$, $B(\theta)=\bar{x}_{i}\equiv\max\{x\colon x\in A_{i}\}$,
(ii) $\bar{x}_{i}=\underline{x}_{i+1}\equiv\min\{x\colon x\in A_{i+1}\}$.

\begin{claim} The correspondence $\theta\mapsto B(\theta)\equiv\theta(\Delta F(\delta_{\theta}))$
has the staircase property.\end{claim}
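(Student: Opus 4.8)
The plan is to reduce the correspondence $B$ to two scalar step functions and then read off the staircase. Write $\phi(x)\equiv\theta(\delta_x)$ for $x\in X$, and $m(\theta)\equiv\min F(\delta_\theta)$, $M(\theta)\equiv\max F(\delta_\theta)$, where $\min,\max$ refer to the order $x_1<\dots<x_{|X|}$. The first step is to establish, for every $\theta$,
\[
B(\theta)=\theta\bigl(\Delta F(\delta_\theta)\bigr)=\bigl[\phi(m(\theta)),\,\phi(M(\theta))\bigr].
\]
Since $\theta(\cdot)$ is continuous (Lemma~\ref{Lemma:Theta(sigma)}(ii) together with the identifiability part of Assumption~\ref{assumptionidentifiability}, which forces $\Theta(\sigma)$ to be the singleton $\{\theta(\sigma)\}$) and $\Delta F(\delta_\theta)$ is a nonempty compact convex — hence connected — polytope, $B(\theta)$ is a compact interval; its endpoints are pinned down by Proposition~\ref{propmonotone}. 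Writing any $\sigma\in\Delta F(\delta_\theta)$ as $\sigma=\sigma(x_{\max})\,\delta_{x_{\max}}+(1-\sigma(x_{\max}))\,\tilde\sigma$ with $x_{\max}=M(\theta)$ and $\tilde\sigma$ supported on $F(\delta_\theta)\setminus\{x_{\max}\}$, an induction on $|F(\delta_\theta)|$ using parts (i)–(ii) of Proposition~\ref{propmonotone} and $\phi$ increasing gives $\theta(\sigma)\le\phi(x_{\max})$, with equality at $\sigma=\delta_{x_{\max}}$; the symmetric argument gives $\theta(\sigma)\ge\phi(m(\theta))$. Each such interval is closed, bounded and convex, so these intervals will serve as the sets $A_i$.

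Second, I will locate the finitely many risers of the staircase. The assumption $\max F(\delta_\theta)\le\min F(\delta_{\theta'})$ for $\theta'>\theta$ immediately makes $m(\cdot)$ and $M(\cdot)$ nondecreasing. It also forces $S\equiv\{\theta:|F(\delta_\theta)|\ge2\}$ to be finite: for a fixed pair $x<x'$ there is at most one $\theta$ with $\{x,x'\}\subseteq F(\delta_\theta)$, since if $\theta_1<\theta_2$ both had this property then $x'\le\max F(\delta_{\theta_1})\le\min F(\delta_{\theta_2})\le x$, contradicting $x<x'$; as $X$ is finite, $S$ is a finite union of such singletons. Let $0=a_0<a_1<\dots<a_K=1$ enumerate $S\cup\{0,1\}$, so $|F(\delta_\theta)|=1$ on each open interval $(a_i,a_{i+1})$. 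On $(a_i,a_{i+1})$ write $F(\delta_\theta)=\{g(\theta)\}$; then $g=m=M$ is nondecreasing there, and a jump of $g$ at some $c\in(a_i,a_{i+1})$ would, via the closed-graph form of upper hemi-continuity of $F$ (legitimate since $X$ is discrete, so $x_n\to x$ means $x_n=x$ eventually), put two distinct actions into $F(\delta_c)$, contradicting $c\notin S$. Hence $g\equiv g_i$ on $(a_i,a_{i+1})$ for a single action $g_i$.

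Third, I will match the values at the breakpoints and assemble the staircase. Applying the closed-graph property at $a_i$ (sequence $\theta_n\downarrow a_i$) and at $a_{i+1}$ (sequence $\theta_n\uparrow a_{i+1}$) gives $g_i\in F(\delta_{a_i})$ and $g_i\in F(\delta_{a_{i+1}})$, so $g_i\le M(a_i)$ and $g_i\ge m(a_{i+1})$; the monotonicity assumption applied across $a_i$ and across $a_{i+1}$ gives the reverse inequalities, so $g_i=M(a_i)=m(a_{i+1})$. Now set $A_i\equiv B(a_i)=[\phi(m(a_i)),\phi(M(a_i))]$, closed, bounded and convex, with $\bar x_i=\max A_i=\phi(M(a_i))$ and $\underline x_{i+1}=\min A_{i+1}=\phi(m(a_{i+1}))$. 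For $\theta\in(a_i,a_{i+1})$, Step~1 gives $B(\theta)=\theta(\{\delta_{g_i}\})=\{\phi(g_i)\}$, and $\phi(g_i)=\phi(M(a_i))=\bar x_i$, establishing property~(i); and $\bar x_i=\phi(M(a_i))=\phi(g_i)=\phi(m(a_{i+1}))=\underline x_{i+1}$, establishing property~(ii). This yields the staircase property with $K<\infty$ as constructed.

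I expect the main obstacle to be Step~1: the induction invoking Proposition~\ref{propmonotone} is somewhat delicate because part~(ii) delivers only weak monotonicity of $\theta(\sigma_\beta)$ along segments, and one must handle the boundary case $\theta(\tilde\sigma)=\phi(x_{\max})$ separately via part~(i). The other point requiring care is the repeated use of the closed-graph form of upper hemi-continuity in tandem with the strong-set-order monotonicity of $F$ to force the matching identities $g_i=M(a_i)=m(a_{i+1})$; both ingredients are essential, and getting the direction of each inequality right is where a careless argument would slip.
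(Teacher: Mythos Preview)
Your argument is correct and follows essentially the same route as the paper: both proofs first extract the step structure of $\theta\mapsto F(\delta_\theta)$ from the order assumption together with upper hemi-continuity and finiteness of $X$, and then use Proposition~\ref{propmonotone} plus continuity of $\theta(\cdot)$ to show that $\theta(\Delta F(\delta_\theta))$ is the interval $[\phi(m(\theta)),\phi(M(\theta))]$. Your organization is slightly cleaner in that you establish the interval formula once for arbitrary $\theta$ rather than only at the breakpoints, but the ingredients and logic are the same.
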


\begin{proof}We first show that if the mapping $\theta\mapsto G(\theta)\equiv F(\delta_{\theta})$
is uhc and satisfies $\max F(\delta_{\theta})\leq\min F(\delta_{\theta'})$
for all $\theta'>\theta$, then it follows that there exists a $K<\infty$,
$0=a_{0}<a_{1}<...<a_{K}=1$ in $[0,1]$ and $(A_{i})_{i=0}^{K}$
such that $A_{i}\subseteq X$ finite such that for each $i\in\{0,...,K\}$,
(i) $G(a_{i})=A_{i}$ and for each $\theta\in(a_{i},a_{i+1})$, $G(\theta)=\bar{x}_{i}\equiv\max\{x\colon x\in A_{i}\}$,
(ii) $\bar{x}_{i}=\underline{x}_{i+1}\equiv\min\{x\colon x\in A_{i+1}\}$.
Second, we will show that $B$ has the staircase property with $K$,
$0=a_{0}<a_{1}<...<a_{K}=1$ in $[0,1]$ and $(B_{i})_{i=0}^{K}$
where $B_{i}\equiv\theta(\Delta A_{i})=[\theta(\delta_{\underline{x}_{i}}),\theta(\delta_{\text{\ensuremath{\underline{x}}}_{i+1}})]$.

For the first part note that if $G$ is constant over a non-trivial
interval, it must be single valued. If not, there exists a $\theta'>\theta$
such $\max F(\delta_{\theta})>\min F(\delta_{\theta'})$. Moreover
by the order condition and the fact that $X$ is finite, it follows
that the set of points in $[0,1]$ wherein $G$ is multi-valued is
finite, we denote it as $0=a_{0}<a_{1}<...<a_{K}=1$. For each $i\in\{1,..,K\}$,
let $A_{i}\equiv G(a_{i})$. As $\max G(\theta)\leq\min G(\theta)$
for all $\theta'>\theta$, it follows that for each $\theta\in(a_{i},a_{i+1})$,
$G(\theta)\geq\bar{x}_{i}\equiv\max\{x\colon x\in A_{i}\}$ and $\bar{x}_{i}\leq\underline{x}_{i+1}\equiv\min\{x\colon x\in A_{i+1}\}$.
By the fact that $G$ is uhc and $X$ is finite, it must hold that
$G(\theta)=\bar{x}_{i}$, otherwise there exists a sequence a sequence
$(\theta_{n},x)_{n}$ converging to $(a_{i},x)$ such that $x=G(\theta_{n})$
but $x>\max\{x\colon x\in G(a_{i})\}$. A similar argument shows that
$\bar{x}_{i}=\underline{x}_{i+1}$.

We now show the correspondence $\theta\mapsto B(\theta)\equiv\theta(\Delta F(\delta_{\theta}))$
has the staircase property. Take any $i\in\{0,...,K\}$, for any $b\in(a_{i},a_{i+1})$,
$B(b)=\theta_{i}\equiv\theta(\delta_{\bar{x}_{i}})$ and $B(a_{i})=\theta(\Delta F(\delta_{a_{i}}))=\theta(\Delta A_{i})$;
this establishes condition (i) of the definition. We will show now
that $\theta(\Delta A_{i})=[\theta(\delta_{\underline{x}_{i}}),\theta(\delta_{\bar{x}_{i+1}})]$
which will finish the proof. By finiteness of $X$, $A_{i}=\{x(1),...,x(L)\}$
such that $x(1)<x(2)<...<x(L)$ (where the indexes can depend on $i$),
and as $x\mapsto\theta(\delta_{x})$ is increasing, $\theta(\delta_{x(1)})<...<\theta(\delta_{x(L)})$. 

We first show that $\theta(\Delta A_{i})\subseteq[\theta(\delta_{\underline{x}_{i}}),\theta(\delta_{\bar{x}_{i+1}})]$.
For this, we first show that
\[
\theta(\delta_{\underline{x}_{i}})=\theta(\delta_{x(1)})=\min\left\{ \theta\colon\theta\in\theta(\Delta A_{i})\right\} .
\]
We do this iteratively. If $L=2$, then $\Delta A_{i}=\{\sigma_{\lambda}(x(1),x(2))\equiv\lambda_{1}\delta_{x(1)}+\lambda_{2}\delta_{x(2)}\colon\lambda\in\Delta^{1}\}$
(throughout, $\sigma_{\lambda}(x(i),..,x(m))$ denotes a mixed action
over actions $x(i),...,x(m)$ with weights $\lambda$). By Proposition
\ref{propmonotone}, $\lambda\mapsto\theta(\sigma_{\lambda}(x(1),x(2)))$
is non-decreasing, there 
\[
\theta(\delta_{x(1)})=\min\{\theta\colon\theta\in\theta(\Delta A_{i})\}\leq\theta(\delta_{x(2)})=\max\{\theta\colon\theta\in\theta(\Delta A_{i})\}.
\]
If $L=3$, we want to show that $\theta(\delta_{x(1)})\leq\theta(\sigma_{\lambda}(x(1),x(2),x(3)))$
for any $\lambda\in\Delta^{2}$. If $\lambda_{1}=0$, then it suffices
to show that $\theta(\delta_{x(1)})\leq\theta(\sigma_{\lambda}(x(2),x(3)))$
for any $\lambda\in\Delta^{1}$. By our calculations for the case
$L=2$, it follows that $\theta(\delta_{x(2)})\leq\theta(\sigma_{\lambda}(x(2),x(3)))$
and since $\theta(\delta_{x(1)})\leq\theta(\delta_{x(2)})$ the desired
result follows. If $\lambda_{1}>0$, then
\[
\sigma_{\lambda}(x(1),x(2),x(3))=\lambda_{1}\delta_{x(1)}+(1-\lambda_{1})\left(\frac{\lambda_{2}}{1-\lambda_{1}}\delta_{x(2)}+\frac{\lambda_{3}}{1-\lambda_{1}}\delta_{x(3)}\right)=\lambda_{1}\delta_{x(1)}+(1-\lambda_{1})\sigma_{\lambda'}(x(2),x(3))
\]
 for a $\lambda'$ that is a function of $\lambda$. So, by Proposition
\ref{propmonotone}, it suffices to show that $\theta(\delta_{x(1)})\leq\theta(\sigma_{\lambda}(x(2),x(3)))$
for any $\lambda\in\Delta^{1}$, which we already showed. Iterating
in this fashion the result can be showed for $L\geq1$. 

The proof that $\theta(\delta_{\bar{x}_{i}})=\theta(\delta_{x(L)})=\max\left\{ \theta\colon\theta\in\theta(\Delta A_{i})\right\} $
is analogous and thus omitted. 

Finally, since $\bar{x}_{i}=\underline{x}_{i+1}$, it follows that
$\theta(\delta_{\bar{x}_{i}})=\theta(\delta_{\underline{x}_{i+1}})$.
Thus, we have shown that $\theta(\Delta A_{i})\subseteq[\theta(\delta_{\underline{x}_{i}}),\theta(\delta_{\underline{x}_{i+1}})]$
as desired.

We now show that $\theta(\Delta A_{i})\supseteq[\theta(\delta_{\underline{x}_{i}}),\theta(\delta_{\underline{x}_{i+1}})]$.
Suppose not, that is, there exists a $\theta\in[\theta(\delta_{\underline{x}_{i}}),\theta(\delta_{\underline{x}_{i+1}})]\setminus\theta(\Delta A_{i})$.
There exists a $l\in\{1,...,L\}$ such that $\theta(\delta_{x(l)})\leq\theta\leq\theta(\delta_{x(l+1)})$.
By Lemma \ref{Lemma:Theta(sigma)}, $\lambda\mapsto\theta(\sigma_{\lambda}(x(l),x(l+1))$
is continuous. By Bolzano, there exists a $\lambda^{\ast}$ such that
$\theta=\theta(\sigma_{\lambda^{\ast}}(x(l),x(l+1))$. But $\sigma_{\lambda^{\ast}}(x(l),x(l+1))\in\Delta A_{i}$
and we thus arrived to a contradiction.\end{proof}

Let $\Theta^{\ast}$ be the set of equilibrium models. It is easy
to check that $\Theta^{\ast}$ coincides with the set of fixed points
of $B$. Also, this set is nonempty: we can take a nondecreasing selection
from the correspondence $B$ and, by Tarski's fixed point theorem,
it must have at least one fixed point; since it is a selection of
$B$, then $B$ must have at least one fixed point. Moreover, as $B$
has the staircase property, the set of fixed points $\Theta^{*}$
is finite and its elements can only be in the following form:

1. An end-point of a vertical segment of $B$.

2. An interior point of a vertical segment of $B$.

3. A point in the interior of a horizontal segment of $B$.

4. $\theta=0$ when there exists a $a>0$ such that $B(\theta)=0$
for all $\theta\in[0,a]$, or $\theta=1$ when there exists a $a>0$
such that $B(\theta)=1$ for all $\theta\in[1-a,1]$.

By Proposition \ref{proponedimensional}, $(\theta(\sigma_{t}))_{t}$
converges almost surely to one element in $\Theta^{\ast}$. We now
show that it has to converge to an element characterized by cases
3 or 4.

Case 1 is ruled out because at an end-point of a vertical segment
$\theta'$, $F(\delta_{\theta'})$ contains more than one action,
and so the corresponding pure action equilibrium is not strict, a
situation that was ruled out by assumption.

An equilibrium model characterized by case 2 is repelling by Definition
\ref{def:repellingmodel}. By Proposition \ref{propequivalence2},
the mixed action associated to this equilibrium model is repelling.
Thus, by Proposition \ref{Prop:Repelling}, the action frequency converges
to such equilibrium model with probability zero. Therefore, convergence
occurs to an equilibrium model characterized in case 3 or 4.

Equilibrium models characterized in cases 3 and 4 are attracting by
Definition \ref{Prop:Attracting} and are supported by a pure action.
To conclude, let $\theta^{*}$ be the point to which the sequence
$(\theta(\sigma_{t}))_{t}$ converges to, and denote the associated
pure strategy by $x^{*}$. The fact that $\theta(.)$ is increasing
implies that $(\sigma_{t})_{t}$ converges to $x^{\ast}$. Finally,
we observe that using Theorem \ref{Theo:Berk} it can be shown that
$(\sigma_{t})_{t}$ converging to $x^{\ast}$ implies that the beliefs
$(\mu_{t})_{t}$ converges to $\delta_{\theta^{\ast}}$ where $\theta^{\ast}=\theta(\delta_{x^{\ast}})$.
As $F(\delta_{\theta^{\ast}})$ is a pure action, upper hemi-continuity
of $F$ implies that $(F(\mu_{t}))_{t}$ also converges to $x^{\ast}$.

\subsection{\label{pf:Prop:BerkNash}Proof of Proposition \ref{Prop:BerkNash}}

$\Delta\cup_{\mu\in\Delta\Theta(\sigma)}F_{\beta}(\mu)\subseteq\Delta\cup_{\mu\in\Delta\Theta(\sigma)}F_{0}(\mu)$:
Let $\sigma\in\Delta\cup_{\mu\in\Delta\Theta(\sigma)}F_{\beta}(\mu)$.
Fix any $x$ such that $\sigma(x)>0$. Since $\sigma\in\Delta\cup_{\mu\in\Delta\Theta(\sigma)}F_{\beta}(\mu)$,
there exists $\mu_{x}\in\Delta\Theta(\sigma)$ such that $x\in F_{\beta}(\mu_{x})$.
It suffices to show that $x\in F_{0}(\mu_{x})$. Since $x\in F_{\beta}(\mu_{x})$,
for any $x'\in X$,
\begin{align*}
\int(\pi(x,y)+\beta V(B(x,y,\mu_{x}))\bar{Q}_{\mu_{x}}(dy\mid x) & =\int(\pi(x,y)\bar{Q}_{\mu_{x}}(dy\mid x)+\beta V(\mu_{x})\\
 & \geq\int\left(\pi(x',y)+\beta V(B(x',y,\mu_{x}))\right)\bar{Q}_{\mu_{x}}(dy\mid x')\\
 & \geq\int(\pi(x',y)\bar{Q}_{\mu_{x}}(dy\mid x')+\beta V(\mu_{x}),
\end{align*}
where the first line follows from weak identification (which implies
$B(x,y,\mu_{x})=\mu_{x}$ for all $y$ in the support of $\bar{Q}_{\mu_{x}}(\cdot\mid x)$),
the second line follows from $x\in F_{\beta}(\mu_{x})$, and the third
line follows from the convexity of the value function and the martingale
property of Bayesian updating (which imply, using Jensen's inequality,
$\int V(B(x',y,\mu_{x}))\bar{Q}_{\mu_{x}}(dy\mid x')\geq V(\int B(x',y,\mu_{x})\bar{Q}_{\mu_{x}}(dy\mid x'))=V(\mu_{x})$).
Therefore, $x$ is myopically the best action, i.e., $x\in F_{0}(\mu_{x})$.

$\Delta\cup_{\mu\in\Delta\Theta(\sigma)}F_{0}(\mu)=\cup_{\mu\in\Delta\Theta(\sigma)}\Delta F_{0}(\mu)$:
The direction $\supseteq$ holds trivially, so we only establish $\subseteq$.
Let $\sigma\in\Delta\cup_{\mu\in\Delta\Theta(\sigma)}F_{0}(\mu)$.
Fix any $x,x'$ such that $\sigma(x)>0$. Since $\sigma\in\Delta\cup_{\mu\in\Delta\Theta(\sigma)}F_{\beta}(\mu)$,
there exist $\mu_{x},\mu_{x'}\in\Delta\Theta(\sigma)$ such that $x\in F_{0}(\mu_{x})$
and $x'\in F_{0}(\mu_{x'})$. By weak identification and the fact
that $\mu_{x}$ and $\mu_{x'}$ both belong to $\Delta\Theta(\sigma)$,
$\bar{Q}_{\mu_{x}}(\cdot\mid\tilde{x})=\bar{Q}_{\mu_{x'}}(\cdot\mid\tilde{x})$
for all $\tilde{x}$ in the support of $\sigma$. Therefore, for any
$x''\in X$,
\[
\int\pi(x',y)\bar{Q}_{\mu_{x}}(dy|x')=\int\pi(x,y')\bar{Q}_{\mu_{x}}(dy|x)\geq\int\pi(x'',y)\bar{Q}_{\mu_{x}}(dy|x''),
\]
and so $x'\in F_{0}(\mu_{x})$. Since $x'$ is an arbitrary element
in the support of $\sigma$, we have shown that there is a common
belief $\mu_{x}$ under which any action in the support of $\sigma$
is optimal.
\end{document}